\title{Engression: Extrapolation through the Lens of Distributional~Regression}
\author{Xinwei Shen and Nicolai Meinshausen
\\ \small 
\textit{Seminar f\"ur Statistik, ETH Z\"urich, Switzerland}
 \\ \vspace{0.1cm}\small 
 \texttt{\{xinwei.shen,meinshausen\}@stat.math.ethz.ch}
}
\date{}
\DeclareMathOperator*{\argmin}{argmin}
\def\ind{\mathbf{1}}
\def\bbE{\mathbb{E}}
\def\bbR{\mathbb{R}}
\def\bbP{\mathbb{P}}
\def\cB{\mathcal{B}}
\def\cD{\mathcal{D}}
\def\cF{\mathcal{F}}
\def\cG{\mathcal{G}}
\def\cH{\mathcal{H}}
\def\cL{\mathcal{L}}
\def\cM{\mathcal{M}}
\def\cN{\mathcal{N}}
\def\cU{\mathcal{U}}
\def\cO{\mathcal{O}}
\def\cP{\mathcal{P}}
\def\cX{\mathcal{X}}
\def\cY{\mathcal{Y}}
\def\cZ{\mathcal{Z}}
\def\rmE{\mathrm{E}}
\def\tx{\tilde{x}}
\def\kl{\mathrm{KL}}
\def\pto{\overset{p}{\to}}
\def\xm{x_{\max}}
\def\xmin{x_{\min}}
\def\etam{\eta_{\max}}
\def\crps{\mathrm{CRPS}}
\def\es{\mathrm{ES}}
\def\ed{\mathrm{ED}}
\def\gen{\textsl{g}}
\def\preanm{\cM_{\mathrm{pre}}}
\def\cramer{\mathrm{CD}}
\def\ptr{P_{\mathrm{tr}}}
\def\var{\mathrm{Var}}
\theoremstyle{plain}
\newtheorem{theorem}{Theorem}
\newtheorem{proposition}{Proposition}
\newtheorem{lemma}{Lemma}
\newtheorem{corollary}{Corollary}
\newtheorem{definition}{Definition}
\theoremstyle{remark}
\newtheorem{example}{Example}
\newcommand\revise[1]{{\color{black}{#1}}}
\begin{document}
\maketitle

\begin{abstract}
Distributional regression aims to estimate the full conditional distribution of a target variable, given covariates.  Popular methods include linear and tree-ensemble based quantile regression. We propose a neural network-based distributional regression methodology called `engression'. An engression model is generative in the sense that we can sample from the fitted conditional distribution and is also suitable for high-dimensional outcomes. Furthermore, we find that modelling the conditional distribution on training data can constrain the fitted function outside of the training support, which offers a new perspective to the challenging extrapolation problem in nonlinear regression. In particular, for `pre-additive noise' models, where noise is added to the covariates before applying a nonlinear transformation, we show that engression can successfully perform extrapolation under some assumptions such as monotonicity, whereas traditional regression approaches such as least-squares or quantile regression fall short under the same assumptions. Our empirical results, from both simulated and real data, validate the effectiveness of the engression method and indicate that the pre-additive noise model is typically suitable for many real-world scenarios. The software implementations of engression are available in both R and Python.
\end{abstract}

\section{Introduction}\label{sec:intro}
\revise{
We consider regression with a response variable $Y \in \mathbb{R}$, using a set of predictors or covariates $X \in \mathbb{R}^d$. For simplicity, we focus primarily on the case of a univariate target but our methodology generalises naturally to a multivariate response. A common regression task is then to estimate the conditional mean $ \bbE(Y|X=x)$. 
For many applications, however, we are interested in not just the conditional mean but rather the full conditional distribution of $ Y|X=x $. A possible approach for the distributional regression task is to model each quantile of the distribution separately using quantile regression \citep{koenker1978regression, koenker2005quantile}. Fitting a separate model for each quantile, however, is a computational burden and leads to issues such as quantile crossing. While the latter can be rectified  \citep{he1997quantile,chernozhukov2010quantile} and extensions to multivariate responses are non-trivial but possible
\citep{carlier2017vector}, an alternative approach is generative-type models with flexible model classes such as neural networks, having emerged dramatically recently in machine learning \citep{goodfellow2014}. Generative models characterise the distribution via a generating process instead of explicitly modelling multiple quantiles and allow to sample from the conditional distribution $Y|X=x$, which then enables sampling-based estimation for the conditional mean and quantiles. 

We propose a neural network-based distributional regression method called `engression', where the distributional fit is assessed with the energy score \citep{gneiting2007strictly}. Engression is a simple yet generic generative approach that can be applied to various regression tasks, including point prediction for the conditional mean and quantiles, constructing prediction intervals, and sampling from the conditional distribution. It is shown to be effective for many real data and suitable for modelling high-dimensional responses. Generally speaking, we envision engression as a flexible technique for statistical inference problems that involve (conditional) distribution estimation, as well as an interesting addition to the current nonlinear regression toolbox for practitioners. To facilitate this, we provide software of engression in R and Python packages \texttt{engression}; see Section~\ref{sec:software} for details.
}

\revise{
Equipped with this distributional regression tool, we then focus on a new perspective that engression offers to the extrapolation problem in nonlinear regression. In many statistical and machine learning applications, it is common to encounter data points that go beyond the support of the training data. Investigating the extrapolation properties of models is thus important in practice. 
}
Extrapolation for linear models is relatively straightforward, given that a linear function is generally uniquely defined by its values within the support of the training data. However, nonlinear models pose a much larger challenge. 
Tree ensemble methods such as Random Forest~\citep{breiman2001random} and boosted regression trees~\citep{friedman2001greedy,buhlmann2003boosting,buhlmann2007boosting} typically produce a constant prediction during extrapolation, as they invariably apply the nearest corresponding feature value from the training data when faced with a feature located outside the training support. 
Neural networks (NNs) on the other hand often \revise{exhibit} uncontrollable behaviour. For example, a recent study by \citet{dong2023first} revealed the existence of two-layer neural networks that align perfectly on one distribution, yet display starkly different behaviour on an expanded distribution that extends beyond the initial support. 

\begin{figure}
\begin{tabular}{@{}ccc@{}}
	\includegraphics[width=0.31\textwidth]{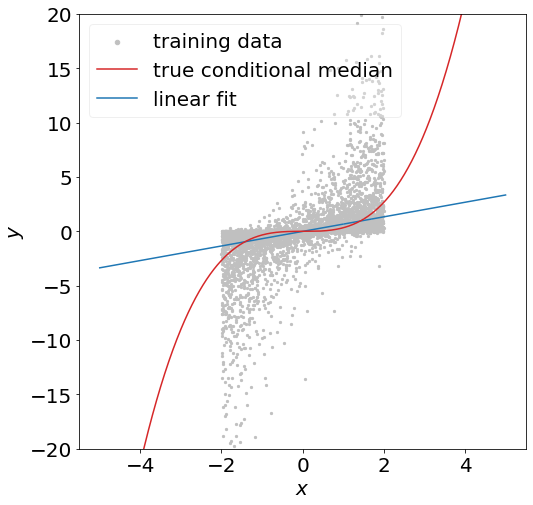} &
	\includegraphics[width=0.31\textwidth]{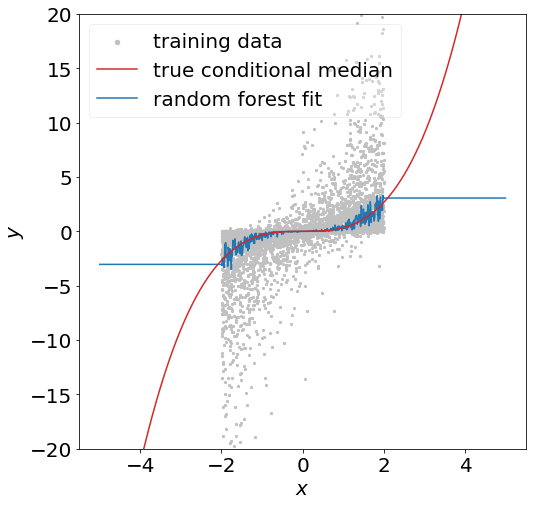} & 
	\includegraphics[width=0.32\textwidth]{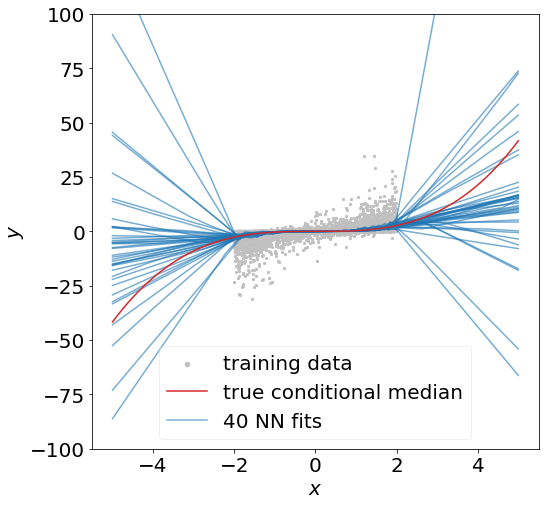} \\
	{(a) linear model} & {(b) tree-based model} & {(c) neural network}
\end{tabular}\vspace{-0.1in}
\caption{Examples of nonlinear extrapolation behaviour for different model classes. The true (conditional median) function is a cubic function. We use linear quantile regression, quantile regression forest, and quantile regression with a two-layer neural network and 100 hidden neurons to fit the training data supported on $[-2,2]$, and evaluate the models on a wider range of test data. }
\label{fig:illus_fail}
\end{figure}

\revise{Fig.~\ref{fig:illus_fail} illustrates the different behaviours of existing methods when extrapolating beyond the training support.}
Linear models tend to ignore the nonlinearity, offering a globally linear fit. Tree-based models default to providing constant predictions beyond the support boundaries. Neural networks fit nearly flawlessly the training data, yet display an arbitrary behaviour on data points located outside the support.
\revise{Note that the respective properties of tree-based models and neural networks for extrapolation may not necessarily be problematic but could be desirable depending on the application. For example, the behaviour of tree ensembles ensures that predicted values never go outside of the support of observed values of the target variable. The high variability of neural networks for extrapolation tasks could on the other hand be exploited to provide an uncertainty estimate if we look at an ensemble of networks fitted for subsamples of the data or with different initial weights~\citep{abe2022deep}. See also related approaches in Bayesian Machine Learning~\citep{snoek2012practical,osband2024epistemic}.  }

\revise{In this work, we study the extrapolation property regarding  whether the target of interest (either the conditional mean or conditional quantile functions) can be identified beyond the training support of the covariate variable. This question aligns with a substantial body of literature; see Section~\ref{sec:related_extrap} for a detailed review. We find that under suitable assumptions, distributional regression can benefit extrapolation, intuitively because the full conditional distribution (in contrast to just the conditional mean) may contain more information about the behaviour of the nonlinear functions of interest beyond the training support. Thus, modelling the conditional distribution exploits the extra information to constrain the fitted function outside of the bounded support.

In addition to a distributional estimation method, the modelling choice also plays a crucial role for successful extrapolation. For simplicity, here we illustrate the univariate case, while a more general setting with multivariate $X$ is described in Section~\ref{sec:anm}. Conventionally, a regression model assumes that noise $\eta$ is added after applying a nonlinear transformation $g$ to the covariates~$X$, i.e.
\begin{equation}\label{eq:postanm_uni}
	Y=g(X)+\eta.
\end{equation}
We refer to \eqref{eq:postanm_uni} as the post-additive noise model (post-ANM). 
Contrarily, in what we call the \emph{pre-additive noise model (pre-ANM)}, the noise is added directly to $X$ prior to any nonlinear transformation, i.e.
\begin{equation}\label{eq:preanm_uni}
	Y=g(X+\eta).
\end{equation}
We propose the pre-ANM as an alternative modelling choice to the often-used post-ANM, especially in the context of extrapolation. Intuitively, the pre-additive noise in the underlying data distribution can reveal some information about the true function outside of the support, which can then be exploited by engression to enable data-driven extrapolation. The advantages of pre-ANMs are formally justified. Our numerical results also indicate that pre-ANMs are typically suitable for many real data sets. 
}

\subsection{Structure}
The engression method is proposed in Section~\ref{sec:method}. 
Section~\ref{sec:theory_extra} presents the theoretical formulation and guarantees for extrapolation at the population level. We start by introducing the concept of extrapolation uncertainty as a measure of the degree of (dis)agreement between two models beyond the training support. We propose the term `extrapolability', which measures the capability of a model to extrapolate. Importantly, we consider extrapolability both functionally and distributionally. While previous studies have focused on functional extrapolability, we argue that distributional extrapolability is a more desirable notion. It can be achieved with much weaker assumptions about the function class. As an exercise, we demonstrate the extrapolability of the pre-ANM class of models, suggesting a possible advantage of pre-ANMs over post-ANMs for extrapolation tasks.

Furthermore, we derive a few simple theoretical results about engression. Specifically, we establish the extrapolability of engression under modest conditions of a strictly monotone function class, a feat that cannot be achieved by traditional methods such as least-squares regression and quantile regression. To the best of our knowledge, engression is the first methodology that can verifiably extrapolate beyond the support for nonlinear monotone functions (without the need to assume Lipschitz continuity). We also provide a quantification of the advantage of engression by examining the discrepancies between the extrapolation uncertainty of engression and that of benchmark methods, suggesting that engression consistently delivers an advantage. Interestingly, we demonstrate that engression can benefit from a high noise level: larger noise levels actually assist with extrapolation by employing engression. While much of our theoretical analysis is geared toward the univariate case, which already poses a significant challenge to extrapolation, our method can readily handle multivariate predictors and response variables.

Section~\ref{sec:finite_sample} is devoted to the finite-sample analysis of the engression estimator. We derive the finite-sample bounds for parameter estimation and prediction for the conditional mean or quantiles, especially outside the support, under specific settings of quadratic models. While traditional regression methods produce arbitrarily large errors outside the support, we show that engression attains consistency for both in-support and out-of-support data. Furthermore, when the pre-additive noise assumption is violated, the model yield by engression defaults to a linear fit that provides a useful baseline performance in comparison to the arbitrarily large extrapolation error from regression. 
In addition, we show the consistency for general monotone and Lipschitz functions. 

 More discussion on the relation to existing literature is in Section~\ref{sec:related}. In Section~\ref{sec:empirical}, we present comprehensive empirical studies using both simulated data and an assortment of real-world data sets. Our studies cover a range of regression tasks, including univariate and multivariate prediction, as well as prediction intervals. The promising results lend support to our theory and indicate the broad applicability and versatility of the engression method.

\subsection{Software}\label{sec:software}
The software implementations of engression are available in both R and Python packages \texttt{engression} with the source code at \url{https://github.com/xwshen51/engression}. \revise{They support general data types so that both $X$ and $Y$ can be univariate or multivariate, continuous or discrete, and support various regression tasks.} GPU acceleration is for now only available in the Python version, but CPU-based training, while slower in general, is still suitable for moderately large data sets of up to a few hundred variables and a few thousand observations on a standard single-core machine.

Below is already an example to demonstrate how the software can be used in R (with default options) on a simple linear model, where the test distribution is slightly shifted from the training distribution (although a linear least-squares or quantile fit would be clearly be a better choice for this linear model example).

\begin{footnotesize}
    \begin{verbatim}
> n = 1000; p=5                                      ## 1000 samples and 5 dimensional covariate
> beta = rnorm(p)                                    ## random regression coefficients 
> X = matrix(rnorm(n*p), ncol=p)                     ## design matrix
> Y = X %*% beta + rnorm(n)                          ## linear model 
> Xtest = matrix(1 + rnorm(n*p), ncol=p)             ## test data are slightly shifted
> Ytest = Xtest %*% beta + rnorm(n)  

> library(engression)                                ## load engression package
> engressionFit = engression(X, Y)                   ## fit an engression model
> yhatEngression = predict(engressionFit, Xtest)     ## predict on test data
> mean((yhatEngression - Ytest)^2)                   ## evaluate conditional mean prediction on test data
[1] 1.045293
> predict(engressionFit, Xtest, type="quantile",     ## quantile prediction with the same engression object
quantiles=c(0.1, 0.5, 0.9))
> predict(engressionFit, Xtest, type="sample",       ## sampling from the fitted conditional distribution
nsample=100) 
\end{verbatim}
\end{footnotesize}

\subsection{Notation}
Let $\ptr(x)$ and $\ptr(y|x)$ denote the marginal distribution of $X$ and the conditional distribution of $Y|X=x$ during training, respectively. Let $\mathcal{X} \subset \mathbb{R}^d$ represent the support of $\ptr(x)$. 
For a univariate function $f(x)$, let $\dot{f}(x)$ be its derivative and $\ddot{f}(x)$ be its second derivative. For $x\in\bbR$, let $(x)_+=\max\{x,0\}$. For a vector $x\in\bbR^d$, let $\|x\|$ be the Euclidean norm. For a random variable $X$ and $\alpha\in[0,1]$, let $Q_\alpha(X):=\inf\{x:\bbP(X\leq x)\ge\alpha\}$ be the $\alpha$-quantile of $X$, or simply denoted by $Q_\alpha^X$. 
For random variables $X$ and $X'$ that follow the same distribution, we denote by $X\overset{d}=X'$. 
For $L>0$, a function $f:\bbR^d\to\bbR$ is $L$-Lipschitz continuous if $|f(x)-f(x')| \le L\|x-x'\|$ for all $x,x'\in\bbR^d$. 
Throughout the paper, all distributions are assumed to be absolutely continuous with respect to the Lebesgue measure unless stated otherwise. 

\section{Engression}\label{sec:method}
In Section~\ref{sec:anm}, we discuss the pre- and post-additive noise models as different ways to model the nonlinear relationship between $X$ and $Y$. In Section~\ref{sec:dist_reg}, we discuss distributional and non-distributional regression. In Section~\ref{sec:engression}, we propose the engression method \revise{as a general technique for distributional regression. In Section~\ref{sec:extrap_recipe}, we introduce a recipe for extrapolation with engression for pre-ANMs.}

\subsection{Pre-Additive versus Post-Additive Noise Models}\label{sec:anm}
\revise{
We generalize the pre- and post-ANMs introduced earlier in \eqref{eq:postanm_uni} and \eqref{eq:preanm_uni} to multivariate $X$. Formally, a post-additive noise model is expressed as
\begin{equation}\label{eq:postanm}
	Y=g(WX)+\eta,
\end{equation}
where $W\in\bbR^{k\times d}$, $g:\bbR^k\to\bbR$ is a nonlinear function, and $\eta$ is a noise variable.} Taking $W$ to be the identity matrix, the model becomes $Y=g(X)+\eta$ which is the standard form in nonlinear regression. When $k=1$, the model is a single index model~\citep{mccullagh1983generalized,hardle1993optimal,brillinger2012generalized}. 

In contrast, a general pre-additive noise model is of the form
\begin{equation}\label{eq:preanm}
	Y=g(WX+\eta)+\beta^\top X,
\end{equation}
where $W\in\bbR^{k\times d}$, $g:\bbR^k\to\bbR$ is a nonlinear function, $\beta\in\bbR^d$, and the noise $\eta$ is a random vector unless we take $k=1$. Note that the linear term $\beta^\top X$ is merely for adding more flexibility to the model but not necessary for extrapolation, and it can be dropped by setting $\beta=0$. 

The difference in the position of the noise can impact the extrapolability of the model. 
Intuitively, the response variable generated according to a post-ANM~\eqref{eq:postanm} is a vertically perturbed version of the true function, whereas the response from a pre-ANM~\eqref{eq:preanm} is a horizontally perturbed version. As such, if the data are generated according to a post-ANM, the observations for the response variable are perturbed values of the true function evaluated at covariate values within the support. We hence generally have no data-driven information about the behaviour of the true function outside the support. In contrast, data generated from a pre-ANM contain  response values that reveal some information beyond the support.

In fact, we will establish in Section~\ref{sec:extrapolability_anm} that pre-ANMs are extrapolable under mild assumptions, specifically when $g$ is strictly monotone, whereas post-ANMs fail to be extrapolable unless one imposes stringent assumptions on $g$, such as belonging to a linear function class. Furthermore, in subsequent sections, we demonstrate that the pre-additive noise can be beneficial for extrapolation: larger noise levels actually assist with extrapolation, thus transforming what could be a curse into a blessing, given that we use distributional regression. 

\revise{We discuss the connections of pre-ANMs to the existing post-nonlinear models in Section~\ref{sec:related_model}.} Furthermore, assuming $g$ is invertible, the pre-ANM in \eqref{eq:preanm} without the linear term can be written as $g^{-1}(Y)=W X+\eta$, which appears to be a linear model after transforming $Y$. Invertibility of $g$  requires that $k$ is equal to the dimension of the response. In many applications, it is common to transform the response variable by a pre-defined function, such as logarithm, based on domain expertise. 
This indicates from a practical perspective that the pre-ANMs could potentially fit many real data well. However, manually choosing the suitable transformation becomes inplausible especially when we have multivariate variables $Y$ and $X$. Moreover, with an additional linear term, the pre-ANM can no longer be captured through such a transformation. Thus, our proposed method goes even beyond learning the suitable transformation of $Y$ in a data-driven way.


\subsection{Distributional regression versus non-distributional regression}\label{sec:dist_reg}
Regression is a well-established technique for estimating the dependence of the response variable $Y$ given covariates $X$. The most classical view of regression focuses on estimating the conditional mean $\bbE[Y|X=x]$, with the most prevalent technique being the least-squares regression, a.k.a.\ $L_2$ regression~\citep{legendre1806nouvelles}. Specifically, it searches among a function class $\cF$ for the best fit that minimises the $L_2$ risk,
\begin{equation}\label{eq:l2}
	\min_{f\in\cF}\bbE[(Y-f(X))^2].
\end{equation}
Beyond the conditional mean, other functionals of the conditional distribution have been considered, such as quantiles. Given $\alpha\in[0,1]$, quantile regression~\citep{koenker2005quantile} is defined as the solution to 
\begin{equation}\label{eq:qr}
	\min_{f\in\cF} \bbE[\rho_\alpha(Y-f(X))],
\end{equation}
where $\rho_\alpha(x):=x(\alpha-\ind_{(x<0)})$. In the special case with $\alpha=0.5$, quantile regression becomes the least absolute deviation regression, a.k.a.\ $L_1$ regression, 
\begin{equation}\label{eq:l1}
	\min_{f\in\cF}\bbE|Y-f(X)|,
\end{equation}
which is used for estimating the conditional median of $Y$ given $X=x$.

In all the aforementioned techniques, the conditional distribution is summarised in a univariate quantity of interests, which may often lead to the most efficient approach in terms of the in-support performance. However, when it comes to extrapolation outside the support, even with the same goal of point prediction, \revise{we will highlight formally in Section~\ref{sec:extrapolability_def} the gain from aiming for the entire distribution rather than just the quantity of interest}. Intuitively, the out-of-support information revealed by a pre-ANM requires a full capture of the noise distribution, which is essentially the conditional distribution of $Y|X=x$.

\revise{An important element of a distributional regression method is a quantitative metric 
to assess a distributional fit.} The \emph{energy score} introduced by \citet{gneiting2007strictly} is a popular scoring rule for evaluating multivariate distributional predictions. For a random vector $Z$ that follows a distribution $P$ and an observation $z$,  the energy score is defined as 
\begin{equation}\label{eq:es}
	\es(P,z)=\frac{1}{2}\bbE_{P}\|Z-Z'\| - \bbE_P\|Z-z\|,
\end{equation} 
where $Z$ and $Z'$ are two independent draws from $P$. Note that the higher the score, the better the distributional prediction. We will generally use the negative of the energy score to design a loss function and call it energy loss where appropriate.
The following lemma based on the results in \citet{szekely2003statistics} and \citet{Szkely2023TheEO} states that the energy score is a strictly proper scoring rule. 
\begin{lemma}\label{lem:es}
	For any distribution $P'$, we have $\bbE_{Z\sim P}[\es(P,Z)] \ge \bbE_{Z\sim P}[\es(P',Z)] $, where the equality holds if and only if $P$ and $P'$ are identical. 
\end{lemma}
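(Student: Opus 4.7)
The plan is to reduce the lemma to the non-negativity of the \emph{energy distance} between $P$ and $P'$, and then invoke the classical characteristic-function representation due to Sz\'ekely that is cited in the statement.

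First I would simplify both expectations. Let $Z, Z'$ be i.i.d.\ from $P$, let $W, W'$ be i.i.d.\ from $P'$, and let the two pairs be independent. Plugging the definition of the energy score into $\bbE_{Z\sim P}[\es(P,Z)]$ and using independence yields
$$\bbE_{Z\sim P}[\es(P,Z)] \;=\; \tfrac{1}{2}\bbE\|Z-Z'\| - \bbE\|Z-Z'\| \;=\; -\tfrac{1}{2}\bbE\|Z-Z'\|,$$
while
$$\bbE_{Z\sim P}[\es(P',Z)] \;=\; \tfrac{1}{2}\bbE\|W-W'\| - \bbE\|W-Z\|.$$
Subtracting one from the other produces exactly one half of the squared energy distance,
$$\bbE_{Z\sim P}[\es(P,Z)] - \bbE_{Z\sim P}[\es(P',Z)] \;=\; \tfrac{1}{2}\bigl(2\bbE\|W-Z\| - \bbE\|Z-Z'\| - \bbE\|W-W'\|\bigr) \;=\; \tfrac{1}{2}\,\ed^2(P,P').$$
So the lemma reduces to showing $\ed^2(P,P') \ge 0$ with equality iff $P = P'$.

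For the second step, I would cite the Sz\'ekely identity
$$\|x\| \;=\; c_d \int_{\bbR^d} \frac{1 - \cos\langle t, x\rangle}{\|t\|^{d+1}}\,dt, \qquad x \in \bbR^d,$$
for an explicit dimension constant $c_d > 0$. Applying this to $\|Z-W\|$, $\|Z-Z'\|$, $\|W-W'\|$, taking expectations, and using Fubini to interchange, the cross and diagonal terms combine into the perfect square
$$\ed^2(P,P') \;=\; c_d \int_{\bbR^d} \frac{\bigl|\phi_P(t) - \phi_{P'}(t)\bigr|^2}{\|t\|^{d+1}}\,dt,$$
where $\phi_P, \phi_{P'}$ are the characteristic functions of $P, P'$. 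Non-negativity is immediate from the integrand, and strict positivity unless $\phi_P \equiv \phi_{P'}$ a.e.\ (equivalently, $P = P'$ by the uniqueness theorem for characteristic functions) follows because the weight $\|t\|^{-(d+1)}$ is strictly positive.

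The main obstacle is the integral identity for $\|x\|$, which requires handling the singularities at $t = 0$ and at $\|t\|\to\infty$ carefully; this is precisely the technical content of Sz\'ekely's argument and is what Lemma~\ref{lem:es} invokes by reference. A secondary technicality is ensuring the expectations $\bbE\|Z-Z'\|$, $\bbE\|W-W'\|$, $\bbE\|W-Z\|$ are finite so that Fubini applies; this can be assumed without loss of generality, since otherwise the energy score is not well defined and the inequality is vacuous. With these two pieces in hand, combining the reduction above with the characteristic-function representation completes the proof.
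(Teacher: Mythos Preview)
Your proposal is correct and follows exactly the classical Sz\'ekely argument that the paper cites in lieu of a proof; the paper does not give its own proof of this lemma. One minor notational point: in the paper's definition \eqref{eq:energy_distance}, $\ed(P,P')$ is already the ``squared'' energy distance, so in the paper's notation your difference equals $\tfrac{1}{2}\ed(P,P')$ rather than $\tfrac{1}{2}\ed^2(P,P')$.
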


When $Z$ is univariate whose cumulative distribution function (cdf) is denoted by $F$, the energy score reduces to the continuous ranked probability score (CRPS)~\citep{matheson1976scoring} defined as
\begin{equation*}
	\crps(F,z)=-\int_{-\infty}^\infty(F(z')-\mathbf{1}\{z'\ge z\})^2dz'.
\end{equation*}
The equality between CRPS and the energy score in \eqref{eq:es} in the univariate case was shown by \citet{BARINGHAUS2004190}, which is essentially a distributional extension of the (negative) absolute error. 

These scoring rules are associated with certain distance functions. In particular, the respective distance of the energy score for two distributions $P$ and $P'$ is the energy distance~\citep{szekely2003statistics} defined as
\begin{equation}\label{eq:energy_distance}
	\ed(P,P')= 2\bbE\|Z-Z'\| - \bbE\|Z-\tilde{Z}\| - \bbE\|Z'-\tilde{Z}'\|,
\end{equation}
where $Z$ and $\tilde{Z}$ are two independent draws from $P$, and $Z'$ and $\tilde{Z}'$ are independent draws from $P'$. 
The distance associated with CRPS for two cdf's $F$ and $F'$ is the Cram\'er distance~\citep{szekely2003statistics} defined as
\begin{equation*}
	\cramer(F,F')=\int_{-\infty}^\infty(F(z)-F'(z))^2dz.
\end{equation*}
Both the energy distance and the Cram\'er distance are special cases of the maximum mean discrepancy (MMD) distance~\citep{gretton2012kernel,sejdinovic2013equivalence} and had been incorporated in \citet{bellemare2017cramer} in the context of generative adversarial networks~\citep{goodfellow2014} \revise{and recently in \citet{chen2024generative} for post-processing ensemble weather forecasts. See Section~\ref{sec:related_distreg} for more literature review on distributional regression and generative models.}

\subsection{Engression methodology}\label{sec:engression}

\revise{
Consider a general model class  
\begin{equation}\label{eq:gen_model_class}
	\cM = \{\gen(x,\varepsilon)\}	
\end{equation}
where $\gen:(x,\varepsilon)\mapsto y$ belongs to a function class and $\varepsilon$ is a random vector with a pre-specified distribution independent from covariates $X$, e.g.\ with each component following i.i.d.\ a uniform or standard Gaussian distribution. 
Each model in \eqref{eq:gen_model_class} is a generative model that induces a conditional distribution of $Y|X=x$: for $\gen\in\cM$ and $x\in\bbR^d$, we denote by $P_{\gen}(y|x)$ the distribution of $\gen(x,\varepsilon)$ with the randomness coming from $\varepsilon$. Most generally, $\cM$ can contain both pre-ANM and post-ANM classes.
}

We define the solution $\tilde\gen$ to the population version of \emph{engression} as
\begin{equation}\label{eq:eng_pop}
	\tilde\gen\in\argmin_{\gen\in\revise{\cM}} \bbE_{X\sim \ptr(x)}[\cL(P_{\gen}(y|X);\ptr(y|X))],
\end{equation}
where $\cL(P;P_0)$ is a loss function for a distribution $P$ given a reference distribution $P_0$. We will define the finite-sample version of engression after specifying the choice for the loss function.  

Engression \revise{in principle} allows a rather flexible choice of the loss function, with the only requirement being the characterisation property: for all distributions $P$, it holds that $\cL(P;P_0)\ge\cL(P_0;P_0)$, where the equality holds if and only if $P=P_0$. It can be designed based on a statistical distance. Note that when the function classes $\cG$ and $\cH$ are nonparametric classes or neural networks, the loss function based on the KL divergence or Wasserstein distance in general does not have an analytic form to be easily optimised. For such loss functions, one typical solution is to adapt the generative adversarial networks~\citep{goodfellow2014}, which is more computationally involved. 

In comparison, the energy score \eqref{eq:es} or energy distance \eqref{eq:energy_distance} are appealing due to their computational simplicity: the estimation of the score or distance is \revise{explicitly computed} based on sampling. This works nicely with the generative nature of our model. Specifically, for any given $x$, one can draw a sample from the conditional distribution $P_{\gen}(y|x)$ by first sampling the random vector $\varepsilon\sim\mathrm{Unif}[0,1]$ and then applying the transformation $\gen$. Therefore, in this work, our primary choice for the loss function is the energy loss based on the negative energy score, defined as
\begin{equation*}
	\cL_e(P;P_0) := \bbE_{Y\sim P_0}[-\es(P,Y)].
\end{equation*}
Then the objective function in engression \eqref{eq:eng_pop} yields
\begin{equation}\label{eq:eng_obj_pop}
	\bbE[\cL_e(P_{\gen}(y|X);\ptr(y|X))] =  \bbE\Big[\|Y-\gen(X,\varepsilon)\| - \frac{1}{2}\|\gen(X,\varepsilon)-\gen(X,\varepsilon')\|\Big],
\end{equation}
where $(X,Y)\sim \ptr(x,y)$ and $\varepsilon$ and $\varepsilon'$ are independent draws from $\mathrm{Unif}[0,1]$ (without loss of generality). 
 
\revise{The first term in \eqref{eq:eng_obj_pop} measures the mean absolute (for univariate $Y$) residual of an engression sample. By minimising the first term alone, the inferential target would be the conditional median of $Y|X=x$, which boils down to the special case of a deterministic model $\gen(x,\varepsilon) = g(x)$. The second term measures the expected distance between samples of the generated distribution and encourages a positive variance of the fitted distribution (as it enters the loss with a negative sign). }

\revise{The following proposition provides a population guarantee for engression as a general distributional regression approach. It shows that when the model is correctly specified, the population engression learns the true conditional distribution for all $x$ within the support. Note that when the model class is a neural network class with a sufficiently large capacity, model specification is often not an issue in practice. Moreover, this guarantee also holds for a multivariate $Y$. See the proof in Appendix~\ref{pf:prop:dist_est_pop}.

\begin{proposition}\label{prop:dist_est_pop}
    Assume there exists $\gen\in\cM$ such that $\gen(x,\varepsilon)\sim \ptr(y|x)$ for all $x\in\cX$. Then the population engression solution $\tilde\gen$ defined in \eqref{eq:eng_pop} with the loss function being the energy loss \eqref{eq:eng_obj_pop} satisfies $\tilde\gen(x,\varepsilon)\sim \ptr(y|x)$ for all $x\in\cX$ almost everywhere.
\end{proposition}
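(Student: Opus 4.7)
The plan is to reduce the claim to a pointwise application of the strict propriety of the energy score (Lemma~\ref{lem:es}), integrate, and then extract the $\ptr(x)$-almost-everywhere conclusion from a standard nonnegativity-and-zero-integral argument.

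First I would rewrite the population engression objective as a single expectation over $X$ of the per-$x$ energy loss. For each fixed $x\in\cX$, set $P_0 := \ptr(y|x)$ and $P := P_{\gen}(y|x)$. Lemma~\ref{lem:es} says $\bbE_{Y\sim P_0}[\es(P_0,Y)] \ge \bbE_{Y\sim P_0}[\es(P,Y)]$, equivalently $\cL_e(P;P_0) \ge \cL_e(P_0;P_0)$, with equality if and only if $P=P_0$. Hence for every $\gen\in\cM$ and every $x\in\cX$,
\begin{equation*}
\cL_e(P_{\gen}(y|x);\ptr(y|x))\;\ge\;\cL_e(\ptr(y|x);\ptr(y|x)),
\end{equation*}
and the difference $\Delta_{\gen}(x) := \cL_e(P_{\gen}(y|x);\ptr(y|x)) - \cL_e(\ptr(y|x);\ptr(y|x))$ is nonnegative with $\Delta_{\gen}(x)=0$ iff $P_{\gen}(y|x)=\ptr(y|x)$.

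Next I would use the well-specification assumption to show that the lower bound is attained by the population objective. By hypothesis there exists $\gen^{\ast}\in\cM$ with $\gen^{\ast}(x,\varepsilon)\sim \ptr(y|x)$ for every $x\in\cX$, i.e.\ $P_{\gen^{\ast}}(y|x)=\ptr(y|x)$ pointwise on $\cX$. Integrating $\Delta_{\gen^{\ast}}\equiv 0$ against $\ptr(x)$ shows that the minimum value of the population engression objective is exactly $\bbE_X[\cL_e(\ptr(y|X);\ptr(y|X))]$. Since $\tilde\gen$ is defined as a minimiser in~\eqref{eq:eng_pop}, we obtain $\bbE_X[\Delta_{\tilde\gen}(X)]=0$.

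Finally, because $\Delta_{\tilde\gen}(x)\ge 0$ on $\cX$ and its integral against $\ptr(x)$ vanishes, $\Delta_{\tilde\gen}(x)=0$ for $\ptr(x)$-almost every $x\in\cX$. Invoking the equality case of Lemma~\ref{lem:es} on this set yields $P_{\tilde\gen}(y|x)=\ptr(y|x)$, i.e.\ $\tilde\gen(x,\varepsilon)\sim \ptr(y|x)$, for $\ptr(x)$-almost every $x\in\cX$, which is the stated conclusion.

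The argument is essentially a direct appeal to strict propriety, so there is no real obstacle beyond two bookkeeping points: confirming measurability of $x\mapsto \Delta_{\gen}(x)$ (inherited from continuity of the energy score in its inputs and a standard Fubini argument in the definition of $\cL_e$) so that the pointwise inequality integrates to the expected inequality, and being explicit that the ``almost everywhere'' in the statement refers to the marginal $\ptr(x)$. No argument about the multivariate nature of $Y$ is needed here since Lemma~\ref{lem:es} is stated for general random vectors.
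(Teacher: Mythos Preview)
Your proposal is correct and follows essentially the same approach as the paper: both arguments apply the strict propriety of the energy score (Lemma~\ref{lem:es}) pointwise in $x$, integrate against $\ptr(x)$, and extract the almost-everywhere conclusion. The only cosmetic difference is that the paper phrases it as a proof by contradiction (assume $P_{\tilde\gen}(y|x)\neq\ptr(y|x)$ on a set of positive measure and contradict minimality), whereas you argue directly via the nonnegative-integrand-with-zero-integral route; your version is arguably cleaner.
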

}

Nevertheless, it is worth pointing out that the above result as well as all the theoretical results in Section~\ref{sec:theory_extra} hold for any loss functions that satisfy the above characterisation property. We investigate the empirical performance of engression with other loss functions in Appendix~\ref{app:losses} including the MMD distance and KL divergence, where we observe that different loss functions result in comparable performance. While it will be worthwhile to discuss the relative advantages from a theoretical point of view, we work primarily with the energy loss because it is simple to understand without the need to choose further hyperparameters, is computationally fast as it does not require adversarial training, and is empirically well behaved during optimisation. 

\subsubsection{Finite-sample engression}\label{sec:engression_empirical}
Consider an i.i.d.\ sample $\{(X_i,Y_i):i=1,\dots,n\}$ from $\ptr(x,y)$. For each observation, we sample the noise variable $\varepsilon$ independently from $\mathrm{Unif}[0,1]$ for $m$ times, resulting in an i.i.d.\ sample $\{\varepsilon_{i,j}:i=1,\dots,n,j=1,\dots,m\}$. Based on the finite samples, we define the empirical version of engression as 
\begin{equation}\label{eq:eng_emp}
	\hat\gen\in\argmin_{\gen\in\preanm}\hat{\cL}(\gen)
\end{equation}
with the empirical objective function given by
\small
\begin{equation*}
	\hat{\cL}(\gen):=\frac{1}{n}\sum_{i=1}^n\left[\frac{1}{m}\sum_{j=1}^m\big\|Y_i-\gen(X_i,\varepsilon_{i,j})\big\| - \frac{1}{2m(m-1)}\sum_{j=1}^m\sum_{j'=1}^m\big\|\gen(X_i,\varepsilon_{i,j})-\gen(X_i,\varepsilon_{i,j'})\big\|\right],
\end{equation*}
\normalsize\hspace{-6pt}
which is an unbiased estimator of the population objective function \eqref{eq:eng_obj_pop}.
We parameterise the pre-ANMs using neural networks. Then the above empirical objective function is differentiable \revise{almost everywhere} with respect to all the model parameters. Thus, we adopt (stochastic) gradient descent algorithms~\citep{cauchy1847methode,robbins1951stochastic} to solve the optimisation problem in \eqref{eq:eng_emp}. 

\subsubsection{Point prediction}
Engression, being a distributional regression method, estimates the conditional distribution of $Y|X=x$, thus also providing estimators for various characteristics of the distribution such as the mean and quantiles. Specifically, at the population level, the engression estimator for the conditional mean of $Y$ given $X=x$ is derived from $\tilde\mu(x):=\bbE_\varepsilon[\tilde\gen(x,\varepsilon)]$. The engression estimator for the conditional median is $\tilde{m}(x):=Q_{0.5}(\tilde\gen(x,\varepsilon))$, where the quantile is taken with respect to $\varepsilon$. For any $\alpha\in[0,1]$, the engression estimator for the conditional $\alpha$-quantile is $\tilde{q}_\alpha(x):=Q_{\alpha}(\tilde\gen(x,\varepsilon))$.

For a finite sample, based on the empirical engression solution $\hat\gen$, we construct the estimators by sampling. Specifically, for any $x$, we sample $\varepsilon_i,i=1,\dots,m$ for some $m\in\mathbb{N}_+$, and then obtain $\hat\gen(x,\varepsilon_i),i=1,\dots,m$. These form an i.i.d.\ sample from the conditional distribution $p_{\hat\gen}(y|x)$. All the quantities are then estimated using their empirical versions from this sample. For instance, the conditional mean of $Y$ given $X=x$ is estimated by $\frac{1}{m}\sum_{i=1}^m\hat\gen(x,\varepsilon_i)$.

\subsection{Engression for extrapolation: recipe and real data starter}\label{sec:extrap_recipe}

\revise{Our recipe for extrapolation consists of} two main ingredients: \vspace{-0.05in}
\begin{enumerate}[label=(\roman*)]
\setlength{\itemsep}{0pt}
\setlength{\parskip}{2pt}
	\item distributional regression;  
    \item a pre-additive noise model.
\end{enumerate}\vspace{-0.05in}

As we will demonstrate in the next section, both ingredients are necessary for successful extrapolation. 
Results in Section~\ref{sec:extrapolability_anm} indicate the failure of distributional regression in conjunction with post-ANMs, while results in Section~\ref{sec:local_extrap} show the failure of the combinations of non-distributional approaches with pre- or post-ANMs. 

\revise{Notably, engression possesses both ingredients. First, Proposition~\ref{prop:dist_est_pop} justifies engression as a distributional regression approach. Second, our general model class $\cM$ in \eqref{eq:gen_model_class} contains the pre-ANM class, for which we provide detailed explanations that align more closely with our practical implementations in Appendix~\ref{app:eng_preanm}. We later provide some evidence of successful extrapolation for engression. }

\begin{figure}
\centering
\begin{tabular}{cc}
	\includegraphics[align=c,width=0.4\textwidth]{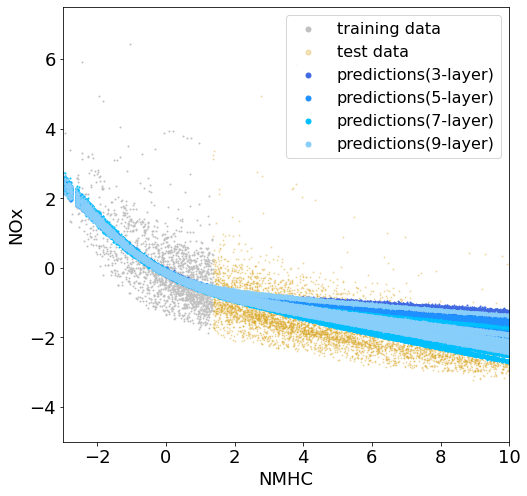}\hspace{0.3cm} &\hspace{0.3cm}
	\includegraphics[align=c,width=0.4\textwidth]{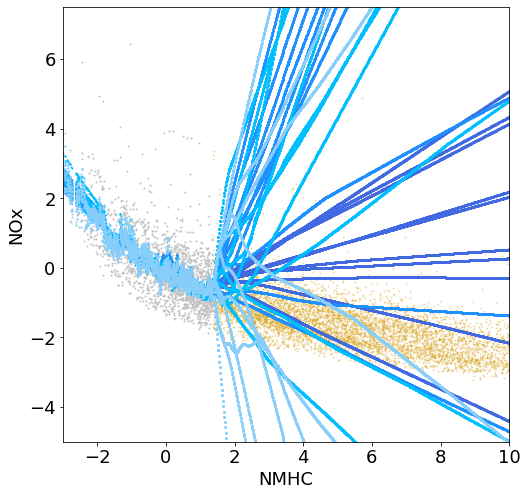} \\
	{(a) Engression} & (b) Regression
\end{tabular}
\caption{Out-of-support predictions by engression and $L_2$ regression on the air quality data. The models are trained on a support up to the first quartile of NMHC and evaluated in their predictions for larger NMHC values. To ensure a fair comparison, we apply both methods with the same neural network architecture whose number of layers is varying from 3 to 9; for each architecture, we repeatedly employ both methods for ten times (each with an independent random initialisation) and show  the behaviour of all of them. }\label{fig:illus_univar}
\end{figure}

Before presenting the formal theory, we demonstrate engression versus regression using  the air quality data set from \citet{misc_air_quality_360}, where we take the measurements of two pollutants, NMHC and NOx, as the predictor and response variables, respectively. We partition the data into training and test sets at the first quartile of the predictor and take the smaller portion for training. 
As visualised in Figure~\ref{fig:illus_univar}, engression and regression work comparably well within the training data while behaving very differently on out-of-support test data, despite the same underlying network architectures being used. Engression maintains excellent prediction performance on test data beyond the training boundary up to a certain range, when it starts to diverge moderately.  In contrast, regression produces highly variable predictions that spread out the whole space once going outside the support. Especially as the number of layers in the neural network grows, the out-of-support predictions by regression disperse even wider, whereas engression is highly robust to various model architectures. 

In the sequel, we cover theoretical and empirical perspectives of engression. Section~\ref{sec:theory_extra} presents the formal theory of extrapolation and justify the virtue of engression in this regard. Section~\ref{sec:finite_sample} provides finite-sample guarantees for the engression estimator. Section~\ref{sec:empirical} is devoted to large-scale experimental investigations.

\section{Extrapolation theory}\label{sec:theory_extra}
In Section~\ref{sec:extrapolability_def}, we define the extrapolation uncertainty and extrapolability for a given model class. As a demonstration, in Section~\ref{sec:extrapolability_anm}, we study the extrapolability of pre-ANMs and post-ANMs. Section~\ref{sec:local_extrap} discusses the extrapolability of engression, while Section~\ref{sec:extra_gain} quantifies the extrapolability gains of engression compared to the baseline methods. All results in this section are at the population level.

\subsection{Extrapolability}\label{sec:extrapolability_def}

In this section, we define \emph{extrapolability} through the notions of \emph{extrapolation uncertainty} at functional and distributional levels. The term `uncertainty' here refers to the spread among out-of-support predictions that arises due to partially observing the sample space and the fact that multiple functions or models fit the in-support data equally well, as made  more precise further below. It is in particular distinct from the `stochastic uncertainty' caused by the randomness from a finite sample. 

We start with the extrapolation uncertainty and extrapolability in terms of a function class which includes the notion of extrapolation used in existing literature as a special case. For any $x\in\bbR^d$, define the distance between $x$ and set $\cX$ by $d(x,\cX)=\inf_{x'\in\cX}\|x-x'\|$.

\begin{definition}[Functional extrapolability]\label{def:extra_func}
	Let $\cF$ be a function class of $x\in\bbR^d$. For $\delta>0$, define the functional extrapolation uncertainty of a function class $\cF$ as
	\begin{equation*}
		\cU_\cF(\delta) := \sup_{x':d(x',\cX)\le \delta} \sup_{\substack{f,f'\in\cF:\\D_{x}(f,f')=0,\forall x\in\cX}}D_{x'}(f,f'),
	\end{equation*}
	where $D_{x}(f,f'):=|f(x)-f'(x)|$ for all $x\in\bbR^d$.
	$\cF$ is said to be (globally) functionally extrapolable, if $\cU_\cF(\delta)\equiv0$.
\end{definition}
In words, the functional extrapolation uncertainty quantifies the worst-case disagreement of two functions that agree perfectly within the support.
The following examples illustrate the functional extrapolation uncertainty of several common function classes, which are proved in Appendix~\ref{pf:ex}.

\begin{example}[Linear functions]\label{ex:lin}
	Linear functions are functionally extrapolable, i.e.\ $\cU(\delta)\equiv0$. Note that the functional extrapolability has appeared in existing work, e.g.\ \citet{christiansen2021causal}. It essentially requires the function to be uniquely determined by its values on the support. 
\end{example}
\begin{example}[Lipschitz functions]\label{ex:lip}
	The functional extrapolation uncertainty of the class of $L$-Lipschitz continuous functions is $\cU(\delta)=2L\delta$.
\end{example}
\begin{example}[Monotone functions]\label{ex:monot}
	The functional extrapolation uncertainty of a class of (component-wise) monotone functions is $\cU(\delta)=\infty$.
\end{example}


In distributional regression, the relationship between $X$ and $Y$ is comprehensively captured by the conditional distribution of $Y|X=x$. 
Consider a class $\cP=\{P(y|x)\}$ of conditional distributions. Standard prediction tasks are often concerned with a certain characteristic quantity, such as the conditional mean function $\mu_{P}(x)=\bbE_{P(y|x)}[Y]$ or conditional quantile functions. Taking the conditional mean as an example, we define below the mean extrapolability which is a direct extension of the functional extrapolability towards a distributional level.
\begin{definition}[Mean extrapolability]\label{def:extra_mean}
	For $\delta>0$, define the mean extrapolation uncertainty of $\cP$ as
	\begin{equation*}
		\cU^\mu_\cP(\delta) := \sup_{x':d(x',\cX)\le \delta} \sup_{\substack{P,P'\in\cP:\\D_x(P,P')=0,\forall x\in\cX}}D_{x'}(P,P'),
	\end{equation*}
	where $D_x(P,P'):=|\mu_{P}(x)-\mu_{P'}(x)|$ for all $x\in\bbR^d$.
\end{definition}

In Definition~\ref{def:extra_mean}, the distance between two distributions is measured by the difference between the two conditional means. A more general way instead is to use a statistical distance, denoted by $D(P,P')$, between two probability distributions $P$ and $P'$, such as the Wasserstein distance or KL divergence. This leads us to the final notion of extrapolability in terms of a class of distributions.

\begin{definition}[Distributional extrapolability]\label{def:extra_dist}
	For $\delta>0$, define the distributional extrapolation uncertainty of $\cP$ as
	\begin{equation*}
		\cU_\cP(\delta) := \sup_{x':d(x',\cX)\le \delta} \sup_{\substack{P,P'\in\cP:\\D_x(P,P')=0,\forall x\in\cX}}D_{x'}(P,P'),
	\end{equation*}
	where $D_x(P,P'):=D(P(y|x),P'(y|x))$ for all $x\in\bbR^d$.
	$\cP$ is said to be (globally) distributionally extrapolable, if $\cU_\cP(\delta)\equiv0$.
\end{definition}

The distributional extrapolation uncertainty provides an upper bound on the degree of disparity between two conditional distributions outside the support, given that they coincide within the support. Distributional extrapolability implies that the conditional distribution of $Y|X=x$ for all $x$ globally is uniquely determined by the conditional distributions of $Y|X=x$ for all $x$ within the support. 
Perhaps surprisingly, unlike functional extrapolability, which imposes stringent constraints on the function class (such as linearity, as discussed in Example~\ref{ex:lin}), distributional extrapolability can be achieved under relatively mild conditions. These will be realised concretely in Theorem~\ref{thm:extra_anm} below.

\revise{
We conclude this section by illuminating the connections among the three levels of extrapolability. In essence, mean extrapolability serves as a bridge between functional and distributional extrapolability. 
If a method is solely aimed at fitting the conditional mean function, its ability to extrapolate beyond the training support relies fully on the extrapolability of the function class itself, which could be highly restrictive. In contrast, a method that aims to match the entire conditional distribution needs potentially much weaker assumptions on the function class for successful extrapolation. 
}

This insight implies that if a method is solely aimed at fitting the conditional mean function, its ability to extrapolate beyond the training support relies fully on the extrapolability of the function class itself, which could be highly restrictive. In contrast, a method that aims to match the entire conditional distribution needs potentially much weaker assumptions on the function class for successful extrapolation. This suggests from another perspective than the intuitions provided in Section~\ref{sec:dist_reg} that distributional regression could bring in more potential for extrapolation.

\subsection{extrapolability of ANMs}\label{sec:extrapolability_anm}

For ease of presentation, starting from this section, we will primarily focus on the univariate case with $X\in\bbR$ unless stated otherwise. We introduce some additional notation. 
\revise{Recall in \eqref{eq:postanm_uni} that a univariate post-ANM} is expressed as $Y = g(X)+\eta,$
where $g$ is a univariate nonlinear function and the parameter $W$ in \eqref{eq:postanm} is absorbed into $g$. We assume $\eta$ is independent from $X$ and follows an arbitrary distribution that is absolutely continuous with respect to the Lebesgue measure so that $\eta\overset{d}=h^\star(\varepsilon)$ for a strictly monotone function $h^\star$ and a random variable $\varepsilon\sim\mathrm{Unif}[0,1]$. Note that $\varepsilon$ can in principle be any continuous random variable and we choose the uniform distribution for convenience. Also assume without loss of generality that $\eta$ has a zero median so that $h^\star(0.5)=0$, which avoids trivial non-identifiability. 
Define function classes $\cG=\{g(x)\}$ and $\cH=\{h(\varepsilon):h(0.5)=0,h\text{ is strictly monotone}\}$. We denote by $\cM_{\mathrm{post}}=\{g(x)+h(\varepsilon):g\in\cG,h\in\cH\}$ a class of post-ANMs. 
Similarly, the pre-ANM is of the following form
\begin{equation}\label{eq:preanm_uni}
	Y = g(X+\eta)+\beta X.
\end{equation}
We denote by $\cM_{\mathrm{pre}}=\{g(x+h(\varepsilon))+\beta x:g\in\cG,h\in\cH,\beta\in\bbR\}$ a class of pre-ANMs.

Note that any model in $\cM_{\mathrm{pre}}$ or $\cM_{\mathrm{post}}$ induces a conditional distribution of $Y$, given $X=x$. For example, a post-ANM $g(X)+\eta$ with $\eta\sim\cN(0,1)$ follows a  Gaussian distribution $\cN(g(x),1)$ conditional on $X=x$. Thus, the extrapolability of $\cM_{\mathrm{pre}}$ or $\cM_{\mathrm{post}}$ is defined as the distributional extrapolability of the class of conditional distributions induced by the models in $\cM_{\mathrm{pre}}$ or $\cM_{\mathrm{post}}$. 

\begin{definition}[Linear and nonlinear pre-ANMs]
	A pre-ANM class $\preanm$ is said to be linear if $\cG$ belongs to a linear function class.
	A pre-ANM class $\preanm$ is said to be nonlinear if for all $g\in\cG$, for all $h\in\cH$ and $\varepsilon\in[0,1]$, it holds that $g$ is twice differentiable and there exists $x\in\cX$ such that $\ddot{g}(x+h(\varepsilon))\neq0$.
\end{definition}

The following theorem formalises the extrapolability of a pre-ANM class, either linear or nonlinear, in contrast to the non-extrapolability of a post-ANM class. 
\begin{theorem}\label{thm:extra_anm}
	Assume $h$ is unbounded for all $h\in\cH$, i.e.\ for any $B>0$, there exists $\varepsilon_1,\varepsilon_2$, such that $h(\varepsilon_1)>B$ and $h(\varepsilon_2)<-B$. If $g$ is strictly monotone and twice differentiable for all $g\in\cG$, then we have\vspace{-0.05in}
	\begin{enumerate}[label=(\roman*)]
	\setlength{\itemsep}{0pt}
	\setlength{\parskip}{2pt}
		\item both linear and nonlinear $\cM_{\mathrm{pre}}$ are distributionally extrapolable, i.e.\ $\cU_{\mathrm{pre}}(\delta)=0$;
		\item $\cM_{\mathrm{post}}$ has an infinite distributional extrapolation uncertainty, i.e.\ $\cU_{\mathrm{post}}(\delta)=\infty$.
	\end{enumerate}\vspace{-0.05in}
	In fact, $\cG$ being functionally extrapolable is the sufficient and necessary condition for $\cM_{\mathrm{post}}$ to be distributionally extrapolable.
\end{theorem}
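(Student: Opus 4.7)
My plan is to take any two models in the given class that induce the same conditional distribution of $Y$ given $X=x$ for every $x\in\cX$ and measure how far those distributions can diverge as $x$ leaves $\cX$; the same template handles all four assertions in a unified way.

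For the post-ANM half I would start from two elements $(g_1,h_1),(g_2,h_2)\in\cM_{\mathrm{post}}$. Equality of the conditional medians on $\cX$ forces $g_1\equiv g_2$ on $\cX$ via $h(0.5)=0$, after which equality of the full conditional distributions makes the noise distributions coincide and hence $h_1\equiv h_2$. Any out-of-support disagreement of the conditional distributions must therefore be inherited from disagreement of $g_1,g_2$ off $\cX$; this establishes the equivalence ``$\cM_{\mathrm{post}}$ distributionally extrapolable $\iff$ $\cG$ functionally extrapolable''. Since by Example~\ref{ex:monot} monotone function classes have infinite functional extrapolation uncertainty---the bump construction behind that example can be made smooth and kept strictly monotone by taking $g_2=g_1+\phi$ for a smooth non-decreasing $\phi\ge 0$ supported off $\cX$ of arbitrary height---we conclude $\cU_{\mathrm{post}}(\delta)=\infty$.

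For the pre-ANM half I split into the linear and nonlinear cases. If $g(x)=ax+b$, the pre-ANM collapses to $Y=(a+\beta)X+ah(\varepsilon)+b$, a plain linear model with additive noise; matching conditional distributions at any two distinct points of $\cX$ pins down both the slope $a+\beta$ and the distribution of the noise $ah(\varepsilon)$, so equality extends to every $x\in\bbR$. In the nonlinear case, WLOG taking $g,h$ both increasing, the $\alpha$-quantile of $Y|X=x$ equals $g(x+h(\alpha))+\beta x$, so distributional equality on $\cX$ becomes
\[
g_1(x+h_1(\alpha))+\beta_1 x=g_2(x+h_2(\alpha))+\beta_2 x,\qquad x\in\cX,\ \alpha\in(0,1).
\]
Differentiating this identity twice in $x$ yields $\ddot g_1(x+h_1(\alpha))=\ddot g_2(x+h_2(\alpha))$, and differentiating the first $x$-derivative once in $\alpha$ yields $\ddot g_1(x+h_1(\alpha))\dot h_1(\alpha)=\ddot g_2(x+h_2(\alpha))\dot h_2(\alpha)$. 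The nonlinearity hypothesis supplies, for each $\alpha$, some $x\in\cX$ where the second derivative is non-zero; dividing the two identities there gives $\dot h_1(\alpha)=\dot h_2(\alpha)$, and the normalisation $h_i(0.5)=0$ then forces $h_1\equiv h_2$. Substituting $h_1=h_2=:h$ back and using that $\alpha\mapsto x+h(\alpha)$ sweeps all of $\bbR$ (by unboundedness and continuity of $h$), the equation reduces to $(g_1-g_2)(y)=(\beta_2-\beta_1)x$ for every $y\in\bbR$; since the left-hand side is independent of $x$, evaluating at two distinct points of $\cX$ forces $\beta_1=\beta_2$ and $g_1\equiv g_2$, so the two pre-ANMs coincide globally and $\cU_{\mathrm{pre}}(\delta)=0$.

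The main obstacle is the nonlinear pre-ANM step, specifically legitimising the second-derivative manoeuvre that isolates $h$. I would lean on the absolute continuity of $\eta$ (with a positive density if necessary) to make $h$ continuously differentiable, on $\cX$ having non-empty interior for the $x$-derivatives to make sense, and on the nonlinearity hypothesis holding uniformly in $\alpha$ so that the division step is legitimate for every $\alpha$. With these standard regularity points in place, the rest of the argument is essentially algebraic bookkeeping.
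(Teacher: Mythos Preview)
Your proposal is correct and follows essentially the same approach as the paper: pass from distributional equality to the pointwise quantile identity $g_1(x+h_1(\alpha))+\beta_1 x=g_2(x+h_2(\alpha))+\beta_2 x$, differentiate to exploit $\ddot g\neq 0$ and pin down $h_1\equiv h_2$, then use unboundedness of $h$ to extend $g_1\equiv g_2$ globally; the post-ANM argument via medians and the sufficiency/necessity equivalence are likewise identical. The only cosmetic differences are which derivatives you combine (you take $\partial_x^2$ and $\partial_\alpha\partial_x$, the paper takes $\partial_\varepsilon$, $\partial_x$ and then another $\partial_x$ after an algebraic substitution) and that you fold the unboundedness step directly into the identifiability argument rather than splitting it off as a separate proposition.
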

The proof is provided in Appendix~\ref{pf:thm:extra_preanm}. The assumptions for a pre-ANM class to perform extrapolation include the unboundedness of $h$ and the monotonicity of the function class $\cG$. 
The first assumption essentially asserts that the noise $\eta=h(\varepsilon)$ is supported on $\bbR$, meaning that it has a strictly positive density on $\bbR$. While this may seem like a mild assumption, in transitioning from a population case to finite samples, we must acknowledge that the empirical support of the training data is always finite. Thus, starting from the next section, we will navigate towards a more realistic scenario where the noise $\eta$ may have a bounded support.
The assumption of monotonicity is also much broader than assuming the functional extrapolability of the function class. Nevertheless, we can relax the requirement for global monotonicity to that of monotonicity only in proximity to the boundary of the support.

\revise{
Theorem~\ref{thm:extra_anm} highlights a substantial difference between pre-ANMs and post-ANMs: under identical (and rather mild) assumptions, pre-ANMs satisfy distributional extrapolability, whereas post-ANMs fail to extrapolate unless the function class is inherently extrapolable.
}
Note that although a linear pre-ANM class is not identifiable (i.e.\ there exist two linear pre-ANMs that induce the same conditional distribution of $Y|X$), it still satisfies the distributional extrapolability according to Theorem~\ref{thm:extra_anm}. 
In fact, the two model classes $\cM_{\mathrm{pre}}$ and $\cM_{\mathrm{post}}$ only coincide when $\cG$ is a linear function class, which is in harmony with the functional extrapolability of linear functions presented in Example~\ref{ex:lin}. 

It is worth noting that one could establish extrapolability for a model that allows for both pre- and post-additive noises, for example, as follows 
\begin{equation*}
    Y = g(X+\eta)+\beta X + \xi,
\end{equation*}
where $\eta$ is the pre-additive noise while $\xi$ is the post-additive noise. In Appendix~\ref{app:pre_post_anm}, we provide an example of this possibility while more general investigations are worthwhile in future research.

\subsection{extrapolability of engression}\label{sec:local_extrap}
Next, we take a closer look at the extrapolability properties and guarantees of engression. Denote the solution to the population engression in \eqref{eq:eng_pop} explicitly by $(\tilde{g},\tilde{h},\tilde\beta)$. 
Based on the definitions in Section~\ref{sec:extrapolability_def}, we define the extrapolability of a method as the extrapolability of the class of estimators obtained from that method. 
For a quantity $\textsl{q}$ of interest, denote by $\cF^\textsl{q}_\rmE$ the class of functions for estimating the quantity $\textsl{q}$ obtained by applying engression on the training distribution. Specifically, we use $\textsl{q}=\textsl{m},\mu$ and $\alpha$ to stand for the (conditional) median, mean, and $\alpha$-quantile, respectively. For a method $\textsl{M}$, denote by $\cF_\textsl{M}$ the class of functions obtained by applying the method $\textsl{M}$ on the training distribution. Specifically, we use $\textsl{M}=\mathrm{QR},L_1$ and $L_2$ to represent quantile regression, $L_1$ regression, and $L_2$ regression, respectively. We denote by $\cF^\alpha_{\mathrm{QR}}$ the solution set of quantile regression for the $\alpha$-quantile.

Throughout this section, we assume the following conditions are fulfilled.
\begin{enumerate}[leftmargin=1cm,label=(\textit{A\arabic*})]\vspace{-0.05in}
\setlength{\itemsep}{0pt}
\setlength{\parskip}{2pt}
\item The noise $\eta$ follows a symmetric distribution with a bounded support, i.e.\ $\eta\in[-\etam,\etam]$ for $\etam>0$.\label{ass:bounded_noise}
\item The training support is a half line $\cX=(-\infty,\xm]$. \label{ass:support_half}
\item $g$ is strictly monotone and twice differentiable for all $g\in\cG$.\label{ass:monotone}
\item The true model, denoted by $Y=g^\star(X+h^\star(\varepsilon))+\beta^\star X$, satisfies $g^\star\in\cG$, $h^\star\in\cH$, and $g^\star$ is nonlinear, i.e.\ $\ddot{g}^\star(x)\neq0$ apart from a set with Lebesgue measure 0.\label{ass:well_specify}
\end{enumerate}\vspace{-0.05in}
Condition~\ref{ass:bounded_noise} is not a necessary assumption as a bounded support makes the problem more challenging for extrapolation, as suggested by Theorem~\ref{thm:extra_anm}. We include this assumption to make the setting more plausible for a broader range of real-world scenarios, compared to the requirement of unbounded noise. 

Condition~\ref{ass:support_half} simplifies the discussions by restricting the focus on extrapolation where $x>\xm$. Though this is a technical assumption for ease of analysis, the insights can be extended to scenarios where extrapolation is required on both ends of the support.

Condition~\ref{ass:monotone} requires the monotonicity and differentiability of the function $g$. Compared to a Lipschitz condition, this assumption presents a tougher challenge for extrapolation. However, it could be readily relaxed to the condition where $g$ is strictly monotone only in the interval $[\xm-\etam,\xm+\etam]$, under which the results regarding the extrapolation for $x$ outside the support would still hold.

Finally, Condition~\ref{ass:well_specify} ensures the correct specification of the model, meaning that the true data generating process $g^\star(X+h^\star(\varepsilon))+\beta^\star X$ belongs to the pre-ANM class $\cM_{\mathrm{pre}}$. This condition helps us concentrate our analysis on the extrapolation behaviour of the method, by excluding the effects of any approximation errors. The additional requirement that $g^\star$ be nonlinear excludes the linear setting where extrapolation is straightforward and positions us in the pivotal scenario of nonlinear extrapolation. 

Given that the noise is bounded, we cannot expect  extrapolability across the whole real line. However, we can achieve extrapolability within a specific range that extends beyond the support, an advantage that is not seen in alternative methods under the same mild  conditions.
 Moreover, in the upcoming subsection, we will compare our method with baseline approaches by quantifying the gains between their extrapolation uncertainties globally on $\bbR$, where engression constantly holds an advantage.

The following theorem suggests that engression can identify the true function $g^\star$ up to a certain range depending on the noise level. This brings up an interesting point that the (pre-additive) noise is a blessing rather than a curse for extrapolation: the more noise one has (i.e.\ a larger $\etam$), the farther one can extrapolate. In fact, this further echoes the result in Theorem~\ref{thm:extra_anm} which states that with an unbounded noise ($\etam\to\infty$), we can achieve global extrapolability and recover $g^\star$ over the entire real line. The proof is given in Appendix~\ref{app:pf_local_extrap}.
\begin{theorem}\label{thm:func_recover}
	We have $\tilde\beta=\beta^\star$, $\tilde{g}(x)=g^\star(x)$ for all $x\leq x_{\max}+\eta_{\max}$, and $\tilde{h}(\varepsilon)=h^\star(\varepsilon)$ for all $\varepsilon\in[0,1]$.
\end{theorem}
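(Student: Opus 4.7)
The plan is to first use Proposition~\ref{prop:dist_est_pop} to obtain a pointwise functional identity on the training support, then differentiate in both arguments to peel off $\tilde\beta$, $\tilde h$, and $\tilde g$ one at a time.

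Under~\ref{ass:well_specify} the true model lies in the engression search class, so Proposition~\ref{prop:dist_est_pop} guarantees that for almost every $x\le\xm$ the random variables $\tilde g(x+\tilde h(\varepsilon))+\tilde\beta x$ and $g^\star(x+h^\star(\varepsilon))+\beta^\star x$ share the same law when $\varepsilon\sim\mathrm{Unif}[0,1]$. Both are continuous and strictly monotone functions of $\varepsilon$ on $[0,1]$ (by~\ref{ass:monotone} and the definition of $\cH$, after fixing a WLOG orientation for $\tilde g$ and $\tilde h$), so equality in distribution upgrades to pointwise equality of the two quantile functions:
\begin{equation}\label{eq:sketch_key}
    \tilde g(x+\tilde h(\varepsilon))+\tilde\beta x \;=\; g^\star(x+h^\star(\varepsilon))+\beta^\star x \qquad \forall\, x\le\xm,\ \varepsilon\in[0,1].
\end{equation}

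Next, I would differentiate \eqref{eq:sketch_key} separately in $x$ and in $\varepsilon$; using the $x$-derivative relation $\dot{\tilde g}(x+\tilde h(\varepsilon))=\dot g^\star(x+h^\star(\varepsilon))+\beta^\star-\tilde\beta$ to eliminate $\dot{\tilde g}$ from the $\varepsilon$-derivative produces the one-parameter family of identities
\begin{equation*}
    \dot g^\star(x+h^\star(\varepsilon))\,\bigl[\dot{\tilde h}(\varepsilon)-\dot h^\star(\varepsilon)\bigr] \;=\; (\tilde\beta-\beta^\star)\,\dot{\tilde h}(\varepsilon),\qquad x\le\xm,\ \varepsilon\in[0,1].
\end{equation*}
The right-hand side is constant in $x$, while the left-hand side evaluates $\dot g^\star$ over the unbounded half-line $\{x+h^\star(\varepsilon):x\le\xm\}=(-\infty,\xm+h^\star(\varepsilon)]$. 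This is the main leverage point: nonlinearity of $g^\star$ (condition~\ref{ass:well_specify}, $\ddot g^\star\ne 0$ almost everywhere) prevents $\dot g^\star$ from being constant on any open set, so the factor $\dot{\tilde h}(\varepsilon)-\dot h^\star(\varepsilon)$ must vanish at every $\varepsilon$; the right-hand side then also vanishes, and strict monotonicity of $\tilde h$ (so that $\dot{\tilde h}\not\equiv0$) forces $\tilde\beta=\beta^\star$. Integrating $\dot{\tilde h}\equiv\dot h^\star$ with the common anchor $\tilde h(0.5)=h^\star(0.5)=0$ yields $\tilde h\equiv h^\star$ on $[0,1]$.

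Substituting $\tilde\beta=\beta^\star$ and $\tilde h=h^\star$ back into \eqref{eq:sketch_key} collapses it to $\tilde g(y)=g^\star(y)$ for every $y$ in the image of $(x,\varepsilon)\mapsto x+h^\star(\varepsilon)$ on $(-\infty,\xm]\times[0,1]$, which by continuity and surjectivity of $h^\star:[0,1]\to[-\etam,\etam]$ is exactly $(-\infty,\xm+\etam]$. The main obstacle I anticipate is the upgrade from distributional to pointwise equality: the pre-ANM parameterization carries a sign ambiguity (simultaneously flipping the orientation of $\tilde g$ and $\tilde h$ gives the same conditional law) which must be pinned down by a WLOG convention, and $\tilde h$ is not a priori differentiable from the definition of $\cH$, so the formal argument may need to replace the $\varepsilon$-derivative step by difference quotients of the pointwise identity \eqref{eq:sketch_key}.
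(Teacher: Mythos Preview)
Your proposal is correct and takes essentially the same approach as the paper: the paper's proof first invokes Lemma~\ref{lem:es} (equivalently your appeal to Proposition~\ref{prop:dist_est_pop}) to obtain the distributional identity on $\cX$, and then cites Proposition~\ref{prop:identify_preanm}, whose proof carries out exactly the differentiation argument you describe (take $\partial_x$ and $\partial_\varepsilon$ of the pointwise quantile identity, combine them to isolate $\dot{g}(x+h(\varepsilon))[\dot h-\dot h']$, and use nonlinearity of $g^\star$ to force $\dot h=\dot h'$, then $\beta=\beta'$, then $g=g'$ on the image set). Your explicit identification of the image $(-\infty,\xm+\etam]$ under~\ref{ass:bounded_noise}--\ref{ass:support_half} is precisely how the paper's general identifiability result specialises to the statement of Theorem~\ref{thm:func_recover}.
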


We say a model class is locally extrapolable on $\Delta\subseteq(0,\infty)$ if $\cU(\delta)=0$ for all $\delta\in\Delta$. For example, global extrapolability is local extrapolability with $\Delta=(0,\infty)$. 
Consider estimating the conditional quantile.
For $\alpha\in[0,1]$, let $q^\star_\alpha(x)$ be the conditional $\alpha$-quantile of $Y$ given $X=x$ under the training distribution $\ptr(y|x)$. The class of engression estimators for the $\alpha$-quantile is given by $\cF_\rmE^\alpha=\{\tilde{q}_\alpha(x)\}$. The following result shows that $\cF_\rmE^\alpha$ is locally extrapolable on $(0,\eta_{\max}-Q_\alpha(\eta)]$. 
\begin{corollary}\label{coro:local_quantile}
	For $\alpha\in[0,1]$, it holds for all $x\leq x_{\max}+\eta_{\max}-Q_\alpha(\eta)$ that $\tilde{q}_\alpha(x)=q^\star_\alpha(x)$, i.e.
	\begin{equation*}
		\bbP\left(Y \leq \tilde{q}_{1-\alpha}(X)\mid X=x\right) = 1-\alpha.
	\end{equation*}
\end{corollary}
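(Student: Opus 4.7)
The plan is to treat Corollary~\ref{coro:local_quantile} as an essentially direct consequence of the functional identification in Theorem~\ref{thm:func_recover}. Concretely, Theorem~\ref{thm:func_recover} tells us that at the population engression optimum, $\tilde\beta=\beta^\star$, $\tilde h=h^\star$ on $[0,1]$, and $\tilde g$ coincides with $g^\star$ on $(-\infty,x_{\max}+\eta_{\max}]$. Thus the engression-induced conditional distribution of $Y$ given $X=x$ is the law of $\tilde g(x+\tilde h(\varepsilon))+\tilde\beta x$ with $\varepsilon\sim\mathrm{Unif}[0,1]$, and the whole task is to read off its $\alpha$-quantile and verify it equals $q^\star_\alpha(x)$ in the prescribed range.

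The key step is a monotone-transformation quantile computation. By Assumption~\ref{ass:monotone}, $\tilde g\in\cG$ is strictly monotone, and by the definition of $\cH$, $\tilde h$ is strictly monotone. Without loss of generality (by relabelling $\varepsilon\mapsto 1-\varepsilon$ if necessary, using that $\varepsilon$ is uniform) take both to be strictly increasing; an analogous argument handles the decreasing case with $\alpha$ replaced by $1-\alpha$. Then $\varepsilon\mapsto \tilde g(x+\tilde h(\varepsilon))+\tilde\beta x$ is strictly increasing, so for any $\alpha\in[0,1]$,
\begin{equation*}
\tilde q_\alpha(x)\;=\;Q_\alpha\bigl(\tilde g(x+\tilde h(\varepsilon))+\tilde\beta x\bigr)\;=\;\tilde g\bigl(x+\tilde h(Q_\alpha(\varepsilon))\bigr)+\tilde\beta x.
\end{equation*}
Using $\tilde h=h^\star$ on $[0,1]$, we have $\tilde h(Q_\alpha(\varepsilon))=h^\star(Q_\alpha(\varepsilon))=Q_\alpha(h^\star(\varepsilon))=Q_\alpha(\eta)$, and $\tilde\beta=\beta^\star$.

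The third step applies the range constraint. For $x\le x_{\max}+\eta_{\max}-Q_\alpha(\eta)$, the argument $x+Q_\alpha(\eta)$ lies in $(-\infty,x_{\max}+\eta_{\max}]$, so Theorem~\ref{thm:func_recover} yields $\tilde g(x+Q_\alpha(\eta))=g^\star(x+Q_\alpha(\eta))$, hence $\tilde q_\alpha(x)=g^\star(x+Q_\alpha(\eta))+\beta^\star x$. Running the identical monotone quantile computation on the true model $Y=g^\star(X+h^\star(\varepsilon))+\beta^\star X$ gives $q^\star_\alpha(x)=g^\star(x+Q_\alpha(\eta))+\beta^\star x$, so $\tilde q_\alpha(x)=q^\star_\alpha(x)$, and the probabilistic restatement $\bbP(Y\le \tilde q_{1-\alpha}(X)\mid X=x)=1-\alpha$ is just the defining property of the true conditional quantile.

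I do not expect any genuine obstacle here, since Theorem~\ref{thm:func_recover} carries all the analytic weight; the corollary is a bookkeeping exercise. The only subtle point worth being careful about is tracking the direction of monotonicity of $g^\star$ and $h^\star$ so that the $\alpha$-quantile of $\varepsilon$ maps cleanly to the $\alpha$-quantile of $Y|X=x$—but because $\varepsilon$ is uniform on $[0,1]$ and both $g^\star$ and $h^\star$ are strictly monotone, one can always reduce to the increasing case by the reparameterisation noted above, and the cutoff $x_{\max}+\eta_{\max}-Q_\alpha(\eta)$ is unchanged under this reduction.
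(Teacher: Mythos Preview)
Your proposal is correct and follows the same approach as the paper's proof: express both $\tilde q_\alpha(x)$ and $q^\star_\alpha(x)$ via the monotone-transformation quantile formula as $\tilde g(x+Q_\alpha(\eta))+\tilde\beta x$ and $g^\star(x+Q_\alpha(\eta))+\beta^\star x$, then invoke Theorem~\ref{thm:func_recover}. The paper's proof is in fact a two-line version of your argument---it simply states $\tilde{q}_\alpha(x)=\tilde{g}(x+Q_\alpha(\eta))+\tilde\beta x$ and $q^\star_\alpha(x)=g^\star(x+Q_\alpha(\eta))+\beta^\star x$ without the intermediate step through $\tilde h(Q_\alpha(\varepsilon))$, and then appeals to Theorem~\ref{thm:func_recover}; your more explicit treatment of the monotonicity direction and the range constraint $x+Q_\alpha(\eta)\le x_{\max}+\eta_{\max}$ is a welcome elaboration rather than a different route.
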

In comparison, quantile regression does not have any local functional extrapolability. To see this, notice that quantile regression defined in \eqref{eq:qr} among the function class $\cG$ leads to the solution set
\begin{equation}\label{eq:qr_class}
	\cF^\alpha_{\mathrm{QR}}=\{f\in\cG:f(x)=q^\star_\alpha(x),\forall x\le\xm\}
\end{equation}
which contains all functions that match the true quantile function within the support but can take arbitrary values for $x$ outside the support. 
For a monotone function class $\cG$, the extrapolation uncertainty of $\cF^\alpha_{\mathrm{QR}}$ is actually infinity for all $\delta>0$, as seen in Example~\ref{ex:monot}. Alternatively, one may consider quantile regression over a pre-ANM class, e.g.
\begin{equation*}
	\min_{g\in\cG,h\in\cH} \bbE[\rho_\alpha(Y-g(X+h(\varepsilon)))].
\end{equation*}
In fact, the above formulation leads to the same solution set \eqref{eq:qr_class}, thus possessing an infinite extrapolation uncertainty. These results suggest that in the recipe for extrapolation, 
the combinations of non-distributional approaches with either pre-ANMs or post-ANMs do not work.


\subsection{extrapolability gain}\label{sec:extra_gain}
We present the extrapolability gains (in terms of extrapolation uncertainties) of engression over baseline methods for estimating the conditional median, mean, and distribution. 
We have seen from Example~\ref{ex:monot} and Theorem~\ref{thm:extra_anm} that with monotone functions, the baseline methods yield infinite extrapolation uncertainties. Thus, to ensure informative quantifications, throughout this subsection, we assume in addition to Conditions \ref{ass:bounded_noise}-\ref{ass:well_specify} that $\cG$ is uniformly $L$-Lipschitz continuous, i.e.\ \vspace{-0.1in}
\begin{enumerate}[leftmargin=*,label=(\textit{A\arabic*})]\addtocounter{enumi}{4}
	\item For all $g\in\cG$, it holds that $0<\dot{g}(x)\leq L$ for all $x\in\bbR$,\label{ass:lip}
\end{enumerate}\vspace{-0.1in}
although it is not a necessary condition for the extrapolability of engression.

According to Theorem~\ref{thm:func_recover}, we have for the engression solution that $\tilde{g}\in\cG^\star$ with
\begin{equation*}
	\cG^\star := \{g\in\cG:g(x)=g^\star(x), \forall x\le \xm+\etam\}.
\end{equation*}

We start with the extrapolability gain for conditional median estimation, where we compare engression to~$L_1$ regression defined in \eqref{eq:l1} with the  same function class $\cG$. As discussed at the end of Section~\ref{sec:engression}, given an engression estimator $(\tilde{g},\tilde{h},\tilde\beta)$ and a point $x\in\bbR$, we use the median of $\tilde{g}(x+\tilde{h}(\varepsilon))+\tilde\beta x$ as the median estimator. Since $\tilde{g}$ is strictly monotone and $\tilde{h}(\varepsilon)$ has a median 0, the engression estimator for the conditional median of $Y$ given $X=x$ becomes $\tilde{g}(x)+\tilde\beta x$. Thus, the class of conditional median estimators by engression is given by $\cF^\textsl{m}_{\mathrm{E}}=\{g(x)+\beta^\star x:g\in\cG^\star\}$. The one obtained by $L_1$ regression is denoted by $\cF_{L_1}$. The proofs for the extrapolability gains (Propositions~\ref{prop:gain_median}-\ref{prop:gain_dist}) are given in Appendix~\ref{app:pf_extra_gain}.
\begin{proposition}[Median extrapolability gain]\label{prop:gain_median}
	The functional extrapolation uncertainties of the two classes of median estimators are given by $\cU_{\cF^\textsl{m}_{\mathrm{E}}}(\delta) = L(\delta-\etam)_+$ and $\cU_{\cF_{L_1}}(\delta) = L\delta$, respectively. The median extrapolability gain, defined by the difference between the two functional extrapolation uncertainties, is
	\begin{equation*}
		\gamma^\textsl{m}(\delta) := \cU_{\cF_{L_1}}(\delta) - \cU_{\cF^\textsl{m}_{\mathrm{E}}}(\delta) = L\min\{\delta,\etam\}>0,
	\end{equation*} 
	for all $\delta>0$. The maximum extrapolability gain is $\sup_{\delta>0}\gamma^\textsl{m}(\delta)=L\etam$, which is achieved when $\delta\ge \etam$.
\end{proposition}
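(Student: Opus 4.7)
The idea is to express both $\cF^\textsl{m}_{\mathrm{E}}$ and $\cF_{L_1}$ as classes of $L$-Lipschitz, strictly monotone functions agreeing with the true conditional median on a specific set $S$, and then bound the pointwise disagreement on $(\xm,\xm+\delta]$ by a single Lipschitz-type integral estimate. The two classes coincide in structure but differ in $S$. By Theorem~\ref{thm:func_recover}, $\tilde\beta=\beta^\star$ and $\tilde g = g^\star$ on $(-\infty,\xm+\etam]$, so $S_{\mathrm{E}}=(-\infty,\xm+\etam]$. For $L_1$ regression, since $h^\star(\varepsilon)$ has median zero (as $h^\star(0.5)=0$ and $h^\star$ is strictly monotone) and monotonicity of $g^\star$ pushes the median through, the true conditional median under $\ptr(y|x)$ is $m^\star(x)=g^\star(x)+\beta^\star x$; pointwise minimisation of the $L_1$ risk then forces any element of $\cF_{L_1}$ (of the form $g+\beta^\star x$ with $g\in\cG$) to match $g^\star$ on $(-\infty,\xm]$, yielding $S_{L_1}=(-\infty,\xm]$.

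Given this common structure, consider any $f_1,f_2$ in the relevant class and $x'=\xm+\delta$. The difference vanishes for $x'\le\sup S$; for $x'>\sup S$, the $\beta^\star x$ terms cancel and
$$f_1(x')-f_2(x') \;=\; \int_{\sup S}^{x'}\bigl(\dot g_1(s)-\dot g_2(s)\bigr)\,ds,$$
whose integrand lies in $(-L,L)$ by \ref{ass:monotone} and \ref{ass:lip}. This yields $|f_1(x')-f_2(x')|\le L(x'-\sup S)$. Tightness in the supremum is obtained by extending $g^\star$ past $\sup S$ via one function with slope saturating $L$ and the other with slope approaching $0^+$, with a mollification near $\sup S$ to preserve strict monotonicity and twice differentiability. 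Substituting the two choices of $\sup S$ gives $\cU_{\cF^\textsl{m}_{\mathrm{E}}}(\delta)=L(\delta-\etam)_+$ and $\cU_{\cF_{L_1}}(\delta)=L\delta$, and subtracting yields
$$\gamma^\textsl{m}(\delta) \;=\; L\delta - L(\delta-\etam)_+ \;=\; L\min\{\delta,\etam\},$$
which is strictly positive for every $\delta>0$ and attains its supremum $L\etam$ as soon as $\delta\ge\etam$.

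The main technical obstacle is the tightness construction in the second paragraph: producing pairs in $\cG$ whose derivatives simultaneously approach $0^+$ and $L$ on the extrapolation interval while maintaining the global regularity required by $\cG$; a standard mollification at $\sup S$ resolves this. The main conceptual care lies in setting up the comparison so that the linear slope $\beta^\star$ is shared across both classes (for engression this is Theorem~\ref{thm:func_recover}; for $L_1$ regression it is forced by the constraint that $g+\beta^\star x$ match $g^\star+\beta^\star x$ on $\cX$), since this is what aligns the two classes in form and reduces the Lipschitz argument to the nonlinear component alone.
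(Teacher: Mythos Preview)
Your argument is essentially the paper's own proof: identify the agreement set ($(-\infty,\xm+\etam]$ for engression via Theorem~\ref{thm:func_recover}, $(-\infty,\xm]$ for $L_1$ regression), cancel the shared linear term, and bound the residual difference of the $g$-components beyond $\sup S$ by the Lipschitz constant times the overshoot; the paper uses the mean value theorem where you use the integral of the derivative, which is the same thing, and your explicit mollification for tightness is more careful than the paper, which simply asserts $\sup|\dot g-\dot g'|=L$. One small slip: the paper defines $L_1$ regression directly over $\cG$, so elements of $\cF_{L_1}$ are functions $f\in\cG$ with $f=m^\star$ on $\cX$, not functions of the form $g+\beta^\star x$ with $g\in\cG$; however, since the extrapolation uncertainty only involves differences $f-f'$ and any two elements of $\cF_{L_1}$ agree on $\cX$ and lie in $\cG$, your Lipschitz bound and the resulting $\cU_{\cF_{L_1}}(\delta)=L\delta$ are unaffected.
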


Next, we consider conditional mean estimation which is the most common prediction task. 
The class of conditional mean estimators from engression is $\cF^\mu_{\mathrm{E}}=\{\bbE_{\eta}[g(x+\eta)]+\beta^\star x:g\in\cG^\star\}$, by Theorem~\ref{thm:func_recover}. We consider $L_2$ regression with the function class $\cG$ as our baseline which gives a class of conditional mean estimators denoted by $\cF_{L_2}$. Let $F_{\eta}$ be the cumulative distribution function of $\eta$.
\begin{proposition}[Mean extrapolability gain]\label{prop:gain_mean}
	The functional extrapolation uncertainties of the two classes of mean estimators are given by $$\cU_{\cF^\mu_{\mathrm{E}}}(\delta) = L\int_{\etam-\delta}^{\etam}(\eta-\etam+\delta)d F_{\eta}(\eta)$$ and $\cU_{\cF_{L_2}}(\delta) = L\delta$, respectively. The mean extrapolability gain is
	\begin{equation*}
		\gamma^\mu(\delta) := \cU_{\cF^\mu_{L_2}}(\delta) - \cU_{\cF^\mu_{\mathrm{E}}}(\delta) = L\delta F_{\eta}(\etam-\delta) + L\int_{\etam-\delta}^{\etam}(\etam-\eta)d F_{\eta}(\eta)>0,
	\end{equation*} 
	for all $\delta>0$. The maximum extrapolability gain is $\sup_{\delta>0}\gamma^\mu(\delta)=L\etam$, which is achieved when $\delta\ge 2\etam$.
\end{proposition}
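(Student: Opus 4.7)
My plan is to mirror the analysis of Proposition~\ref{prop:gain_median}, treating the engression mean estimators as $\eta$-averages of the median-type quantities handled there. By Theorem~\ref{thm:func_recover}, every engression solution has $\tilde\beta=\beta^\star$ and $\tilde g\in\cG^\star$, so $\cF^\mu_{\mathrm{E}}=\{x\mapsto\bbE_\eta[g(x+\eta)]+\beta^\star x:g\in\cG^\star\}$; meanwhile $\cF_{L_2}$ consists of those functions in $\cG$ that agree with the true conditional mean $\bbE_\eta[g^\star(\cdot+\eta)]+\beta^\star(\cdot)$ on $\cX=(-\infty,\xm]$. The baseline uncertainty $\cU_{\cF_{L_2}}(\delta)=L\delta$ then follows from exactly the Lipschitz-plus-strict-monotonicity squeeze used for $\cF_{L_1}$: two members of $\cF_{L_2}$ coinciding at $\xm$ both lie in $(f(\xm),f(\xm)+L\delta]$ at $x'=\xm+\delta$, with the supremum $L\delta$ approached in the limit by slope profiles $L$ and $0^+$.

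The novel computation is $\cU_{\cF^\mu_{\mathrm{E}}}(\delta)$. For $g,g'\in\cG^\star$, the difference of the corresponding mean estimators at $x'=\xm+\delta$ is $\bbE_\eta[(g-g')(x'+\eta)]$, and the integrand vanishes on $\{\eta\le\etam-\delta\}$ because $g=g'=g^\star$ on $(-\infty,\xm+\etam]$. On the complementary set $\eta\in(\etam-\delta,\etam]$, reapplying the same squeeze at the anchor $\xm+\etam$ gives the pointwise bound $|g(x'+\eta)-g'(x'+\eta)|<L(\eta-\etam+\delta)$, tight in the limit, and integrating against $F_\eta$ yields $\cU_{\cF^\mu_{\mathrm{E}}}(\delta)=L\int_{\etam-\delta}^{\etam}(\eta-\etam+\delta)\,dF_\eta(\eta)$. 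The announced gain formula then drops out by writing $L\delta=L\delta F_\eta(\etam-\delta)+L\int_{\etam-\delta}^{\etam}\delta\,dF_\eta(\eta)$ and cancelling the $(\eta-\etam+\delta)$ term, with positivity manifest summand by summand. For the maximum, a Leibniz-rule differentiation of $\gamma^\mu$ produces boundary contributions from the two summands that cancel exactly, leaving $(\gamma^\mu)'(\delta)=LF_\eta(\etam-\delta)\ge 0$; this vanishes precisely for $\delta\ge 2\etam$ (since $\etam-\delta\le-\etam$), on which regime the integral reduces via the symmetry $\bbE[\eta]=0$ to $\etam$, giving $\sup_\delta\gamma^\mu(\delta)=L\etam$.

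The main obstacle is tightness of the integrated bound for $\cU_{\cF^\mu_{\mathrm{E}}}$. The pointwise envelope $L(u-\xm-\etam)$ for $|g-g'|$ is attained only in the limit where one derivative tends to $L$ and the other to $0^+$, while both must stay inside $(0,L]$ to remain admissible. I would construct sequences $(g_n,g_n')\in(\cG^\star)^2$ whose derivatives on $(\xm+\etam,\infty)$ approximate these extremes via smooth cutoffs, then invoke dominated convergence (with the $L$-Lipschitz envelope) to transfer the pointwise supremum into the supremum of the integrated quantity $\bbE_\eta[(g_n-g_n')(x'+\eta)]$. This is the one step where the mean case is genuinely more involved than the median case of Proposition~\ref{prop:gain_median}, in which only a single value of $g-g'$ at the evaluation point needed controlling.
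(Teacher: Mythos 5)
Your proposal is correct and follows essentially the same route as the paper's proof: the uncertainty $\cU_{\cF^\mu_{\mathrm{E}}}(\delta)$ is obtained by splitting the $\eta$-integral at $\etam-\delta$, using that $g=g'=g^\star$ up to $\xm+\etam$, and bounding the difference on $(\etam-\delta,\etam]$ by the envelope $L(\eta-\etam+\delta)$ anchored at $\xm+\etam$, with the supremum attained only in the limit of derivative profiles tending to $L$ and $0^+$ (an attainment issue the paper glosses over and which you rightly flag and resolve with an approximating sequence plus dominated convergence). The only genuine divergence is your treatment of the maximum gain via the cancellation of boundary terms giving $(\gamma^\mu)'(\delta)=L F_{\eta}(\etam-\delta)$, which I verified is correct under the paper's absolute-continuity assumption and is a cleaner alternative to the paper's case split on $\delta\le\etam$ versus $\delta>\etam$ combined with the symmetry of $\eta$.
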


Finally, we turn to conditional distribution estimation. By Theorem~\ref{thm:func_recover}, engression naturally provides a class of conditional distributions
\begin{equation*}
	\cP_{\mathrm{E}} := \{P(y|x;g,h^\star,\beta^\star):g\in\cG^\star\},
\end{equation*}
where $P(y|x;g,h^\star,\beta^\star)$ is the conditional distribution of $g(X+h^\star(\varepsilon))+\beta^\star X$ given $X=x$. As to the baseline approach, we consider quantile regression for all quantiles, which leads to a class of conditional distributions defined in terms of quantile functions
\begin{equation*}
	\cP_{\mathrm{QR}}:=\{(q_\alpha(x))_{\alpha\in[0,1]}:q_\alpha\in\cF^\alpha_{\mathrm{QR}},\alpha\in[0,1]\},
\end{equation*}
where $\cF^\alpha_{\mathrm{QR}}$ is class of estimators for the conditional $\alpha$-quantile by the quantile regression, as defined in \eqref{eq:qr_class}.

In the definition of distributional extrapolation uncertainty (Definition~\ref{def:extra_dist}), we consider the Wasserstein-$\ell$ distance as the distance measure for any $\ell\ge1$. For two probability distributions $P$ and $P'$, the Wasserstein-$\ell$ distance between them is defined as $W_\ell(P,P')=(\inf_{\Gamma\in\Gamma(P,P')}\bbE_{(X,X')\sim\Gamma}[\|X-X'\|^\ell])^{1/\ell}$ where $\Pi(P,P')$ denotes the set of all joint distributions over $(X,X')$ whose marginals are $P$ and $P'$, respectively. 
\begin{proposition}[Distributional extrapolability gain]\label{prop:gain_dist}
	The distributional extrapolation uncertainties of the two classes of estimated conditional distributions are given by  
	\begin{equation*}
		\cU^\ell_{\cP_{\mathrm{E}}}(\delta) = L\left(\int_{\etam-\delta}^{\etam}(\eta-\etam+\delta)^\ell d F_{\eta}(\eta)\right)^{1/\ell}
	\end{equation*}
	and $\cU^\ell_{\cP_{\mathrm{QR}}}(\delta) = L\delta$, respectively. Define the distributional extrapolability gain as $\gamma^d_\ell(\delta) := \cU^\ell_{\cP_{\mathrm{QR}}}(\delta) - \cU^\ell_{\cP_{\mathrm{E}}}(\delta)$, which satisfies the following:
	\begin{enumerate}[label=(\roman*)]
		\item For $\ell=1$, the gain is explicitly given by
		\begin{equation*}
		\gamma^d_1(\delta) = L\delta F_{\eta}(\etam-\delta) + L\int_{\etam-\delta}^{\etam}(\etam-\eta)d F_{\eta}(\eta)>0,
		\end{equation*} 
		for all $\delta>0$. The maximum extrapolability gain is $\sup_{\delta>0}\gamma_1^d(\delta)=L\etam$, which is achieved when $\delta\ge 2\etam$.
		\item For $\ell\in(1,\infty)$, we have $0<\gamma^d_\ell(\delta)<\gamma^d_1(\delta)$ for all $\delta>0$, and the maximum extrapolability gain is $\sup_{\delta>0}\gamma^d_\ell(\delta)=\lim_{\delta\to\infty}\gamma^d_\ell(\delta)=L\etam$. 
		\item For $\ell=\infty$, we have $\gamma^d_\infty(\delta)=0$ for all $\delta>0$.
	\end{enumerate}
\end{proposition}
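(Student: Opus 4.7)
The plan is to reduce both extrapolation uncertainties to explicit Wasserstein-$\ell$ integrals via the monotonic-coupling representation on the real line, and then dispatch the three regimes of $\ell$ separately.

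\textbf{Representations and closed forms.} Fix $x'=\xm+\delta$, which is the worst-case point by Condition~\ref{ass:support_half}. Since every $g\in\cG^\star$ is strictly increasing (Condition~\ref{ass:monotone}), $\eta\mapsto g(x'+\eta)+\beta^\star x'$ is strictly increasing, so its $\alpha$-quantile is $g(x'+Q_\alpha(\eta))+\beta^\star x'$. Plugging this into the one-dimensional identity $W_\ell^\ell(\mu,\nu)=\int_0^1|F_\mu^{-1}(\alpha)-F_\nu^{-1}(\alpha)|^\ell d\alpha$ and cancelling the linear term yields
\[
W_\ell(P_g,P_{g'})=\Big(\int|g(x'+\eta)-g'(x'+\eta)|^\ell dF_\eta(\eta)\Big)^{1/\ell};
\]
the QR side reads $(\int_0^1|q_\alpha(x')-q_\alpha'(x')|^\ell d\alpha)^{1/\ell}$. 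For $\cP_{\mathrm{QR}}$, each $q_\alpha\in\cF^\alpha_{\mathrm{QR}}$ is $L$-Lipschitz with strictly positive derivative and pinned at $q^\star_\alpha(\xm)$, so $q_\alpha(x')\in(q^\star_\alpha(\xm),q^\star_\alpha(\xm)+L\delta]$; taking one family to $q^\star_\alpha(\xm)+L\delta$ and the other to $q^\star_\alpha(\xm)$ (both monotone in $\alpha$) gives $\cU^\ell_{\cP_{\mathrm{QR}}}(\delta)=L\delta$. For $\cP_{\mathrm{E}}$, the integrand vanishes on $\{\eta\le\etam-\delta\}$ by definition of $\cG^\star$, while for $\eta\in(\etam-\delta,\etam]$ both $g,g'$ coincide with $g^\star$ at $\xm+\etam$ and are $L$-Lipschitz, so $|g(x'+\eta)-g'(x'+\eta)|\le L(\eta-\etam+\delta)$. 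Approaching this bound simultaneously by piecewise-linear extensions stitched at $\xm+\etam$ (slopes $\to L$ versus $\to 0^+$) yields the stated formula for $\cU^\ell_{\cP_{\mathrm{E}}}$.

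\textbf{Dispatching the regimes.} (i) For $\ell=1$, direct algebra gives
\[
\gamma^d_1(\delta)=L\delta-L\int_{\etam-\delta}^{\etam}(\eta-\etam+\delta)dF_\eta(\eta)=L\delta F_{\eta}(\etam-\delta)+L\int_{\etam-\delta}^{\etam}(\etam-\eta)dF_\eta(\eta),
\]
positive for every $\delta>0$; for $\delta\ge 2\etam$, $F_\eta(\etam-\delta)=0$ and $\int_{-\etam}^\etam(\etam-\eta)dF_\eta=\etam-\bbE[\eta]=\etam$ by symmetry, so $\sup_\delta\gamma^d_1=L\etam$. (ii) For $1<\ell<\infty$, positivity follows from $\eta-\etam+\delta<\delta$ on a set of positive $F_\eta$-measure. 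Monotonicity of $L^p$ norms on the probability measure $dF_\eta/p$ (with $p=1-F_\eta(\etam-\delta)$) yields $\cU^\ell_{\cP_{\mathrm{E}}}>\cU^1_{\cP_{\mathrm{E}}}$, hence $\gamma^d_\ell<\gamma^d_1\le L\etam$. For the supremum, substitute $\zeta=\etam-\eta$ and Taylor-expand: for $\delta\ge 2\etam$,
\[
\bigl(\bbE[(\delta-\zeta)^\ell]\bigr)^{1/\ell}=\delta\bigl(1-\ell\bbE[\zeta]/\delta+O(\delta^{-2})\bigr)^{1/\ell}=\delta-\etam+O(\delta^{-1}),
\]
using $\bbE[\zeta]=\etam$ (symmetry of $\eta$), so $\gamma^d_\ell(\delta)\to L\etam$, forcing $\sup=\lim=L\etam$. (iii) For $\ell=\infty$, both $W_\infty$ representations reduce to essential suprema equal to $L\delta$ (QR by uniform shift, engression attained as $\eta\to\etam$), so $\gamma^d_\infty\equiv 0$.

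The main obstacle is the tightness construction in Step~1: one must exhibit $g,g'\in\cG^\star$ that \emph{simultaneously} saturate the pointwise Lipschitz bound at every $\eta\in(\etam-\delta,\etam]$ while respecting the global constraint $0<\dot g\le L$. This is handled by a one-parameter family of piecewise-linear extensions pinned at $\xm+\etam$, after which dominated convergence transfers the pointwise limit inside the Wasserstein integral. The remaining steps (integration by parts or direct algebra, Jensen's inequality, Taylor expansion) are routine.
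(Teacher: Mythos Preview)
Your argument is correct and tracks the paper's proof closely. Both compute the two uncertainties via the one-dimensional quantile representation of $W_\ell$, saturate the Lipschitz bound pointwise, and then split into the three regimes. The cosmetic differences are that you integrate over $\eta$ while the paper integrates over the quantile level $\alpha$ (these are related by the change of variables $\eta=Q_\alpha(\eta)$), and in part (ii) you obtain the limit $L\etam$ by a direct Taylor expansion of $(\bbE[(\delta-\zeta)^\ell])^{1/\ell}$ whereas the paper argues via the ratio $\cU^\ell_{\cP_{\mathrm E}}/\cU^1_{\cP_{\mathrm E}}\to 1$. Your explicit remark that the extremal $q_\alpha,q_\alpha'$ for QR can be chosen monotone in $\alpha$ (so that the sup over pairs in $\cP_{\mathrm{QR}}$ legitimately moves inside the $\alpha$-integral) is a point the paper leaves implicit. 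The only place to tighten the exposition is your $L^p$-monotonicity step: rather than normalising $dF_\eta$ by $p=1-F_\eta(\etam-\delta)$, it is cleaner to extend the integrand to $(\eta-\etam+\delta)_+$ on all of $[-\etam,\etam]$ and invoke monotonicity of $L^\ell$-norms on the full probability measure $dF_\eta$, which gives $\cU^\ell_{\cP_{\mathrm E}}>\cU^1_{\cP_{\mathrm E}}$ directly with strictness from non-constancy of the integrand.
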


\begin{figure}
\centering
\begin{tabular}{@{}ccc@{}}
	\includegraphics[page=1, clip, trim=28cm 14.5cm 27.5cm 14.5cm, width=0.31\textwidth]{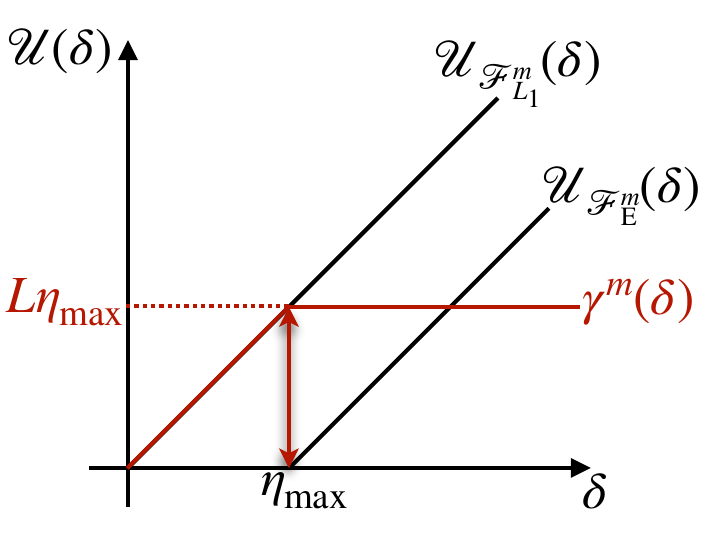} &
	\includegraphics[page=2, clip, trim=28cm 14.5cm 27.5cm 14.5cm, width=0.31\textwidth]{fig/gains} &
	\includegraphics[page=3, clip, trim=28cm 14.5cm 27.5cm 14.5cm, width=0.31\textwidth]{fig/gains}\vspace{-0.in}\\
	\small{(a) Median} & \small{(b) Mean} & \small{(c) Distributional} 
\end{tabular}
\caption{Extrapolation uncertainties and extrapolability gains. The gains for the mean and the distribution depend on the specific noise distribution and here we take $\eta\sim\mathrm{Unif}[-\etam,\etam]$ as an example. Black lines represent extrapolation uncertainties; red lines are extrapolability gains; red arrows represent the maximum extrapolability gains.}\label{fig:gains}
\end{figure}

We illustrate the theoretical formulas of extrapolability gains in Figure~\ref{fig:gains}. 
In summary, all the extrapolability gains for the median, the mean, and the distribution are positive for all $\delta>0$ (except for one case of the distributional extrapolability gain with the Wasserstein-$\infty$ distance). This indicates that engression yields strictly lower extrapolation uncertainties across the entire real line. 
Notably, the maximum extrapolability gains for the median, mean, and the distribution (with $\ell=1$) are the same. However, the maximum extrapolability gain is achieved later for the mean and distribution than for the median. The reason is because the mean and distribution also rely on the higher (or even extreme) quantiles, for which engression has a higher extrapolation uncertainty than that for the median. For distributional extrapolability, when $1<\ell<\infty$, the gain becomes smaller than the one measured with $\ell=1$, but it still remains positive and achieves the same maximum gain at infinity. The extreme case of $\ell=\infty$, however, corresponds to the only measure under which engression does not show an advantage over baseline methods but performs identically.

\section{Finite-sample analysis}\label{sec:finite_sample}
We study the finite-sample behaviour of the engression estimator. We first derive finite-sample bounds for parameter estimation and out-of-support prediction under a special case of quadratic models when the pre-ANM is either well- or misspecified. Then we present the consistency results for general monotone and Lipschitz models. 
\subsection{Quadratic models}\label{sec:quadratic}
Consider the following quadratic pre-ANM class
\begin{equation}\label{eq:quadratic_preanm}
	\{\beta_0+\beta_1 (x+\eta)+\beta_2 (x+\eta)^2:\beta=(\beta_0,\beta_1,\beta_2)\in\cB,\eta\sim P_\eta\in\cP_\eta\},
\end{equation}
where the covariate $x$ and noise variable $\eta$ are univariate, $\cB\subseteq\bbR^3$ is a bounded parameter space such that $\beta_1\beta_2>0$, and $\cP_\eta$ is a class of noise distributions. As such, this is a semiparametric model with a finite-dimensional parameter $\beta$ and infinite-dimensional component $\eta$. Assume all distributions in $\cP_\eta$ are symmetric with a bounded support $[-\etam,\etam]$. Assume the density of $Y=\beta_0+\beta_1 (x+\eta)+\beta_2 (x+\eta)^2$, denoted by $p_{\beta,\eta}(y|x)$, is uniformly bounded away from 0, i.e.\ 
\begin{equation}\label{eq:lb_density}
	b:=\inf_{\substack{x\in\cX,\beta\in\cB,\\\eta\sim P_\eta\in\cP_\eta}}p_{\beta,\eta}(y|x)>0.
\end{equation}

Consider the extreme case where the covariate $X$ takes only two values during training, i.e.\ with the training support $\cX=\{x_1,x_2\}$, where $0<x_1<x_2$; assume $x_1>\etam$ and that $X$ takes each value with probability $0.5$. The training data consists of $(x_1,Y_{1,i}),i=1,\dots,n$ and $(x_2,Y_{2,i}),i=1,\dots,n$. As we focus on the statistical behaviour concerning the sample size $n$ and one can in principle generate as many samples of the noise variable as possible without the worry of computational complexity, we assume the noise sample size $m$ in \eqref{eq:eng_emp} tends to infinity. 
Denote the engression estimator in \eqref{eq:eng_emp} based on the above sample by $\hat\beta=(\hat\beta_0,\hat\beta_1,\hat\beta_2)$ and $\hat\eta$. 

We are interested in predicting the conditional mean or quantile of $Y$ given $X=x$ for any point $x\in\bbR$, particularly those lying outside the support $\cX$. 

\subsubsection{Well-specified pre-ANM}
We first consider the case where the true data generating model is a pre-ANM, so that the pre-ANM class \eqref{eq:quadratic_preanm} is well specified. Specifically, let $\beta^\star=(\beta^\star_0,\beta^\star_1,\beta^\star_2)\in\cB$ be the true parameter and $\eta^\star\sim P^\star_\eta\in\cP_\eta$ be the true noise variable. 

We first show the failure in extrapolation of $L_2$ regression \eqref{eq:l2} and quantile regression \eqref{eq:qr} among the model class $\{\beta_0+\beta_1x+\beta_2x^2:\beta\in\cB\}$. Define the set of $L_2$ regression estimators by 
\begin{equation*}
	\cB^\mu = \argmin_\beta \frac{1}{2n}\sum_{j=1}^2\sum_{i=1}^n[Y_{j,i}-(\beta_0+\beta_1x_j+\beta_2x_j^2)]^2
\end{equation*}
and the set of quantile regression estimators given a level $\alpha$ by
\begin{equation*}
	\cB^{\textsl{q}}_\alpha = \argmin_\beta \frac{1}{2n}\sum_{j=1}^2\sum_{i=1}^n\rho_\alpha(Y_{j,i}-(\beta_0+\beta_1x_j+\beta_2x_j^2)).
\end{equation*}
Note that given two values $x_1,x_2$ of the covariate, both $L_2$ regression and quantile regression do not have a unique solution, i.e.\ sets $\cB^\dag$ and $\cB^\ddag$ are not singleton. Let $\mu^\star(x)$ be the true conditional mean of $Y$ given $X=x$ and $q^\star_\alpha(x)$ be the true conditional $\alpha$-quantile. 

The following proposition suggests that $L_2$ regression and quantile regression yield arbitrarily large errors for conditional mean and quantile prediction for out-of-support data. The proof is given in Appendix~\ref{app:pf_finite_sample_reg}.
\begin{proposition}\label{prop:finite_sample_reg}
For all $x\notin\cX$, we have
\begin{equation*}
	\sup_{\beta\in\cB^\mu}\bbE[(Y-(\beta_0+\beta_1x+\beta_2x^2))^2]=\infty,
\end{equation*}
and for all $\alpha\in[0,1]$
\begin{equation*}
	\sup_{\beta\in\cB^{\textsl{q}}_\alpha}|(q^\star_\alpha(x)-(\beta_0+\beta_1x+\beta_2x^2)|=\infty.
\end{equation*}
\end{proposition}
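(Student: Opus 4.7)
The plan is to exploit that when the training covariate takes only two distinct values $x_1<x_2$, the three-parameter quadratic family $f_\beta(x)=\beta_0+\beta_1 x+\beta_2 x^2$ is underdetermined by the fit: any pair of prescribed target values at $x_1,x_2$ is interpolated by a one-parameter family of coefficients, and the induced prediction at any third point $x\notin\cX$ sweeps out all of $\bbR$ as this free parameter varies.

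First I would observe that both objectives decouple across the two distinct $x$-values. The $L_2$ objective equals $\frac{1}{2}\sum_{j=1}^{2}\frac{1}{n}\sum_{i=1}^{n}(Y_{j,i}-f_\beta(x_j))^2$, which is minimised in the scalar $f_\beta(x_j)$ at the empirical mean $\bar Y_j$ for each $j$. Likewise, for each $j$ the map $c\mapsto\sum_{i=1}^{n}\rho_\alpha(Y_{j,i}-c)$ is minimised at an empirical $\alpha$-quantile $\hat q_{j,\alpha}$ of $\{Y_{j,i}\}_{i=1}^{n}$. Hence
\begin{equation*}
    \cB^\mu = \{\beta : f_\beta(x_j)=\bar Y_j,\ j=1,2\},
    \qquad
    \cB^{\textsl{q}}_\alpha \supseteq \{\beta : f_\beta(x_j)=\hat q_{j,\alpha},\ j=1,2\}.
\end{equation*}

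Next, solving the two linear constraints $f_\beta(x_j)=v_j$ for $\beta_0,\beta_1$ in terms of the free parameter $\beta_2$ yields the interpolation identity
\begin{equation*}
    f_\beta(x) \;=\; v_1 + (x-x_1)\,\frac{v_2-v_1}{x_2-x_1} \;+\; \beta_2\,(x-x_1)(x-x_2),\qquad \beta_2\in\bbR.
\end{equation*}
The coefficient $(x-x_1)(x-x_2)$ of the free parameter is nonzero precisely when $x\notin\cX$. So for any fixed $x\notin\cX$, letting $\beta_2\to\pm\infty$ drives $f_\beta(x)\to\pm\infty$. For the $L_2$ claim, $\bbE[(Y-c)^2]=\var(Y)+(\bbE[Y]-c)^2\to\infty$ as $|c|\to\infty$; for the quantile claim, trivially $|q_\alpha^\star(x)-c|\to\infty$ as $|c|\to\infty$. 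Taking the supremum over the unbounded one-parameter family gives both asserted infinities.

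The only bookkeeping subtlety is the domain of optimisation: the statement requires $\cB^\mu$ and $\cB^{\textsl{q}}_\alpha$ to contain the unbounded family, so I read $\argmin_\beta$ as being over $\bbR^3$ (unconstrained polynomial regression), which is the natural convention here since the bounded $\cB$ in the pre-ANM definition constrains the generative model class, not the baseline regression estimators the proposition is contrasting against. Once this convention is fixed, the argument reduces to the linear-algebraic interpolation identity plus separability of the two losses across distinct training $x_j$'s, and no technical step is delicate.
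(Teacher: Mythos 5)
Your proof is correct and follows essentially the same route as the paper's: both exploit that the two design points leave a one-parameter affine family of quadratic minimisers, and that the fitted value at any third point is an unbounded affine function of the free parameter with coefficient proportional to $(x-x_1)(x-x_2)\neq 0$. Your exact characterisation of the minimisers via empirical means/quantiles (rather than the paper's population targets plus an $o_p(1)$ term) and your explicit remark that the argmin must be taken over an unbounded parameter domain are minor tidyings of the same argument.
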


Now we turn to engression. Let $\hat{q}_\alpha(x)$ be the engression estimator for the conditional $\alpha$-quantile of $Y$ given $X=x$, which is defined as the $\alpha$-quantile of $\hat\beta_0+\hat\beta_1(x+\hat\eta)+\hat\beta_2(x+\hat\eta)^2$ in terms of $\hat\eta$. Let $\hat\mu(x)$ be the engression estimator for the conditional mean, defined as $\bbE_{\hat\eta}\big[\hat\beta_0+\hat\beta_1 (x+\hat\eta) + \hat\beta_2 (x+\hat\eta)^2\big]$.
Theorem~\ref{thm:finite_sample_eng} presents the finite-sample bounds for estimation errors of the parameter as well as the conditional mean and quantiles. 
Notably, in contrast to the failure of $L_2$ regression and quantile regression, our engression yields consistent estimators for the conditional mean and quantiles for data points both within or outside the support, as shown in \eqref{eq:bound_mean_est} and \eqref{eq:bound_quantile_est}. 
The proof is given in Appendix~\ref{app:pf_thm_finite_sample_eng}.

\begin{theorem}\label{thm:finite_sample_eng}
Let $\delta>0$. Under the setting posited above, with probability exceeding $1-\delta$, we have
\begin{equation}\label{eq:bound_param_est}
	\|\hat\beta - \beta^\star\| \le \frac{C_1}{(x_2-x_1)b^{\frac{2}{3}}}\left(\frac{\log(2/\delta)}{n}\right)^{\frac{1}{3}},
\end{equation}
where $b$ is defined in \eqref{eq:lb_density}, and $C_1$ is a constant depending on the parameter space $\cB$ and the noise level $\etam$. 
In addition, for any $x\in\bbR$, it holds with probability exceeding $1-\delta$ that
\begin{equation}\label{eq:bound_mean_est}
	(\hat\mu(x)-\mu^\star(x))^2 \le C_2\max\{1,|x|,x^2\}\left(\frac{\log(2/\delta)}{n}\right)^{\frac{2}{3}},
\end{equation}
where $C_2<\infty$ is a constant independent of $x$. For any $x\in\bbR$ and $\alpha\in[0,1]$, it holds with probability exceeding $1-\delta$ that
\begin{equation}\label{eq:bound_quantile_est}
	|\hat{q}_\alpha(x)-q^\star_\alpha(x)| \le C_3\max\{1,|x|,x^2\}|Q^{\eta^\star}_\alpha|\left(\frac{\log(2/\delta)}{n}\right)^{\frac{1}{3}},
\end{equation}
where $C_3<\infty$ is a constant independent of $\alpha$ and $x$.
\end{theorem}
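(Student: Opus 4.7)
My plan is a three-stage M-estimation argument tailored to the semiparametric nature of the engression estimator $(\hat\beta,\hat P_\eta)$. \emph{Stage~1 (uniform concentration).} Since $\cB\subset\bbR^3$ is compact and the noise is supported on $[-\etam,\etam]$, the loss integrand $(y,y')\mapsto\tfrac12\|y-y'\|-\|y-\beta_0-\beta_1(x_j+\eta)-\beta_2(x_j+\eta)^2\|$ is uniformly bounded. An $\epsilon$-net on $\cB$ of size $\asymp\epsilon^{-3}$, together with a chaining argument on the cdf of $\eta$ and McDiarmid's inequality, would yield
$$\sup_{\beta\in\cB,\,P_\eta\in\cP_\eta}\bigl|\hat\cL(\beta,P_\eta)-\cL(\beta,P_\eta)\bigr|\;\lesssim\;\sqrt{\log(1/\delta)/n}$$
with probability at least $1-\delta$ (taking $m\to\infty$ removes the noise-sampling error in $\hat\cL$). \emph{Stage~2 (excess population loss).} Since $(\hat\beta,\hat P_\eta)$ minimises $\hat\cL$ while $(\beta^\star,P^\star_\eta)$ is feasible, the inequality above gives an excess population loss of the same order, and by Lemma~\ref{lem:es} this excess equals the average of the squared Cram\'er distances between the fitted and the true conditional distributions of $Y\mid X=x_j$, $j=1,2$.

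\emph{Stage~3 (curvature/identifiability), the main obstacle.} The task is to convert the Cram\'er bound at $x_1$ and $x_2$ into a quantitative bound on $\|\hat\beta-\beta^\star\|$. The density lower bound~\eqref{eq:lb_density} promotes Cram\'er closeness to sup-norm closeness of the conditional cdfs, and hence to closeness of quantile functions at scale $b^{-2/3}$ times the square root of the Cram\'er distance. Because $\beta_1\beta_2>0$ and $x_1>\etam$, the quadratic $u\mapsto\beta_0+\beta_1 u+\beta_2 u^2$ is strictly monotone on $[x_j-\etam,x_j+\etam]$, so the law of $Y\mid X=x_j$ identifies the noise cdf (through this monotone reparametrisation) together with the local values of the quadratic at $x_j$. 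Reading off three independent symmetric/odd functionals of the two conditional laws and exploiting the separation $x_2-x_1$ would give a lower bound of the shape $c(x_2-x_1)^2 b^{4/3}\,\|\hat\beta-\beta^\star\|^3\lesssim\mathrm{CD}$. Inverting this cube produces the exponent $1/3$ together with the factor $1/[(x_2-x_1)b^{2/3}]$ in~\eqref{eq:bound_param_est}.

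\emph{From $\hat\beta$ to the mean and quantile bounds.} This is essentially a plug-in calculation. Symmetry of $\eta$ gives $\mu^\star(x)=\beta^\star_0+\beta^\star_1 x+\beta^\star_2 x^2+\beta^\star_2\,\bbE[(\eta^\star)^2]$, and monotonicity of the quadratic yields $q^\star_\alpha(x)=\beta^\star_0+\beta^\star_1(x+Q^{\eta^\star}_\alpha)+\beta^\star_2(x+Q^{\eta^\star}_\alpha)^2$, with identical formulas for the hatted quantities. Expanding $\hat\mu(x)-\mu^\star(x)$ and $\hat q_\alpha(x)-q^\star_\alpha(x)$ as linear-plus-higher-order functions of $\hat\beta-\beta^\star$, with coefficients of size $O(\max\{1,|x|,x^2\})$, and handling $|Q^{\hat\eta}_\alpha-Q^{\eta^\star}_\alpha|$ via the Cram\'er-to-quantile conversion already used in Stage~3, produces the prefactor $\max\{1,|x|,x^2\}$ in~\eqref{eq:bound_mean_est}--\eqref{eq:bound_quantile_est}; the squaring in~\eqref{eq:bound_mean_est} reflects that the mean bound is stated as a squared prediction error, doubling the exponent of $(\log(2/\delta)/n)^{1/3}$ to $2/3$. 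The constants $C_1,C_2,C_3$ absorb the diameter of $\cB$ and $\etam$ through the Lipschitz and identifiability arguments above.
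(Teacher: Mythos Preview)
Your overall architecture matches the paper's, but Stage~3 contains a genuine gap. The Cram\'er-to-quantile conversion is via a \emph{cube root}, not a square root: if the density is bounded below by $b$ and the maximal quantile gap is $\tau$, then $\cramer\ge \int_0^\tau (bu)^2\,du = b^2\tau^3/3$, whence $\sup_\alpha|\hat q_\alpha(x_j)-q^\star_\alpha(x_j)|\le(3\,\cramer/b^2)^{1/3}$ (this is the paper's Lemma~\ref{lem:bound_quantile_by_cramer}). The exponent $1/3$ and the factor $b^{-2/3}$ in~\eqref{eq:bound_param_est} come entirely from this lemma; the identifiability step that follows is \emph{linear} in the quantile errors, not cubic. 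Your proposed curvature bound $c(x_2-x_1)^2 b^{4/3}\|\hat\beta-\beta^\star\|^3\lesssim\cramer$ would invert to the wrong prefactor $(x_2-x_1)^{-2/3}b^{-4/9}$. Moreover, ``reading off three independent symmetric/odd functionals'' is too vague to separate $\beta$ from the nonparametric $\eta$: the paper does this by writing out $q_\alpha(x_j)=\beta_0+\beta_1 x_j+\beta_2 x_j^2+(\beta_1+2\beta_2 x_j)Q_\alpha^\eta+\beta_2(Q_\alpha^\eta)^2$ at levels $\alpha,1-\alpha,0.5$ and $j=1,2$, using the symmetry $Q_{1-\alpha}^\eta=-Q_\alpha^\eta$ to cancel the even-in-$Q_\alpha^\eta$ terms; differencing across $x_1,x_2$ isolates $\beta_2 Q_\alpha^\eta$ and $\beta_1 Q_\alpha^\eta$, whose ratio is free of $\eta$ and, together with the median equations, recovers $\beta$ with error linear in the quantile discrepancies divided by $x_2-x_1$.

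Stages~1--2 would work but are unnecessarily heavy compared to the paper. Since $m\to\infty$ and $Y$ is univariate, the engression objective coincides exactly with $\sum_j\cramer(F_{\beta,\eta|x_j},\hat F_{n|x_j})$ (the energy loss is the negative CRPS, whose associated divergence is the Cram\'er distance itself, not its square). The paper then applies DKW directly to the two empirical cdfs and uses the triangle-type bound $\cramer(F_{\hat\beta,\hat\eta|x_j},F^\star_{x_j})\le 4\,\cramer(F^\star_{x_j},\hat F_{n|x_j})$, which already follows from $(\hat\beta,\hat\eta)$ being the empirical minimiser; no uniform concentration over $(\beta,P_\eta)$ and no chaining on the cdf of $\eta$ are needed. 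Your plug-in argument for~\eqref{eq:bound_mean_est}--\eqref{eq:bound_quantile_est} is the same as the paper's Steps~IV--V.
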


Taking a closer look at the error rate for parameter estimation in \eqref{eq:bound_param_est}, the constant is monotonically decreasing with respect to $x_2-x_1$, indicating that the engression estimator would yield faster convergence as the training support of the covariate becomes more diverse. Also the constant $C_1$ is in fact monotonically decreasing with respect to the noise level $\etam$. In addition, the dependence of the rates for the conditional mean and quantile estimation suggest that it tends to be more challenging to extrapolate on a point farther away from training data or for extreme quantiles. These findings align with our previous theoretical results at the population level as well as the empirical observations presented below. 

Furthermore, due to the nonparametric component $\eta$, engression yields a nonparametric rate of the order $\cO(n^{-\frac{1}{3}})$. However, in this case, a simple modification of engression that enforces the conditional median to take the parametric form $\beta_0+\beta_1 x + \beta_2 x^2$ (e.g.\ by adding a regularisation term or imposing parametric assumptions on $\eta$) can lead to the parametric rate for $\beta$ and conditional median estimation. It remains an open problem whether engression can achieve a parametric rate more generally.

\subsubsection{Misspecified pre-ANM}
We further investigate the case with model misspecification, in particular, when the pre-ANM assumption, which plays a key role in engression, is violated.
Consider the same setting as above with the only difference being that the training data is now assumed to be generated according to $Y=\beta^\star_0+\beta^\star_1x+\beta^\star_2x^2+\eta^\star$ which is in fact a post-ANM. However, we continue to adopt engression with the pre-ANM class defined in \eqref{eq:quadratic_preanm} which does not contain the true model. 

The following proposition presents the bounds for the parametric component $\hat\beta$ of the engression estimator, which is proved in Appendix~\ref{app:pf_thm_finite_sample_eng_mis}.
\begin{proposition}\label{prop:finite_sample_eng_mis}
	Under the setting posited above, with probability exceeding $1-\delta$, we have
	\begin{equation*}
		\max\big\{|\hat\beta_0 - (\beta^\star_0-\beta^\star_2 x_1x_2)|,\ |\hat\beta_1 - (\beta^\star_1+\beta^\star_2(x_1+x_2))|,\ |\hat\beta_2|\big\} \lesssim \left(\frac{\log(2/\delta)}{n}\right)^{\frac{1}{3}}.
	\end{equation*}
\end{proposition}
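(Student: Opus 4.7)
The plan is three-fold: first identify the population limit $\beta^\dagger$ to which $\hat\beta$ should concentrate, next argue its uniqueness via a symmetry/moment argument, and then transfer the finite-sample concentration machinery of Theorem~\ref{thm:finite_sample_eng} with $\beta^\star$ replaced by $\beta^\dagger$.

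First, I would observe that the pre-ANM class \eqref{eq:quadratic_preanm} already contains (in the closure of $\cB$) a model that exactly matches the true post-ANM at the two training points. Setting $\beta_2=0$ collapses the pre-ANM to $\beta_0+\beta_1(x+\eta)$, which is linear in $x$ plus a symmetric scaled noise. Solving the two linear equations $\beta_0+\beta_1 x_j = \beta^\star_0+\beta^\star_1 x_j+\beta^\star_2 x_j^2$ for $j=1,2$ gives precisely $\beta^\dagger_1=\beta^\star_1+\beta^\star_2(x_1+x_2)$ and $\beta^\dagger_0=\beta^\star_0-\beta^\star_2 x_1 x_2$; choosing $\eta\overset{d}{=}\eta^\star/\beta^\dagger_1$ then matches the full conditional laws at both $x_j$. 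By the strict properness of the energy score (Lemma~\ref{lem:es}), this is a population minimizer of the engression objective and it attains zero energy distance.

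Second, I would establish uniqueness by a symmetry/moment argument. Since both $\eta$ (by assumption on $\cP_\eta$) and $\eta^\star$ are symmetric, all odd central moments of the true conditional law vanish. The induced conditional law of the pre-ANM at $x_j$, $\beta_0+\beta_1 x_j+\beta_2 x_j^2+(\beta_1+2\beta_2 x_j)\eta+\beta_2\eta^2$, has third central moment $3(\beta_1+2\beta_2 x_j)^2\beta_2\,\var(\eta^2)+\beta_2^{3}\,\bbE[(\eta^2-\bbE[\eta^2])^3]$, using $\bbE[\eta^{2k+1}]=0$. Requiring this to vanish for both $j\in\{1,2\}$ yields a system whose only solution, generically in $(x_1,x_2)$ and in the law of $\eta$, is $\beta_2=0$; the mean-matching equations then pin down $\beta_0^\dagger,\beta_1^\dagger$. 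A Taylor expansion of the population engression loss around $\beta^\dagger$ (together with the usual semiparametric identifiability of the nuisance $\eta$) then gives a local quadratic lower bound $\cL(\beta)-\cL(\beta^\dagger)\gtrsim\|\beta-\beta^\dagger\|^2$ on a neighborhood of $\beta^\dagger$ inside the closure of $\cB$.

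Third, the finite-sample step follows the concentration analysis used in Appendix~\ref{app:pf_thm_finite_sample_eng} for Theorem~\ref{thm:finite_sample_eng}: a uniform empirical-process bound yields $|\hat\cL(\beta)-\cL(\beta)| = O(\sqrt{\log(1/\delta)/n})$ over the bounded class $\cB\times\cP_\eta$, which, when combined with the quadratic lower bound, inverts an $n^{-1/2}$ loss concentration into the claimed $(\log(2/\delta)/n)^{1/3}$ parameter rate; the cube root is inherited from the nonparametric nuisance $\eta$, identically to the well-specified case. The main obstacle I foresee is the second stage, namely making the uniqueness-and-quadratic-lower-bound argument fully rigorous when $\beta^\dagger$ lies on the boundary $\{\beta_2=0\}$ of the admissible set $\{\beta_1\beta_2>0\}$. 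This will probably require a one-sided Taylor expansion valid in the closure of $\cB$, and a careful formulation of the ``generic'' condition on $(x_1,x_2,\eta)$ that excludes the measure-zero coincidences under which the third-moment system would admit an extra solution with $\beta_2\neq 0$.
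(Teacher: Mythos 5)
Your first step coincides with the paper's: the pre-ANM class contains the element $\tilde\beta_0=\beta^\star_0-\beta^\star_2x_1x_2$, $\tilde\beta_1=\beta^\star_1+\beta^\star_2(x_1+x_2)$, $\tilde\beta_2=0$, $\tilde\eta=\eta^\star/\tilde\beta_1$, which reproduces the true conditional laws exactly at $x_1$ and $x_2$, and this is what makes the basic inequality for the minimizer work. From there, however, your argument has a genuine gap in the step that is supposed to produce the rate. A local quadratic lower bound $\cL(\beta)-\cL(\beta^\dagger)\gtrsim\|\beta-\beta^\dagger\|^2$ combined with a uniform $O(\sqrt{\log(1/\delta)/n})$ bound on $|\hat\cL-\cL|$ yields only $\|\hat\beta-\beta^\dagger\|\lesssim n^{-1/4}$, which is weaker than the claimed $n^{-1/3}$; the remark that the cube root is ``inherited from the nonparametric nuisance'' is not an argument, and you have not justified the quadratic lower bound either (the nuisance $\eta$ is estimated jointly and can partially absorb perturbations of $\beta$, so curvature in $\beta$ alone is not enough).

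The exponent $1/3$ in the paper arises from a different mechanism, which you would need to adopt. First, the basic inequality together with the DKW bound (Lemma~\ref{lem:bound_cramer_empirical}) gives $\cramer(F_{\hat\beta,\hat\eta|x_j},F^\star_{x_j})\lesssim\log(2/\delta)/n$ --- order $n^{-1}$, not $n^{-1/2}$, because the Cram\'er distance is a squared-type distance and the model class attains an exact match at the pseudo-true element. Second, Lemma~\ref{lem:bound_quantile_by_cramer}, which rests on the cubic inequality $\cramer(F,G)\ge b^2\tau^3/3$ with $\tau$ the sup-difference of the quantile functions, converts this into a uniform $O((\log(2/\delta)/n)^{1/3})$ bound on the conditional quantile errors at $x_1,x_2$; this is where the cube root is created. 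Third, the parameters are read off algebraically from the quantile-matching identities at levels $\alpha$, $1-\alpha$ and $0.5$: differencing the $\alpha$ and $1-\alpha$ equations and then differencing across $x_1,x_2$ isolates $\hat\beta_2 Q_\alpha^{\hat\eta}$ and hence bounds $|\hat\beta_2|$, after which the median equations at the two design points pin down $\hat\beta_1$ and $\hat\beta_0$ relative to the stated pseudo-true values. Your third-central-moment computation is correct and does show that, under $\beta_1\beta_2>0$ and $0<x_1<x_2$, the only population minimizer in the closure of $\cB$ has $\beta_2=0$, but population uniqueness only yields consistency, not the finite-sample rate, so it cannot substitute for the quantile argument.
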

We note that the engression estimator is generally inconsistent unless the true model is linear (i.e.\ $\beta^\star_2=0$), which is reasonable since the model is misspecified. More interestingly, engression still yields a unique solution, in contrast to the non-uniqueness of regression. Furthermore, engression leads approximately to a linear model with $\hat\beta_2\pto0$ as $n\to\infty$. 
As we have only observations at two distinct values of the predictor variable, a linear fit seems to be a useful fallback option in case of model misspecification.
In comparison, traditional regression methods again exhibit arbitrarily large errors outside the training support, as shown in Proposition~\ref{prop:finite_sample_reg}.

\subsection{General pre-ANMs}
We extend the previous setting and focus on a general pre-ANM class $\{g(x+h(\varepsilon)):g\in\cG,h\in\cH\}$ defined in Section~\ref{sec:theory_extra} without the additional linear term. We consider a continuous support $\cX=[x_{\min},x_{\max}]$ which is a bounded interval on $\bbR$. Without loss of generality, we assume that $\varepsilon\sim\mathrm{Unif}[0,1]$ and $\cH$ is the same as posited in Section~\ref{sec:extrapolability_anm}. 
Let $g^\star(x+h^\star(\varepsilon))$ be the true model. 
Let $n$ be the training sample size.

The following theorem states the consistency of the engression estimator $(\hat{g},\hat{h})$. Notably, the set of covariate values for which the consistency of $\hat{g}$ is achieved, i.e.\ $\tilde\cX$, extends beyond the support up to a range depending on the noise level. In addition, the consistency of $\hat{h}$ implies that engression can consistently estimate the cdf of the noise distribution. The proof is given in Appendix~\ref{app:thm_consistency_general}. 
\begin{theorem}\label{thm:consistency_general}
	Assume the following conditions hold:
	\begin{enumerate}[label=(\textit{B\arabic*})]
	\setlength{\itemsep}{2pt}
	\setlength{\parskip}{2pt}
		\item $x_{\max}-x_{\min}\ge \max\{|h^\star(1)|,|h^\star(0)|\}$. \label{ass:large_x_space}
		\item $\cG$ is uniformly Lipschitz and strictly monotone.\label{ass:g_lip_mono}
		\item $X$ lies on grid points $\xmin=x_0<x_1<\dots<x_m=\xm$, where $m=m(n)$ such that $m(n)\to\infty$, $m(n)/n\to0$, and $\max_{i=1,\dots,m}|x_i-x_{i-1}|=o(m)$ as $n\to\infty$, and $\bbP(X=x_i)\ge c/m$ for some $c>0$. \label{ass:grid}
	\end{enumerate}
	Then it holds for all $\tilde{x}\in\tilde\cX:=\{x+h^\star(\varepsilon):x\in\cX, \varepsilon\in[0,1]\}$ and $\varepsilon\in[0,1]$ that  
	\begin{equation*}
		\hat{g}(\tilde{x})\pto g^\star(\tilde{x})\quad \text{and} \quad\hat{h}(\varepsilon) \pto h^\star(\varepsilon)\quad \text{as $n\to\infty$.}
	\end{equation*}
\end{theorem}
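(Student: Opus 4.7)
The plan is to break the proof into three phases: (i) uniform convergence of the empirical engression loss $\hat\cL$ to the population loss $\cL$ over the class $\cG\times\cH$; (ii) distributional identification of any near-minimizer at each training grid point $x_0,\dots,x_m$ via strict propriety of the energy score; (iii) lifting this grid-level distributional match to pointwise consistency of $\hat g$ on $\tilde\cX$ and of $\hat h$ on $[0,1]$, exploiting the pre-ANM structure together with Assumption~\ref{ass:large_x_space}.

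For phase (i), Assumption~\ref{ass:g_lip_mono} gives that $\cG$ is uniformly Lipschitz and strictly monotone, and the boundedness of the training data allows me to restrict $\cH$ to monotone functions mapping into a uniformly bounded interval (unbounded $h$ drives the energy loss to $+\infty$, so such $h$ cannot be near-minimizers). Classes of monotone Lipschitz functions on a compact interval admit polynomial bracketing-entropy bounds, so a standard Glivenko--Cantelli argument yields $\sup_{g\in\cG,h\in\cH}|\hat\cL(g,h)-\cL(g,h)|\pto 0$; the simulation noise from $m=m(n)\to\infty$ inner draws of $\varepsilon$ contributes negligibly. Assumption~\ref{ass:grid} guarantees that each grid point $x_i$ receives at least order $n/m$ samples with $m/n\to 0$, so the empirical conditional law at $x_i$ converges weakly to $\ptr(y|x_i)$. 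For phase (ii), combining the uniform convergence above with Lemma~\ref{lem:es} forces any near-minimizer $(\hat g,\hat h)$ to satisfy, at each grid point $x_i$,
\begin{equation*}
    \ed\bigl(P_{\hat g,\hat h}(y|x_i),\;\ptr(y|x_i)\bigr)\pto 0,
\end{equation*}
where $P_{g,h}(y|x)$ denotes the conditional law of $g(x+h(\varepsilon))$ under $\varepsilon\sim\mathrm{Unif}[0,1]$. Because both $\hat g(x_i+\hat h(\cdot))$ and $g^\star(x_i+h^\star(\cdot))$ are strictly monotone continuous functions of $\varepsilon$, matching in distribution is matching as quantile functions, so $\hat g(x_i+\hat h(\varepsilon))\pto g^\star(x_i+h^\star(\varepsilon))$ for every $\varepsilon\in[0,1]$ and every grid point $x_i$.

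For phase (iii), setting $\varepsilon=0.5$ and using the normalization $h(0.5)=0$ built into $\cH$ yields $\hat g(x_i)\pto g^\star(x_i)$ at grid points; Lipschitz continuity of $\cG$ combined with the grid becoming dense in $\cX$ (Assumption~\ref{ass:grid}) extends this to $\hat g\to g^\star$ uniformly on $\cX$. To identify $\hat h$, for each $\varepsilon\in[0,1]$ I use Assumption~\ref{ass:large_x_space} to pick a grid point $x_j$ with $x_j+\hat h(\varepsilon)\in\cX$: the range of $\hat h$ must be asymptotically comparable to $[h^\star(0),h^\star(1)]$ (enforced by the distributional match at the grid boundaries), and the width hypothesis on $\cX$ guarantees such $x_j$ exists. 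Phase (ii) then gives $\hat g(x_j+\hat h(\varepsilon))\pto g^\star(x_j+h^\star(\varepsilon))$, while the already-established uniform consistency on $\cX$ gives $\hat g(x_j+\hat h(\varepsilon))\pto g^\star(x_j+\hat h(\varepsilon))$; strict monotonicity of $g^\star$ forces $\hat h(\varepsilon)\pto h^\star(\varepsilon)$. Finally, for any $\tilde x=x+h^\star(\varepsilon)\in\tilde\cX$ with $x$ chosen as a grid point, phase (ii) yields $\hat g(x+\hat h(\varepsilon))\pto g^\star(\tilde x)$, and combining $\hat h(\varepsilon)\pto h^\star(\varepsilon)$ with the uniform Lipschitz bound on $\hat g$ converts this into $\hat g(\tilde x)\pto g^\star(\tilde x)$.

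The main obstacle will be making phase (i) rigorous given the infinite-dimensional nuisance component $h$ and the grid sampling design (order $n/m$ samples per grid point rather than i.i.d.\ draws over a continuous covariate). Monotonicity of $\cH$ supplies enough compactness (via a Helly selection argument) to pass to the limit inside the energy loss along subsequences, but certifying that near-minimizers of $\hat\cL$ eventually cluster around the population identification set---and doing so uniformly enough across grid points and across $\varepsilon$ for the grid-point trick in phase (iii) to apply---demands real care, especially in ruling out $\hat h$ with pathological ranges.
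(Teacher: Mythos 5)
Your phases (ii) and (iii) track the paper's argument closely: the paper likewise identifies the fitted conditional law at each grid point, converts this to uniform quantile bounds, uses the median normalisation $h(0.5)=0$ to pin down $\hat g$ on $\cX$, invokes \ref{ass:large_x_space} to find, for each $\varepsilon$, an $x$ with $x+h^\star(\varepsilon)\in\cX$ so that strict monotonicity identifies $\hat h(\varepsilon)$, and then propagates back to $\tilde\cX$ via the Lipschitz bound (a Taylor expansion in the paper). The one structural difference worth noting is that your identification of $\hat h$ is phrased in terms of $x_j+\hat h(\varepsilon)\in\cX$, whose existence you must bootstrap from a preliminary control of the range of $\hat h$; the paper instead works with $x+h^\star(\varepsilon)\in\cX$ (directly guaranteed by \ref{ass:large_x_space}) and an inverse-function identity, which avoids that circularity.

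The genuine gap is in phase (i). You establish $\sup_{g,h}|\hat\cL(g,h)-\cL(g,h)|\pto 0$ and conclude that the near-minimiser has vanishing excess \emph{population} risk; but the excess population energy risk equals $\tfrac12\sum_i \bbP(X=x_i)\,\ed(P_{\hat g,\hat h}(y|x_i),\ptr(y|x_i))$ with $\bbP(X=x_i)\ge c/m$ and $m=m(n)\to\infty$. To deduce $\ed(\cdot\,|x_i)\to 0$ at an \emph{individual} grid point you need the excess risk to be $o_p(1/m)$, and a bracketing-entropy rate of order $n^{-1/2}$ only delivers this when $m=o(\sqrt n)$, which is strictly stronger than the assumed $m/n\to 0$. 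The paper sidesteps this entirely by conditioning on each grid point: since $n_i\ge \tilde n\gtrsim n/m\to\infty$ observations land on each $x_i$, and the univariate energy loss is exactly the Cram\'er distance to the empirical cdf, the basic inequality $\sum_i\cramer(F_{\hat g,\hat h|x_i},\hat F_{n_i|x_i})\le\sum_i\cramer(F^\star_{x_i},\hat F_{n_i|x_i})$ plus the Dvoretzky--Kiefer--Wolfowitz inequality gives a per-grid-point bound with an explicit rate, no entropy computation and no Helly selection needed. You would either need to adopt that localised argument or strengthen the growth condition on $m$ for your aggregate uniform-convergence route to close.
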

Condition~\ref{ass:large_x_space} requires the training support of the covariate exceed the noise level. Condition~\ref{ass:g_lip_mono} imposes assumptions on the function class. It is worth noting that with the Lipschitz condition, we establish the consistency for covariate values outside the support, which traditional regression approaches, however, do not attain. For example, under the same assumptions, $L_1$ regression consistently exhibits an error for conditional median estimation on out-of-support data that scales linearly with respect to the distance away from the support boundary, as indicated in Proposition~\ref{prop:gain_median}. Condition~\ref{ass:grid} is a technical assumption for simplifying the analysis and we think the same results could be shown for $X$ with a general continuous density that is bounded away from 0.

\section{\revise{Related work}}\label{sec:related}
\subsection{Extrapolation and distribution shifts}\label{sec:related_extrap}
The theoretical understanding of extrapolation for nonlinear models is quite limited. 
Earlier studies, such as the one by \citet{sugiyama2007covariate}, were predicated on the assumption of a bounded density ratio between the training and test distributions. This assumption implies that the test support is nested within the training support, thus excluding the out-of-support extrapolation scenarios that we aim to address. \citet{christiansen2021causal} introduced an extrapolation method that presupposes linearity beyond the support, thus imposing stringent constraints on the function class. More recently, \citet{dong2023first} explored extrapolation within the class of additive nonlinear functions. They evaluated the average performance of a model under a newly extended distribution, assuming that the marginals are not substantially shifted. However, they did not study the behaviour of the model at distinct test points beyond the training support, an aspect often referred to as point-wise performance.

Extrapolation is also linked to the extensive body of work on domain adaptation or generalisation, with the primary objective being the development of models that can adapt or generalise effectively to potentially divergent test distributions. Such methods typically depend on the integration of a certain amount of data from the test distribution~\citep{ben2006analysis,pmlr-v37-ganin15,chen2021domain}, adjusting the test distribution via covariate shift compensation \citep{sugiyama2007covariate,gretton2009covariate}, or exploiting specific inherent data structures to train a model that is invariant or robust to distribution shifts~\citep{peters2016causal,arjovsky2019invariant,rothenhausler2021anchor}.
However, our work diverges from these methodologies by tackling a challenging problem: how a model trained on data within a bounded set can extrapolate to a data point that lies outside the training support. We do not specify the test distribution, nor do we utilise data from the test domain. \revise{There is also an extensive body of literature on distributionally robust learning, where the goal is to minimise the worst case risk among a family of potential test distributions. The family of distributions can be defined via a Wasserstein ball~\citep{kuhn2019wasserstein,sinha2018certifiable}, MMD ball~\citep{jegelka19distributionally, kirschner20distributionally}, or general $f$-divergences~\citep{NIPS2016_4588e674}. In contrast to these, we assume that the conditional distribution of the target, given the covariates, is unchanged for test data, and we examine the properties of the proposed methodology once the covariate distribution is shifting and has at test time a positive density outside of the training support. While the setting is different in this sense, we also show that the desired extrapolation properties depend critically on fitting a distributional regression model as opposed to a conditional mean or quantile regression model.  There is also extensive literature on distributional robustness where the assumption is that we have at training time access to samples from several different distributions and we try to use the heterogeneity in the training distributions to learn a robust model \citep{meinshausen2015maximin,Sagawa2020Distributionally,mehta2024distributionally}. In contrast, here we assume we have only access to i.i.d.\ samples from a single training distribution. In fact, our proposed method is also of interest for the setting where training and test data are both sampled i.i.d.\ from the same distribution as it provides a novel way for distributional regression with neural networks that does not require much hyperparameter tuning. The smoothing effect of the pre-additive noise model bears also resemblance with the positive effect of smoothing on adversarial robustness in a classification setting  
\citep{cohen2019certified}.

While engression can provide prediction intervals with conditional coverage, conformal prediction is a popular possibility for guaranteeing marginal coverage 
 of prediction intervals \citep{shafer2008tutorial,barber2023conformal}. The standard setting of conformal prediction requires training and test data to follow the same distribution but there have been attempts to provide marginal coverage guarantees under distribution shifts 
\citep{tibshirani2019conformal,gibbs2021adaptive,fannjiang2022conformal}. These approaches broadly rely on an appropriate reweighting scheme of samples to correct for the distributional shift. However, if the test distribution has a positive density in a region of the covariate space where the density of the training distribution tends to zero, the relevant weights will tend to infinity, making the distributional shift correction infeasible for conformal prediction. 
}

\subsection{Distributional regression}\label{sec:related_distreg}
Both parametric and non-parametric approaches have been developed for distributional regression through estimating the cumulative distribution function~\citep{foresi1995conditional,hothorn2014conditional}, density function~\citep{dunson2007bayesian}, quantile function~\citep{meinshausen2006quantile}, etc. See a recent review of distributional regression approaches in \citet{kneib23review}.



\revise{From the modelling perspective, engression aligns more with the recently advanced field of generative models which aims at learning to sample from the training data distribution via a transformation of a simple distribution. Over the past decade, numerous methods have been developed and shown to achieve remarkable empirical success in generating images \citep{rombach2022high,ramesh2022hierarchical} or texts \citep{brown2020language}. 
We think that one of the most valuable advantages of engression over 
existing deep generative models is its simplicity in both statistical and computational sense, yet performing well in our numerical studies for classical regression tasks. The statistical property of engression is relatively easy to understand, as demonstrated through our finite-sample results in \eqref{sec:finite_sample}. 
Compared to generative adversarial networks \citep{goodfellow2014}, engression solves a minimisation problem as in \eqref{eq:eng_emp} which is much easier computationally than solving a min-max optimisation problem. Engression does not need an additional discriminator model, resulting in much less hyperparameter tuning and more stable training. Compared to variational autoencoders \citep{kingma2013auto}, engression does not rely on variational approximation, does not need an additional encoder model, and allows generative models \eqref{eq:gen_model_class} which are more general than the Gaussian or Bernoulli decoders typically adopted in VAE. While an engression model in general does not lead to a closed-form density, it allows more expressive neural network models than the invertible NNs used in normalizing flows \citep{papamakarios2021normalizing} and its optimisation does not involve expensive Jacobian computation. In contrast to diffusion models \citep{sohl2015deep,ho2020denoising}, sampling from an engression model is done through a single forward pass of the model rather than iterating over a stochastic process, thus reducing the computational cost. 
In summary, these popular generative models have been focusing on image and text generation as their main application, while engression is more closely aligned with classical regression and its simplicity makes it particularly suitable in this context.
}

\subsection{Post-nonlinear models}\label{sec:related_model}
It is important to note that pre-ANMs bear certain resemblances to the post-nonlinear models featured in literature. In the context of nonlinear independent component analysis, given an independent source vector $S$, the post-nonlinear mixture~\citep{taleb1999source} hypothesises that the observed mixture $X$ is  formed through component-wise nonlinearities applied to the components of the linear mixture, represented as:
\begin{equation*}
X_i=g_i([WS]_i),
\end{equation*}
where $i=1,\dots,d$, $W$ is a linear transformation, and $g$ is a component-wise invertible nonlinear function. However, in contrast to our context, this is a deterministic process with the distinct aim of recovering the hidden signal $S$ from $X$.

In causality, \citet{Zhang2009OnTI} extended the post-nonlinear mixture to the post-nonlinear causal model, which models the causal relationship between a variable $Y$ and its direct cause $X$ as:
\begin{equation*}
Y = g_2(g_1(X)+\varepsilon),
\end{equation*}
where $g_1$ denotes the nonlinear causal effect, $g_2$ the invertible post-nonlinear distortion in $Y$, and $\varepsilon$ the independent noise. The crucial difference between this model and the pre-ANM is whether the noise $\varepsilon$ is directly added to $X$, up to a linear transformation. This attribute forms the cornerstone of pre-ANMs, which is not the case for post-nonlinear models unless $g_1$ is linear. Besides, the pre-ANM allows for additional linear terms that do not show up in a post-nonlinear model.

\section{Empirical results}\label{sec:empirical}
We start with simulated scenarios without model misspecification in Section~\ref{sec:simu} and then extend to the real-world scenarios without guarantees for the fulfillment of the model assumptions in Section~\ref{sec:realdata}. 
\subsection{Simulations}\label{sec:simu}
We simulate data from a heteroscedastic noise model that falls into the pre-ANM class in \eqref{eq:preanm_uni}, with details given in Table~\ref{tab:simu_set}.
 In all settings, the training data are supported on a bounded set and the true functions are nonlinear outside the support. We compare engression to the traditional $L_1$ and $L_2$ regression; for all methods, we use the identical implementation setups including architectures of neural networks and all hyperparameters of the optimisation algorithm.  For each setting, we randomly simulate data and apply the methods with random initialisation, which is repeated for 20 times. All the experimental details are described in Appendix~\ref{app:exp_detail}.

\begin{table}
\centering
\caption{Simulation settings. In all settings, the training data are supported on a bounded set within the boundary value $\xm=2$. With a finite sample, the noise also has a bounded support with roughly $\etam\approx2$ (approximately 97.5\% quantile of a standard Gaussian). All functions $g^\star$ are nonlinear outside the support except that the softplus function is approximately linear as $x$ grows large. }\label{tab:simu_set}
\vskip 0.1in
\begin{tabular}{lllll}
\toprule
Name & \multicolumn{1}{c}{$g^\star(\cdot)$} & \multicolumn{1}{c}{$X$} & \multicolumn{1}{c}{$\eta$} \\\midrule
\texttt{softplus} & $g^\star(x)=\log(1+e^x)$ & Unif$[-2,2]$ & $\cN(0,1)$ \\
\texttt{square} & $g^\star(x)=(x_+)^2/2$ & Unif$[0,2]$ & $\cN(0,1)$ \\
\texttt{cubic} & $g^\star(x)=x^3/3$ & Unif$[-2,2]$ & $\cN(0,1.1^2)$ \\
\texttt{log} & $g^\star(x)=\begin{cases}\frac{x-2}{3}+\log(3)&x\le2\\\log(x)&x>2\end{cases}$ & Unif$[0,2]$ & $\cN(0,1)$ \\
\bottomrule
\end{tabular}
\end{table}

We are interested in the performance of the estimated models outside the training support. To evaluate this, we use both qualitative visualisations and quantitative metrics. While we focus here on extrapolation beyond the larger end of the support for convenience, we expect similar phenomena to happen for the smaller side as well; we investigate extrapolation on both ends of the training support in real-data experiments.  

Figure~\ref{fig:simu_visual} visualises the overall performance of different methods, where we plot the fitted curves as well as the true function. Here we are concerned with the estimation of the conditional median and mean functions of $Y$ given $X=x$. The true median function is given by $g^\star(x)$, while the true mean function is given by $\bbE_{\eta}[g^\star(x+\eta)]$, which is estimated based on the sampling $10^5$ i.i.d.\ draws of $\eta$. The engression estimators for the median and mean are estimated from 512 samples from the engression model for each $x$. We observe that within the training support, all methods perform almost perfectly. However, once going beyond the support, $L_1$ and $L_2$ regression fail drastically and tend to produce uncontrollable predictions, resulting in a huge spread that represents roughly the extrapolation uncertainty. In contrast, the extrapolation uncertainty, as shown by the shaded area, of engression is significantly smaller. Particularly when $x\le4$, which roughly corresponds to $x\le\xm+\etam$, engression leads to nearly 0 extrapolation uncertainty for the conditional median, which supports our local extrapolability results in Section~\ref{sec:local_extrap}. The extrapolation uncertainty of engression for the conditional mean tends to be slightly higher than that for the conditional median, which coincides with the results in Propositions~\ref{prop:gain_median} and \ref{prop:gain_mean}.

\begin{figure}
\centering
\begin{tabular}{@{}c@{}c@{}c@{}c@{}c@{}}
	&\multicolumn{2}{c}{Conditional median}&\multicolumn{2}{c}{Conditional mean}\\
	&\small{Engression} & \small{$L_1$ regression} & \small{Engression} & \small{$L_2$ regression}\\
	\rotatebox[origin=c]{90}{\small{\texttt{softplus}}}&
	\includegraphics[align=c,width=0.24\textwidth]{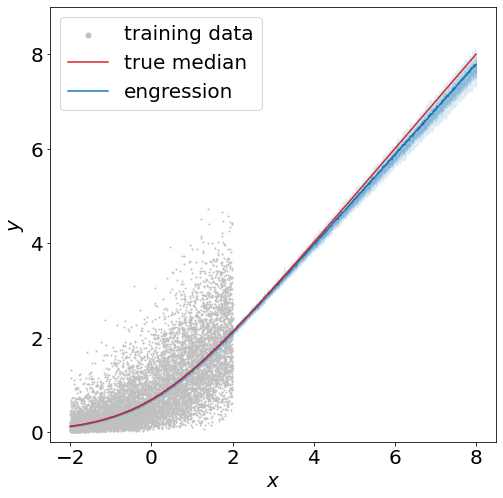} &
	\includegraphics[align=c,width=0.24\textwidth]{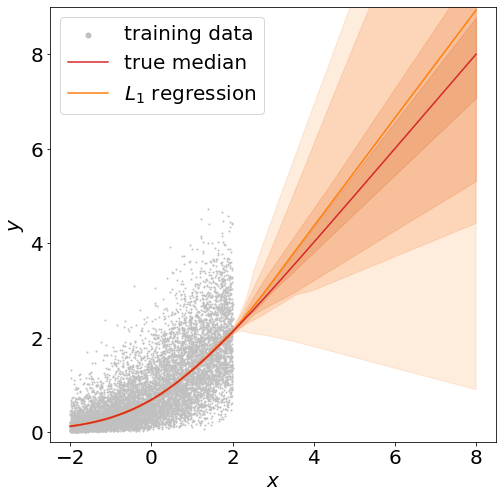} &
	\includegraphics[align=c,width=0.24\textwidth]{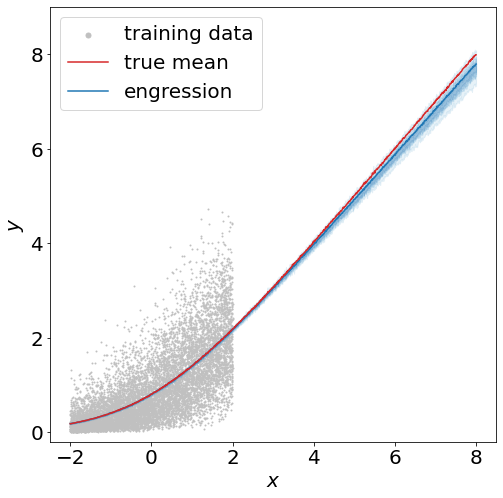} &
	\includegraphics[align=c,width=0.24\textwidth]{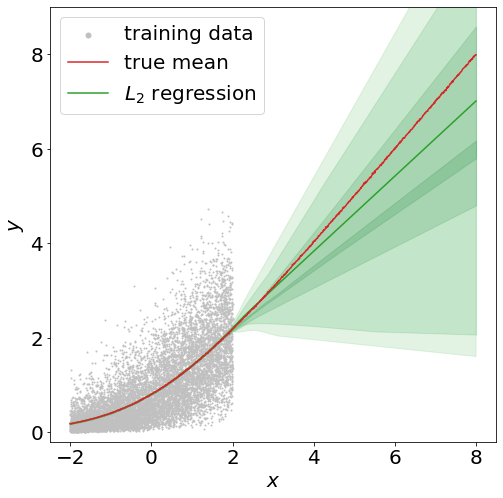}\\
	\rotatebox[origin=c]{90}{\small{\texttt{square}}}&
	\includegraphics[align=c,width=0.24\textwidth]{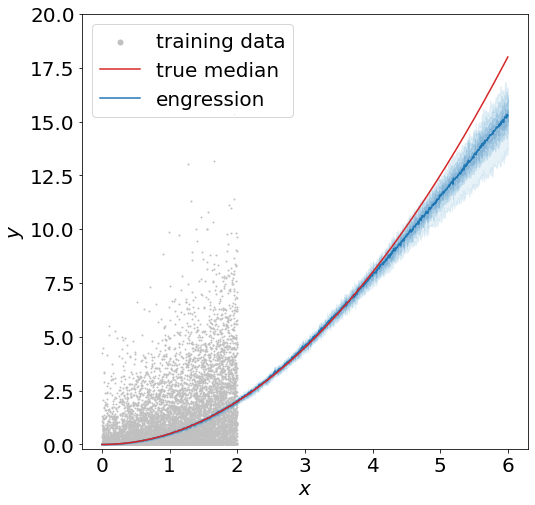} &
	\includegraphics[align=c,width=0.24\textwidth]{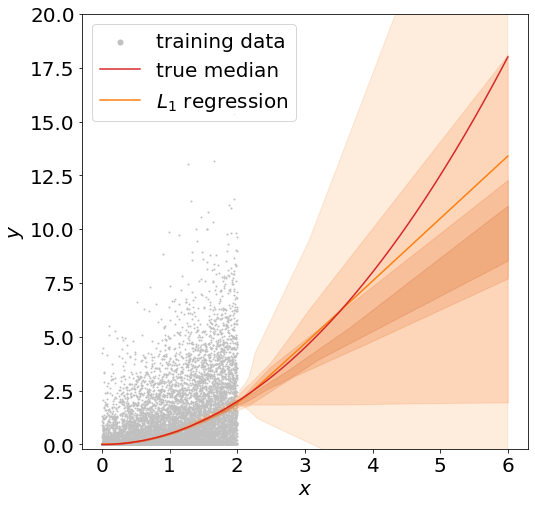} &
	\includegraphics[align=c,width=0.24\textwidth]{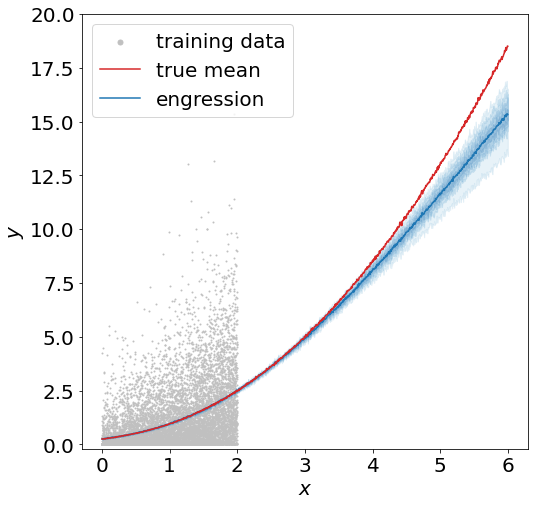} &
	\includegraphics[align=c,width=0.24\textwidth]{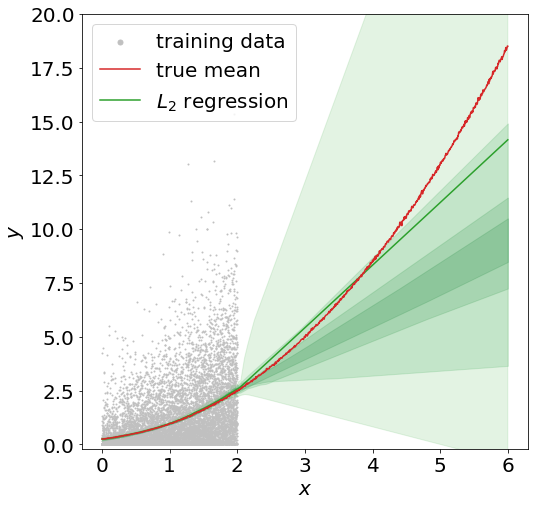}\\
	\rotatebox[origin=c]{90}{\small{\texttt{cubic}}}&
	\includegraphics[align=c,width=0.24\textwidth]{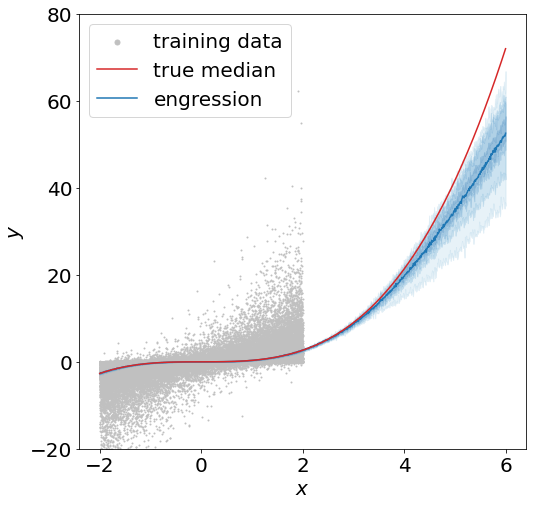} &
	\includegraphics[align=c,width=0.24\textwidth]{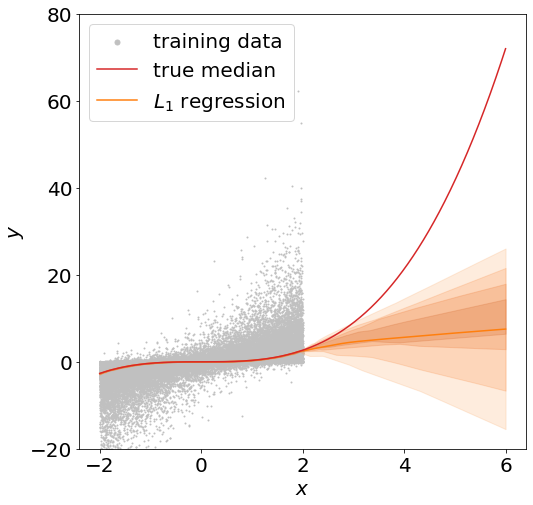} &
	\includegraphics[align=c,width=0.24\textwidth]{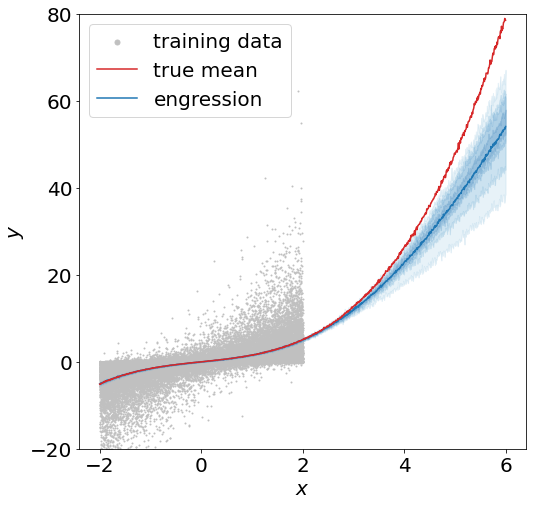} &
	\includegraphics[align=c,width=0.24\textwidth]{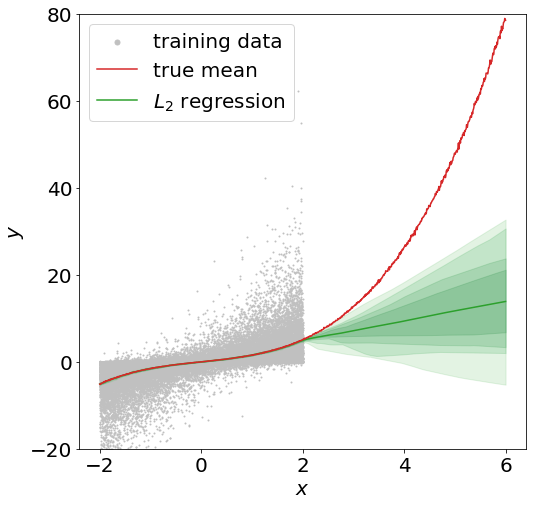}\\
	\rotatebox[origin=c]{90}{\small{\texttt{log}}}&
	\includegraphics[align=c,width=0.24\textwidth]{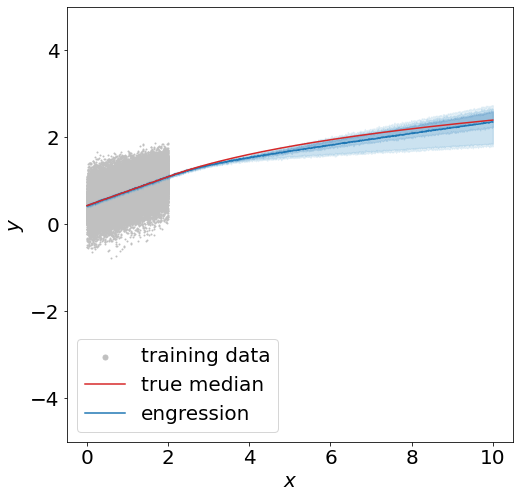} &
	\includegraphics[align=c,width=0.24\textwidth]{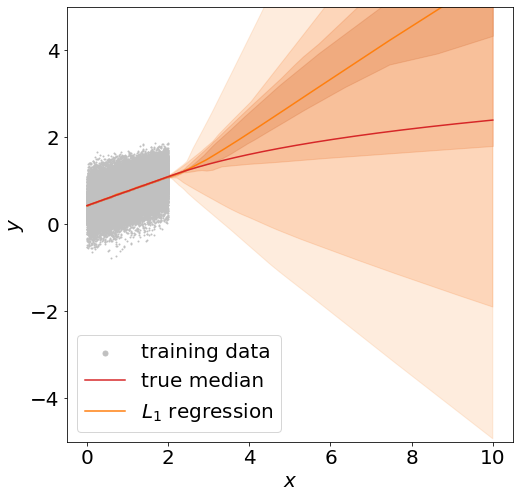} &
	\includegraphics[align=c,width=0.24\textwidth]{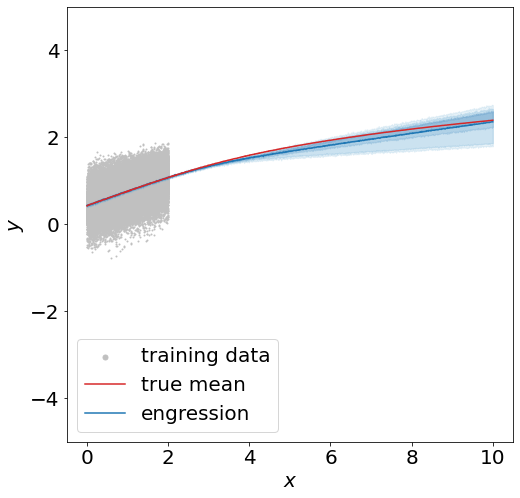} &
	\includegraphics[align=c,width=0.24\textwidth]{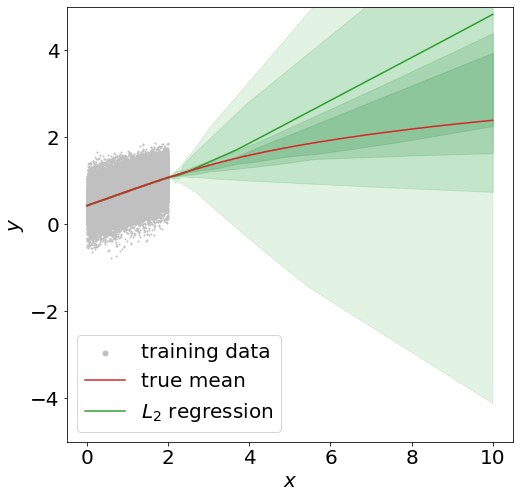}\\
\end{tabular}
\caption{The estimated conditional medians and means of different methods in all simulation settings. Each figure consists of the estimated functions for 20 random repetitions (whose 10\% to 90\% quantiles are plotted in graduated colors with the darkest curve in the middle representing the mean of the 20 estimated mean functions), the true function, and training data. }
\label{fig:simu_visual}
\end{figure}

\begin{figure}
\begin{tabular}{@{}c@{}c@{}c@{}c@{}}
	\texttt{softplus} & \texttt{square} & \texttt{cubic} & \texttt{log} \\
	\includegraphics[align=c,width=0.24\textwidth]{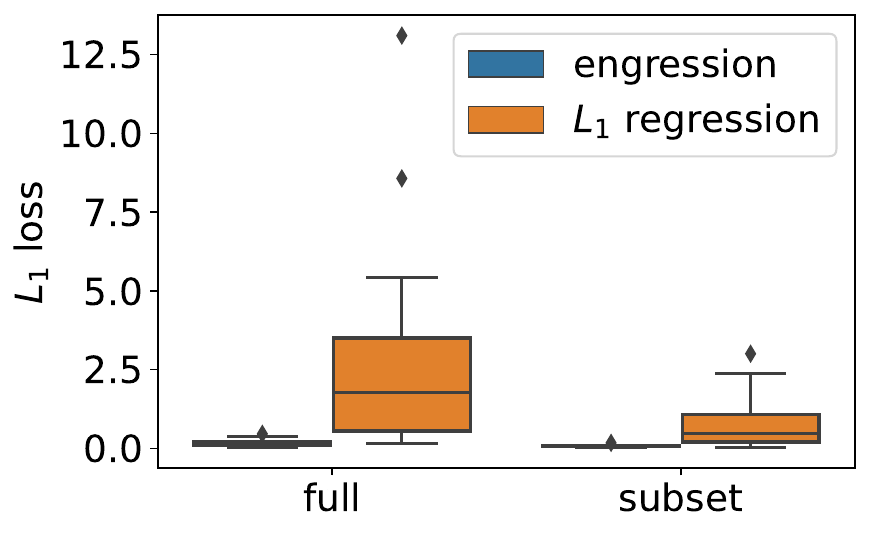} &
	\includegraphics[align=c,width=0.24\textwidth]{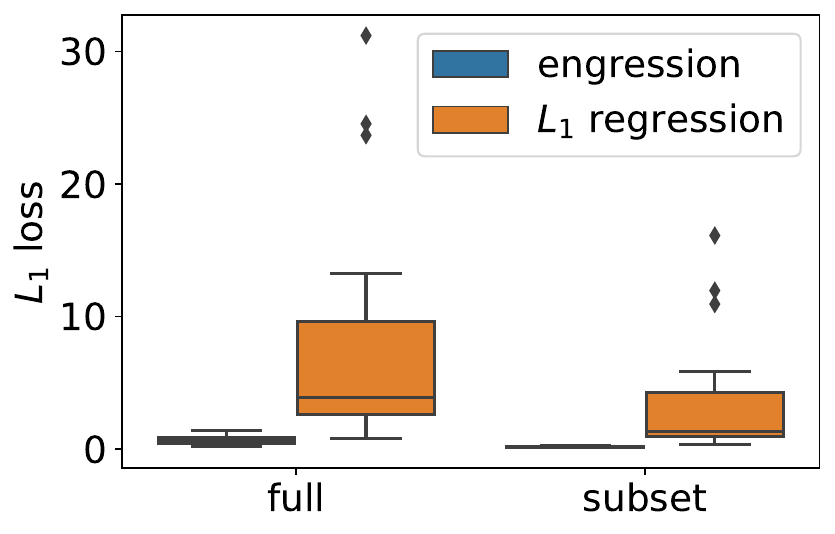} &
	\includegraphics[align=c,width=0.24\textwidth]{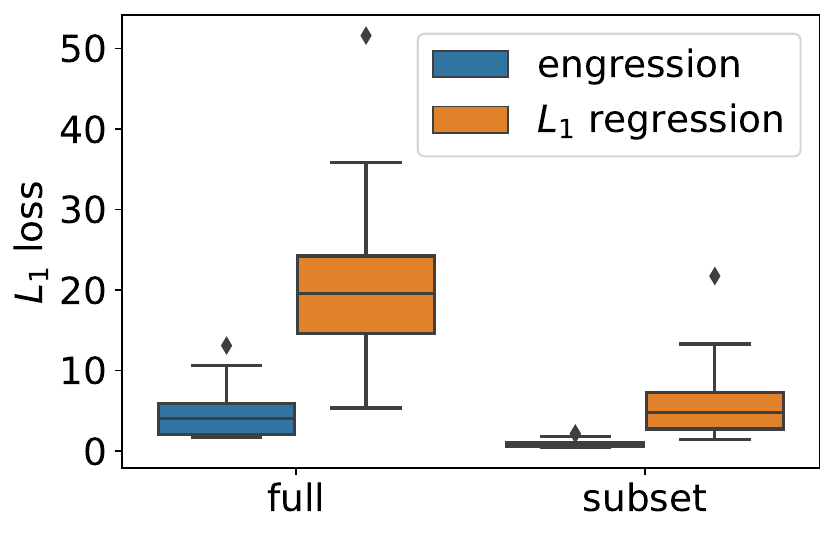} &
	\includegraphics[align=c,width=0.24\textwidth]{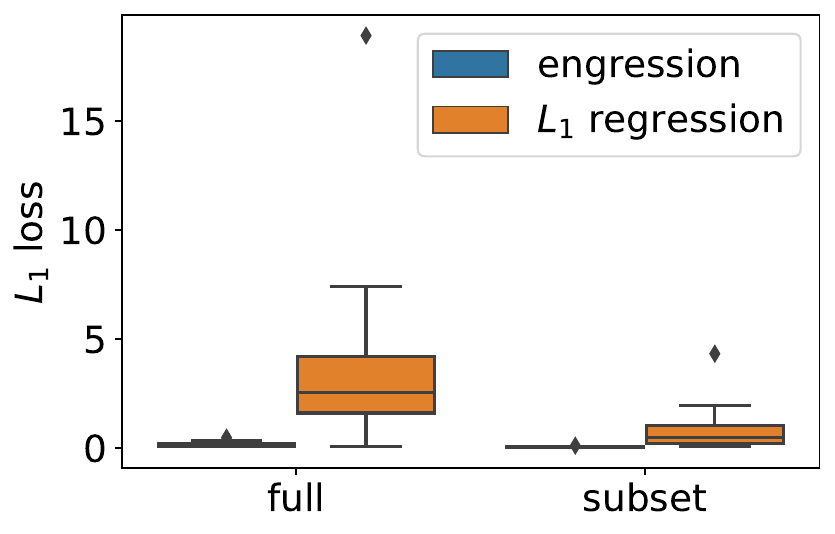} \\
	\includegraphics[align=c,width=0.24\textwidth]{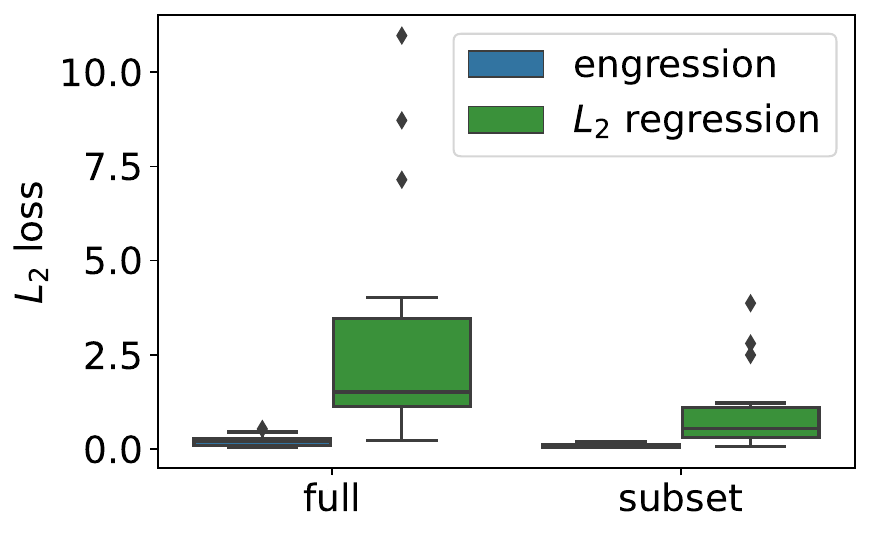} &
	\includegraphics[align=c,width=0.24\textwidth]{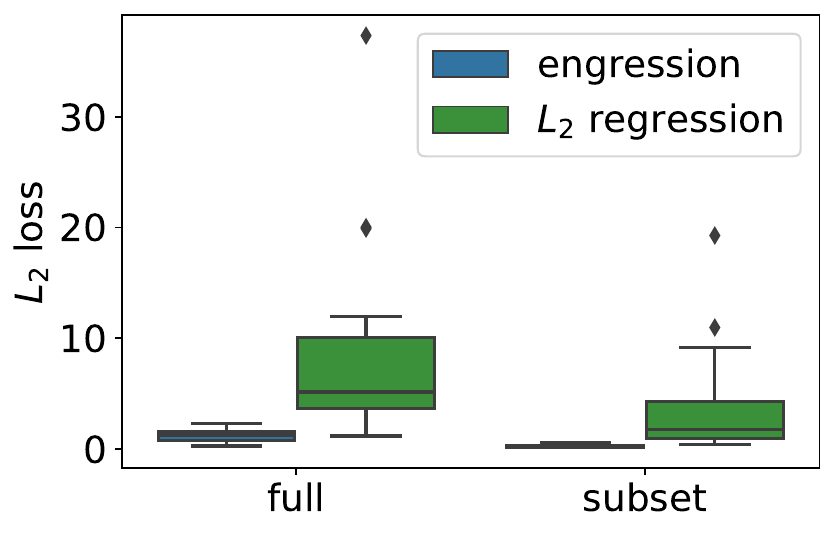} &
	\includegraphics[align=c,width=0.24\textwidth]{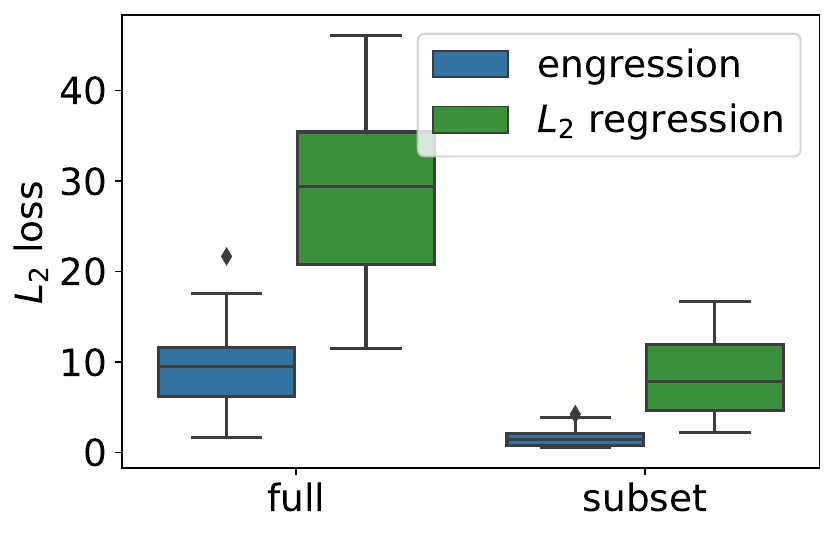} &
	\includegraphics[align=c,width=0.24\textwidth]{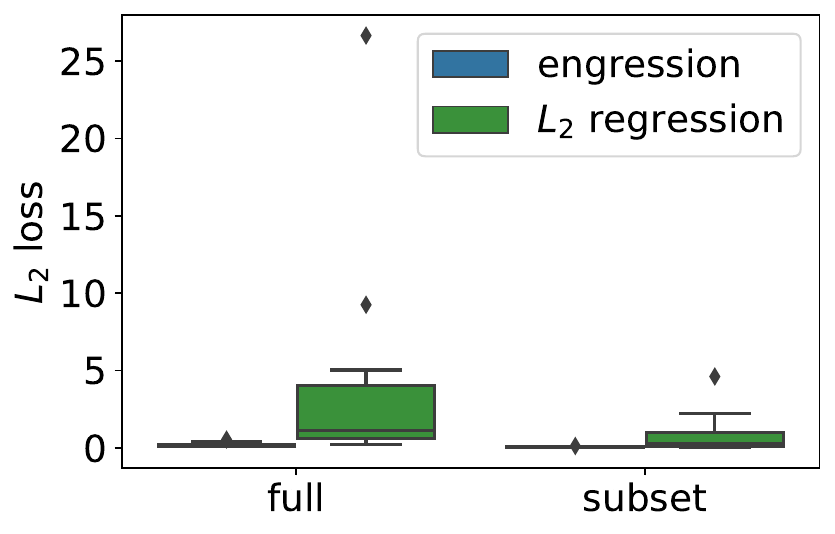} 
\end{tabular}
\caption{Average performance on out-of-support data. The top row shows the $L_1$ loss for conditional median estimation; the bottom row shows the $L_2$ loss for conditional mean estimation.}
\label{fig:simu_boxplot}
\end{figure}

\begin{figure}
\begin{tabular}{@{}cc@{}}
	\texttt{square} & \texttt{log} \\
	\includegraphics[align=c,width=0.49\textwidth]{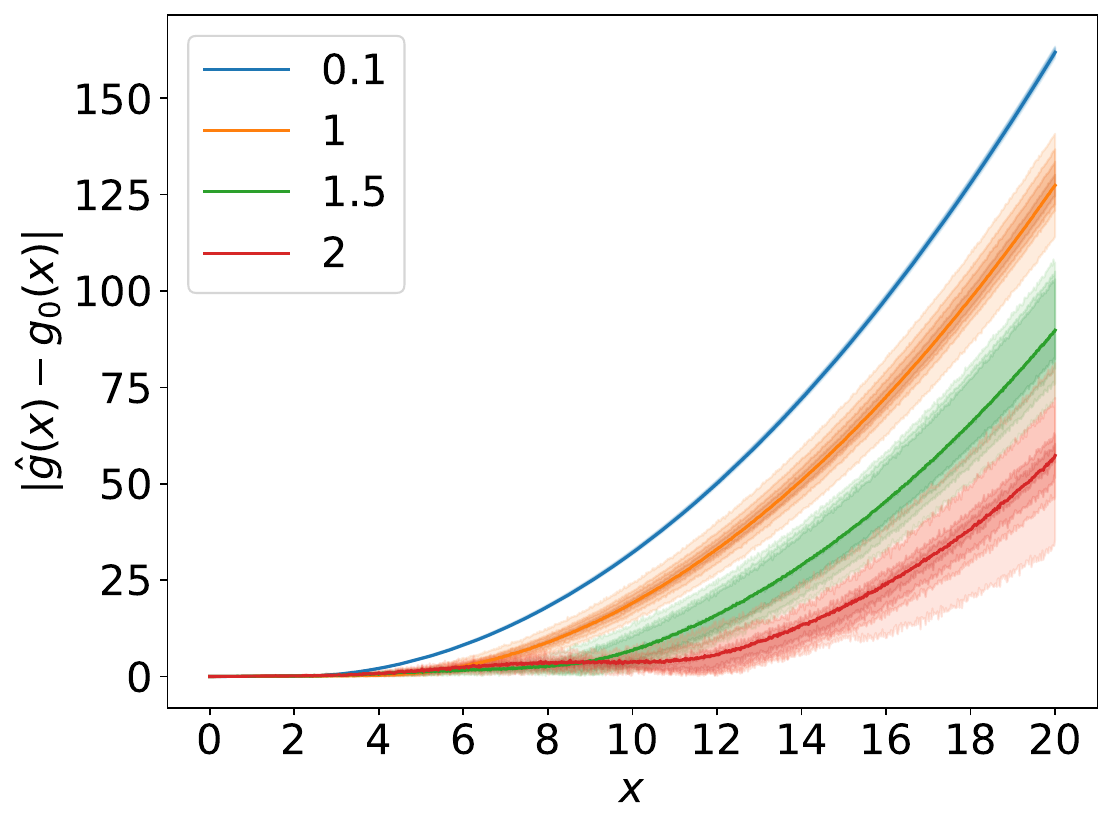} &
	\includegraphics[align=c,width=0.47\textwidth]{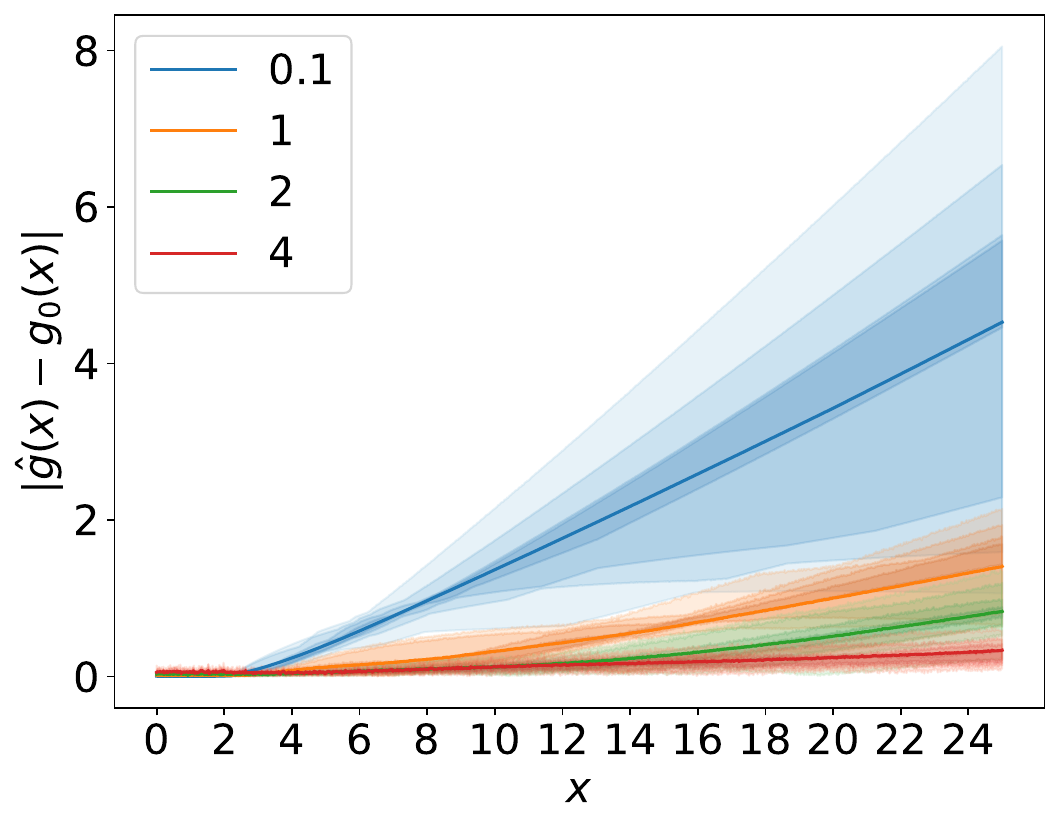} 
\end{tabular}
\caption{Extrapolation performance for varying noise levels. The training support is $[0,2]$. On the $y$-axis we compute the pointwise absolute error between the estimated and true conditional medians for each $x$ inside or outside the support. Each colour stands for a particular standard deviation of the noise $\eta$.}
\label{fig:simu_noise_level}
\end{figure}

\begin{figure}
\begin{tabular}{@{}c@{}c@{}c@{}c@{}}
	\includegraphics[align=c,width=0.24\textwidth]{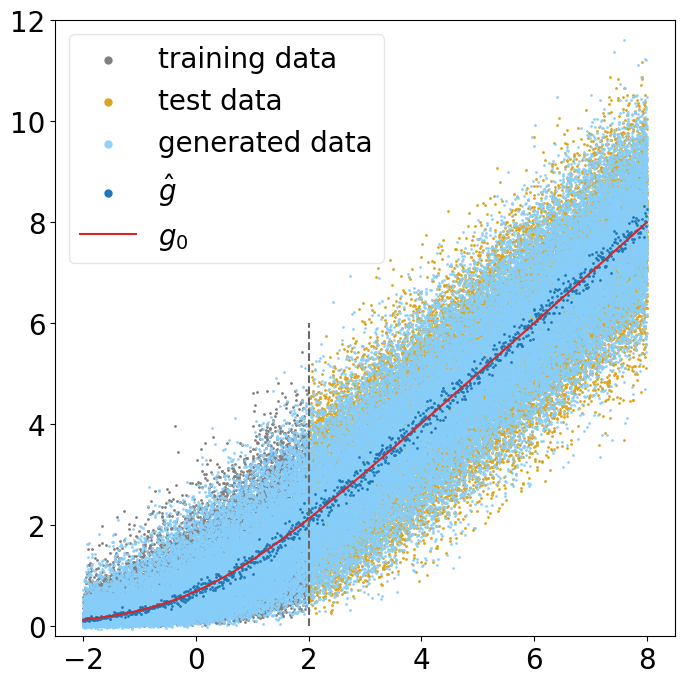} &
	\includegraphics[align=c,width=0.24\textwidth]{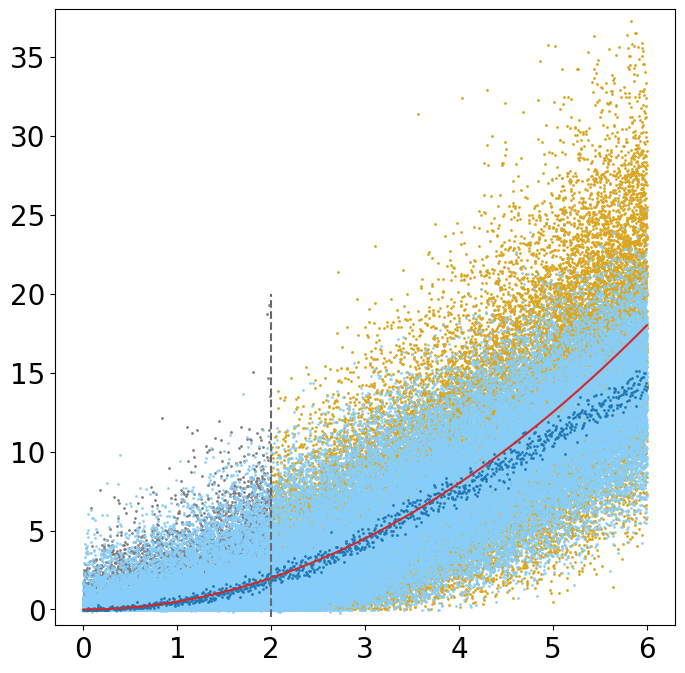} &
	\includegraphics[align=c,width=0.24\textwidth]{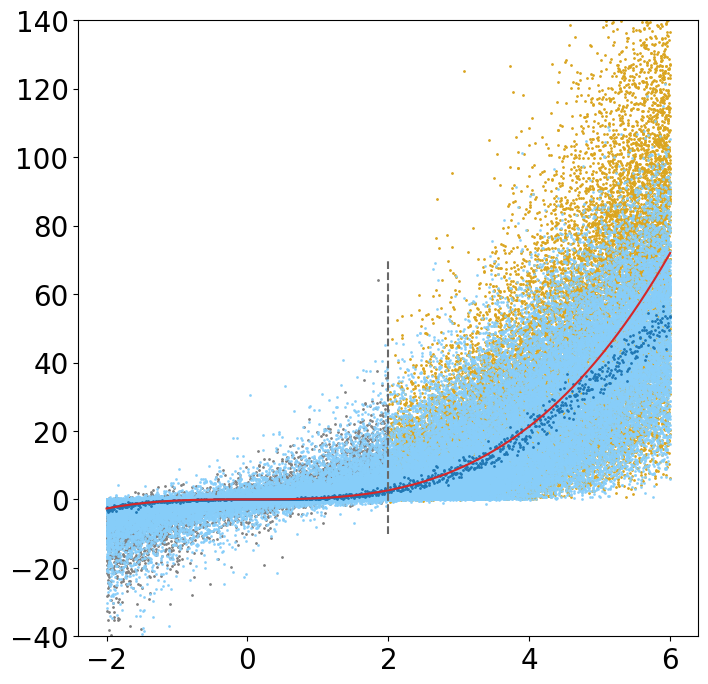} &
	\includegraphics[align=c,width=0.24\textwidth]{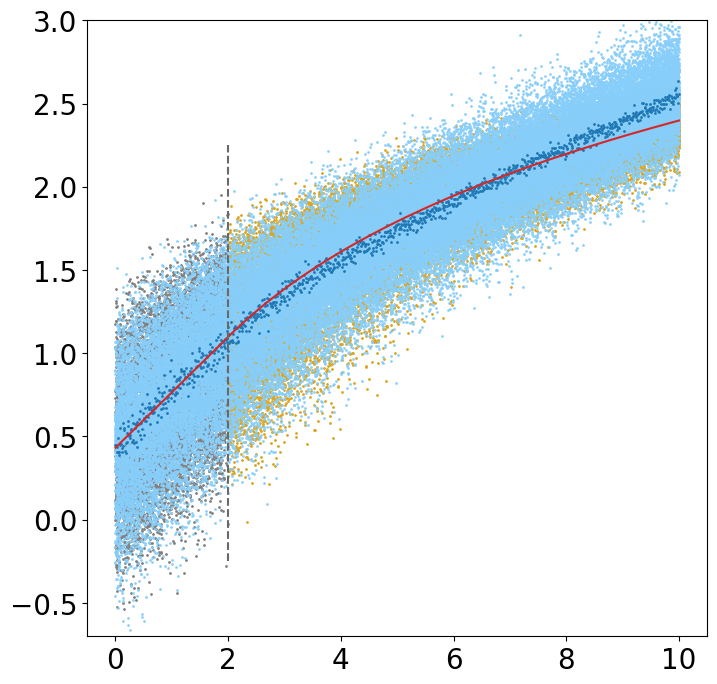} \\
	\texttt{softplus} & \texttt{square} & \texttt{cubic} & \texttt{log} 
\end{tabular}
\caption{Data generated from the true model and engression model within and outside the training support. The dashed vertical lines are the boundaries of training and test data. For each $x$, $\hat{g}(x)$ is obtained by the empirical median of 64 samples from the engression model.}
\label{fig:simu_gen}
\end{figure}

In addition to the visual illustrations, we quantitatively evaluate the conditional median estimation using the $L_1$ loss and the conditional mean estimation using the $L_2$ loss. Both metrics are computed according to $X$ following a uniform distribution on two (sub)sets outside the training support: a subset $[\xm,\xm+\etam]$ on which engression consistently recover the true (median) function by Theorem~\ref{thm:func_recover}, and the `full' test set which goes further beyond the local extrapolable boundary $\xm+\etam$. The metrics for the 20 repetitions are reported via boxplots in Figure~\ref{fig:simu_boxplot}. These numerical results further demonstrate the substantial advantages of engression in extrapolation outside the training support for the conditional mean or median function. 

Additionally, we illustrate how engression can take advantage of higher noise levels. In the simulation settings for \texttt{square} and \texttt{log}, we experiment with various variance levels for the noise $\eta$. As depicted in Figure~\ref{fig:simu_noise_level}, with an increase in the noise level, engression tends to extrapolate over a broader range until the prediction curve begins to diverge from the true function. This observation supports Theorem~\ref{thm:func_recover}, which asserts that engression can consistently recover the true function up to a boundary that becomes less restrictive with an increasing noise level.

Finally, we investigate how well engression fits the entire conditional distribution of $Y|X$ via sampling. Figure~\ref{fig:simu_gen} presents the scatterplots of the true data and data generated from the engression model. Inside the training support, engression captures exactly the true conditional distribution, generating data identically distributed as the true data. Outside the support, the estimated distribution still exhibits a large overlap with the true distribution at the beginning while eventually starts to deviate from the truth, except the \texttt{softplus} case where engression appears to produce linear extrapolation that matches the true one.

\subsection{Real data experiments}\label{sec:realdata}
We apply engression to an extensive range of real data sets spanning various domains such as climate science, biology, ecology, public health, and more. We compare engression with traditional prediction methods, including $L_1$ regression and $L_2$ regression with neural networks, linear quantile regression (LQR)~\citep{koenker1978regression,koenker2005quantile}, linear $L_2$ regression (LR), and quantile regression forest (QRF)~\citep{meinshausen2006quantile}. Our investigation begins with large-scale experiments on univariate prediction, where the primary goal is to assess the wide applicability and validity of engression across different domains. Furthermore, we extend our evaluation to multivariate prediction and prediction intervals, showcasing the versatility of engression in various prediction tasks. We provide descriptions of the data sets and all details about hyperparameter and experimental settings in Appendix~\ref{app:exp_detail}.

\subsubsection{Univariate prediction}\label{sec:exp_uni}

We conduct pairwise predictions on data sets from a variety of fields, which results in~30 pairs of variables and~59 univariate prediction tasks, accomplished by switching the roles of the predictor and response variables (with one discrete variable being exclusively considered as the response). 
For each prediction task, we partition the training and test data based on the quantiles of the predictor, specifically at the 0.3, 0.4, 0.5, 0.6, 0.7 quantiles, and designate the smaller or larger portion as the training sample. This way, all the test data lie outside the training support and we encounter diverse patterns of training and test data. Hence, we have a total of 590 unique data configurations, each representing a specific combination of the data set, predictor/response variables, and data partitioning.
 
Additionally, we utilise the same range of hyperparameter settings for NN-based methods, including engression and $L_1$ or $L_2$ regression. This range includes various combinations of learning rates, numbers of training iterations, layers, and neurons per layer. In each data configuration, we experiment with 18 unique hyperparameter settings, resulting in a total of 31,860 models generated by the three NN-based methods. Each model represents an aggregate of 10 models acquired through 10-fold cross-validation (CV). Regarding evaluation, we report both the average performance across all hyperparameter settings and the performance for the hyperparameter setting that yields the best CV metric. Specifically, for engression, $L_1$ regression, and $L_2$ regression, we employ the energy loss, $L_1$ loss, and $L_2$ loss as the validation metrics, respectively.
This strategy ensures a fair and consistent foundation for comparison and mitigates potential bias originating from hyperparameter tuning.

Our main evaluation metric is the $L_2$ loss on the out-of-support test data which reflects the extrapolation performance in such practical scenarios. Table~\ref{tab:uni_pred} demonstrates the comparative performance of each method across various data configurations. Engression consistently stands out as one of the top methods for the most of the data settings. In contrast, other nonlinear approaches, including $L_1$ and $L_2$ NN regression and quantile regression forests, significantly underperform compared to engression, which aligns with our theoretical and simulation findings presented earlier. Linear models, while inferior to engression, often rank second, possibly because the out-of-support patterns on real-world data are relatively close to linearity. In addition, it is worth noting that the average performance of engression across all hyperparameter settings is already superior compared to other methods, although model selection via cross validation can further magnify the advantage.

\begin{table}
\centering
\caption{Proportion (in percentage) of the data configurations for each method with the out-of-support $L_2$ loss exceeding the best loss by a certain percentage given in the first column. For example, the first number in the table, 79, means that on 79\% of all data configurations, the test $L_2$ loss of engression (averaging across all hyperparameter settings) exceeds the minimum test $L_2$ loss among all approaches for the same data configuration by 1\%. The smaller the numbers, the better the method is. For NN-based methods, including engression and $L_1$ and $L_2$ regression, we report both the average performance across all hyperparameter settings or the best performance according to the CV metric.}\vskip 0.1in
\label{tab:uni_pred}
\begin{tabular}{rrrrrrrrrr}
\toprule
\multicolumn{1}{c}{exceeding} & \multicolumn{2}{c}{engression} & \multicolumn{2}{c}{$L_1$ regression} & \multicolumn{2}{c}{$L_2$ regression} & \multirow{2}{*}{LQR}  &  \multirow{2}{*}{LR}  & \multirow{2}{*}{QRF}  \\ 
\multicolumn{1}{c}{percentage} & average & CV & average & CV & average & CV &    &    &    \\ 
\midrule
1\% & 79 & 55 & 98 & 72 & 97 & 82 & 85 & 81 & 93 \\
3\% & 65 & 44 & 95 & 63 & 93 & 72 & 78 & 71 & 91 \\
10\% & 35 & 28 & 86 & 44 & 83 & 56 & 61 & 54 & 83 \\
30\% & 14 & 12 & 64 & 22 & 60 & 41 & 38 & 34 & 62 \\
100\% & 2 & 3 & 26 & 9 & 21 & 25 & 13 & 11 & 28 \\
\bottomrule
\end{tabular}
\end{table}

\begin{figure}
\centering
\begin{tabular}{@{}cc@{}}
	\includegraphics[width=0.49\textwidth]{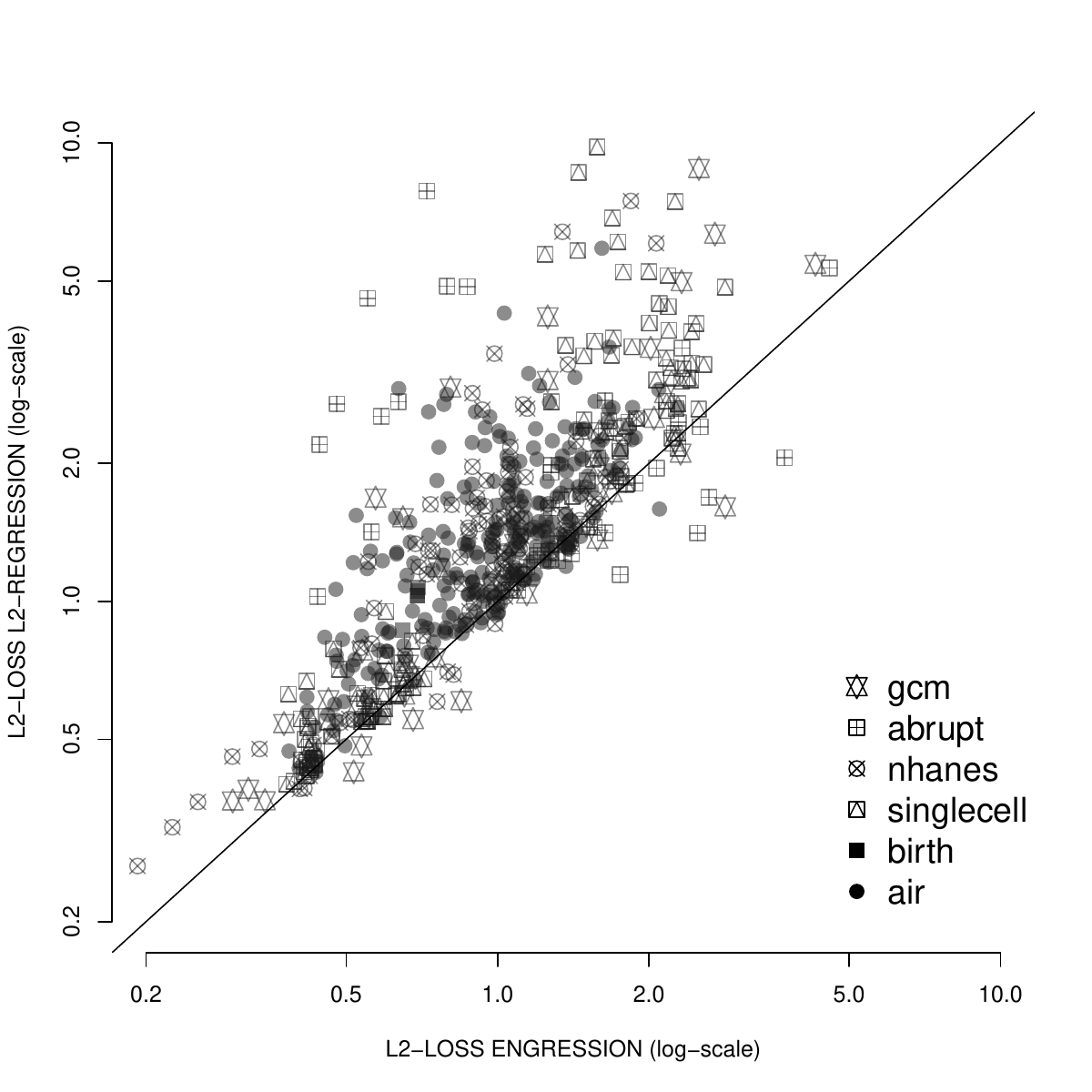} & 
	\includegraphics[width=0.49\textwidth]{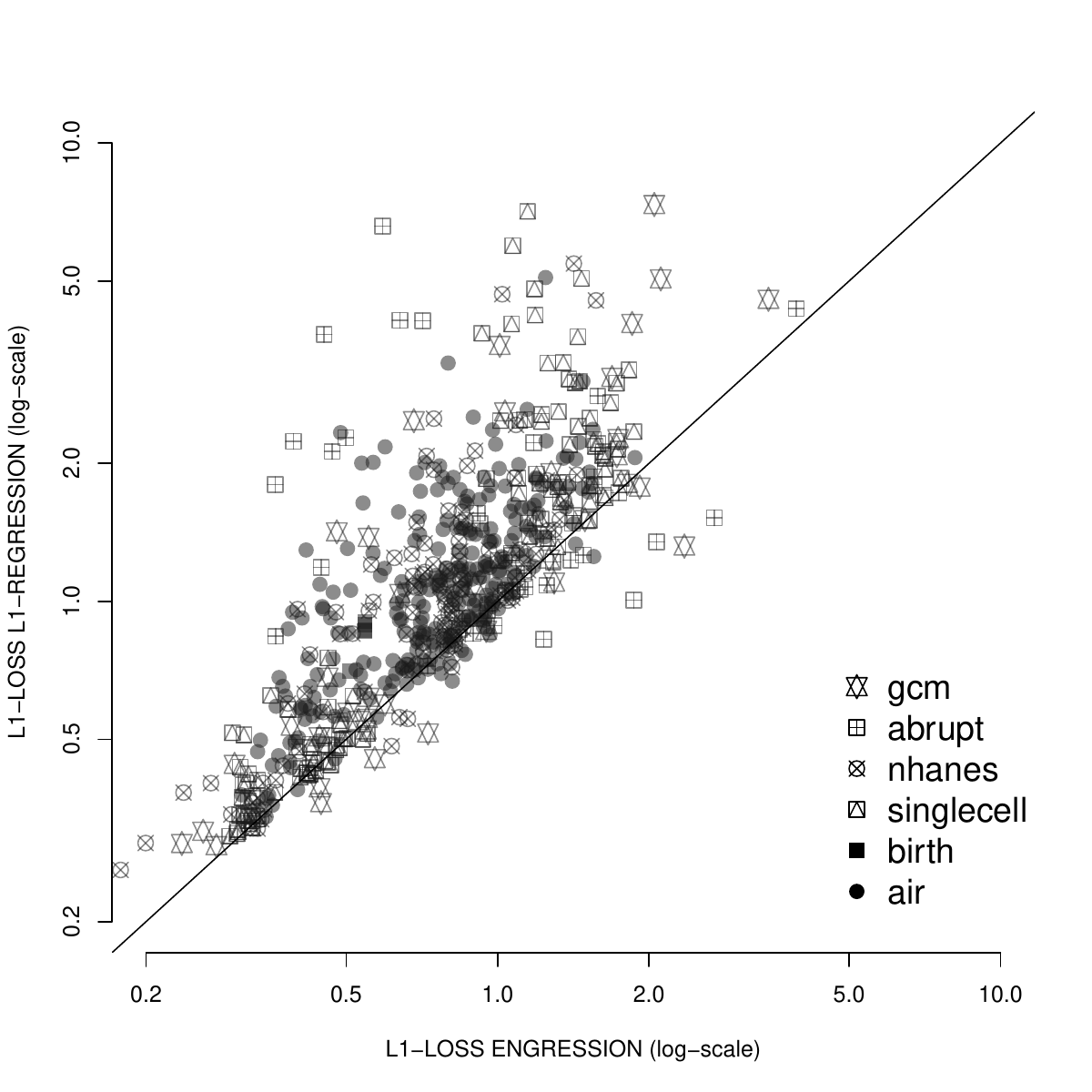}\\
	\small{(a) $L_2$ loss} & \small{(b) $L_1$ loss}
\end{tabular}
\caption{Out-of-support losses (in log-scale) of engression and regression for various data configurations, averaging over all hyperparameter settings.}
\label{fig:uni_eng_reg}
\end{figure}

\begin{figure}
\centering
	\includegraphics[width=0.6\textwidth]{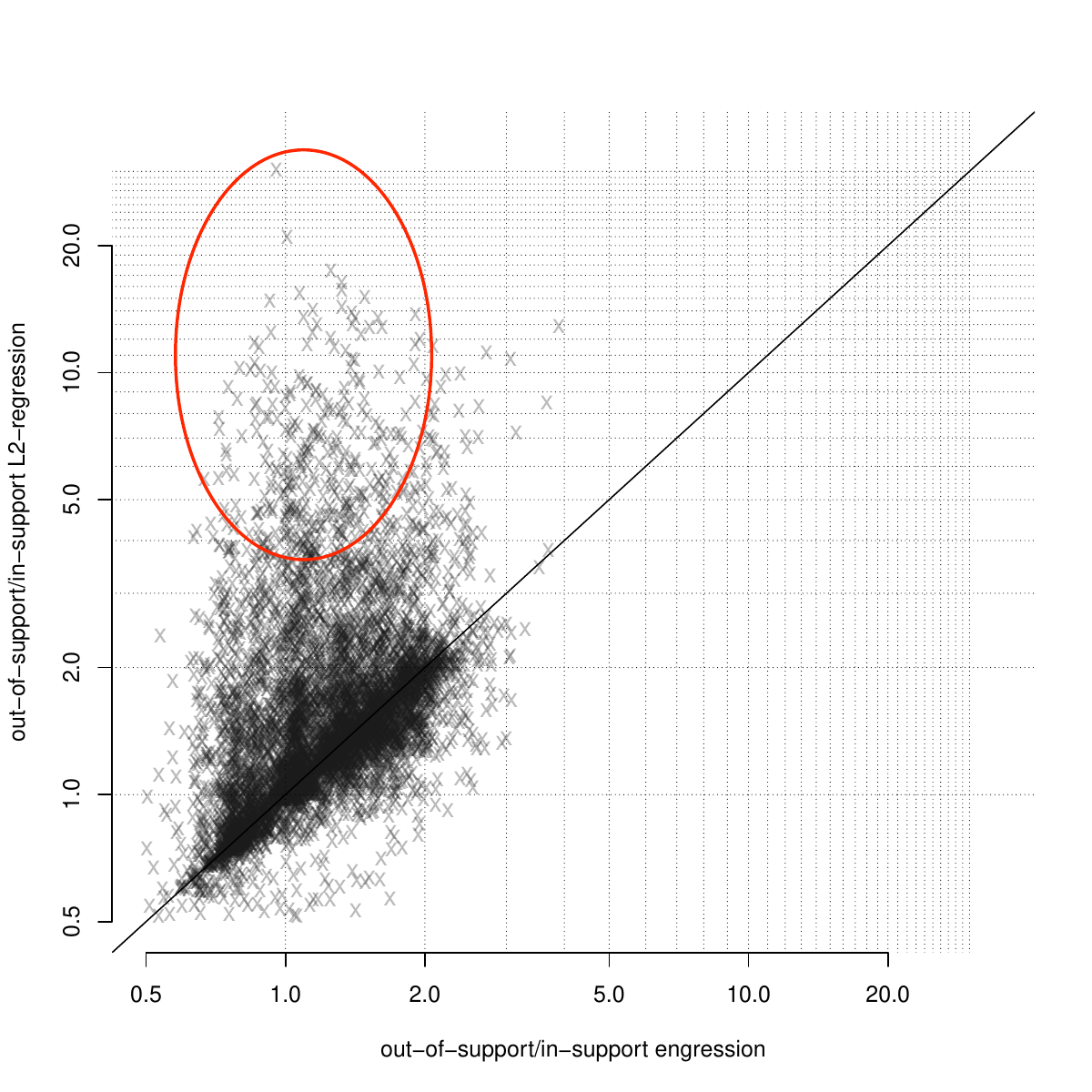}  
\caption{The ratio (in log-scale) between out-of-support and in-support $L_2$ losses of engression and regression for all hyperparameter settings on the air pollution data.}
\label{fig:uni_outsample}
\end{figure}

Moreover, we show a more detailed comparison between engression and regression, both always using the same architecture of the neural networks. As demonstrated in Figure~\ref{fig:uni_eng_reg}, engression consistently surpasses $L_1$ and $L_2$ regression in terms of the $L_1$ and $L_2$ loss on test data that fall outside the support, respectively. This is true even though these losses are specifically optimised by the regression methods on the training data.

Beyond the absolute evaluation of performance on data outside the support, we also scrutinise the relative performance between predictions made within and beyond the support. We utilise the $L_2$ loss on the omitted folds in cross-validation to gauge the performance within the support, which aids in sidestepping potential overfitting effects. Figure~\ref{fig:uni_outsample} shows the ratio between the $L_2$ losses for predictions made outside the support versus those made within it for both methods. We arrive at two key observations. Firstly, the ratios of engression are typically around 1, suggesting negligible performance degradation when transitioning from within-support to outside-support prediction. In contrast, regression tends to encounter more pronounced difficulties when extending beyond the training support, as illustrated by the generally larger ratios (note the log-scaling on the plot). In particular, there exist several instances where engression maintains comparable performance levels both within and outside the support, whereas regression exhibits a substantial increase in loss for predictions outside the support compared to those within it, as shown in the red circle. This further highlights the superior extrapolation performance of engression. Additionally, it should be noted that we plot the performance for all hyperparameter settings, and engression demonstrates less variance in performance compared to regression. This implies that engression displays greater robustness across a range of hyperparameter configurations.

The large-scale experimental study comprehensively demonstrates the remarkable performance of engression in out-of-support prediction. 
The empirical results suggest that engression is suitable for a wide range of real data and offers practitioners an alternative modelling choice for practical data analysis. Furthermore, it is worth noting that for each pair of variables across all data sets, we include the prediction tasks in both directions, taking either of them as the response and the other one as the predictor and the other way round. This ensures that at least in one of these two directions, the assumption on the pre-ANM is not fully fulfilled, unless the relationship is linear. 
Considering this, the observation that engression works well across all these data set configurations indicates the satisfying robustness of engression under model misspecification, which makes it even more widely applicable in real scenarios.

\subsubsection{Multivariate prediction}\label{sec:exp_multi}
Next, we demonstrate the performance of engression in multivariate prediction through two data sets. (While we focus here on the setting of multivariate predictors and univariate outcome, the method is also directly applicable to multivariate outcomes, which is implemented in our software.) The air quality data comprises measurements of five pollutants and three meteorological covariates. For the prediction task, we select one pollutant as the response variable and utilise the remaining variables as predictors. The NHANES data set includes the total activity count (TAC), sedentary time (ST), body mass index (BMI), and age, where we consider predicting the TAC from the remaining three variables. The data is split into training and test sets based on the median of one of the predictors. We evaluate the performance by the $L_2$ losses on both training and test data. 

\begin{figure}
\centering
\begin{tabular}{@{}c@{}c@{}c@{}}
	Engression & NN regression & Linear regression\\
	\small{(0.0798, 0.1662)} & \small{(0.0099, 0.7670)} & \small{(0.1639, 0.5022)}\vspace{-0.05in}\\
	\includegraphics[width=0.33\textwidth]{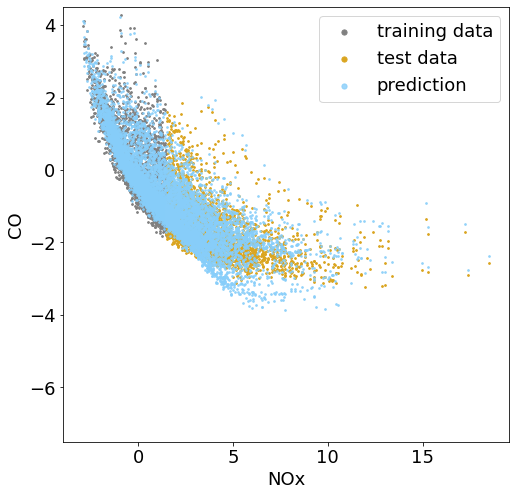} &
	\includegraphics[width=0.33\textwidth]{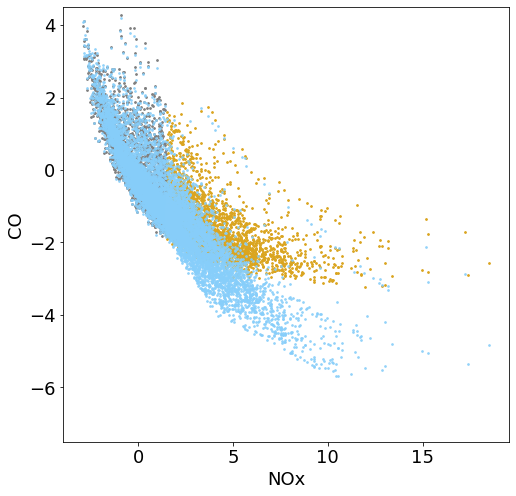} &
	\includegraphics[width=0.33\textwidth]{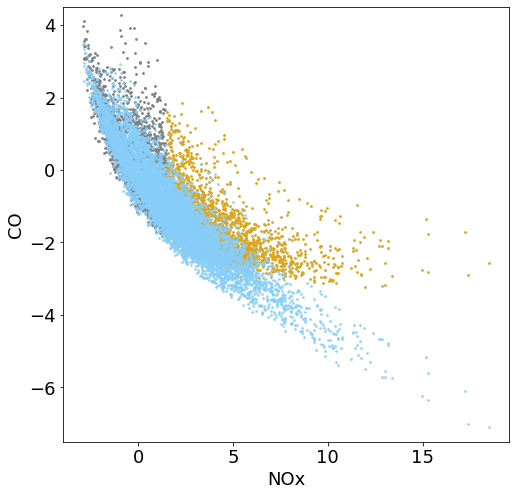} \\
	\small{(0.1687, 0.2096)} & \small{(0.0017, 2.9041)} & \small{(0.3041, 1.4730)}\vspace{-0.05in}\\
	\includegraphics[width=0.33\textwidth]{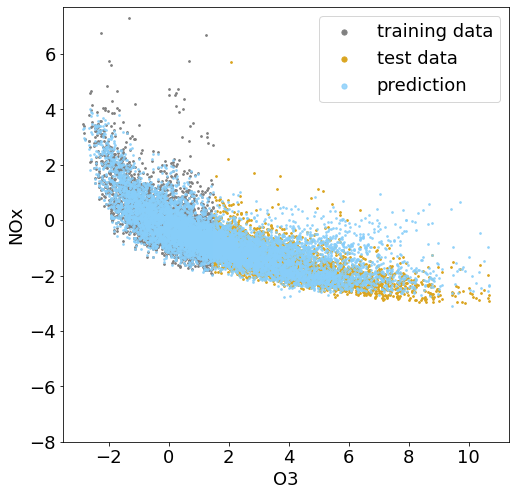} &
	\includegraphics[width=0.33\textwidth]{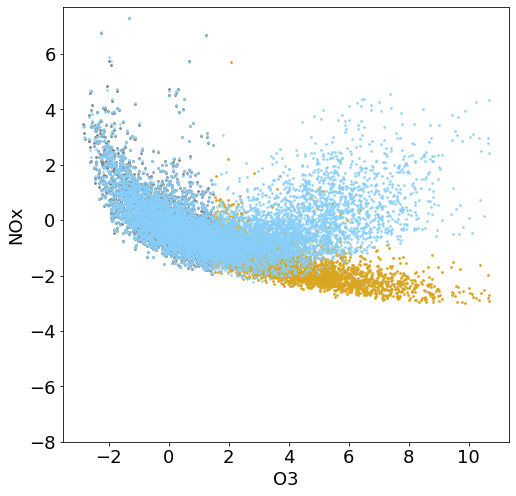} &
	\includegraphics[width=0.33\textwidth]{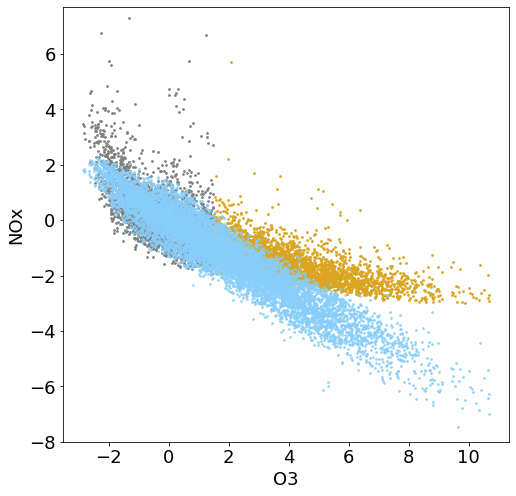} \\
	\small{(0.1267, 0.6392)} & \small{(0.0021, 1.8257)} & \small{(0.3462, 0.9700)}\vspace{-0.05in}\\
	\includegraphics[width=0.33\textwidth]{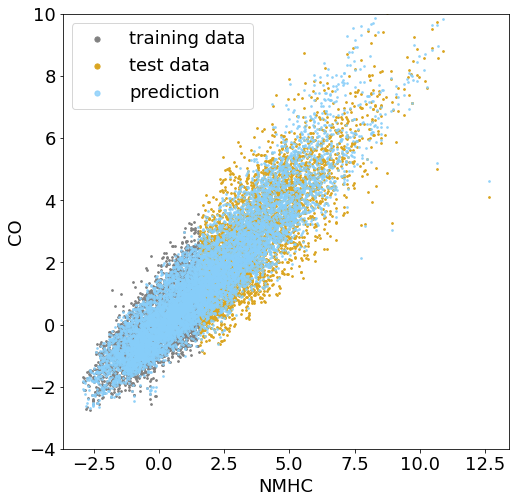} &
	\includegraphics[width=0.33\textwidth]{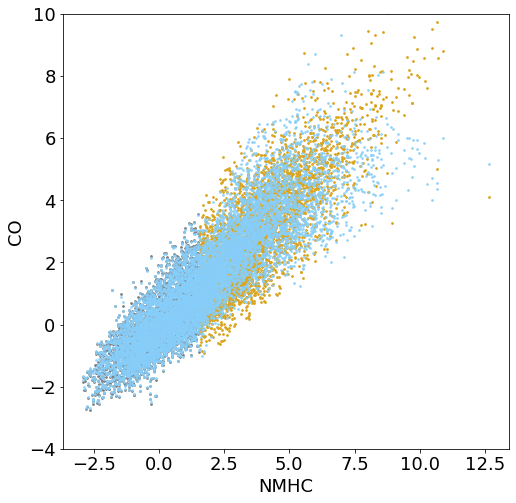} &
	\includegraphics[width=0.33\textwidth]{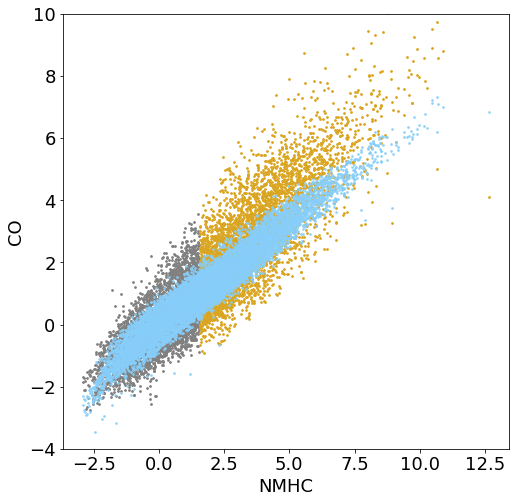} 
\end{tabular}
\caption{Results for multivariate prediction on the air quality data. Each row shows a predictor-response pair; each column is for a method. In each plot, the variable on the x-axis is used to split the training and test data at its median. The numbers in the parenthesis above each plot are the training and test $L_2$ losses, respectively. The spread in the predictions arises due to the presence of other predictor variables in this multivariate prediction.}\label{fig:multivar_air}
\end{figure}

\begin{figure}
\centering
\begin{tabular}{@{}c@{}c@{}c@{}}
	Engression & NN regression & Linear regression\\
	\small{(0.3632, 0.1735)} & \small{(0.2087, 3.8117)} & \small{(0.3741, 0.8245)}\vspace{-0.05in}\\
	\includegraphics[width=0.33\textwidth]{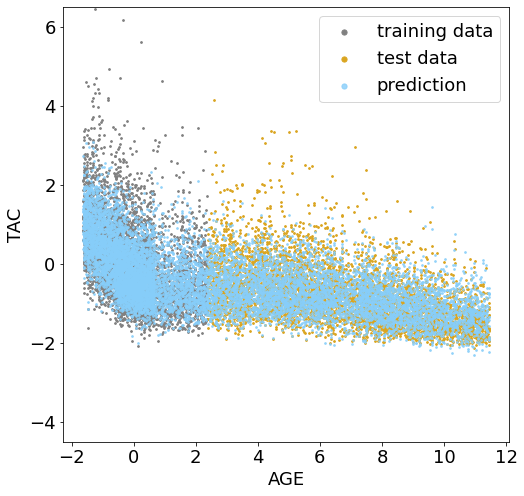} &
	\includegraphics[width=0.33\textwidth]{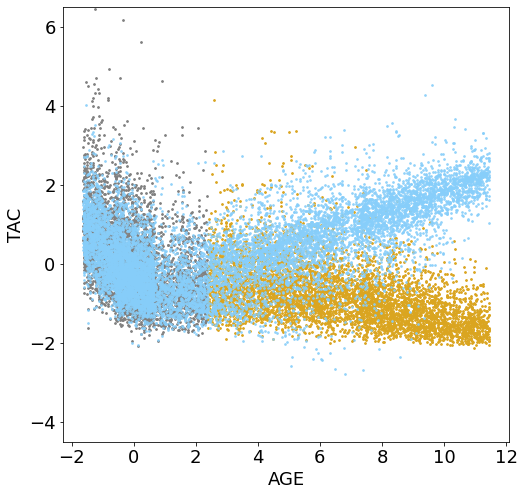} &
	\includegraphics[width=0.33\textwidth]{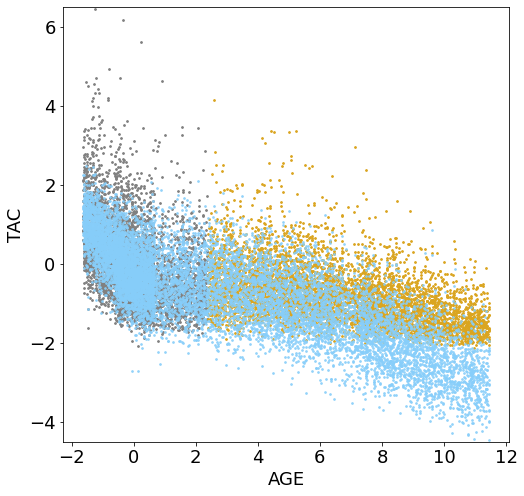} \\
	\small{(0.4500, 0.1838)} & \small{(0.3393, 0.6753)} & \small{(0.4646, 0.4223)}\vspace{-0.05in}\\
	\includegraphics[width=0.33\textwidth]{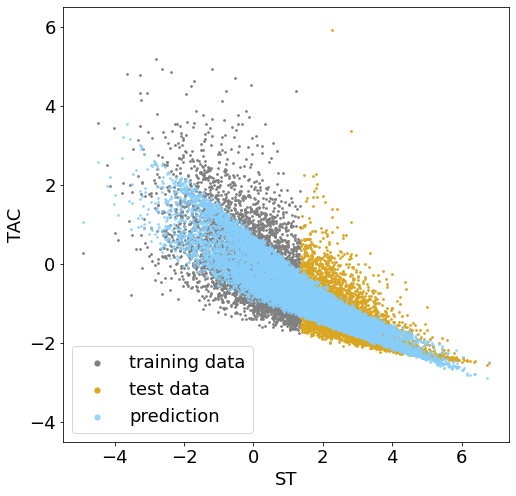} &
	\includegraphics[width=0.33\textwidth]{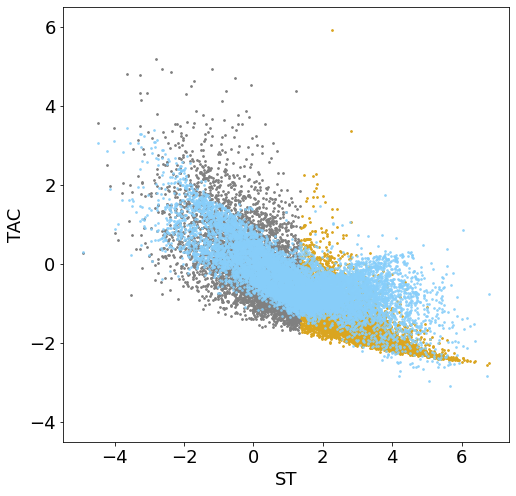} &
	\includegraphics[width=0.33\textwidth]{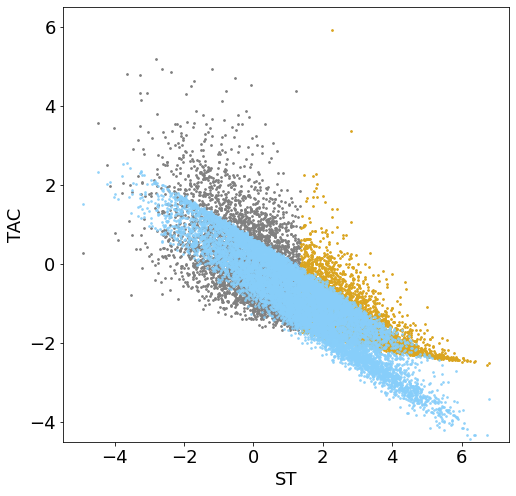} \\
	\small{(0.3156, 0.1743)} & \small{(0.2095, 1.0541)} & \small{(0.3375, 0.2532)}\vspace{-0.05in}\\
	\includegraphics[width=0.33\textwidth]{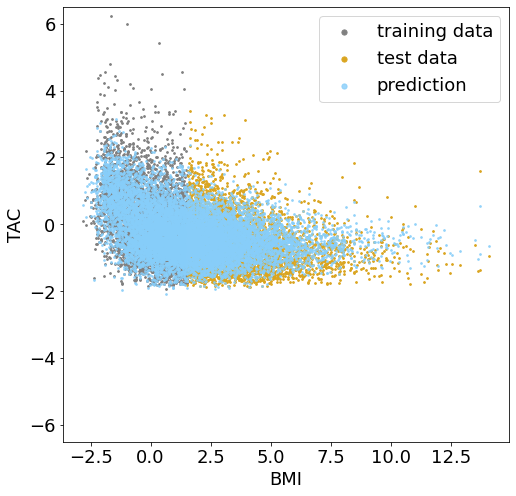} &
	\includegraphics[width=0.33\textwidth]{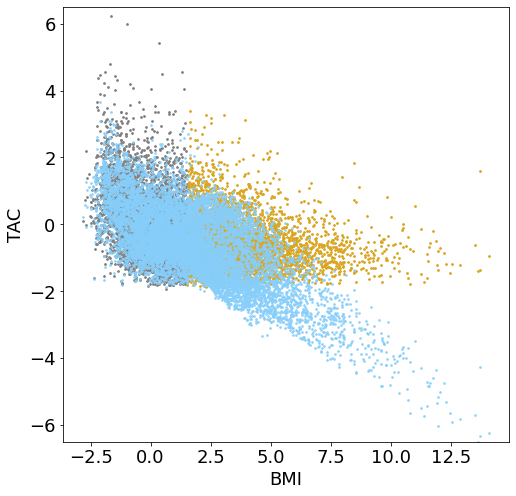} &
	\includegraphics[width=0.33\textwidth]{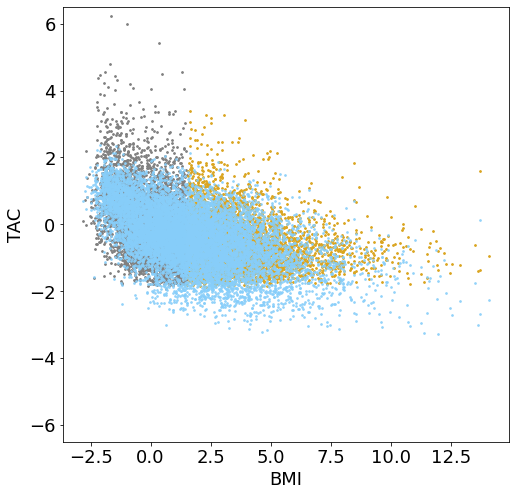} 
\end{tabular}
\caption{Same plot for the NHANES data. }\label{fig:multivar_health}
\end{figure}

In Figures~\ref{fig:multivar_air}-\ref{fig:multivar_health}, we visualise the results by plotting the response variable against the predictor $i$ that is used for splitting training and test data. We include both the true data points and the prediction of $\bbE[Y|X_i=x_i,X_{-i}]$, where $X_{-i}$ denotes the set of remaining predictors excluding $X_i$. In most scenarios, the marginal relations between the response and the predictor are nonlinear. We observe that linear regression tends to show predominantly linear extrapolations when visualised univariately. However,  nonlinear effects can  surface in the results, originating from  nonlinear dependencies between the depicted predictor variables and the other variables present in the data set.
 On the other hand, NN $L_2$ regression, while consistently having the smallest training loss, fails to extrapolate reliably, resulting in a significantly larger test loss. In comparison, engression outperforms both linear regression and NN regression. It achieves a lower training loss than linear regression, owing to the expressiveness of the NN class. More importantly, engression often maintains a strong performance on out-of-support test data, significantly surpassing the other two approaches. 

In the bottom row of Figure~\ref{fig:multivar_air}, the marginal relationship appears to be almost linear, with a slight presence of heteroscedasticity. In this case, all methods appear to provide reasonable predictions visually. However, when evaluated quantitatively, engression exhibits a much lower test loss. In fact, although the marginal relationship between CO and NMHC is approximately linear, the relationships between CO and other predictors, like NOx (as shown in the top row) are nonlinear. Thus, predicting CO based on multiple predictors still requires nonlinear extrapolation in which engression excels, although the advantage lies in the other predictors and is not explicitly reflected in the marginal relationship between CO and NMHC as shown in the plots.

In summary, these empirical findings suggest the potential of engression in multivariate prediction. Thus, it is worthwhile to delve deeper into the theoretical underpinnings of engression in multivariate scenarios in future research.

\subsubsection{Prediction intervals}\label{sec:exp_pi}

Engression can also be applied to construct prediction intervals based on the conditional quantile estimations. Specifically, we compute the 2.5\% and 97.5\% quantiles of the estimated conditional distribution from engression and obtain the 95\% prediction intervals based on them. Note that we do not account for the estimation uncertainty arising due to a finite sample size. We use linear quantile regression and quantile regression forests as baseline methods, where the intervals are derived based on the estimated quantiles. We also consider standard prediction intervals with linear $L_2$ regression for Gaussian errors as another baseline. 

In Figure~\ref{fig:pi}, we visualise the 95\% prediction intervals on three data sets, along with the average coverage probabilities. We observe that for engression and most baseline methods, about 95\% of all training data points fall within the 95\% prediction intervals. However, when it comes to out-of-support test data, only the prediction intervals obtained through engression attain good coverage, while the other three methods exhibit significantly lower coverage rates. For instance, for the GCM data where we predict the global mean temperature from the radiation, their relationship is nonlinear during training while extrapolating linearly during test. Linear methods, though obtaining the best linear fits for the training distribution, provide misleading prediction intervals that cover the test data poorly. In comparison, engression successfully captures the pattern outside the support, leading to prediction intervals with decent coverage. 
These results align with the extrapolability guarantee of engression for quantile estimation in Corollary~\ref{coro:local_quantile}, whereas the failure of other methods is mainly due to unreliable estimations of the quantiles (for QR and QRF) or the mean (for LR) outside the support.

It is notable that conformal prediction~\citep{shafer2008tutorial,lei2018distribution} can also be used to construct prediction intervals. Nonetheless, when it comes to extrapolation, the residuals of a regression model at data points that fall outside the support would generally not fulfill the exchangeability assumption necessary for conformal prediction as we are operating outside the support of the training distribution. Consequently, adapting conformal prediction to ensure coverage guarantees for out-of-support predictions is not a straightforward task.

\begin{figure}
\centering
\begin{tabular}{@{}c@{}c@{}c@{}c@{}c@{}}
    &{\small{Engression}} & {\small{Linear regression}} & {\small{Linear quantile regression}} & {\small{quantile regression forest}} \\
    &\small{(0.964, 0.917)} & \small{(0.937, 0.366)} & \small{(0.950, 0.333)} & \small{(0.938, 0.152)}\vspace{-0.05in}\\
    \rotatebox[origin=c]{90}{\small{GCM}} &
    \includegraphics[align=c,width=0.24\textwidth]{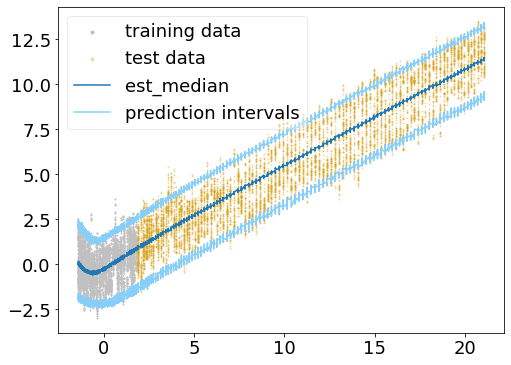} &
    \includegraphics[align=c,width=0.24\textwidth]{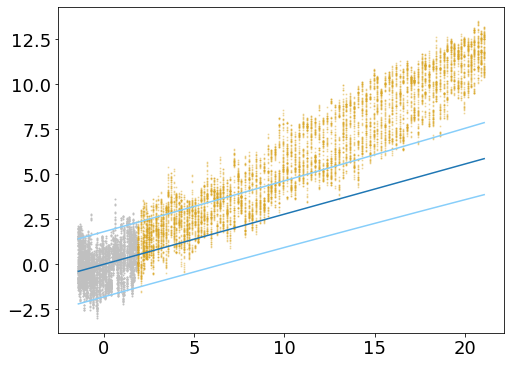} &
    \includegraphics[align=c,width=0.24\textwidth]{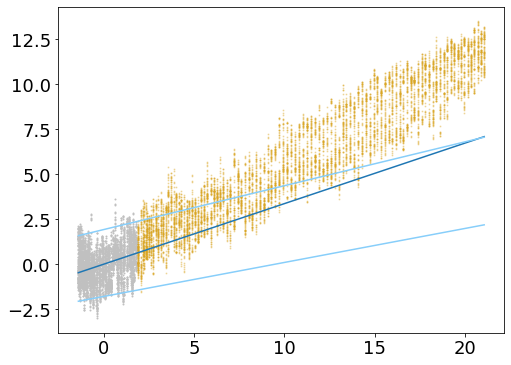} &
    \includegraphics[align=c,width=0.24\textwidth]{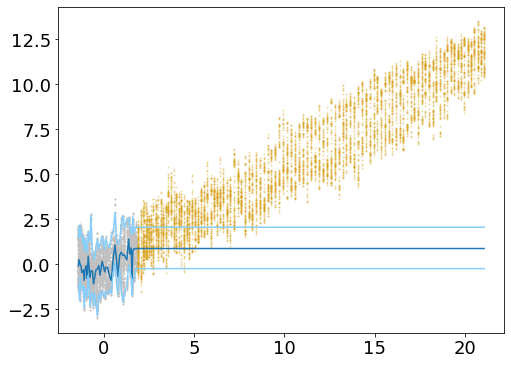}\vspace{0.05in} \\
    &\small{(0.946, 0.916)} & \small{(0.900, 0.363)} & \small{(0.951, 0.313)} & \small{(0.995, 0.559)}\vspace{-0.05in}\\
    \rotatebox[origin=c]{90}{\small{Air}} &
    \includegraphics[align=c,width=0.24\textwidth]{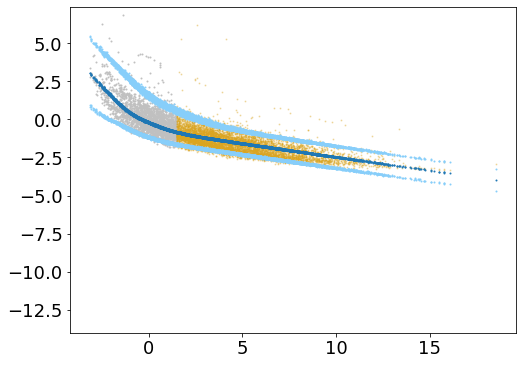} &
    \includegraphics[align=c,width=0.24\textwidth]{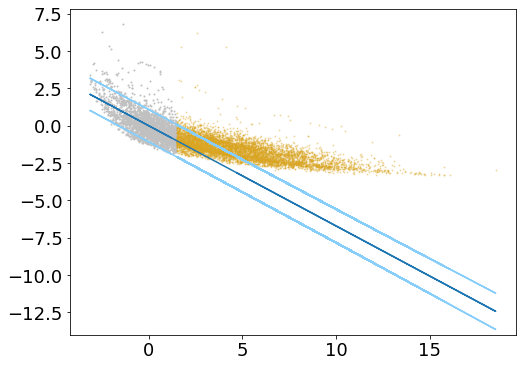} &
    \includegraphics[align=c,width=0.24\textwidth]{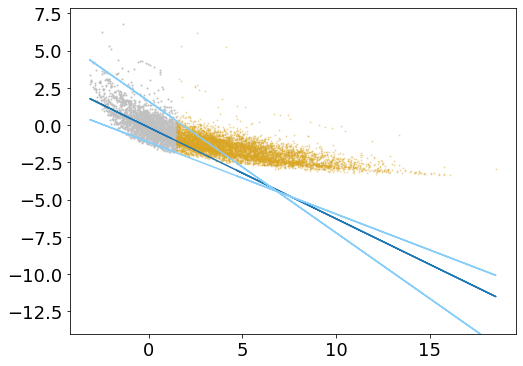} &
    \includegraphics[align=c,width=0.24\textwidth]{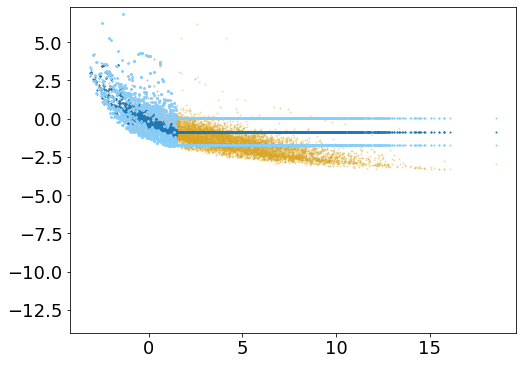} \vspace{0.05in}\\
    &\small{(0.951, 0.885)} & \small{(0.939, 0.264)} & \small{(0.951, 0.272)} & \small{(0.962, 0.690)}\vspace{-0.05in}\\
    \rotatebox[origin=c]{90}{\small{NHANES}} &
    \includegraphics[align=c,width=0.24\textwidth]{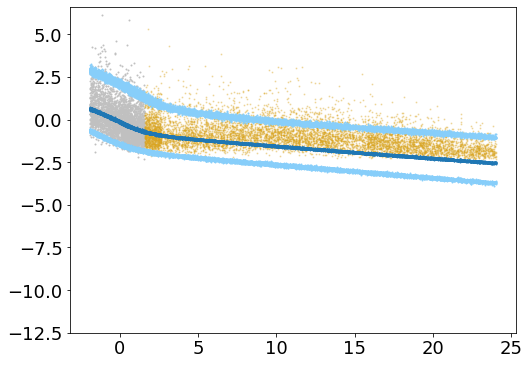} &
    \includegraphics[align=c,width=0.24\textwidth]{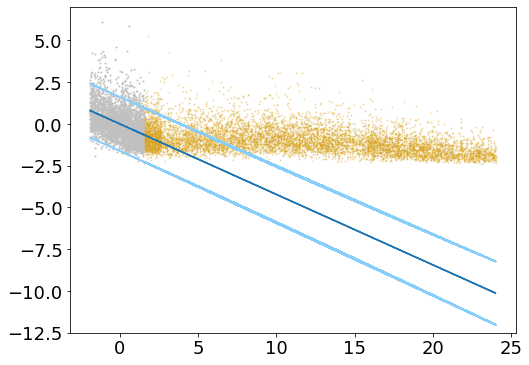} &
    \includegraphics[align=c,width=0.24\textwidth]{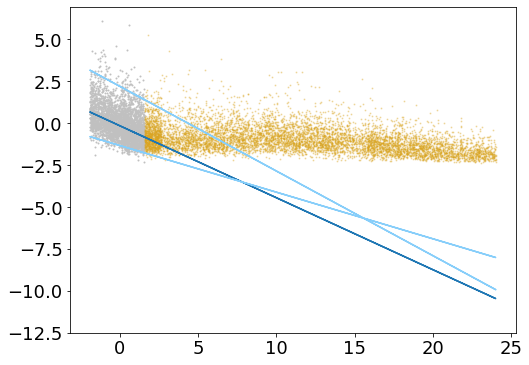} &
    \includegraphics[align=c,width=0.24\textwidth]{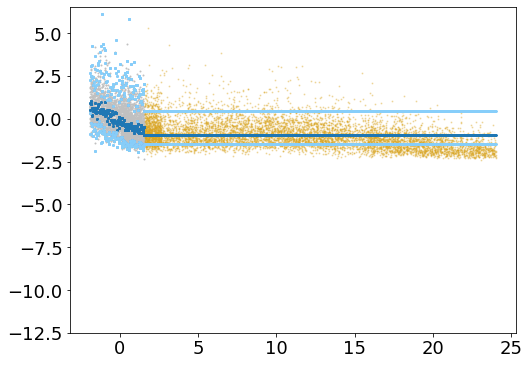} 
\end{tabular}
\caption{Prediction intervals. Each row shows the results on a data set; each column is for a method. The numbers in the parenthesis above each plot are the average coverage probabilities of the prediction intervals on training and test data, respectively.}
\label{fig:pi}
\end{figure}

\section{Discussion}
\revise{
Our main contribution is twofold:
\begin{enumerate}
\setlength{\parskip}{2pt}
	\item[(1)] We propose a new method `engression' for estimating the conditional distribution of a response variable given covariates.
	\item[(2)] Engression in conjunction with pre-additive noise models provides a new perspective for the extrapolation problem in nonlinear regression.
\end{enumerate}

As a distributional regression method, engression is simple yet generically applicable to various regression tasks. Our preliminary theoretical and empirical studies have shown some promise. From a practical perspective, compared to traditional statistical methodologies, engression benefits from recent advancements in machine learning. The expressive capacity of neural networks alleviates the limitations of parametric model specifications. Furthermore, modern optimisation tools and computational hardware have made the associated computations tractable and scalable, even when dealing with vast sample sizes and high-dimensional data. Compared to modern generative models for image and text data, such as diffusion models and generative adversarial networks, the simplicity of engression makes it much more user-friendly and computationally lighter.

We therefore envisage engression as a useful tool for both statisticians and applied researchers. On one hand, engression can inspire the development of new methodologies for many statistical inference problems that involve (conditional) distribution estimation such as conditional independence test \citep{shi2021double}, missing data imputation \citep{naf2024good}, and dimension reduction \citep{shen2024distributional}. On the other hand, engression can be an interesting addition to the current data analysis toolkit, as it possesses distinct characteristics than existing nonlinear regression techniques, such as more comprehensive quantification of the full distribution and different behaviour when it comes to data outside the training support, which could be more appealing to many practitioners. 

In addition, we study one important aspect of engression: the ability to extrapolate beyond the support of the training data. 
}
This need arises from the limitations of existing regression models: linear models, while capable of extrapolation, are a rather restrictive class, offering limited flexibility. In contrast, nonlinear regression models, such as tree ensembles and neural networks, suffer from their own deficiencies when it comes to extrapolation. \revise{Tree ensembles provide a constant prediction when extending beyond of the support, which may not be accurate in many contexts. Neural network predictions, on the contrary, tend to become uncontrollable when used for extrapolation.} 

\revise{
We highlight that engression exhibits a distinct extrapolation behaviour than the existing models and thus offers a new perspective and modelling tool. Our recipe for extrapolation} 
incorporates two key components: a pre-additive noise model and a distributional fit. 
The advantage of a
pre-ANM class is, in simple terms, that it reveals the 
nonlinearities outside of the training support. 
This benefit can only  be captured, however, if we fit the full conditional distribution. Importantly, we have shown that both ingredients (i.e.\ pre-ANM and distributional fitting) are necessary for extrapolation and that neither adopting a post-ANM nor fitting merely the mean or quantiles would attain the same merits.

The usefulness and robustness of our proposed engression methodology are demonstrated through theoretical proofs and empirical evidence. Our theoretical results show a significant improvement \revise{of engression over traditional regression approaches} in extrapolation. Interestingly, the range to which we can extrapolate benefits from a high noise level. In addition, we have studied finite-sample bounds for some specific settings, which shows the possibility to extrapolate beyond the support of the finite training data if the pre-additive noise model is correct. Even when the pre-additive noise assumption is incorrect and the true model proves to be a post-additive model with a homoscedastic noise, the model produced by engression defaults to a linear extrapolation, thereby  ensuring a useful baseline performance.
Moreover, our empirical results show that  pre-additive noise models are a reasonable model for many data sets, validate our theoretical findings, and show the robustness and accuracy of engression in various regression tasks.

\newpage
\bibliography{ref.bib}
\bibliographystyle{apalike}

\newpage
\appendix
\section{Proofs of the population results}
\subsection{Proof of Proposition~\ref{prop:dist_est_pop}}\label{pf:prop:dist_est_pop}
\begin{proof}[Proof of Proposition~\ref{prop:dist_est_pop}]
Our correct model specification assumption ensures the existence of such a $\tilde\gen$ that satisfies $\tilde\gen(x,\varepsilon)\sim \ptr(y|x)$ for all $x\in\cX$ almost everywhere.
Given any model $\gen\in\cM$, assume there exists a subset $\cX'\subseteq\cX$ with a nonzero base measure such that for all $x\in\cX'$, $P_{\tilde\gen}(y|x)\not\equiv\ptr(y|x)$. Then according to Lemma~\ref{lem:es}, we have for all $x\in\cX'$ that
\[\bbE_{Y\sim \ptr(y|x)}[\es(P_{\tilde\gen}(y|x), Y)] > \bbE_{Y\sim \ptr(y|x)}[\es(P_{\gen}(y|x), Y)].\]
Taking the expectation with respect to $\ptr(x)$ then yields
\[\bbE_{\ptr}[\es(P_{\tilde\gen}(y|X), Y)] > \bbE_{\ptr}[\es(P_{\gen}(y|X), Y)].\]
Thus we have 
\[\bbE[\cL_e(P_{\tilde\gen}(y|X);\ptr(y|X))] < \bbE[\cL_e(P_{\gen}(y|X);\ptr(y|X))],\]
which concludes the proof.
\end{proof}

\subsection{Proof of examples in Section~\ref{sec:extrapolability_def}}\label{pf:ex}
\begin{proof}[Proof of Example~\ref{ex:lin}]
Let $\cF=\{\beta^\top x\}$. For all $\beta,\beta'$ such that $\beta^\top x=\beta'^\top x$ for all $x\in\cX$, we have $\beta=\beta'$ and thus $\beta^\top x=\beta'^\top x$ for all $x\in\bbR$. Thus $\cU(\delta)=0$. 
\end{proof}

\begin{proof}[Proof of Example~\ref{ex:lip}]
Let $x_b = \argmin_{x\in\cX}\|x-x'\|$. For all $f,f'\in\cF$ such that $f(x)=f'(x)$ for all $x\in\cX$, we have $f(x_b)=f'(x_b)$ and then
\begin{align*}
	f(x')-f'(x') &= f(x')-f(x_b)+f'(x_b)-f'(x')\\
	&= (\nabla f(\tx)  - \nabla f'(\tx'))^\top (x'-x_b),
\end{align*}
where $\tx$ and $\tx'$ are points on the line segment between $x'$ and $x_b$, so $\tx,\tx'\notin\cX$. Thus
\begin{align*}
	\cU(\delta) &= \sup_{x':d(x,x')\leq\delta}\sup_{{\substack{f,f'\in\cF:\\D_{x}(f,f')=0,\forall x\in\cX}}}|(\nabla f(\tx)  - \nabla f'(\tx'))^\top (x'-x_b)|\\
	&= \sup_{x':d(x,x')\leq\delta} 2L\|x'-x_b\|=2L\delta.
\end{align*}
\end{proof}
\begin{proof}[Proof of Example~\ref{ex:monot}]
For all $x'\notin\cX$, $\sup_{f,f'}|f(x')-f'(x')|=\infty$. Thus $\cU(\delta)=\infty$.
\end{proof}

\subsection{Proof of Theorem~\ref{thm:extra_anm}}\label{pf:thm:extra_preanm}
Recall the pre-ANM class defined in the main text
$$\cM_{\mathrm{pre}}=\{g(x+h(\varepsilon))+\beta x:g\in\cG,h\in\cH,\beta\in\bbR\}.$$
We first study the identifiability of $\cM_{\mathrm{pre}}$ which serves as the basis for the investigation of extrapolability. 
The identifiability refers to the uniqueness of a model that induces a single (conditional distribution). The following proposition formalises the identifiability of $\cM_{\mathrm{pre}}$. It says if the two pre-ANMs induce the same conditional distribution of $Y$ given $X=x$, then their functions or parameters would also be the same up to a certain extent of unidentifiability. 

\begin{proposition}[Identifiability of pre-ANMs]\label{prop:identify_preanm}
	Assume functions in $\cG$ are strictly monotone.
	Let $\beta,\beta'\in\bbR$, $g,g'\in\cG$, and $h,h'\in\cH$ be such that $g(x+h(\varepsilon))+\beta x\overset{d}=g'(x+h'(\varepsilon))+\beta' x$ for all $x\in\cX$. 
	\begin{enumerate}[label=(\roman*)]
		\item Assume functions in $\cG$ are twice differentiable and functions in $\cH$ are differentiable. If $g$ is nonlinear, i.e.\ $\ddot{g}\neq0$ a.e., then $\beta=\beta'$, $g(x+h(\varepsilon))=g'(x+h(\varepsilon))$ for all $x\in\cX$ and $\varepsilon\in[0,1]$, and $h(\varepsilon)=h'(\varepsilon)$ for all $\varepsilon\in[0,1]$;
		\item Assume functions in $\cG$ and $\cH$ are differentiable. If $g$ is linear, i.e.\ $g(x)=\alpha x + \varsigma$, then $g'(x+h'(\varepsilon))=\alpha' (x+h'(\varepsilon)) + \varsigma'$ for all $x\in\cX$ and $\varepsilon\in[0,1]$, where $\varsigma=\varsigma'$ and $\alpha+\beta=\alpha'+\beta'$, and $\alpha h(\varepsilon)=\alpha'h'(\varepsilon)$.
	\end{enumerate}
\end{proposition}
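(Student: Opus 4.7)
The plan is to convert the equality in distribution into a pointwise functional equation in $(x,\varepsilon)$, and then differentiate repeatedly to isolate the roles of $g, g', h, h', \beta, \beta'$.

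Since $g, g'$ are strictly monotone and $h, h'$ are strictly monotone, for each fixed $x\in\cX$ both $\varepsilon\mapsto g(x+h(\varepsilon))+\beta x$ and $\varepsilon\mapsto g'(x+h'(\varepsilon))+\beta' x$ are strictly monotone on $[0,1]$. After, if necessary, replacing $h'$ by $\varepsilon\mapsto h'(1-\varepsilon)$ (which preserves the induced distribution since $\varepsilon\overset{d}{=}1-\varepsilon$) to align the direction of monotonicity on both sides, matching $\varepsilon$-quantiles yields the pointwise identity
\begin{equation}\label{eq:star_prop}
    g(x+h(\varepsilon))+\beta x \;=\; g'(x+h'(\varepsilon))+\beta' x, \qquad \forall\, x\in\cX,\ \varepsilon\in[0,1].
\end{equation}
Setting $\varepsilon=0.5$ and using $h(0.5)=h'(0.5)=0$ immediately gives $g(x)-g'(x)=(\beta'-\beta)x$ on $\cX$.

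For part (i), differentiate \eqref{eq:star_prop} in $x$ to get $\dot g(x+h(\varepsilon))-\dot g'(x+h'(\varepsilon))=\beta'-\beta$. Differentiating this identity once more in $x$ produces $\ddot g(x+h(\varepsilon))=\ddot g'(x+h'(\varepsilon))$, while differentiating it in $\varepsilon$ produces $\ddot g(x+h(\varepsilon))\,\dot h(\varepsilon)=\ddot g'(x+h'(\varepsilon))\,\dot h'(\varepsilon)$. Combining the two and invoking $\ddot g\neq 0$ almost everywhere forces $\dot h\equiv\dot h'$ on a set of full measure in $[0,1]$, and integrating with $h(0.5)=h'(0.5)=0$ yields $h\equiv h'$. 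Substituting back into \eqref{eq:star_prop} gives $g(x+h(\varepsilon))-g'(x+h(\varepsilon))=(\beta'-\beta)x$ for all $(x,\varepsilon)\in\cX\times[0,1]$; the left-hand side depends only on $z:=x+h(\varepsilon)$, whereas the right-hand side depends on $x$, so varying $x$ along a level set of $z$ (possible since $h$ is nonconstant and $\cX$ has nonempty interior) forces $\beta=\beta'$, and hence $g=g'$ on $\{x+h(\varepsilon):x\in\cX,\varepsilon\in[0,1]\}$.

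For part (ii), substituting $g(u)=\alpha u+\varsigma$ into \eqref{eq:star_prop} and rearranging yields $g'(x+h'(\varepsilon))=(\alpha+\beta-\beta')x+\alpha h(\varepsilon)+\varsigma$. Differentiating in $x$ shows that $\dot g'(x+h'(\varepsilon))$ equals the constant $\alpha':=\alpha+\beta-\beta'$, so $g'(u)=\alpha' u+\varsigma'$ on its domain of evaluation. Matching the $x$-independent terms and using $h(0.5)=h'(0.5)=0$ then produces $\varsigma=\varsigma'$, $\alpha' h'(\varepsilon)=\alpha h(\varepsilon)$, and $\alpha+\beta=\alpha'+\beta'$, as claimed.

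The main obstacle I anticipate is the clean reduction from equality in distribution to the pointwise identity \eqref{eq:star_prop}, which requires carefully tracking the directions of monotonicity of $g\circ(x+h(\cdot))$ and $g'\circ(x+h'(\cdot))$ and exploiting the symmetry $\varepsilon\overset{d}{=}1-\varepsilon$ to realign them. A secondary subtlety in part (i) is that the cancellations leading to $\dot h=\dot h'$ take place only on the set where $\ddot g(x+h(\varepsilon))\neq 0$; this set has full measure by the nonlinearity hypothesis, and continuity of $\dot h,\dot h'$ then propagates the equality to all of $[0,1]$.
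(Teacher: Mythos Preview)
Your proposal is correct and follows essentially the same strategy as the paper: reduce the distributional equality to the pointwise identity \eqref{eq:star_prop} via quantile matching, then differentiate to separate the roles of $g,g',h,h',\beta,\beta'$. The only difference is the order of differentiation in part (i): you apply $\partial_x^2$ and $\partial_\varepsilon\partial_x$ to \eqref{eq:star_prop} and combine, whereas the paper first takes $\partial_\varepsilon$ and $\partial_x$ separately, substitutes $\dot g'=\dot g+(\beta-\beta')$ from the $x$-derivative into the $\varepsilon$-derivative to obtain $\dot g(x+h(\varepsilon))(\dot h(\varepsilon)-\dot h'(\varepsilon))=(\beta-\beta')\dot h'(\varepsilon)$, and then applies one more $\partial_x$. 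Both routes land on $\ddot g(x+h(\varepsilon))(\dot h(\varepsilon)-\dot h'(\varepsilon))=0$, and the subsequent level-set argument for $\beta=\beta'$ and the linear case (ii) are the same in spirit. Your explicit handling of the monotonicity direction (realigning via $\varepsilon\mapsto 1-\varepsilon$) is a point the paper glosses over.
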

\begin{proof}[Proof of Proposition~\ref{prop:identify_preanm}]
For all $x\in\cX$, since $g(x+h(\varepsilon))+\beta x\overset{d}=g'(x+h'(\varepsilon))+\beta'x$, the $\alpha$-quantiles of both sides are equal. As $g,g',h,h'$ are strictly monotone, we have $g(x+h(Q_\alpha(\varepsilon)))+\beta x=g'(x+h'(Q_\alpha(\varepsilon)))+\beta'x$ for all $\alpha\in[0,1]$. Let $\varepsilon=Q_\alpha(\varepsilon)$. This implies  
\begin{equation}\label{eq0}
	g(x+h(\varepsilon))+\beta x=g'(x+h'(\varepsilon))+\beta'x
\end{equation}
for all $x\in\cX$ and $\varepsilon\in[0,1]$. 

	Taking partial derivatives of both sides in \eqref{eq0} with respect to $\varepsilon$ and $x$ implies
	\begin{align}
		\dot{g}(x+h(\varepsilon))\dot{h}(\varepsilon)&=\dot{g}'(x+h'(\varepsilon))\dot{h}'(\varepsilon)\label{eq1}\\
		\beta+\dot{g}(x+h(\varepsilon))&=\beta'+\dot{g}'(x+h'(\varepsilon)).\label{eq2}
	\end{align}
	By plugging \eqref{eq2} into \eqref{eq1} and reorganising, we have 
	\begin{equation*}
		\dot{g}(x+h(\varepsilon))(\dot{h}(\varepsilon)-\dot{h}'(\varepsilon))=(\beta-\beta')\dot{h}'(\varepsilon).
	\end{equation*}
	
	Let us first consider the scenario of (i).
	Taking the derivative on both sides with respect to $x$ leads to
	\begin{equation*}
		\ddot{g}(x+h(\varepsilon))(\dot{h}(\varepsilon)-\dot{h}'(\varepsilon))=0
	\end{equation*}
	for all $x\in\cX,\varepsilon\in[0,1]$. Since $\ddot{g}\neq0$ a.e., we have $\dot{h}(\varepsilon)=\dot{h}'(\varepsilon)$ for all $\varepsilon\in[0,1]$. Since we assume $h(0.5)=h'(0.5)$, we have $h(\varepsilon)=h'(\varepsilon)$ for all $\varepsilon\in[0,1]$.
	
	Taking $\varepsilon=0$ in \eqref{eq0} implies
	\begin{equation*}
		\beta x+g(x)=\beta' x+g'(x),\quad \forall x\in\cX.
	\end{equation*}
	For all $x\in\cX$, let $\varepsilon_0\neq0$ be such that $x+h(\varepsilon_0)\in\cX$. Then \eqref{eq0} becomes
	\begin{equation*}
		\beta x+g(x+h(\varepsilon_0)) = \beta x+(\beta'-\beta)(x+h(\varepsilon_0))+g'(x+h(\varepsilon_0)) = \beta'x+g'(x+h(\varepsilon_0))
	\end{equation*}
	which leads to
	\begin{equation*}
		(\beta'-\beta)h(\varepsilon_0) = 0.
	\end{equation*}
	Since $h(\varepsilon_0) \neq0$, we have $\beta'=\beta$. Further by \eqref{eq0}, we have
	\begin{equation*}
		g(x+h(\varepsilon)) = g'(x+h(\varepsilon))
	\end{equation*}
	for all $x\in\cX$ and $\varepsilon\in[0,1]$. We hence conclude (i).
	
	As to the linear case in (ii), \eqref{eq2} become
	\begin{equation*}
		\beta+\alpha=\beta'+\dot{g}'(x+h'(\varepsilon))
	\end{equation*}
	for all $\varepsilon$. Thus, $\dot{g}'(x+h'(\varepsilon))$ has to be a constant and we denote it by $\dot{g}'(x+h'(\varepsilon))=\alpha'$, which implies $g'(x+h'(\varepsilon))=\alpha'(x+h'(\varepsilon))+\varsigma'$ and $\beta+\alpha=\beta'+\alpha'$.
	
	Back to \eqref{eq0}, we then have 
	\begin{equation*}
		\alpha(x+h(\varepsilon))+\varsigma+\beta x = \alpha'(x+h'(\varepsilon))+\varsigma'+\beta'x
	\end{equation*}
	for all $x\in\cX$ and $\varepsilon\in[0,1]$. This implies $\varsigma=\varsigma'$ and $\alpha h(\varepsilon)=\alpha'h'(\varepsilon)$ for all $\varepsilon$. We hence conclude (ii).
\end{proof}

\bigskip
Then we are ready to prove Theorem~\ref{thm:extra_anm}.
\begin{proof}[Proof of Theorem~\ref{thm:extra_anm}]
We first show that $\cM_{\mathrm{pre}}$ is distributionally extrapable. Let $g(x+h(\varepsilon))+\beta x,g'(x+h'(\varepsilon))+\beta' x$ be any two models in $\cM_{\mathrm{pre}}$. Following the definition of distributional extrapolability, suppose $g(x+h(\varepsilon))+\beta x\overset{d}=g'(x+h'(\varepsilon))+\beta' x$ for all $x\in\cX$. Then the strict monotonicity of $\cG$ allows us to apply Proposition~\ref{prop:identify_preanm}. 

If $\cM_{\mathrm{pre}}$ is nonlinear, we have from Proposition~\ref{prop:identify_preanm} (i) that $\beta=\beta'$, $g(x+h(\varepsilon))=g'(x+h(\varepsilon))$ for all $x\in\cX$ and $\varepsilon\in[0,1]$, and $h(\varepsilon)=h'(\varepsilon)$ for all $\varepsilon\in[0,1]$. Since $h$ is unbounded, this implies $g(x)=g'(x)$ for all $x\in\bbR$. Therefore, we have $g(x+h(\varepsilon))+\beta x\overset{d}=g'(x+h'(\varepsilon))+\beta'x$ for all $x\in\bbR$, which leads to $\cU_{\mathrm{pre}}(\delta)=0$. 

On the other hand, if $\preanm$ is linear, we know from Proposition~\ref{prop:identify_preanm} (ii) that $g(x)=\alpha x + \varsigma$, $g'(x)=\alpha' x + \varsigma'$ for all $x\in\bbR$, where $\varsigma=\varsigma'$ and $\alpha+\beta=\alpha'+\beta'$, and $\alpha h(\varepsilon)=\alpha'h'(\varepsilon)$. Note that in the linear case, $g(x+h(\varepsilon))+\beta x=(\alpha+\beta)x+\alpha h(\varepsilon)+\varsigma$.  Therefore, we have $g(x+h(\varepsilon))+\beta x\overset{d}=g'(x+h'(\varepsilon))+\beta'x$ for all $x\in\bbR$, which leads to $\cU_{\mathrm{pre}}(\delta)=0$. We thus conclude (i).

For post-ANMs, let $g(x)+h(\varepsilon)$ and $g'(x)+h(\varepsilon)$ be two models in $\cM_{\mathrm{post}}$ for any $g,g'\in\cG$ and the same $h\in\cH$. Suppose $g(x)+h(\varepsilon)\overset{d}=g'(x)+h(\varepsilon)$ for all $x\in\cX$, which is equivalent to say $g(x)=g'(x)$ for all $x\in\cX$. Then if $\cG$ is the class of all monotone functions, $g(x)$ and $g'(x)$ can differ by an arbitrarily large amount for any $x\notin\cX$. Thus $\cU_{\mathrm{post}}(\delta)=\infty$ for all $\delta>0$. 

Last, we look at the condition for $\cM_{\mathrm{post}}$ to be distributionally extrapable. We start from the sufficiency. Let $g(x)+h(\varepsilon)$ and $g'(x)+h'(\varepsilon)$ be any two models in $\cM_{\mathrm{post}}$. Suppose $g(x)+h(\varepsilon)\overset{d}=g'(x)+h'(\varepsilon)$ for all $x\in\cX$. By taking the median of both sides and noting that $h(0.5)=h'(0.5)=0$, we have $g(x)=g'(x)$ for all $x\in\cX$. Then we have $h(\varepsilon)\overset{d}=h'(\varepsilon)$, which implies $h(\varepsilon)=h'(\varepsilon)$ for any $\varepsilon\in[0,1]$. Now if $\cG$ is functionally extrapable, we have $g(x)=g'(x)$ for all $x\in\bbR$. This then implies $g(x)+h(\varepsilon)\overset{d}=g'(x)+h'(\varepsilon)$ for all $x\in\bbR$, that is, $\cM_{\mathrm{post}}$ is distributionally extrapable. This shows the sufficiency. 

To see the necessity, suppose $\cG$ is not functionally extrapable, meaning that there exists $\delta>0$ such that $\cU_{\cG}(\delta)>0$, which is equivalently to say there exists $x'\notin\cX$ such that $g(x')\neq g'(x)$. Then the distributions of $g(x')+h(\varepsilon)$ and $g'(x')+h'(\varepsilon)$ are not equal, so $\cU_{\mathrm{post}}(\delta)>0$, that is, $\cM_{\mathrm{post}}$ is not distributionally extrapable. We thus conclude the proof.
\end{proof}

\subsection{Proofs in Section~\ref{sec:local_extrap}}\label{app:pf_local_extrap}
\begin{proof}[Proof of Theorem~\ref{thm:func_recover}]
By Lemma~\ref{lem:es}, we have $\tilde{g}(x+\tilde{h}(\varepsilon))+\tilde\beta x\overset{d}=g^\star(x+h^\star(\varepsilon))+\beta^\star x$ for all $x\in\cX$. Then the desired results follow immediately from Proposition~\ref{prop:identify_preanm}.	
\end{proof}
\begin{proof}[Proof of Corollary~\ref{coro:local_quantile}]
	Note that $\tilde{q}_\alpha(x)=\tilde{g}(x+Q_\alpha(\eta))+\tilde\beta x$ and $q^\star_\alpha(x)=g^\star(x+Q_\alpha(\eta))+\beta^\star x$. Then the desired result follows from Theorem~\ref{thm:func_recover}.
\end{proof}

\subsection{Proofs in Section~\ref{sec:extra_gain}}\label{app:pf_extra_gain}
\begin{proof}[Proof of Proposition~\ref{prop:gain_median}]
For all $f,f'\in\cF^\textsl{m}_\rmE$, there exists $g,g'\in\cG^\star$ such that $f(x)=g(x)+\beta^\star x$ and $f'(x)=g'(x)+\beta^\star x$. Recall the definition $\cG^\star := \{g\in\cG:g(x)=g^\star(x), \forall x\le \xm+\etam\}.$ Thus, by Definition~\ref{def:extra_func} and the uniform Lipschitz of $\cG$, for $\delta\le\etam$, we have
\begin{align*}
	\cU_{\cF^\textsl{m}_\rmE}(\delta)&=\sup_{x'\in[\xm,\xm+\delta]}\sup_{f,f'\in\cF^\textsl{m}_\rmE}|f(x')-f'(x')|\\
	&=\sup_{f,f'\in\cF^\textsl{m}_\rmE}|f(\xm+\delta)-f'(\xm+\delta)|\\
	&=\sup_{g,g'\in\cG^\star}|g(\xm+\delta)-g'(\xm+\delta)|=0.
\end{align*}
For $\delta>\etam$, we have
\begin{align*}
	f(\xm+\delta)-f'(\xm+\delta) &= g(\xm+\delta)-g'(\xm+\delta)\\
	&= g(\xm+\etam)+\dot{g}(\tilde{x})(\delta-\etam) - [g'(\xm+\etam)+\dot{g}'(\tilde{x}')(\delta-\etam)]\\
	&= (\delta-\etam)(\dot{g}(\tilde{x}) - \dot{g}'(\tilde{x}')),
\end{align*}
where $\tx,\tx'\in[\xm+\etam,\xm+\delta]$. 
Then the extrapolation uncertainty of engression is
\begin{equation*}
	\cU_{\cF^\textsl{m}_\rmE}(\delta) = (\delta-\etam)\cdot\sup_{g,g'\in\cG^\star}|\dot{g}(\tilde{x}) - \dot{g}'(\tilde{x}')| = L(\delta-\etam),
\end{equation*}
by the $L$-Lipschitz condition in Assumption~\ref{ass:lip}. Thus, we have $\cU_{\cF^\textsl{m}_\rmE}(\delta)=L(\delta-\etam)_+$. 

For $L_1$ regression, for all $f,f'\in\cF_{L_1}$, we only have $f(x)=f'(x)$ for all $x\leq\xm$. Thus, for all $\delta>0$
\begin{align*}
	\cU_{\cF_{L_1}}(\delta) &= \sup_{f,f'\in\cF_{L_1}}|f(\xm+\delta)-f'(\xm+\delta)| \\
	&= \sup_{f,f'\in\cF_{L_1}}|g(\xm)+\dot{f}(\tilde{x})\delta - [f'(\xm)+\dot{f}'(\tilde{x}')\delta]|\\
	&= \delta\sup_{f,f'\in\cG}|\dot{f}(\tilde{x}) - \dot{f}'(\tilde{x}')|=L\delta,
\end{align*}
where $\tx,\tx'\in[\xm+\etam,\xm+\delta]$. 
The remaining results are straightforward to see.
\end{proof}

\bigskip
\begin{proof}[Proof of Proposition~\ref{prop:gain_mean}]
For $\delta>0$, for all $f,f'\in\cF^\mu_\rmE$, there exists $g,g'\in\cG^\star$ such that
\begin{align*}
	f(\xm+\delta) - f'(\xm+\delta) &= \bbE_{\eta}[g(\xm+\delta+\eta)] - \bbE_{\eta}[g'(\xm+\delta+\eta)]\\
	&= \int_{-\etam}^{\etam-\delta} g(\xm+\delta+\eta)dF_{\eta}(\eta) + \int_{\etam-\delta}^{\etam} g(\xm+\delta+\eta)dF_{\eta}(\eta) \\&\quad- \left(\int_{-\etam}^{\etam-\delta} g'(\xm+\delta+\eta)dF_{\eta}(\eta) + \int_{\etam-\delta}^{\etam} g'(\xm+\delta+\eta)dF_{\eta}(\eta)\right)\\
	&= \int_{\etam-\delta}^{\etam}\left[g(\xm+\delta+\eta) - g'(\xm+\delta+\eta)\right]dF_{\eta}(\eta)\\
	&= \int_{\etam-\delta}^{\etam}(\dot{g}(\tx) - \dot{g}'(\tx'))(\eta-\etam+\delta) dF_{\eta}(\eta),
\end{align*}
where $\tx,\tx'\in[\xm+\etam,\xm+\delta+\eta]$ for $\eta>\etam-\delta$. Then the extrapolation uncertainty of engression is 
\begin{align*}
	\cU_{\cF^\mu_\rmE}(\delta) &= \sup_{f,f'\in\cF^\mu_\rmE}|f(\xm+\delta) - f'(\xm+\delta)| \\
	&= \sup_{g,g'\in\cG^\star}\left|\int_{\etam-\delta}^{\etam}(\dot{g}(\tx) - \dot{g}'(\tx'))(\eta-\etam+\delta) dF_{\eta}(\eta)\right|\\
	&= L\int_{\etam-\delta}^{\etam}(\eta-\etam+\delta) dF_{\eta}(\eta).
\end{align*}
When $\delta\ge2\etam$, we have $\cU_{\cF^\mu_\rmE}(\delta)=L(\delta-\etam)$.

As to $L_2$ regression, for all $f,f'\in\cF_{L_2}$, we only have $f(x)=f'(x)$ for all $x\le\xm$. Thus, for all $\delta>0$,
\begin{equation*}
	\cU_{\cF_{L_2}}(\delta) = \sup_{f,f'\in\cF_{L_2}}|f(\xm+\delta)-f'(\xm+\delta)| = L\delta,
\end{equation*}
similar to the case of $L_1$ regression. 

Then the extrapolability gain is given by
\begin{align}
	\gamma^\mu(\delta) &= L\delta - L\int_{\etam-\delta}^{\etam}(\eta-\etam+\delta) dF_{\eta}(\eta)\nonumber\\
	&= L\left[\delta\int_{-\etam}^{\etam-\delta}dF_{\eta}(\eta) + \int_{\etam-\delta}^{\etam}(\etam-\eta)dF_{\eta}(\eta)\right]\nonumber\\
	&= L\delta F_{\eta}(\etam-\delta) + L\int_{\etam-\delta}^{\etam}(\etam-\eta)d F_{\eta}(\eta)\label{eq:dist_gain1}\\
	&= L(\delta-\etam)F_{\eta}(\etam-\delta) + L\etam - L\int_{\etam-\delta}^{\etam}\eta d F_{\eta}(\eta).\label{eq:dist_gain2}
\end{align}
Since both terms in \eqref{eq:dist_gain1} are non-negative, we have $\gamma^\mu(\delta)>0$ for all $\delta>0$. When $0<\delta\le\etam$, \eqref{eq:dist_gain2} is monotone increasing with respect to $\delta$ and $\gamma^\mu(\etam)=L\etam-L\int_0^{\etam}\eta d F_{\eta}(\eta)>0$. When $\delta>\etam$, we have
\begin{align*}
	L\etam - \gamma^\mu(\delta) &= L\int_{\etam-\delta}^{\etam}\eta dF_{\eta}(\eta) - L(\delta-\etam)F_{\eta}(\etam-\delta)\\
	&= L\int_{\delta-\etam}^{\etam}\eta dF_{\eta}(\eta) - L(\delta-\etam)(1-F_{\eta}(\delta-\etam)) \ge 0
\end{align*} 
where the second equality is due to the symmetry of the distribution of $\eta$, and the inequality becomes an equality if and only if $\delta\ge2\etam$. 
Thus, we conclude the proof.
\end{proof}

\bigskip
\begin{proof}[Proof of Proposition~\ref{prop:gain_dist}]
For one-dimensional distributions $p,p'$, the Wasserstein-$\ell$ distance between them is 
\begin{equation*}
	W_\ell(p,p')=\left(\int_0^1|Q_\alpha(p)-Q_\alpha(p')|^\ell d\alpha)\right)^{1/\ell},
\end{equation*}
where $Q_\alpha(p)$ is the $\alpha$-quantile of $p$. 

For $p(y|x),p'(y|x)\in\cP_\rmE$, there exists $g,g'\in\cG^\star$ such that $g(x+\eta)+\beta^\star x\sim p(y|x)$ and $g'(x+\eta)+\beta^\star x\sim p'(y|x)$. For $\alpha\in[0,1]$, denote by $q_\alpha(x)$ and $q'_\alpha(x)$ the $\alpha$-quantiles of $p(y|x)$ and $p'(y|x)$, respectively. Note that $q_\alpha(x)=g(x+Q_\alpha(\eta))+\beta^\star x$. By Corollary~\ref{coro:local_quantile}, we have $q_\alpha(x)=q'_\alpha(x)$ for all $x\le\xm+\etam-Q_\alpha(\eta)$. Let $\cF^\alpha_\rmE$ be the class of $\alpha$-quantile functions of all $p\in\cP_\rmE$, i.e.\ $\cF^\alpha_\rmE=\{g(x+Q_\alpha(\eta))+\beta^\star x:g\in\cG^\star\}$.

Then for all $\delta\ge \etam-Q_\alpha(\eta)$, we have 
\begin{align*}
	q_\alpha(\xm+\delta)-q'_\alpha(\xm+\delta) &= q_\alpha(\xm+\etam-Q_\alpha(\eta)) + \dot{q}_\alpha(\tx)(\delta-\etam+Q_\alpha(\eta)) \\
	&\quad - [q'_\alpha(\xm+\etam-Q_\alpha(\eta)) + \dot{q}'_\alpha(\tx')(\delta-\etam+Q_\alpha(\eta))] \\
	&=(\delta-\etam+Q_\alpha(\eta))(\dot{q}_\alpha(\tx)-\dot{q}'_\alpha(\tx'))\\
	&=(\delta-\etam+Q_\alpha(\eta))(\dot{g}_\alpha(\tx+Q_\alpha(\eta))-\dot{g}'_\alpha(\tx'+Q_\alpha(\eta))),
\end{align*}
where $\tx,\tx'\in[\xm+\etam-Q_\alpha(\eta),\xm+\delta]$. 

For $\delta>0$, let $\alpha_0(\delta)=F_{\eta}(\xm+\delta)$ so that $\alpha<\alpha_0(\xm+\delta)\Leftrightarrow \delta\le\etam-Q_\alpha(\eta)$. When $\delta\ge2\etam$, we have $\alpha_0(\delta)=0$.
Then the above implies the distributional extrapolation uncertainty of engression is given by
\begin{align*}
	\cU^\ell_{\cP_\rmE}(\delta) &=	\sup_{p,p'\in\cP_\rmE}W_\ell(p(y|\xm+\delta),p'(y|\xm+\delta)) \\
	&= \left(\int_0^1\sup_{q_\alpha,q'_\alpha\in\cF^\alpha_\rmE}|q_\alpha(\xm+\delta)-q'_\alpha(\xm+\delta)|^\ell d\alpha\right)^{1/\ell}\\
	&= \left(\int_{\alpha_0(\delta)}^1\sup_{q_\alpha,q'_\alpha\in\cF^\alpha_\rmE}|q_\alpha(\xm+\delta)-q'_\alpha(\xm+\delta)|^\ell d\alpha\right)^{1/\ell}\\
	&= \left(\int_{\alpha_0(\delta)}^1(\delta-\etam+Q_\alpha(\eta))\sup_{g,g'\in\cG}|\dot{g}_\alpha(\tx+Q_\alpha(\eta))-\dot{g}'_\alpha(\tx'+Q_\alpha(\eta))|^\ell d\alpha\right)^{1/\ell}\\
	&= L\left(\int_{\alpha_0(\delta)}^1(\delta-\etam+Q_\alpha(\eta))^\ell d\alpha\right)^{1/\ell}\\
	&= L\left(\int_{\etam-\delta}^{\etam}(\eta-\etam+\delta)^\ell dF_{\eta}(\eta)\right)^{1/\ell}>0,
\end{align*}
for all $\delta>0$ and $\ell>0$. 

As to quantile regression, given $\alpha$, for all $f,f'\in\cF^\alpha_{\mathrm{QR}}$, we only have $f(x)=f'(x)$ for all $x\le\xm$. Thus, we have
\begin{align*}
	\cU^\ell_{\cP_{\mathrm{QR}}}(\delta) &= \sup_{p,p'\in\cP_{\mathrm{QR}}}W_\ell(p(y|\xm+\delta),p'(y|\xm+\delta)) \\
	&= \left(\int_0^1\sup_{f,f'\in\cF^\alpha_{\mathrm{QR}}}|f(\xm+\delta)-f'(\xm+\delta)|^\ell d\alpha\right)^{1/\ell}\\
	&= L\delta.
\end{align*}
Thus, for all $\ell>0$, $\delta>0$, we have $\cU^\ell_{\cP_\rmE}(\delta)<\cU^\ell_{\cP_{\mathrm{QR}}}(\delta)$, so $\gamma^d_\ell(\delta)>0$.

When $\ell=1$, we have
\begin{equation*}
	\cU^1_{\cP_\rmE}(\delta) = L\int_{\etam-\delta}^{\etam}(\eta-\etam+\delta) dF_{\eta}(\eta).
\end{equation*}
Then the (maximum) extrapolation gain can be derived the same as in the proof of Proposition~\ref{prop:gain_mean}, which leads to (i). 

When $\ell\in(1,\infty)$, note that $\cU^\ell_{\cP_\rmE}(\delta)$ is monotone increasing with respect to $\ell$ while $\cU^\ell_{\cP_{\mathrm{QR}}}(\delta)$ is free of $\ell$, so $0<\gamma^d_\ell(\delta)<\gamma^1_\ell(\delta)$ for all $\delta$. Also, when $\delta>2\etam$,
\begin{equation*}
	\frac{\cU^\ell_{\cP_\rmE}(\delta)}{\cU^1_{\cP_\rmE}(\delta)} = \frac{\big(\bbE_{\eta}\big[(\eta+\delta-\etam)^\ell\big]\big)^{1/\ell}}{\delta-\etam} = \Bigg(\bbE\Bigg[\bigg(\frac{\eta}{\delta-\etam}+1\bigg)^\ell\Bigg] \Bigg)^{1/\ell}\to1
\end{equation*}
as $\delta\to\infty$. Also, $\lim_{\delta\to\infty}\big(\cU^\ell_{\cP_\rmE}(\delta)/\cU^\ell_{\cP_{\mathrm{QR}}}(\delta)\big)=1$. Thus, 
\begin{equation*}
	\lim_{\delta\to\infty}\frac{\gamma^d_\ell(\delta)}{\gamma^1_\ell(\delta)}=\lim_{\delta\to\infty}\frac{\gamma^d_\ell(\delta)}{L\etam}=\lim_{\delta\to\infty}\frac{\cU^\ell_{\cP_{\mathrm{QR}}}(\delta)-\cU^\ell_{\cP_\rmE}(\delta)}{\cU^1_{\cP_{\mathrm{QR}}}(\delta)-\cU^1_{\cP_\rmE}(\delta)}=1
\end{equation*}
meaning that $\lim_{\delta\to\infty}\gamma^d_\ell(\delta)=L\etam$, which is the maximum extrapolability gain.

When $\ell=0$, note for all $\delta>0$,
\begin{equation*}
	\cU^\infty_{\cP_\rmE}(\delta)=L\sup_{\eta\in[\etam-\delta,\etam]}(\eta-\etam+\delta)= L\delta,
\end{equation*}
implying $\gamma^d_\infty(\delta)\equiv0$. Therefore, we conclude the proof.
\end{proof}

\section{Proofs of Section~\ref{sec:finite_sample}}\label{app:pf_finite_sample}
\subsection{Preliminaries} 
For two cdf's $F$ and $G$ over $\bbR$, recall that the Cram\'er distance between them is defined as
\begin{equation*}
	\cramer(F,G)=\int_{-\infty}^\infty(F(z)-G(z))^2dz.
\end{equation*}

The following lemma presents a bound of the Cram\'er distance between the population and empirical distributions.
\begin{lemma}\label{lem:bound_cramer_empirical}
	Let $Z$ be a random variable with a continuous cdf $F$ with a bounded support $\cZ\subseteq\bbR$ and $\{Z_1,\dots,Z_n\}$ be an i.i.d.\ sample whose empirical distribution function (edf) is denoted by $\hat{F}_n(z)=\frac{1}{n}\sum_{i=1}^n\ind_{\{Z_i\le z\}}$. Then with probability exceeding $1-\delta$, we have
	\begin{equation*}
		\cramer(F,\hat{F}_n) \le \frac{\nu(\cZ)\log(2/\delta)}{2n},
	\end{equation*} 
	where $\nu$ is the Lebesgue measure.
\end{lemma}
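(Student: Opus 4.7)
The plan is to combine the classical Dvoretzky--Kiefer--Wolfowitz (DKW) inequality with a simple support-restriction argument.

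First, I would reduce the domain of integration from $\bbR$ to $\cZ$. Let $a = \inf \cZ$ and $b = \sup \cZ$ (finite by boundedness). For $z < a$, continuity of $F$ gives $F(z)=0$, and since all samples $Z_i \in \cZ$ lie above $a$, we also have $\hat F_n(z) = 0$; similarly both equal $1$ for $z > b$. Hence
\begin{equation*}
\cramer(F,\hat F_n) \;=\; \int_{\cZ} \bigl(F(z)-\hat F_n(z)\bigr)^2 \, dz.
\end{equation*}

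Next, I would bound the integrand pointwise by its supremum, giving
\begin{equation*}
\cramer(F,\hat F_n) \;\le\; \nu(\cZ) \cdot \sup_{z\in \bbR}\bigl(F(z)-\hat F_n(z)\bigr)^2.
\end{equation*}
Then I would invoke the DKW inequality (with the Massart constant), which states that for any $\varepsilon>0$,
\begin{equation*}
\bbP\!\left(\sup_{z\in\bbR}|F(z)-\hat F_n(z)|>\varepsilon\right) \;\le\; 2e^{-2n\varepsilon^2}.
\end{equation*}
Setting $\varepsilon = \sqrt{\log(2/\delta)/(2n)}$ makes the right-hand side equal to $\delta$, so with probability at least $1-\delta$ we have $\sup_z |F(z)-\hat F_n(z)|^2 \le \log(2/\delta)/(2n)$.

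Combining the two displays yields the claimed inequality
\begin{equation*}
\cramer(F,\hat F_n) \;\le\; \frac{\nu(\cZ)\log(2/\delta)}{2n}
\end{equation*}
on the same high-probability event. There is no genuine obstacle here: the only mildly delicate point is checking that the integrand vanishes outside $\cZ$, which requires the continuity of $F$ (so that no mass sits at the boundary). Everything else is a direct application of DKW, so this proof is essentially a one-line consequence of that inequality once the support restriction is observed.
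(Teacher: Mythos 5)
Your proposal is correct and follows essentially the same route as the paper's proof: restrict the Cram\'er integral to $\cZ$, bound the integrand by the squared Kolmogorov--Smirnov distance, and apply the DKW inequality with Massart's constant at $\varepsilon=\sqrt{\log(2/\delta)/(2n)}$. Your explicit check that the integrand vanishes off $\cZ$ is a small point the paper leaves implicit, but the argument is the same.
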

\begin{proof}[Proof of Lemma~\ref{lem:bound_cramer_empirical}]
	By the Dvoretzky-Kiefer-Wolfowitz inequality~\citep{massart1990tight}, for all $t>0$, we have 
	\begin{equation*}
		\bbP\left(\sup_{z\in\cZ}|\hat{F}_n(z)-F(z)| \ge t\right) \le 2e^{-2nt^2}.
	\end{equation*}

	Therefore, with probability exceeding $1-\delta$, we have 
	\begin{equation*}
		\sup_{z\in\cZ}|\hat{F}_n(z)-F(z)| \le \sqrt{\frac{\log(2/\delta)}{2n}}.
	\end{equation*}
	Then by definition,
	\begin{equation*}
		\cramer(F,\hat{F}_n) = \int_{\cZ}\left(F(z)-\hat{F}_n(z)\right)^2dz \le \nu(\cZ)\frac{\log(2/\delta)}{2n}.
	\end{equation*}
\end{proof}

The following lemma bounds the difference between quantiles in terms of the Cram\'er distance. 
\begin{lemma}\label{lem:bound_quantile_by_cramer}
	Let $F,G$ be any monotone cdf's with the support $\cZ\subseteq\bbR$. Assume $F$ has a density $f$ bounded away from 0, i.e.\ $b=\inf_{z\in\cZ}f(z)>0$. Let $Q_\alpha^F$ and $Q_\alpha^G$ be quantile functions of $F$ and $G$, respectively. Then we have 
	\begin{equation*}
		\sup_{\alpha\in[0,1]}\big|Q_\alpha^F-Q_\alpha^G\big| \le \left(\frac{3\cramer(F,G)}{b^2}\right)^{\frac{1}{3}}.
	\end{equation*}
\end{lemma}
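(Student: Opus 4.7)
The plan is to fix an arbitrary $\alpha \in [0,1]$, produce a pointwise lower bound on $(F(z)-G(z))^2$ on the interval between the two quantiles, and integrate to lower bound the Cramér distance in terms of $|Q_\alpha^F - Q_\alpha^G|^3$. Write $\Delta_\alpha := Q_\alpha^F - Q_\alpha^G$ and treat the two signs of $\Delta_\alpha$ symmetrically.

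Assume first $\Delta_\alpha \ge 0$. The key observation is that on the interval $[Q_\alpha^G, Q_\alpha^F]$ one can sandwich $F(z)$ and $G(z)$ around the level $\alpha$: by continuity of $F$ we have $F(Q_\alpha^F) = \alpha$ and monotonicity gives $F(z) \le \alpha$ on this interval, while the definition $Q_\alpha^G = \inf\{z : G(z) \ge \alpha\}$ combined with monotonicity of $G$ yields $G(z) \ge \alpha$ for $z \ge Q_\alpha^G$. Hence $G(z) - F(z) \ge 0$ on $[Q_\alpha^G, Q_\alpha^F]$, and more precisely, using the density lower bound $f \ge b$,
\begin{equation*}
	G(z) - F(z) \ge F(Q_\alpha^F) - F(z) = \int_z^{Q_\alpha^F} f(u)\, du \ge b(Q_\alpha^F - z).
\end{equation*}
Squaring and integrating over this sub-interval of the support then gives the chain
\begin{equation*}
	\mathrm{CD}(F,G) \ge \int_{Q_\alpha^G}^{Q_\alpha^F}(G(z)-F(z))^2\, dz \ge b^2 \int_{Q_\alpha^G}^{Q_\alpha^F}(Q_\alpha^F - z)^2\, dz = \frac{b^2 \Delta_\alpha^3}{3}.
\end{equation*}

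The case $\Delta_\alpha < 0$ is handled in exactly the same way by swapping the roles of $F$ and $G$ on the interval $[Q_\alpha^F, Q_\alpha^G]$: there $F(z) \ge \alpha$ while $G(z) < \alpha$ by the infimum definition of $Q_\alpha^G$, so $F(z) - G(z) \ge F(z) - F(Q_\alpha^F) \ge b(z - Q_\alpha^F)$ and the same cubic bound follows. Rearranging gives $|\Delta_\alpha| \le (3 \mathrm{CD}(F,G)/b^2)^{1/3}$, and since this bound is uniform in $\alpha$, taking the supremum yields the claim.

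The only delicate point is the handling of the quantile inequalities at the endpoints without assuming continuity of $G$: it is important to use the strict inequality $G(z) < \alpha$ for $z < Q_\alpha^G$ (which follows from the infimum definition) rather than the weaker non-strict version, since this is what licenses the sandwich in both cases. Everything else is routine.
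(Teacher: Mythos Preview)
Your proof is correct and follows essentially the same approach as the paper: lower-bound $|G(z)-F(z)|$ linearly on the interval between the two $\alpha$-quantiles using the density bound on $F$, then integrate to get the cubic inequality. The only cosmetic difference is that the paper first passes to the maximising $\alpha_0$ and then argues at that single level, whereas you establish the bound for arbitrary $\alpha$ and take the supremum afterwards; your version is marginally cleaner since it avoids assuming the supremum is attained.
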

\begin{proof}[Proof of Lemma~\ref{lem:bound_quantile_by_cramer}]
Let $\tau=\sup_{\alpha\in[0,1]}|Q_\alpha^F-Q_\alpha^G|$, which is achieved at $\alpha=\alpha_0$. Assume without loss of generality that $Q_{\alpha_0}^F= Q_{\alpha_0}^G+\tau$. As the derivative of $F$ is lower bounded by $b$, we have for $z\in[Q_{\alpha_0}^G,Q_{\alpha_0}^F]$ that $G(z)-F(z)\ge b(z-Q_{\alpha_0}^G)$. Thus, 
\begin{equation*}
	\cramer(F,G)=\int_\cZ(F(z)-G(z))^2 dz \ge \int_{Q_{\alpha_0}^G}^{Q_{\alpha_0}^F}(F(z)-G(z))^2 dz \ge \int_0^\tau(bu)^2du = \frac{b^2\tau^3}{3}.
\end{equation*}
This implies the uniform bound for the difference between two quantile functions.
\end{proof}

\subsection{Proof of Proposition~\ref{prop:finite_sample_reg}}\label{app:pf_finite_sample_reg}
\begin{proof}[Proof of Proposition~\ref{prop:finite_sample_reg}]
We first show the result for $L_2$ regression. Note that the true conditional mean is given by $\mu^\star(x)=\beta_0^\star+\beta_1^\star x + \beta_2^\star (x^2+\bbE[{\eta^\star}^2])$. Empirical $L_2$ regression yields the following
\begin{equation}\label{eq:finite_l2}
	\beta^\dag_0+\beta^\dag_1x+\beta^\dag_2x^2 = \beta_0^\star+\beta_1^\star x + \beta_2^\star (x^2+\bbE[{\eta^\star}^2]) + o_p(1)
\end{equation}
for $\beta^\dag\in\cB^\dag$ and $x=x_1,x_2$. From the two equations (up to a small error that converges to 0 in probability as $n\to\infty$), one cannot (approximately) identify three unknown parameters. To see this, let $c$ be any nonzero real number and take $\beta^\dag_0=\beta^\star_0+c$. Then \eqref{eq:finite_l2} becomes
\begin{equation*}
	c+\beta^\dag_1x+\beta^\dag_2x^2 = \beta_1^\star x + \beta_2^\star (x^2+\bbE[{\eta^\star}^2]) + o_p(1).
\end{equation*}
Then we have
\begin{align*}
	\beta_2^\dag - \beta^\star_2 &= \frac{c+\beta^\star_2\bbE[{\eta^\star}^2]}{x_1x_2} + o_p(1)\\
	\beta_1^\dag - \beta^\star_1 &= -(c+\beta_2^\star\bbE[{\eta^\star}^2])\left(\frac{1}{x_1}+\frac{1}{x_2}\right) + o_p(1).
\end{align*}

Thus, for any $x\notin\cX$, we have
\begin{align*}
	\bbE[(Y-(\beta^\dag_0+\beta^\dag_1x+\beta^\dag_2x^2))^2] &\ge (\beta_0^\star+\beta_1^\star x + \beta_2^\star (x^2+\bbE[{\eta^\star}^2])-(\beta^\dag_0+\beta^\dag_1x+\beta^\dag_2x^2))^2 \\
	&= \left[\frac{c(x-x_1)(x-x_2)}{x_1x_2} + \beta_2^\star\bbE[{\eta^\star}^2]x\frac{x_1+x_2}{x_1x_2} - \frac{\beta_2^\star\bbE[{\eta^\star}^2]}{x_1x_2}\right]^2\\
	&\ge \left[\frac{c(x-x_1)(x-x_2)}{x_1x_2}\right]^2
\end{align*}
which can be made arbitrarily large by choosing $c$, as $x\neq x_1$ and $x\neq x_2$.

Next, we show the results for $L_1$ regression, which can be extended to quantile regression in a straightforward way. Empirical $L_1$ regression yields
\begin{equation}\label{eq:finite_l1}
	\beta^\ddag_0+\beta^\ddag_1x+\beta^\ddag_2 x^2 = \beta^\star_0 +\beta^\star_1 x + \beta^\star_2 x^2 + o_p(1),
\end{equation}
for $\beta^\ddag\in\cB^\ddag$ and $x=x_1,x_2$. Again, from these two equations, one cannot (approximately) identify three unknown parameters. Similarly, let $c'$ be any nonzero real number and take $\beta^\ddag_0=\beta^\star_0+c'$. Then from \eqref{eq:finite_l1} we have
\begin{equation*}
	\begin{split}
		\beta_1^\ddag - \beta_1^\star &= -c'\left(\frac{1}{x_1}+\frac{1}{x_2}\right)\\
		\beta_2^\ddag - \beta_2^\star &= \frac{c'}{x_1x_2}
	\end{split}
\end{equation*}

Thus, for any $x\notin\cX$, we have
\begin{equation*}
	|\beta^\ddag_0+\beta^\ddag_1x+\beta^\ddag_2 x^2 - (\beta^\star_0 +\beta^\star_1 x + \beta^\star_2 x^2)| \ge \left|\frac{c(x-x_1)(x-x_2)}{x_1x_2}\right| 
\end{equation*}
which can be made arbitrarily large by choosing $c$, as $x\neq x_1$ and $x\neq x_2$. Thereby, we conclude the proof.
\end{proof}

\subsection{Proof of Theorem~\ref{thm:finite_sample_eng}}\label{app:pf_thm_finite_sample_eng}
A quadratic pre-ANM defined in \eqref{eq:quadratic_preanm} is indexed by parameter $\beta=(\beta_1,\beta_2,\beta_3)$ and a noise variable $\eta$. Let $F_{\beta,\eta|x}$ be the conditional cdf of $Y$ given $X=x$ induced by a model $(\beta,\eta)$. Let $F^\star_x=F_{\beta^\star,\eta^\star|x}$ be the true conditional cdf and $\hat{F}_{n|x}$ be the corresponding empirical distribution function. Note that in the posited setting, the empirical engression \eqref{eq:eng_emp} is equivalent to minimising the Cram\'er distance between the empirical distribution and the distribution induced by the model:
\begin{equation*}
	\min_{\beta,\eta}\sum_{x\in\cX}\cramer(F_{\beta,\eta|x},\hat{F}_{n|x}).
\end{equation*}
Let $Q_\alpha^\eta$ be the $\alpha$-quantile of a noise variable $\eta$.

\begin{proof}[Proof of Theorem~\ref{thm:finite_sample_eng}]
The proof proceeds in 5 steps. 
\begin{itemize}
	\item In step I, we bound the Cram\'er distance between the estimated conditional distribution $F_{\hat\beta,\hat\eta|x}$ and the true one $F^\star_{x}$; 
	\item in step II, for any model indexed by $(\beta,\eta)$, for a level $\alpha$, we bound $\|\beta - \beta^\star\|$ and $|Q_\alpha^\eta-Q_\alpha^{\eta^\star}|$ in terms of differences in their corresponding conditional quantiles of $Y|X$; 
	\item in step III, we combine the previous two steps and yield the finite-sample bound for parameter estimation; 
	\item in the last two steps, we derive the bounds for conditional quantile and mean estimation, respectively.
\end{itemize}

\medbreak
\noindent{\textit{Step I. }} 
With probability exceeding $1-\delta$, we have
\begin{equation}\label{eq:bound_cramer}
\begin{split}
	\sum_{x\in\cX}\cramer(F_{\hat\beta,\hat\eta|x},F^\star_{x}) &\le 2\sum_{x\in\cX}\left[\cramer(F_{\hat\beta,\hat\eta|x},\hat{F}_{n|x}) + \cramer(F^\star_{x},\hat{F}_{n|x}) \right]\\
	&\le 4\sum_{x\in\cX}\cramer(F^\star_{x},\hat{F}_{n|x})\\
	&\le \frac{2\nu(\cY)\log(2/\delta)}{n}
\end{split}
\end{equation}
with probability exceeding $1-\delta$, 
where $\cY$ is the training support of $Y$; the first inequality follows by Cauchy--Schwarz inequality; the second inequality is due to the fact that 
\begin{equation*}
	\cramer(F_{\hat\beta,\hat\eta|x},\hat{F}_{n|x}) = \min_{F} \cramer(F,\hat{F}_{n|x})),
\end{equation*}
where $F$ is taken among the cdf's induced by any model in \eqref{eq:quadratic_preanm}, by the definition of the empirical engression; the third inequality follows by applying Lemma~\ref{lem:bound_cramer_empirical}. 

\bigskip
\noindent{\textit{Step II. }} 
Consider any $\beta\in\cB$ and $\eta\sim P_\eta\in\cP_\eta$. Let $q_{\alpha}(x)$ be the $\alpha$-quantile of $\beta_0+\beta_1 (x+\eta)+\beta_2(x+\eta)^2$, $q^\star_\alpha(x)$ be the $\alpha$-quantile of $\beta^\star_0+\beta^\star_1 (x+\eta)+\beta^\star_2(x+\eta)^2$, and $Q_\alpha^\eta$ be the $\alpha$-quantile of $\eta$. By monotonicity, we have $q_{\alpha}(x)=\beta_0+\beta_1 x +\beta_2 x^2+(\beta_1+2\beta_2x)Q_\alpha^\eta+\beta_2{Q_\alpha^\eta}^2$. Let $\delta_\alpha(x)=q_{\alpha}(x) - q^\star_\alpha(x)$ be the difference between the two quantiles. The symmetry of the noise implies $Q_\alpha^\eta=-Q_{1-\alpha}^\eta$. For any $\alpha\in(0.5,1]$, it holds that
\begin{align}
	\beta_0+\beta_1 x +\beta_2 x^2+(\beta_1+2\beta_2x)Q_\alpha^\eta+\beta_2{Q_\alpha^\eta}^2 &= \beta^\star_0+\beta^\star_1 x +\beta^\star_2 x^2+(\beta^\star_1+2\beta^\star_2x)Q_\alpha^{\eta^\star}+\beta^\star_2{Q_\alpha^{\eta^\star}}^2 + \delta_\alpha(x) \label{eq:q_alpha}\\
	\beta_0+\beta_1 x +\beta_2 x^2-(\beta_1+2\beta_2x)Q_\alpha^\eta+\beta_2{Q_\alpha^\eta}^2 &= \beta^\star_0+\beta^\star_1 x +\beta^\star_2 x^2-(\beta^\star_1+2\beta^\star_2x)Q_\alpha^{\eta^\star}+\beta^\star_2{Q_\alpha^{\eta^\star}}^2 + \delta_{1-\alpha}(x) \label{eq:q_1_alpha}\\
	\beta_0+\beta_1 x +\beta_2 x^2 &= \beta^\star_0+\beta^\star_1 x +\beta^\star_2 x^2 + \delta_{0.5}(x)\label{eq:q_med}
\end{align}
\normalsize
for $x=x_1,x_2$. 

Taking the difference between \eqref{eq:q_alpha} and \eqref{eq:q_1_alpha} yields
\begin{equation}\label{eq:q_alpha_1_alpha}
	2(\beta_1+2\beta_2x)Q_\alpha^\eta = 2(\beta^\star_1+2\beta^\star_2x)Q_\alpha^{\eta^\star} + \delta_\alpha(x) - \delta_{1-\alpha}(x).
\end{equation}
Taking the difference of the above equation for $x=x_1$ and $x_2$ implies
\begin{equation*}
	\beta_2Q_\alpha^\eta = \beta^\star_2Q_\alpha^{\eta^\star} + \underbrace{\frac{\delta_\alpha(x_2) - \delta_\alpha(x_1) - \delta_{1-\alpha}(x_2) + \delta_{1-\alpha}(x_1)}{4(x_2-x_1)}}_{=:\tau_2(x_1,x_2,\alpha)}.
\end{equation*}
Plugging this into \eqref{eq:q_alpha_1_alpha} for $x=x_1$ leads to 
\begin{equation}\label{eq:beta_1_q}
	\beta_1Q_\alpha^\eta = \beta^\star_1Q_\alpha^{\eta^\star} \underbrace{- \frac{x_1}{2}\tau_2(x_1,x_2,\alpha) + \frac{\delta_\alpha(x_1) - \delta_{1-\alpha}(x_1)}{2}}_{=:\tau_1(x_1,x_2,\alpha)}.
\end{equation}
Taking the ratio of the above two equations, we have
\begin{equation}\label{eq:beta_ratio}
	\frac{\beta_2}{\beta_1} = \frac{\beta^\star_2}{\beta^\star_1} + \underbrace{\frac{\tau_2-\beta^\star_2\tau_1/\beta^\star_1}{\beta^\star_1 Q_\alpha^{\eta^\star}+\tau_1}}_{=:\tau_3(x_1,x_2,\alpha)}.
\end{equation}

On the other hand, taking the difference of \eqref{eq:q_med} for $x=x_1$ and $x_2$ yields
\begin{equation}\label{eq:q_med_x1_x2}
	\beta_1(x_1-x_2) + \beta_2(x_1^2-x_2^2) = \beta^\star_1(x_1-x_2) + \beta^\star_2(x_1^2-x_2^2) + \delta_{0.5}(x_1) - \delta_{0.5}(x_2).
\end{equation}
Plugging \eqref{eq:beta_ratio} into \eqref{eq:q_med_x1_x2} implies
\begin{equation}\label{eq:bound_beta1}
	\beta_1 - \beta^\star_1 = \frac{-\beta_1\tau_3(x_1,x_2,\alpha)(x_1+x_2)+(\delta_{0.5}(x_1) - \delta_{0.5}(x_2))/(x_1-x_2)}{1+\beta^\star_2(x_1+x_2)/\beta^\star_1}.
\end{equation}
From \eqref{eq:q_med_x1_x2} and \eqref{eq:q_med} for $x=x_1$, we know
\begin{align}
	\beta_2 - \beta^\star_2 &= \frac{\beta_1 - \beta^\star_1}{x_1+x_2} + \frac{\delta_{0.5}(x_1) - \delta_{0.5}(x_2)}{x_1^2-x_2^2} \label{eq:bound_beta2}\\
	\beta_0 - \beta^\star_0 &= x_1(\beta_1 - \beta^\star_1) + x_1^2(\beta_2 - \beta^\star_2) + \delta_{0.5}(x_1)\label{eq:bound_beta0}.
\end{align}
Combining \eqref{eq:bound_beta1}, \eqref{eq:bound_beta2} and \eqref{eq:bound_beta0} leads to a bound for the parameter difference
\begin{equation}\label{eq:bound_param_diff}
\begin{split}
	\|\beta-\beta^\star\| &\le |\beta_0 - \beta^\star_0| + |\beta_1 - \beta^\star_1| + |\beta_2 - \beta^\star_2| \\
	&\le \left(x_1+1+\frac{x_2^2+1}{x_1+x_2}\right)|\beta_1 - \beta^\star_1| + \frac{x_2^2+1}{x_2^2-x_1^2}|\delta_{0.5}(x_1)-\delta_{0.5}(x_2)| + |\delta_{0.5}(x_1)| \\
	&\le \left(x_1+1+\frac{x_2^2+1}{x_1+x_2}\right)\frac{\beta_1(x_1+x_2)|\tau_3(x_1,x_2,\alpha)|}{1+\beta^\star_2(x_1+x_2)/\beta^\star_1} + c_1\big(|\delta_{0.5}(x_1)|+|\delta_{0.5}(x_2)|\big)
\end{split}
\end{equation}
where $c_1<\infty$ is a constant depending on $x_1,x_2,\beta^\star_1$ and $\beta^\star_2$.

Furthermore, \eqref{eq:beta_1_q} implies
\begin{equation}\label{eq:quantile_diff}
	Q^\eta_\alpha - Q^{\eta^\star}_\alpha = \frac{(\beta_1^\star-\beta_1)Q^{\eta^\star}_\alpha+\tau_1}{\beta_1},
\end{equation}
which holds for all $\alpha\in[0,1]$ by symmetry and leads to a bound for the difference in the noise quantiles by combining with \eqref{eq:bound_beta1}.

\bigskip
\noindent{\textit{Step III. }} 
Let $\hat{q}_\alpha(x)$ be the $\alpha$-quantile of $\hat\beta_0+\hat\beta_1(x+\hat\eta)+\hat\beta_2(x+\hat\eta)^2$, which is the engression estimator for the conditional quantile. 
By Lemma~\ref{lem:bound_quantile_by_cramer}, we have for $x=x_1,x_2$
\begin{equation*}
	\sup_{\alpha\in[0,1]}|\hat{q}_\alpha(x)-q^\star_\alpha(x)| \le \left(\frac{3\cramer(F_{\hat\beta,\hat\eta|x},F^\star_x)}{b^2}\right)^{\frac{1}{3}}.
\end{equation*}
Applying the bound in \eqref{eq:bound_cramer} then yields
\begin{equation}\label{eq:bound_quantile_est_pf}
	\sup_{\alpha\in[0,1]}|\hat{q}_\alpha(x)-q^\star_\alpha(x)| \le \left(\frac{6\nu(\cY)\log(2/\delta)}{b^2n}\right)^{\frac{1}{3}},
\end{equation}
for $x=x_1,x_2$.

According to the results in Step II, by taking $\beta=\hat\beta$ and $\eta=\hat\eta$, we have from the above bounds that $\delta_\alpha(x)=\cO\big((\frac{\log n}{n})^{\frac{1}{3}}\big)$ uniformly for all $x=x_1,x_2$ and $\alpha\in[0,1]$. Then both $\tau_1$ and $\tau_2$ are of the same order, which implies that for all $\alpha$, with probability at least $1-\delta$, 
\begin{equation*}
	\tau_3(x_1,x_2,\alpha) \le \frac{c_2}{(x_2-x_1)Q_\alpha^{\eta^\star}}\left(\frac{\log(2/\delta)}{b^2n}\right)^{\frac{1}{3}},
\end{equation*}
where $c_2\propto\nu(\cY)^{\frac{1}{3}}$ is a constant depending on the size of the support $\cY$, $\beta^\star$, $\cX$. 
Taking the infimum over $\alpha$ yields
\begin{equation*}
	\inf_{\alpha\in[0,1]}\tau_3(x_1,x_2,\alpha) \le \frac{c_2}{(x_2-x_1)\etam}\left(\frac{\log(2/\delta)}{b^2n}\right)^{\frac{1}{3}}.
\end{equation*}

Therefore, we conclude from \eqref{eq:bound_param_diff} that
\begin{equation*}
	\|\hat\beta - \beta^\star\| \le \frac{C_1}{x_2-x_1}\left(\frac{\log(2/\delta)}{b^2n}\right)^{\frac{1}{3}}.
\end{equation*}

\bigskip
\noindent{\textit{Step IV. }} 
Based on \eqref{eq:quantile_diff} and the above bound for parameter estimation, we have for $\alpha\in[0,1]$
\begin{equation}\label{eq:quantile_bound_n}
	|Q^{\hat\eta}_\alpha - Q^{\eta^\star}_\alpha| \le c_3|Q^{\eta^\star}_\alpha|\left(\frac{\log(2/\delta)}{n}\right)^{\frac{1}{3}}
\end{equation}
for a constant $c_3<\infty$. Then for all $x\in\bbR$, we have
\begin{equation*}
	\hat{q}_\alpha(x) - q^\star_\alpha(x) = \hat\beta_0 - \beta^\star_0 + x(\hat\beta_1-\beta^\star_1) + x^2(\hat\beta_2-\beta^\star_2) + \hat\beta_1Q^{\hat\eta}_\alpha-\beta^\star_1Q^{\eta^\star}_\alpha + 2x(\hat\beta_2Q^{\hat\eta}_\alpha-\beta^\star_2Q^{\eta^\star}_\alpha) + \hat\beta_2{Q^{\hat\eta}_\alpha}^2-\beta^\star_2{Q^{\eta^\star}_\alpha}^2.
\end{equation*}
Note that
\begin{equation*}
	|\hat\beta_0 - \beta^\star_0 + x(\hat\beta_1-\beta^\star_1) + x^2(\hat\beta_2-\beta^\star_2)| \lesssim \max\{1,x,x^2\}\left(\frac{\log(2/\delta)}{n}\right)^{\frac{1}{3}},
\end{equation*}
\begin{equation*}
\begin{split}
	|\hat\beta_1Q^{\hat\eta}_\alpha-\beta^\star_1Q^{\eta^\star}_\alpha| &= |(\hat\beta_1-\beta^\star_1)Q^{\eta^\star}_\alpha + \hat\beta_1(Q^{\hat\eta}_\alpha-Q^{\eta^\star}_\alpha)| \lesssim |Q^{\eta^\star}_\alpha|\left(\frac{\log(2/\delta)}{n}\right)^{\frac{1}{3}},
\end{split}
\end{equation*}
\begin{equation*}
	|x(\hat\beta_2Q^{\hat\eta}_\alpha-\beta^\star_2Q^{\eta^\star}_\alpha)| \lesssim |xQ^{\eta^\star}_\alpha|\left(\frac{\log(2/\delta)}{n}\right)^{\frac{1}{3}},
\end{equation*}
and
\begin{equation*}
	|\hat\beta_2{Q^{\hat\eta}_\alpha}^2-\beta^\star_2{Q^{\eta^\star}_\alpha}^2| \lesssim |Q^{\eta^\star}_\alpha|^2\left(\frac{\log(2/\delta)}{n}\right)^{\frac{1}{3}}.
\end{equation*}
Thus, we conclude the bound
\begin{equation*}
	|\hat{q}_\alpha(x) - q^\star_\alpha(x)| \le C_3\max\{1,x,x^2\} |Q^{\eta^\star}_\alpha|\left(\frac{\log(2/\delta)}{n}\right)^{\frac{1}{3}},
\end{equation*}
for a constant $C_3<\infty$ that does not depend on $x$ or $\alpha$.

\bigskip
\noindent{\textit{Step V. }} 
As $\eta^\star\le\etam$, taking the supremum of \eqref{eq:quantile_bound_n} yields
\begin{equation*}
	\sup_{\alpha\in[0,1]}|Q^{\hat\eta}_\alpha - Q^{\eta^\star}_\alpha| \lesssim \left(\frac{\log(2/\delta)}{n}\right)^{\frac{1}{3}}.
\end{equation*}
As $\hat\eta$ and $\eta^\star$ have mean 0, we have
\begin{equation*}
	\hat\mu(x) - \mu^\star(x) = \underbrace{\hat\beta_0-\beta^\star_0 + x(\hat\beta_1-\beta^\star_1) + (x^2+\bbE[{\eta^\star}^2])(\hat\beta_2-\beta^\star_2)}_{\lesssim \max\{1,x,x^2\}\left(\frac{\log(2/\delta)}{n}\right)^{\frac{1}{3}}} + \hat\beta_2(\bbE[\hat\eta^2]-\bbE[{\eta^\star}^2]).
\end{equation*}
Note that
\begin{align*}
	|\bbE[\hat\eta^2]-\bbE[{\eta^\star}^2]| &= \left|\int_0^1(Q^{\hat\eta}_\alpha)^2 d\alpha - \int_0^1(Q^{\eta^\star}_\alpha)^2 d\alpha\right| \\
	&\le \int_0^1|(Q^{\hat\eta}_\alpha)^2 - (Q^{\eta^\star}_\alpha)^2| d\alpha\\
	&\le 2\etam \sup_{\alpha\in[0,1]}|Q^{\hat\eta}_\alpha - Q^{\eta^\star}_\alpha|\\
	&\lesssim \left(\frac{\log(2/\delta)}{n}\right)^{\frac{1}{3}}.
\end{align*}
Therefore, we conclude the bound
\begin{equation*}
	|\hat\mu(x) - \mu^\star(x)| \le C_2\max\{1,x,x^2\}\left(\frac{\log(2/\delta)}{n}\right)^{\frac{1}{3}},
\end{equation*}
where $C_2<\infty$ is a constant that does not depend on $x$.
\end{proof}

\subsection{Proof of Proposition~\ref{prop:finite_sample_eng_mis}}\label{app:pf_thm_finite_sample_eng_mis}
\begin{proof}[Proof of Proposition~\ref{prop:finite_sample_eng_mis}]
The proof proceeds similarly to the first three steps in the proof of Theorem~\ref{thm:finite_sample_eng}.

Recall that the true data generating model is $Y=\beta^\star_0+\beta^\star_1 x + \beta^\star_2 x^2 + \eta^\star$ and $X$ takes only two values during training. Let $F^\star_x$ be the conditional cdf of $Y$ given $X=x$ induced by the true model. Let $F_{\beta,\eta|x}$ be the conditional cdf of $Y$ given $X=x$ induced by a pre-ANM in the class \eqref{eq:quadratic_preanm}. Note that the pre-ANM in class \eqref{eq:quadratic_preanm} with $\tilde\beta_0=\beta_0^\star-\beta_2^\star x_1x_2$, $\tilde\beta_1=\beta_1^\star+\beta_2^\star(x_1+x_2)$, $\tilde\beta_2=0$, and $\tilde\eta=\eta^\star/\tilde\beta_1$ induces exactly the same conditional distribution of $Y|X=x$ as the true model, for $x=x_1,x_2$. That is, $F_{\tilde\beta,\tilde\eta|x}=F^\star_x$ for $x=x_1,x_2$. Thus, we have for the engression estimator $\hat\beta,\hat\eta$ that
\begin{equation*}
	\sum_{x\in\cX}\cramer(F_{\hat\beta,\hat\eta|x},\hat{F}_{n|x}) = \min_{\beta,\eta}\sum_{x\in\cX}\cramer(F_{\beta,\eta|x},\hat{F}_{n|x}) \le \sum_{x\in\cX}\cramer(F^\star_x,\hat{F}_{n|x}).
\end{equation*}
Therefore, similarly to \eqref{eq:bound_cramer} in the proof of Theorem~\ref{thm:finite_sample_eng}, we have with probability exceeding $1-\delta$ that
\begin{equation*}\label{eq:bound_cramer_mis}
	\sum_{x\in\cX}\cramer(F_{\hat\beta,\hat\eta|x},F^\star_{x}) \lesssim \frac{\log(2/\delta)}{n}.
\end{equation*}

Let $\hat{q}_\alpha(x)$ be the $\alpha$-quantile of $\hat\beta_0+\hat\beta_1(x+\hat\eta)+\hat\beta_2(x+\hat\eta)^2$, and $q^\star_\alpha(x)$ be the $\alpha$-quantile of $\beta^\star_0+\beta^\star_1 (x+\eta)+\beta^\star_2(x+\eta)^2$. Let $\hat\delta_\alpha(x)=\hat{q}_\alpha(x)-q^\star_\alpha(x)$ which depends on $n$.
Then by Lemma~\ref{lem:bound_quantile_by_cramer} and the above bound, we have for $x=x_1,x_2$ that
\begin{equation}\label{eq:bound_quantile_est_pf_mis}
	\sup_{\alpha\in[0,1]}|\hat\delta_\alpha(x)| \lesssim \left(\frac{\log(2/\delta)}{n}\right)^{\frac{1}{3}}.
\end{equation}

For any $\alpha\in(0.5,1]$, it holds that
\begin{align}
	\hat\beta_0+\hat\beta_1 x +\hat\beta_2 x^2+(\hat\beta_1+2\hat\beta_2x)Q_\alpha^{\hat\eta}+\hat\beta_2{Q_\alpha^{\hat\eta}}^2 &= \beta^\star_0+\beta^\star_1 x +\beta^\star_2 x^2+Q_\alpha^{\eta^\star} + \hat\delta_{\alpha}(x) \label{eq:q_alpha_mis}\\
	\hat\beta_0+\hat\beta_1 x +\hat\beta_2 x^2-(\hat\beta_1+2\hat\beta_2x)Q_\alpha^{\hat\eta}+\hat\beta_2{Q_\alpha^{\hat\eta}}^2 &= \beta^\star_0+\beta^\star_1 x +\beta^\star_2 x^2-Q_\alpha^{\eta^\star} + \hat\delta_{1-\alpha}(x) \label{eq:q_1_alpha_mis}\\
	\beta_0+\beta_1 x +\beta_2 x^2 &= \beta^\star_0+\beta^\star_1 x +\beta^\star_2 x^2 + \hat\delta_{0.5}(x)\label{eq:q_med_mis}
\end{align}
for $x=x_1,x_2$. 

Taking the difference between \eqref{eq:q_alpha_mis} and \eqref{eq:q_1_alpha_mis} yields
\begin{equation*}
	2(\hat\beta_1+2\hat\beta_2x)Q_\alpha^{\hat\eta} = 2Q_\alpha^{\eta^\star} + \hat\delta_\alpha(x) - \hat\delta_{1-\alpha}(x).
\end{equation*}
Taking the difference of the above equation for $x=x_1$ and $x_2$ implies
\begin{equation}\label{eq:beta_2_q_mis}
	\hat\beta_2Q_\alpha^\eta = \frac{\hat\delta_\alpha(x_2) - \hat\delta_\alpha(x_1) - \hat\delta_{1-\alpha}(x_2) +\hat \delta_{1-\alpha}(x_1)}{4(x_2-x_1)},
\end{equation}
thus $|\hat\beta_2|\lesssim\left(\frac{\log(2/\delta)}{n}\right)^{\frac{1}{3}}.$

On the other hand, taking the difference of \eqref{eq:q_med_mis} for $x=x_1$ and $x_2$ yields
\begin{equation}\label{eq:q_med_x1_x2_mis}
	\hat\beta_1(x_1-x_2) + \hat\beta_2(x_1^2-x_2^2) = \beta^\star_1(x_1-x_2) + \beta^\star_2(x_1^2-x_2^2) + \hat\delta_{0.5}(x_1) - \hat\delta_{0.5}(x_2).
\end{equation}
Plugging \eqref{eq:beta_2_q_mis} into \eqref{eq:q_med_x1_x2_mis} implies
\begin{equation*}
	|\beta_1 - \beta^\star_1 - \beta_2^\star(x_1+x_2)| \lesssim  \left(\frac{\log(2/\delta)}{n}\right)^{\frac{1}{3}}.
\end{equation*}
Then from \eqref{eq:q_med_mis} for $x=x_1$, we know
\begin{equation*}
	|\hat\beta_0 - \beta^\star_0 + \beta^\star_2 x_1 x_2| \lesssim  \left(\frac{\log(2/\delta)}{n}\right)^{\frac{1}{3}},
\end{equation*}
which concludes the proof.
\end{proof}

\subsection{Proof of Theorem~\ref{thm:consistency_general}}\label{app:thm_consistency_general}
\begin{proof}[Proof of Theorem~\ref{thm:consistency_general}]
Let $F_{g,h|x}$ be the conditional cdf of $Y$ given $X=x$ induced by a pre-ANM model indexed by $(f,g)$. Let $F^\star_x=F_{g^\star,h^\star|x}$ be the true cdf and $\hat{F}_{n_i|x_i}$ be the empirical distribution function of $Y$ given $X=x_i$, where $n_i$ be the number of $Y$ samples for each $X=x_i$, for $i=1,\dots,m$. Then the empirical engression is equivalent to
\begin{equation*}
	\min_{g,h}\sum_{i=1}^m \cramer(F_{g,h|x_i}, \hat{F}_{n_i|x_i}).
\end{equation*}

Then similarly to \eqref{eq:bound_cramer} in the proof of Theorem~\ref{thm:finite_sample_eng}, we have with probability exceeding $1-\delta$ that
\begin{equation*}
	\sum_{i=1}^m\cramer(F_{\hat{g},\hat{h}|x_i},F^\star_{x_i}) \lesssim \frac{\log(\tilde{n}/\delta)}{\tilde{n}},
\end{equation*}
where $\tilde{n}=\min\{n_1,\dots,n_m\}\ge cn/m\to\infty$ as $n\to\infty$ by Condition~\ref{ass:grid}. 

	Let $\hat{q}_\alpha(x)$ and $q^\star_\alpha(x)$ be the quantiles of $\hat{g}(x+\hat{h}(\varepsilon))$ and $g^\star(x+h^\star(\varepsilon))$, respectively. Let $\hat\delta_\alpha(x)=\hat{q}_\alpha(x)-q^\star_\alpha(x)$.  Then by Lemma~\ref{lem:bound_quantile_by_cramer} and the above bound, we have for $x\in\{x_1,\dots,x_m\}$ that
\begin{equation}\label{eq:bound_quantile_est_pf_general}
	\sup_{\alpha\in[0,1]}|\hat\delta_\alpha(x)| \lesssim \left(\frac{\log(\tilde{n}/\delta)}{\tilde{n}}\right)^{\frac{1}{3}}.
\end{equation}

Let $d_m=\max_{j=1,\dots,m}|x_j-x_{j-1}|=o(m)$ by Condition~\ref{ass:grid}. For any $x\in\cX$, there exists $i\in\{1,\dots,m\}$ such that $|x-x_i|\le d_m$. By the Lipschitz condition in Condition~\ref{ass:g_lip_mono} with the Lipschitz constant denoted by $L$, we know the quantile functions are also uniformly $L$-Lipschitz. Thus, we have for all $x\in\cX$
\begin{align*}
	\sup_{\alpha\in[0,1]}|\hat\delta_\alpha(x)| &\le \sup_{\alpha\in[0,1]}|\hat{q}_\alpha(x) - \hat{q}_\alpha(x_i)| + \sup_{\alpha\in[0,1]}|q^\star_\alpha(x) - q^\star_\alpha(x_i)| + \sup_{\alpha\in[0,1]}|\hat{q}_\alpha(x_i) - q^\star_\alpha(x_i)| \\
	&\lesssim L|d_m| + \left(\frac{\log(\tilde{n}/\delta)}{\tilde{n}}\right)^{\frac{1}{3}}.
\end{align*}
	
	Then for all $\alpha\in[0,1]$ and all $x\in\cX$, we have
	\begin{equation*}
		\hat{g}(x+\hat{h}(Q^\varepsilon_\alpha)) = g^\star(x+h^\star(Q^\varepsilon_\alpha)) + \hat\delta_\alpha(x).
	\end{equation*}
	As $\varepsilon\sim\mathrm{Unif}[0,1]$, the above indicates that for all $x\in\cX$ and any fixed number $\varepsilon\in[0,1]$, it holds that
	\begin{equation}\label{eq:equal_quantile}
		\hat{g}(x+\hat{h}(\varepsilon)) = g^\star(x+h^\star(\varepsilon)) + \hat\delta_\varepsilon(x).
	\end{equation}
	Let $y=g^\star(x+h^\star(\varepsilon))$. We have for all $\varepsilon\in[0,1]$ and $y\in\{g^\star(x+h^\star(\varepsilon)):x\in\cX\}$ that
	\begin{equation}\label{eq:equal_x}
		{g^{\star}}^{-1}(y) - h^\star(\varepsilon) = \hat{g}^{-1}(y - \hat\delta_\varepsilon(y)) - \hat{h}(\varepsilon),
	\end{equation}
	where we denote $\hat\delta_\varepsilon(y):=\hat\delta_\varepsilon({g^{\star}}^{-1}(y) - h^\star(\varepsilon))$ for ease of notation.
	
	Taking $\varepsilon=0$, we have for all $y\in g^\star(\cX)$ that
	\begin{equation*}
		{g^{\star}}^{-1}(y) = \hat{g}^{-1}(y-\hat\delta_{0.5}(y)).
	\end{equation*}
	
	For all $\varepsilon\in[0,1]$, according to Condition~\ref{ass:large_x_space}, there exists $x\in\cX$ such that $x+h^\star(\varepsilon)\in\cX$. Let $y=g^\star(x+h^\star(\varepsilon))\in g^\star(\cX)$. Then we have ${g^{\star}}^{-1}(y) = \hat{g}^{-1}(y-\hat\delta_{0.5}(y))$, plugging which into \eqref{eq:equal_x} leads to
	\begin{equation*}
		\hat{g}^{-1}(y-\hat\delta_{0.5}(y)) - h^\star(\varepsilon)=\hat{g}^{-1}(y - \hat\delta_\varepsilon(y)) - \hat{h}(\varepsilon)
	\end{equation*}
	By the uniform Lipschitz condition in Condition~\ref{ass:g_lip_mono}, the above equality implies for all $\varepsilon\in[0,1]$
	\begin{equation}\label{eq:bound_h}
		|\hat{h}(\varepsilon) - h^\star(\varepsilon)| \lesssim |\hat\delta_\varepsilon(y)| + |\hat\delta_{0.5}(y)|.
	\end{equation}
	
	On the other hand, note from \eqref{eq:equal_quantile} and Taylor expansion that
	\begin{equation*}
		\hat{g}(x+h^\star(\varepsilon)+\hat{h}(\varepsilon) - h^\star(\varepsilon)) = g^\star(x+h^\star(\varepsilon))+\hat\delta_\varepsilon(x) = \hat{g}(x+h^\star(\varepsilon))+\hat{g}'(x')(\hat{h}(\varepsilon) - h^\star(\varepsilon)),
	\end{equation*}
	where $x'=x+h^\star(\varepsilon)+t(\hat{h}(\varepsilon) - h^\star(\varepsilon))$ for some $t\in[0,1]$.
	This implies for all $x\in\cX$ and $\varepsilon\in[0,1]$
	\begin{equation}\label{eq:bound_g}
		|\hat{g}(x+h^\star(\varepsilon)) - g^\star(x+h^\star(\varepsilon))| \lesssim |\hat{h}(\varepsilon) - h^\star(\varepsilon)|+|\hat\delta_\varepsilon(x)|.
	\end{equation}
	
	By letting $n\to\infty$ in \eqref{eq:bound_h} and \eqref{eq:bound_g} and noting the uniform bound in \eqref{eq:bound_quantile_est_pf_general}, we arrive at the desired consistency results.
\end{proof}

\section{Extrapolability of quadratic pre-post-additive noise models}\label{app:pre_post_anm}
Consider the following quadratic model that allows for both pre- and post-additive noises, called a pre-post-additive noise model
\begin{equation*}
	Y = g_\theta(X+\eta)+\beta X + \xi,
\end{equation*}
where $\theta=(\theta_0,\theta_1,\theta_2)\in\bbR^3, \beta\in\bbR$ we take a quadratic function $g_\theta(x)=\theta_0+\theta_1x+\theta_2x^2$, and $\eta$ and $\xi$ have symmetric distributions and unbounded supports.

The following proposition shows the distributional extrapolability of the quadratic pre-post-ANM model class.
\begin{proposition}\label{prop:extra_pre_post_anm}
	The class of quadratic pre-post-ANM models $$\{g_\theta(x+\eta)+\beta x + \xi:\theta\in\bbR^3, \beta\in\bbR, \text{$\eta$ and $\xi$ have symmetric distributions and unbounded supports}\}$$ is distributionally extrapable. 
\end{proposition}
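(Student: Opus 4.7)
The plan mirrors Proposition~\ref{prop:identify_preanm}: first match enough conditional moments on $\cX$ to pin down the scalar parameters $\theta$ and $\beta$, then use the surviving distributional identity to force $F_\eta = F_{\eta'}$ and $F_\xi = F_{\xi'}$, and finally conclude that $Y|X=x$ has the same distribution for every $x\in\bbR$.

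Let $(\theta,\beta,\eta,\xi)$ and $(\theta',\beta',\eta',\xi')$ be two models inducing the same conditional distribution of $Y|X=x$ on $\cX$. Write $Y \overset{d}= \mu(x) + N(x)$ with $\mu(x) := \theta_0 + (\theta_1+\beta)x + \theta_2 x^2$ and $N(x) := (\theta_1+2\theta_2 x)\eta + \theta_2\eta^2 + \xi$. Symmetry yields $E[\eta]=E[\eta^3]=E[\xi]=0$ and $\text{Cov}(\eta,\eta^2)=0$, so the conditional mean $\mu(x)+\theta_2 E[\eta^2]$ and the conditional variance $(\theta_1+2\theta_2 x)^2\,\text{Var}(\eta)+\theta_2^2\,\text{Var}(\eta^2)+\text{Var}(\xi)$ are quadratic polynomials in $x$. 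Equating their six coefficients on $\cX$ forces, in the nonlinear case $\theta_2\neq 0$, $\theta=\theta'$, $\beta=\beta'$, and $\text{Var}(\eta)=\text{Var}(\eta')$. The linear case $\theta_2=0$ collapses to the additive-noise linear model $Y=\theta_0+(\theta_1+\beta)X+(\theta_1\eta+\xi)$, for which identifiability of the intercept, slope, and noise distribution on any non-singleton $\cX$ is immediate. After the parameters are fixed, the remaining identity reads
\[
c\,\eta+\theta_2\eta^2+\xi\overset{d}= c\,\eta'+\theta_2\eta'^2+\xi',\qquad c\in C:=\{\theta_1+2\theta_2 x:x\in\cX\},
\]
an open interval. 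Passing to characteristic functions and setting $s_1:=tc$, $s_2:=t\theta_2$ gives
\[
\phi_\xi(s_2/\theta_2)\,E\bigl[e^{is_1\eta+is_2\eta^2}\bigr]=\phi_{\xi'}(s_2/\theta_2)\,E\bigl[e^{is_1\eta'+is_2\eta'^2}\bigr]
\]
on an open two-dimensional wedge $\Omega\subset\bbR^2$. Freezing $s_2\neq 0$ reduces each side to a Fourier transform in $s_1$ of the complex measure $e^{is_2 y^2}\,dF_\eta(y)$ (respectively $dF_{\eta'}(y)$) on an open interval of frequencies; letting $s_2$ also vary and combining, one extracts $F_\eta=F_{\eta'}$ and hence $\phi_\xi\equiv\phi_{\xi'}$, so the conditional distribution of $Y|X=x$ coincides for every $x\in\bbR$ and $\cU(\delta)\equiv 0$.

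The main obstacle is the Fourier uniqueness step: without an exponential-moment assumption on $\eta$, the joint characteristic function of $(\eta,\eta^2)$ is not analytic in $(s_1,s_2)$, ruling out the naive analytic-continuation extension from $\Omega$ to $\bbR^2$. The proof therefore hinges on exploiting two structural features simultaneously: $(\eta,\eta^2)$ is supported on the one-dimensional parabola $\{(y,y^2):y\in\bbR\}$, and $\supp(\eta)$ is unbounded so that the oscillatory weight $e^{is_2 y^2}$ remains active at arbitrarily high spatial frequencies. Together these should make the collection $\{e^{is_1 y+is_2 y^2}:(s_1,s_2)\in\Omega\}$ separating on the space of finite complex measures on $\bbR$ via a Stone--Weierstrass-type argument applied to the algebra it generates, from which uniqueness of $F_\eta$ follows. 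Carrying this separation argument out rigorously is the technical crux of the proof.
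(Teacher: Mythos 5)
Your moment-matching stage reproduces the paper's argument essentially verbatim: expanding $g_\theta(x+\eta)+\beta x+\xi$ as $\theta_0+(\theta_1+\beta)x+\theta_2x^2+(\theta_1+2\theta_2x)\eta+\theta_2\eta^2+\xi$, taking the conditional mean and variance, and matching polynomial coefficients in $x$ over $\cX$ (symmetry killing $\bbE[\eta]$, $\bbE[\eta^3]$ and hence $\mathrm{Cov}(\eta,\eta^2)$) yields $\theta=\theta'$, $\beta=\beta'$ and $\var(\eta)=\var(\eta')$ exactly as in the paper. The gap is in the final stage. Having fixed the parameters, you arrive at the identity $\phi_\xi(t)\,\bbE[e^{itc\eta+it\theta_2\eta^2}]=\phi_{\xi'}(t)\,\bbE[e^{itc\eta'+it\theta_2\eta'^2}]$ for $t\in\bbR$ and $c$ ranging over an interval, and you explicitly leave open how to deduce $\eta\overset{d}=\eta'$ and $\xi\overset{d}=\xi'$ from it, calling this ``the technical crux'' and gesturing at a Stone--Weierstrass separation argument. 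That sketch does not close the gap: the family $\{e^{is_1y+is_2y^2}:(s_1,s_2)\in\Omega\}$ restricted to the wedge $\Omega$ is not closed under conjugation or multiplication within $\Omega$, and the identity you hold is linear in the test function, so it does not propagate to the algebra generated by these exponentials; and, as you note yourself, without exponential-moment assumptions there is no analytic continuation off $\Omega$. A proof proposal whose decisive step is announced but not executed is incomplete.

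For comparison, the paper closes this step differently: it writes the characteristic function of the noise part as the product $\bbE[e^{it(\theta_1+2\theta_2x)\eta}]\,\bbE[e^{it\theta_2\eta^2}]\,\bbE[e^{it\xi}]$, cancels the $x$-independent factors between two support points $x,x'\in\cX$ to isolate $\bbE[e^{it2\theta_2(x-x')\eta}]=\bbE[e^{it2\theta_2(x-x')\eta'}]$, and lets $t$ and $x-x'$ vary to conclude $\eta\overset{d}=\eta'$ and then $\xi\overset{d}=\xi'$. Your reluctance to factor the joint characteristic function of $(\eta,\eta^2)$ is a legitimate concern---$\eta$ and $\eta^2$ are dependent, so that factorisation itself needs justification---but the net effect is that your write-up identifies the difficulty without resolving it. To complete the argument you must either justify a cancellation of the $\eta^2$- and $\xi$-contributions across distinct $x\in\cX$, or prove the missing uniqueness statement for measures tested against $\{e^{is_1y+is_2y^2}\}$ with $(s_1,s_2)$ confined to the wedge.
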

\begin{proof}[Proof of Proposition~\ref{prop:extra_pre_post_anm}]
By Definition~\ref{def:extra_dist}, it suffices to show for any $\theta,\beta,\eta,\xi,\theta',\beta',\eta',\xi'$ such that 
	\begin{equation}\label{eq:quad_pre_post_anm_dist_equal}
		g_\theta(x+\eta)+\beta x+\xi\overset{d}=g_{\theta'}(x+\eta')+\beta' x+\xi'
	\end{equation}
	for $x\in\cX$, it also holds that $g_\theta(x+\eta)+\beta x+\xi\overset{d}=g_{\theta'}(x+\eta')+\beta' x+\xi'$ for all $x\in\bbR$. 

In particular for the quadratic model class, it further suffices to show for model components that satisfy \eqref{eq:quad_pre_post_anm_dist_equal}, it holds that
\begin{equation*}
	\theta=\theta',\ \beta=\beta',\ \eta\overset{d}=\eta',\ \xi\overset{d}=\xi'.
\end{equation*}

Note that
\begin{equation*}
	g_\theta(x+\eta)+\beta x+\xi = \theta_0+(\theta_1+\beta)x + \theta_2 x^2 + \theta_1\eta + 2\theta_2 x\eta + \theta_2\eta^2+\xi.
\end{equation*}

Taking the mean of both sides of \eqref{eq:quad_pre_post_anm_dist_equal} yields
\begin{equation*}
	\theta_0+(\theta_1+\beta)x + \theta_2 x^2 +\theta_2\bbE[\eta^2] = \theta'_0+(\theta'_1+\beta')x + \theta'_2 x^2 +\theta'_2\bbE[\eta^2]
\end{equation*}
for all $x\in\cX$, thus indicating $\theta_1+\beta = \theta_1'+\beta'$ and $\theta_2 = \theta_2'$. 

Taking the variance of both sides of \eqref{eq:quad_pre_post_anm_dist_equal} yields
\begin{equation*}
	(\theta_1+2\theta_2x)^2\var(\eta)+\theta_2^2\var(\eta^2)+\var(\xi) = (\theta'_1+2\theta'_2x)^2\var(\eta')+{\theta_2'}^2\var({\eta'}^2)+\var(\xi')
\end{equation*}
for all $x\in\cX$, thus indicating $\theta_1=\theta_1'$ and $\var(\eta)=\var(\eta')$, which according to the above relations, further implies $\beta=\beta'$ and $\theta_0=\theta_0'$. 

Now, consider the characteristic function of both sides of \eqref{eq:quad_pre_post_anm_dist_equal}; we have for all $t\in\bbR$
\begin{equation*}
	\bbE[e^{it(\theta_1+2\theta_2x)\eta}]\bbE[e^{it\theta_2\eta^2}]\bbE[e^{it\xi}] = \bbE[e^{it(\theta_1+2\theta_2x)\eta'}]\bbE[e^{it\theta_2{\eta'}^2}]\bbE[e^{it\xi'}]
\end{equation*}
holds for all $x\in\cX$. 
Then we know
\begin{equation*}
	\bbE[e^{it2\theta_2(x-x')\eta}] = \bbE[e^{it2\theta_2(x-x')\eta'}]
\end{equation*}
for all $x,x'\in\cX$. Thus, we have $\eta\overset{d}=\eta'$ and also $\xi\overset{d}=\xi'$, which concludes the proof.
\end{proof}

\section{Engression with other loss functions}\label{app:losses}
As noted in Section~\ref{sec:engression}, the choices of the loss function in engression can be rather flexible and the theoretical results in Section~\ref{sec:theory_extra} always hold. In this section, we discuss some alternative choices, including the KL divergence and the kernel score~\citep{dawid2007geometry} or the MMD distance~\citep{gretton2012kernel,sejdinovic2013equivalence} with several common kernels, and investigate their empirical performance in simulated settings.

\subsection{Kernel/MMD loss}
A bivariate function $k:\bbR\times\bbR\to\bbR$ is said to be a positive definite kernel if it is symmetric in its arguments and $\sum_{i=1}^n\sum_{j=1}^na_ia_jk(z_i,z_j)\ge0$ for all $n\in\mathbb{N}_+$, $\sum_{i=1}^na_i=0$, and all $z_i\in\bbR$.  
Given a distribution $P$, an observation $z$, and a kernel $k$, the kernel score is defined as
\begin{equation*}
	S_k(P,z) = \bbE_{Z\sim P}[k(Z,z)]-\frac{1}{2}\bbE_{Z,Z'}[k(Z,Z')],
\end{equation*}
where $Z,Z'$ are two i.i.d.\ draws from $P$ in the second term. If $k$ is a characteristic kernel, then $S_k$ is a strictly proper scoring rule \citep{steinwart2021strictly}. The associated distance function for two distributions $P$ and $P'$ is the squared MMD distance~\citep{gretton2012kernel}
\begin{equation*}
	\mathrm{MMD}^2_k(P,P')=\bbE[k(Z,\tilde{Z})]-2\bbE[k(Z,Z')]+\bbE[k(Z',\tilde{Z}')],
\end{equation*}
where $Z$ and $\tilde{Z}$ are two independent draws from $P$, and $Z'$ and $\tilde{Z}'$ are independent draws from $P'$. 

The energy score is a special case of the kernel score with $k(z,z')=-\|z-z'\|$. Other commonly used characteristic kernels include the Gaussian kernel $k(z,z')=\exp(-\|z-z'\|^2/2\sigma)$, the Laplace kernel $k(z,z')=\exp(-\|z-z'\|/\sigma)$, and the inverse multiquadric kernel $k(z,z')=1/\sqrt{\|z-z'\|^2+c}$. In our experiments, we use the default hyperparameter values of $\sigma=1$ for Gaussian and Laplace kernels and $c=1$ for the inverse multiquadric kernel.

Now, define the loss function based on the (negative) kernel score as 
\begin{align*}
	\cL_k(P_\gen(y|x);\ptr(y|x)) =\ &\bbE_{Y\sim \ptr(y|x)}[-S_k(P_\gen(y|x),Y)]\\
	 =\ &\bbE_{Y\sim \ptr(y|x)}[-\bbE_{\varepsilon}[k(\gen(x,\varepsilon),Y)]+\frac{1}{2}\bbE_{\varepsilon,\varepsilon'}[k(\gen(x,\varepsilon),\gen(x,\varepsilon'))]],
\end{align*}
where $\varepsilon,\varepsilon'$ are two i.i.d.\ draws from $\mathrm{Unif}[0,1]$. 
We know if $k$ is a characteristic kernel, the above kernel loss satisfies the characterisation property and hence can be used as the loss function for engression. 
The population engression with the kernel loss is 
\begin{equation*}
	\min_{\gen\in\preanm}\cL_k(P_\gen(y|x);\ptr(y|x)).
\end{equation*}

\subsection{KL loss}
For two probability densities $p,p'$, the KL divergence between $p,p'$ is defined as
\begin{equation*}
	\kl(p,p')=\int p(z)\log(p(z)/p'(z))dz.
\end{equation*}
We have $\kl(p,p')\geq0$ where the equality holds if and only if $p=p'$.

Let $p_{\mathrm{tr}}(y|x)$ and $p_\gen(y|x)$ be the conditional densities of the training and model distributions, respectively. Define the loss function based on the KL divergence as
\begin{align*}
	\cL_{\kl}(P_\gen(y|x);\ptr(y|x)) =\ &\kl(p_{\mathrm{tr}}(y|x),p_\gen(y|x))\\
	=\ &-\bbE_{p_{\mathrm{tr}}(x,y)}[\log(p_\gen(Y|X))]+\bbE_{p_{\mathrm{tr}}(x,y)}[\log(p_{\mathrm{tr}}(y|x))]
\end{align*}
where the second term is a constant, so minimising the KL loss is equivalent to the maximum likelihood estimation. The population version of engression with the KL loss is defined as
\begin{equation}\label{eq:kl_loss}
	\min_{\gen\in\preanm}\cL_\kl(p_\gen(y|x);p_{\mathrm{tr}}(y|x)).
\end{equation} 

However, when $\cG$ and $\cH$ are parameterised by neural networks, the KL loss does not have a closed form to be easily optimised. To this end, we adapt a generative adversarial network (GAN) algorithm from \citet{shen2022asymptotic} for solving \eqref{eq:kl_loss}. Let $\gen_\theta(x,\varepsilon)=g_{\theta_1}(Wx+h_{\theta_2}(\varepsilon))+\beta x$, also known as the generator in the generative models literature, where both $g$ and $h$ are parameterised by neural networks and $\theta=(\theta_1,\theta_2,\beta,W)$ collects all the parameters. We summarise the procedure of the GAN-based engression in Algorithm~\ref{alg:gan}, where in line 4, we solve an empirical nonlinear logistic regression among a class $\cD$ of discriminators; in practice, we use early stopping and only take a few gradient descent steps, as in \citet{goodfellow2014}. Under suitable conditions, the algorithm yields a consistent estimator for solving \eqref{eq:kl_loss} \citep{shen2022asymptotic}.

\begin{algorithm}
\DontPrintSemicolon
\KwInput{Sample $\{(x_i,y_i):i=1,\dots,n\}$, initial parameter $\theta_0$, learning rate $\eta$, number of iterations $T$}
\For{$t=0,1,2,\dots,T$}{
Sample $\varepsilon_i\sim\mathrm{Unif}[0,1]$, $i=1,\dots,n$\\
Generate $\hat{y}_i=\gen_{\theta_{t}}(x_i,\varepsilon_i)$ for $i=1,\dots,n$\\
${D}_t=\argmin_{D\in\cD}\big[\frac{1}{n}\sum_{i=1}^{n}\log(1+e^{-D(y_i)})+\frac{1}{n}\sum_{i=1}^{n}\log(1+e^{D(\hat{y}_i)})\big]$\\
$\theta_{t+1}=\theta_{t}+\frac{\eta}{n}\sum_{i=1}^{n}\exp(-{D}_t(\hat{y}_i))\nabla_\theta \gen_{\theta}(x_i,\epsilon_i)^\top|_{\theta=\theta_t} \nabla{{D}}_t(\hat{y}_i)$ 
}
\KwReturn{$\theta_T$}
\caption{GAN-based engression}
\label{alg:gan}
\end{algorithm}

\subsection{Empirical results}
In Figures~\ref{fig:simu_visual_app_softplus}-\ref{fig:simu_visual_app_log}, we show the results of engression with various loss functions for the simulation settings in Table~\ref{tab:simu_set}, Section~\ref{sec:simu}. Compared with engression with the energy loss adopted in the main text, in particular the results in Figure~\ref{fig:simu_visual}, we see that engression with other loss functions in general exhibit comparable performance in contrast to the significantly higher extrapolation uncertainty of baseline approaches. Indeed, we also observe some differences in terms of extrapolation performance. For example, GAN-based engression with the KL loss performs better than the other loss functions on \texttt{softplus} and \texttt{square}, while the Laplace kernel loss appears to be the most superior on \texttt{log}. These empirical observations suggest that the theoretical investigation of various loss functions and their impact on engression are worthy subjects for future research.

\begin{figure}
\centering
\begin{tabular}{@{}c@{}c@{}c@{}c@{}c@{}}
	&\small{Gaussian kernel} & \small{Laplace kernel} & \small{Inverse multiquadric kernel} & \small{KL GAN}\\
	\rotatebox[origin=c]{90}{\small{Conditional median}}&
	\includegraphics[align=c,width=0.24\textwidth]{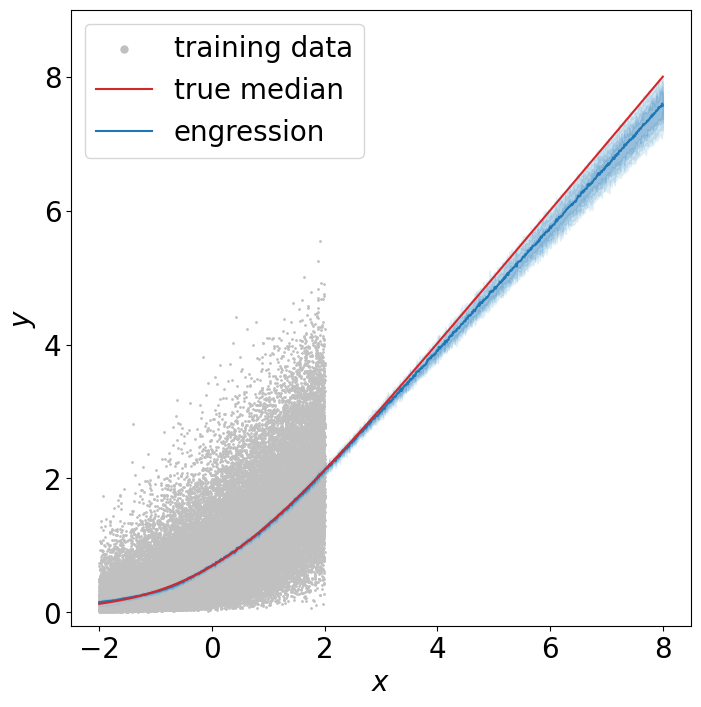} &
	\includegraphics[align=c,width=0.24\textwidth]{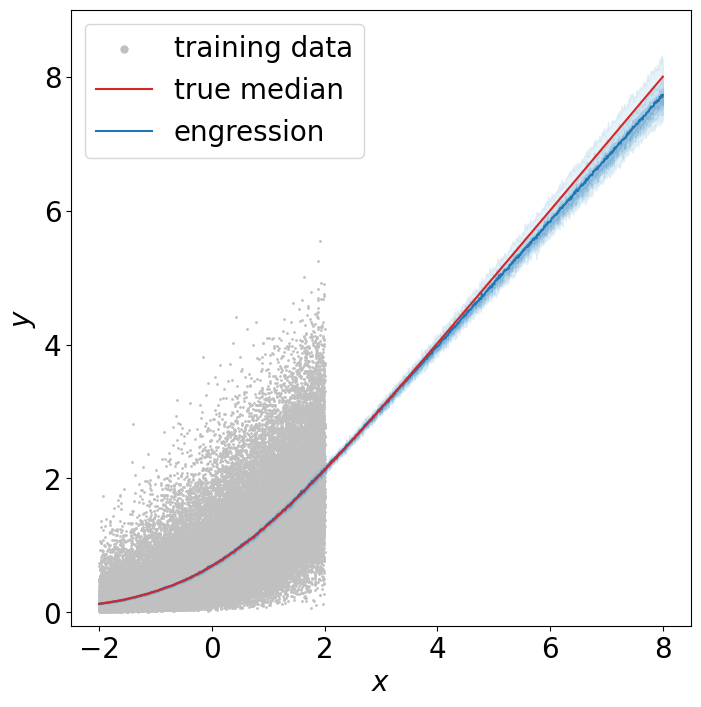} &
	\includegraphics[align=c,width=0.24\textwidth]{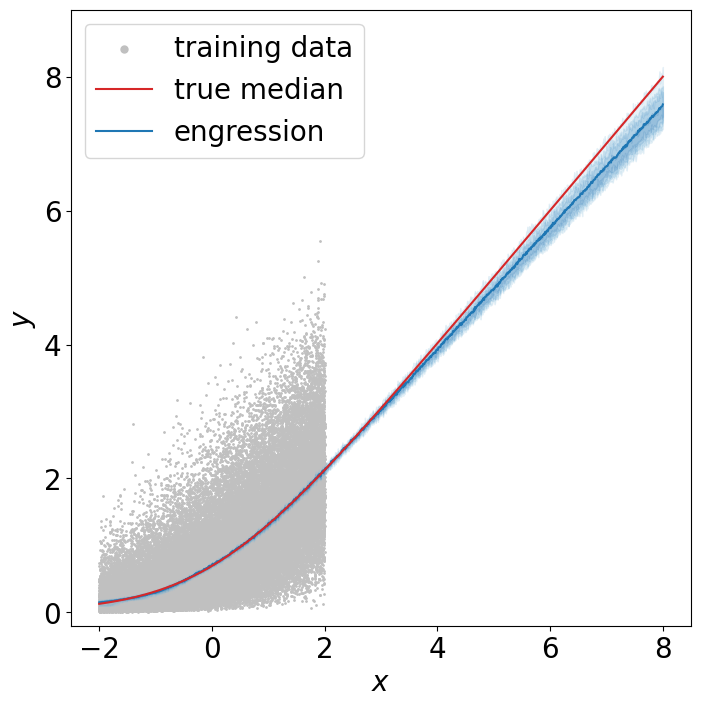} &
	\includegraphics[align=c,width=0.24\textwidth]{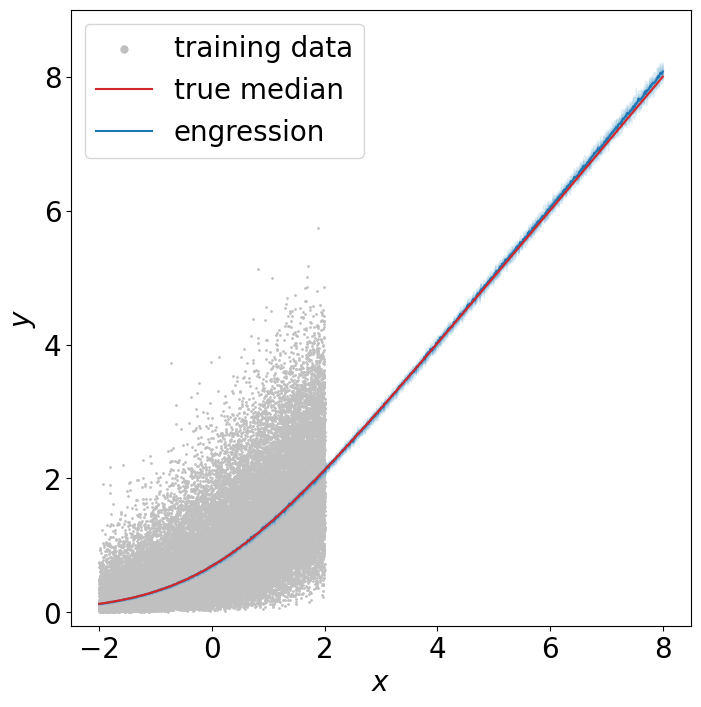}\\
	\rotatebox[origin=c]{90}{\small{Conditional mean}}&
	\includegraphics[align=c,width=0.24\textwidth]{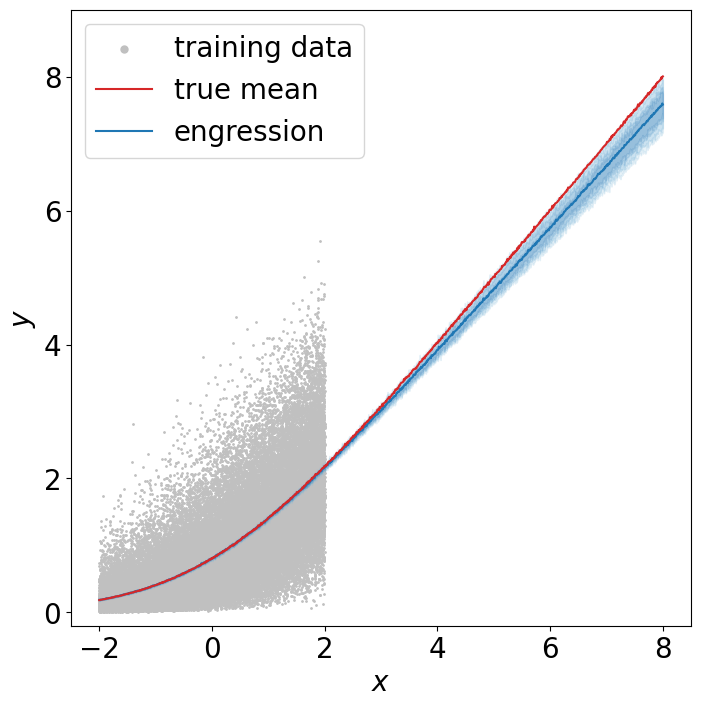} &
	\includegraphics[align=c,width=0.24\textwidth]{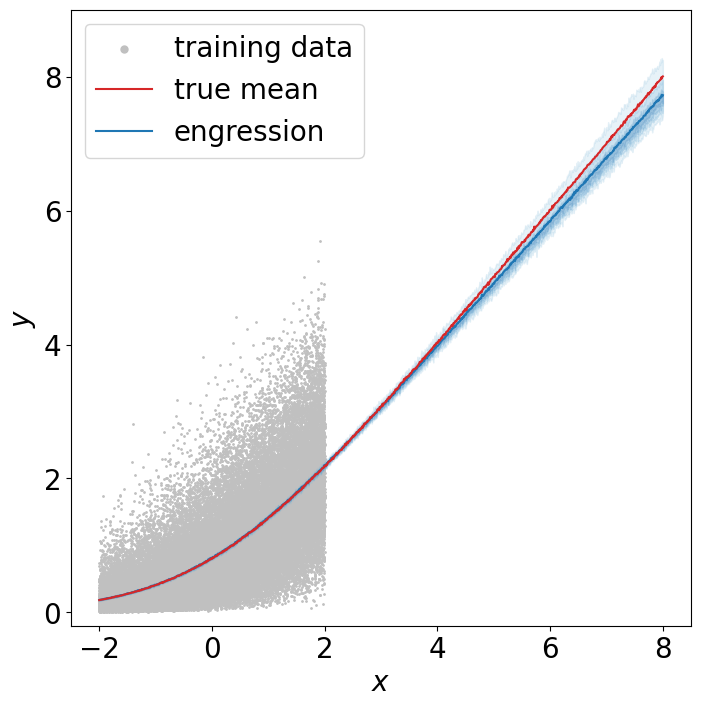} &
	\includegraphics[align=c,width=0.24\textwidth]{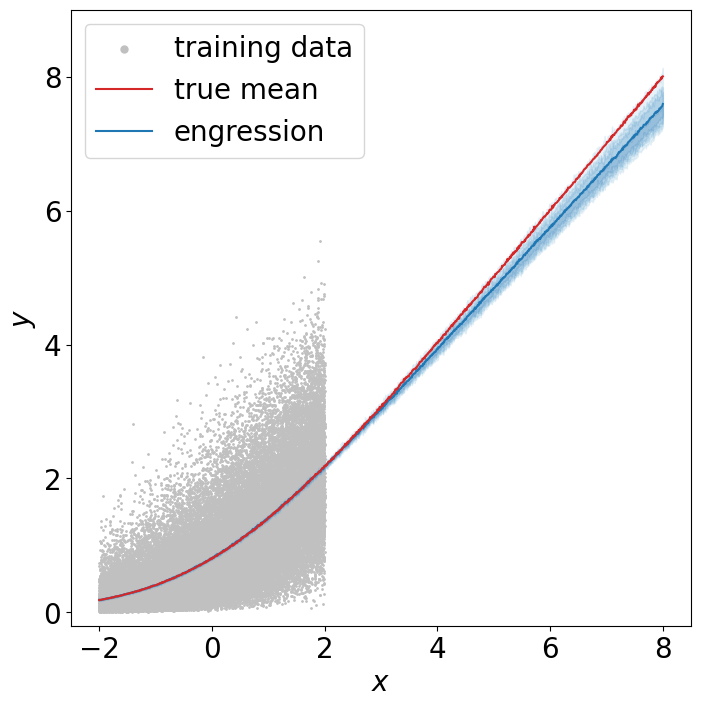} &
	\includegraphics[align=c,width=0.24\textwidth]{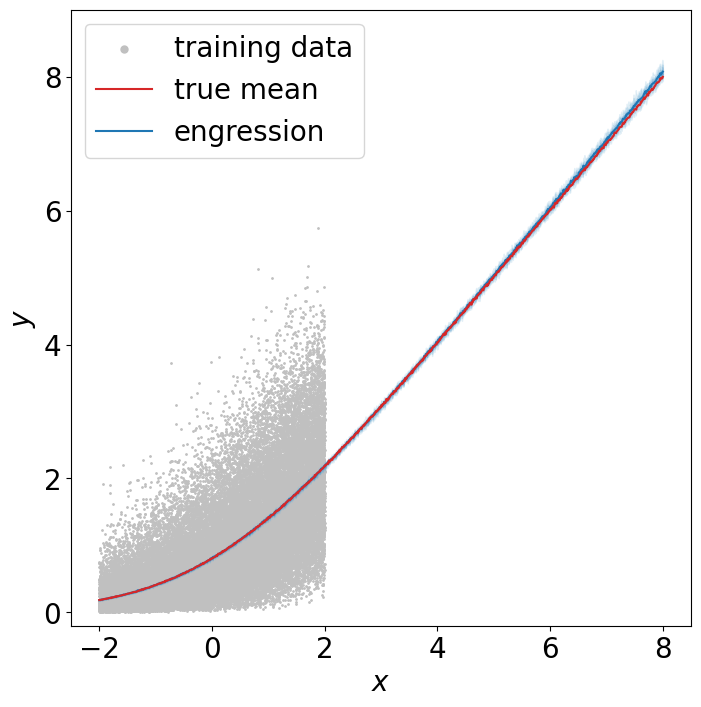}
\end{tabular}
\caption{Performance of engression with other loss functions on \texttt{softplus}. }
\label{fig:simu_visual_app_softplus}
\end{figure}

\begin{figure}
\centering
\begin{tabular}{@{}c@{}c@{}c@{}c@{}c@{}}
	&\small{Gaussian kernel} & \small{Laplace kernel} & \small{Inverse multiquadric kernel} & \small{KL GAN}\\
	\rotatebox[origin=c]{90}{\small{Conditional median}}&
	\includegraphics[align=c,width=0.24\textwidth]{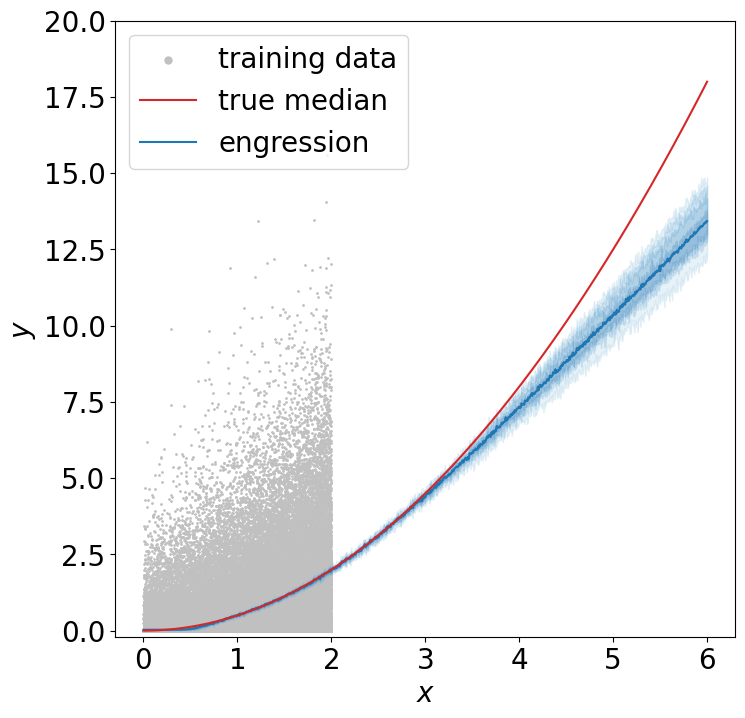} &
	\includegraphics[align=c,width=0.24\textwidth]{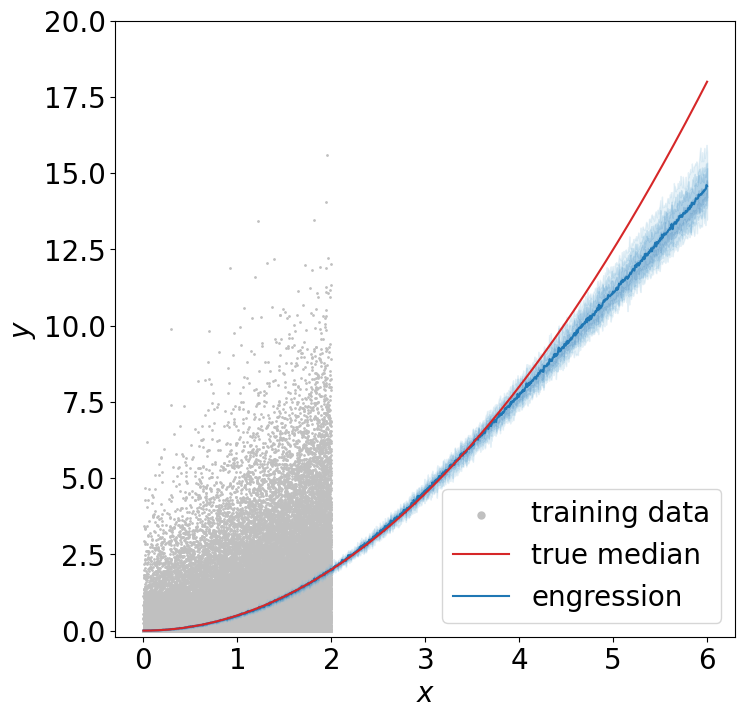} &
	\includegraphics[align=c,width=0.24\textwidth]{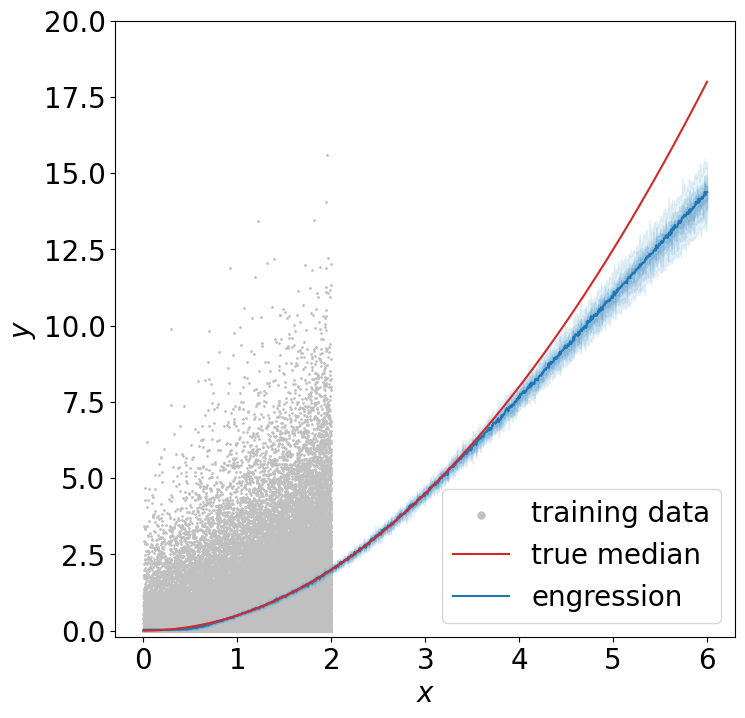} &
	\includegraphics[align=c,width=0.24\textwidth]{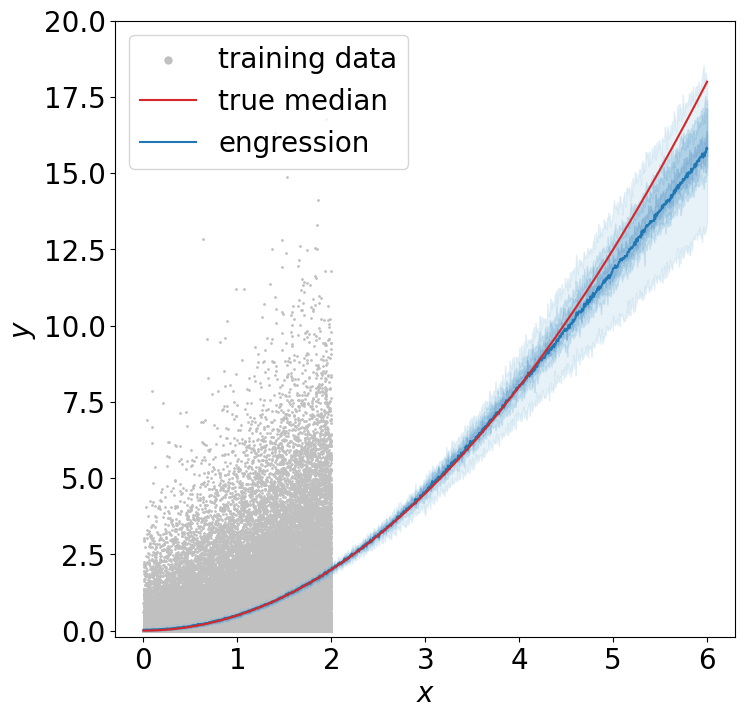}\\
	\rotatebox[origin=c]{90}{\small{Conditional mean}}&
	\includegraphics[align=c,width=0.24\textwidth]{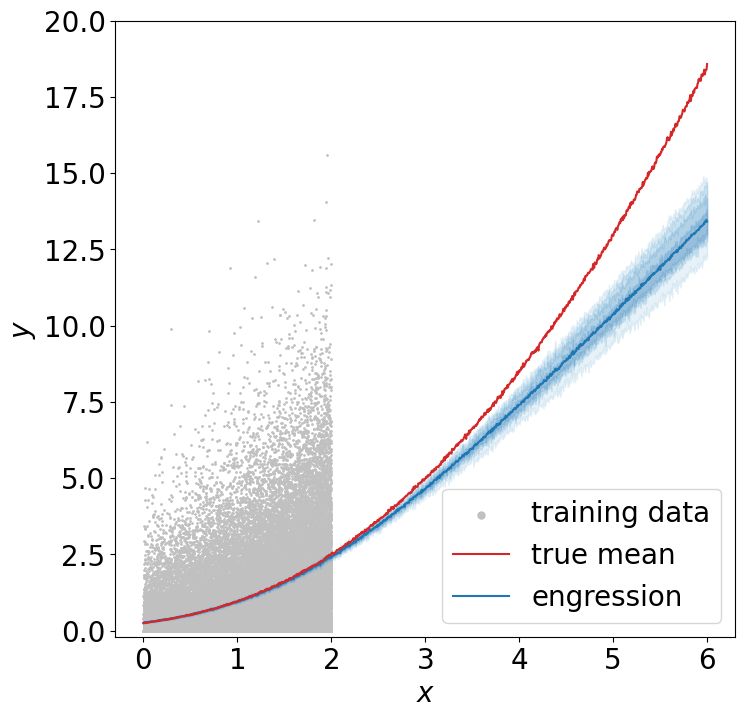} &
	\includegraphics[align=c,width=0.24\textwidth]{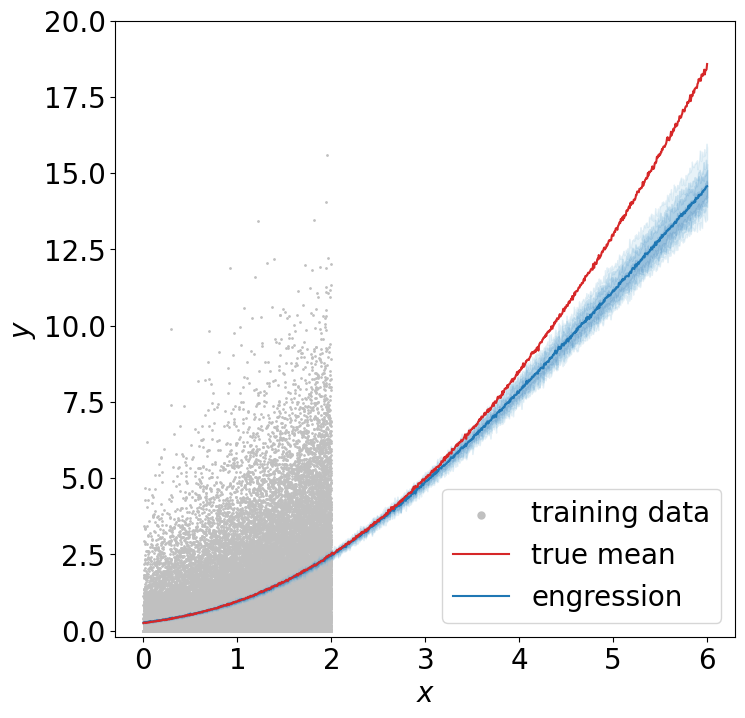} &
	\includegraphics[align=c,width=0.24\textwidth]{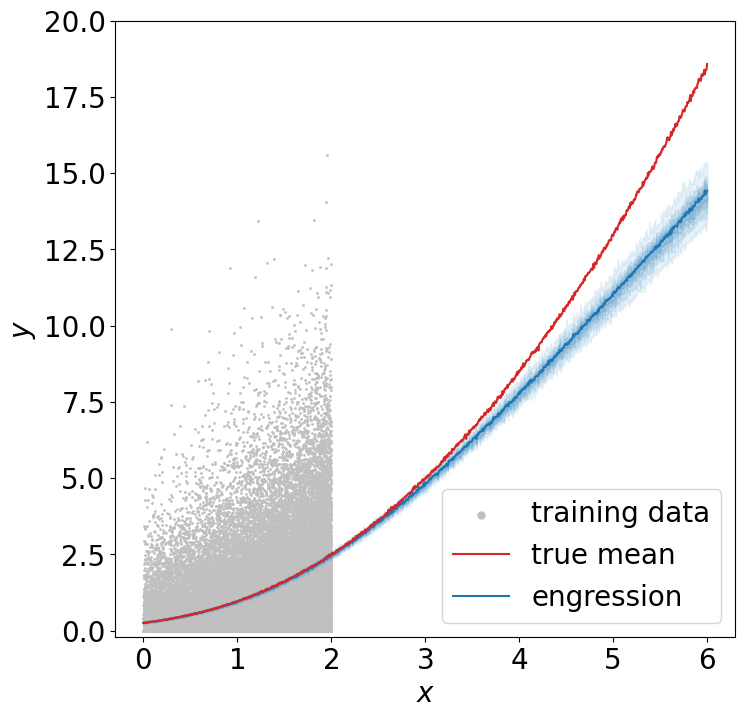} &
	\includegraphics[align=c,width=0.24\textwidth]{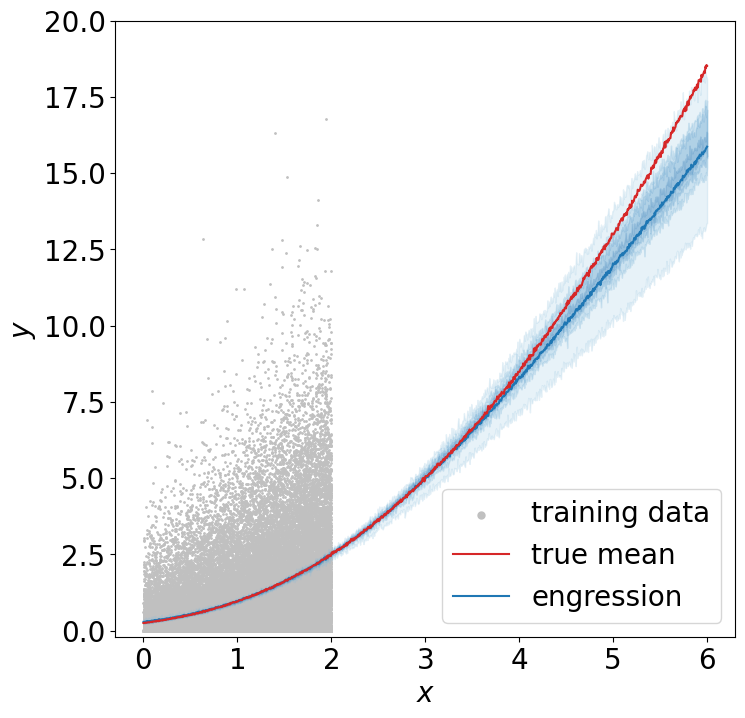}
\end{tabular}
\caption{Performance of engression with other loss functions on \texttt{square}. }
\label{fig:simu_visual_app_square}
\end{figure}

\begin{figure}
\centering
\begin{tabular}{@{}c@{}c@{}c@{}c@{}c@{}}
	&\small{Gaussian} & \small{Laplace} & \small{Inverse multiquadric} & \small{KL GAN}\\
	\rotatebox[origin=c]{90}{\small{Conditional median}}&
	\includegraphics[align=c,width=0.24\textwidth]{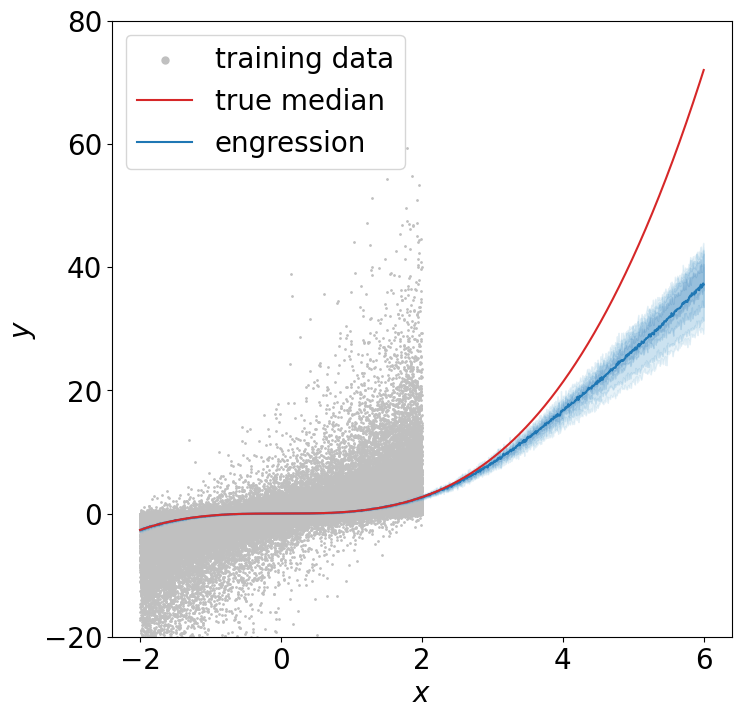} &
	\includegraphics[align=c,width=0.24\textwidth]{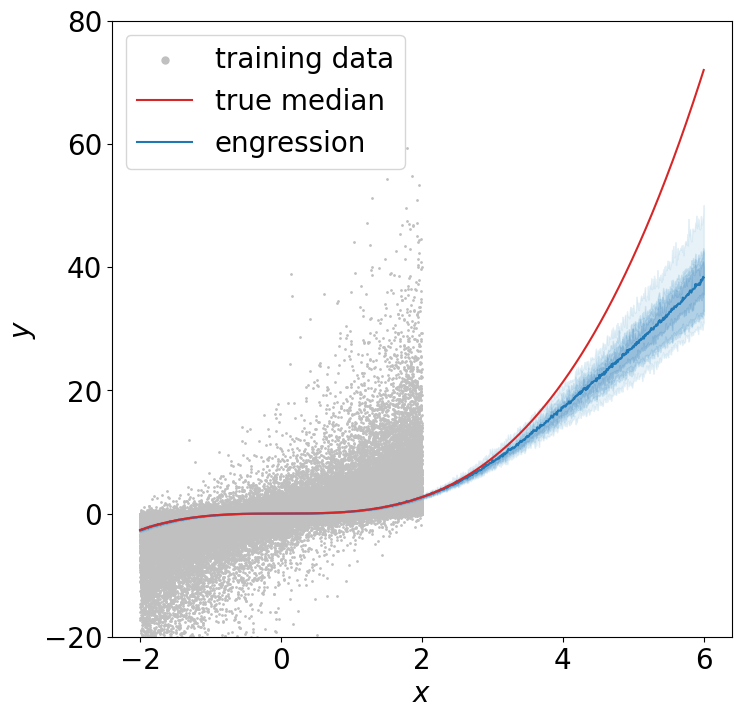} &
	\includegraphics[align=c,width=0.24\textwidth]{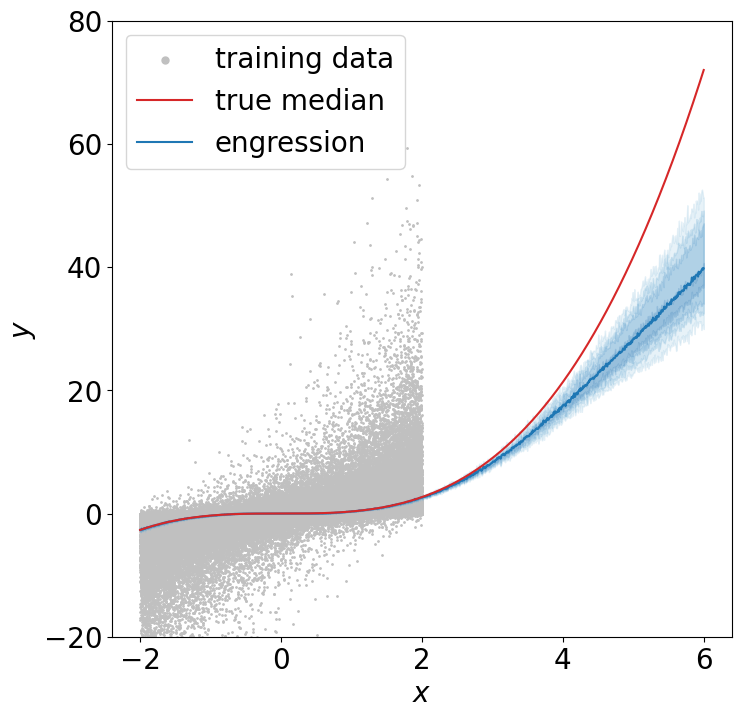} &
	\includegraphics[align=c,width=0.24\textwidth]{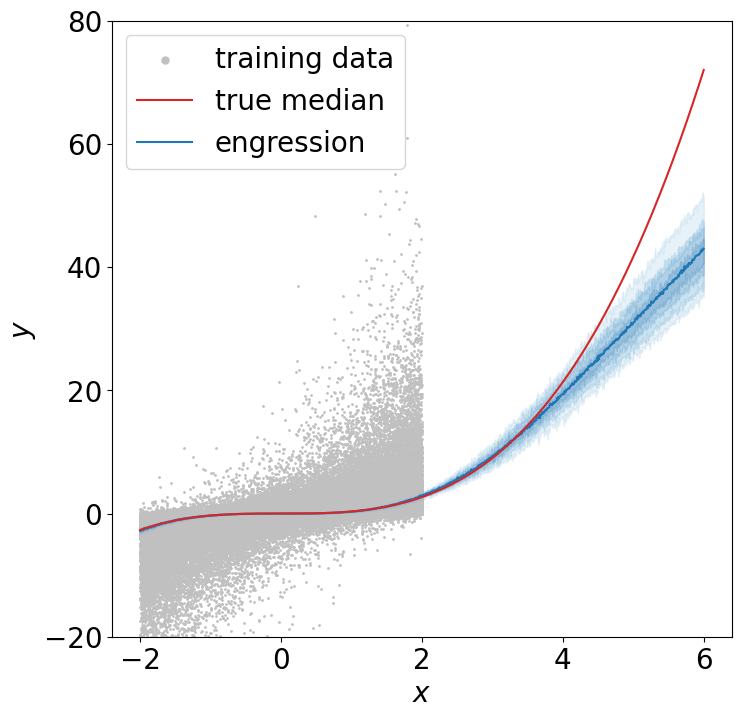}\\
	\rotatebox[origin=c]{90}{\small{Conditional mean}}&
	\includegraphics[align=c,width=0.24\textwidth]{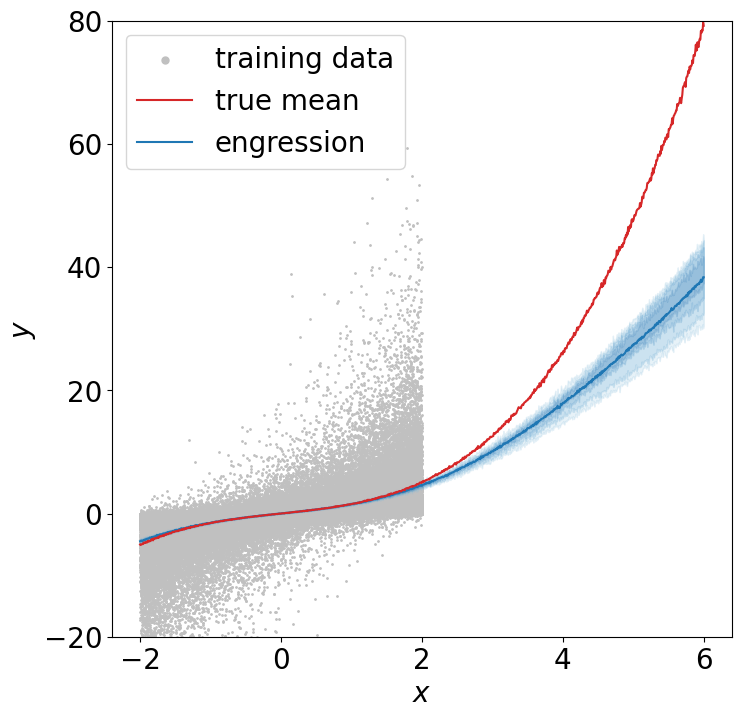} &
	\includegraphics[align=c,width=0.24\textwidth]{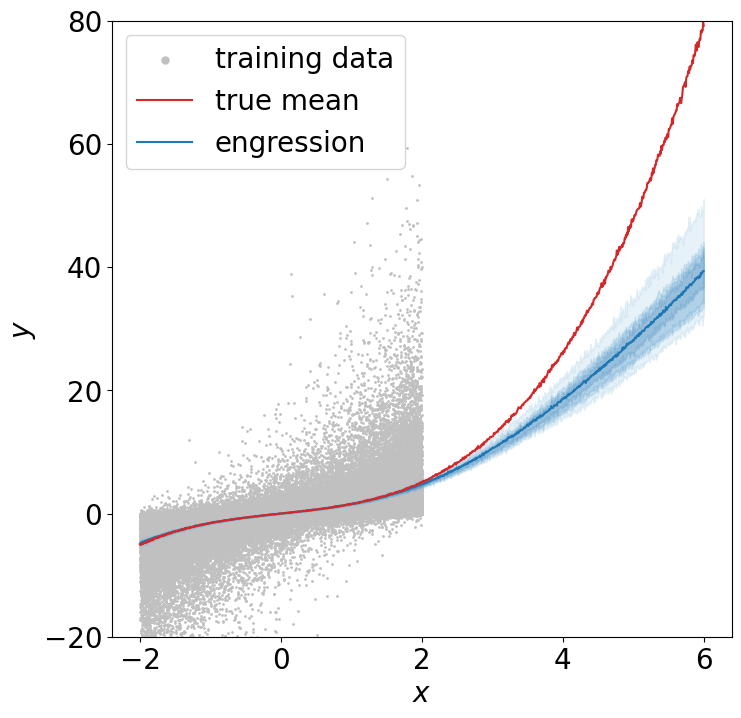} &
	\includegraphics[align=c,width=0.24\textwidth]{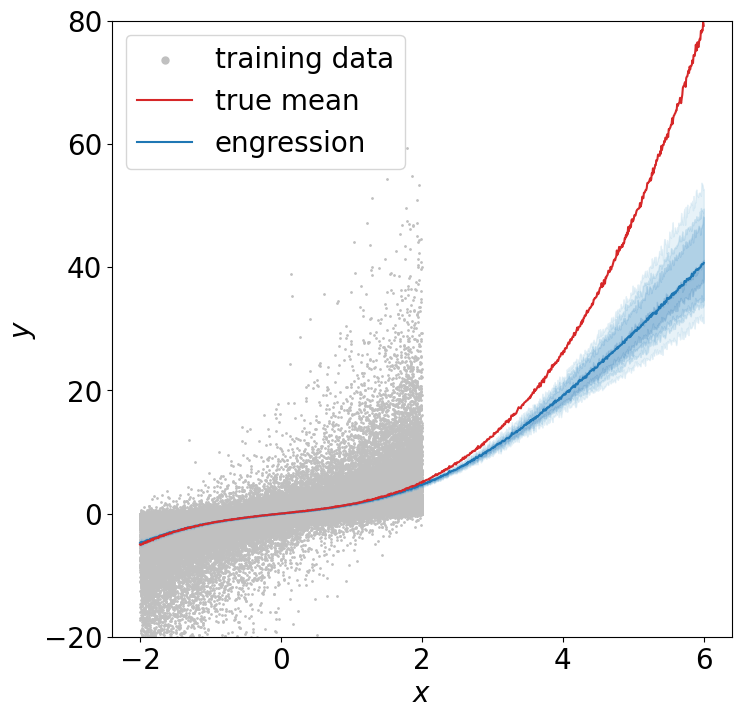} &
	\includegraphics[align=c,width=0.24\textwidth]{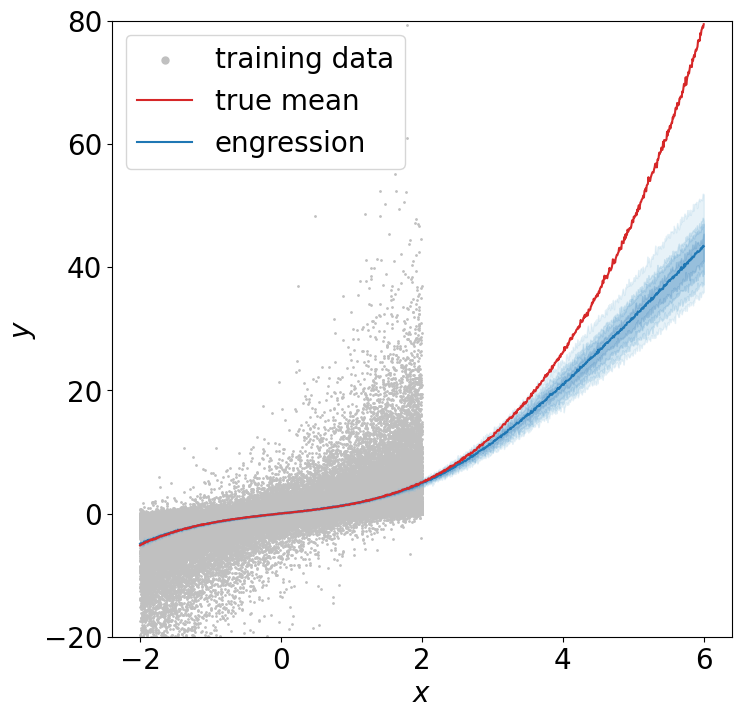}
\end{tabular}
\caption{Performance of engression with other loss functions on \texttt{cubic}. }
\label{fig:simu_visual_app_cubic}
\end{figure}

\begin{figure}
\centering
\begin{tabular}{@{}c@{}c@{}c@{}c@{}c@{}}
	&\small{Gaussian} & \small{Laplace} & \small{Inverse multiquadric} & \small{KL GAN}\\
	\rotatebox[origin=c]{90}{\small{Conditional median}}&
	\includegraphics[align=c,width=0.24\textwidth]{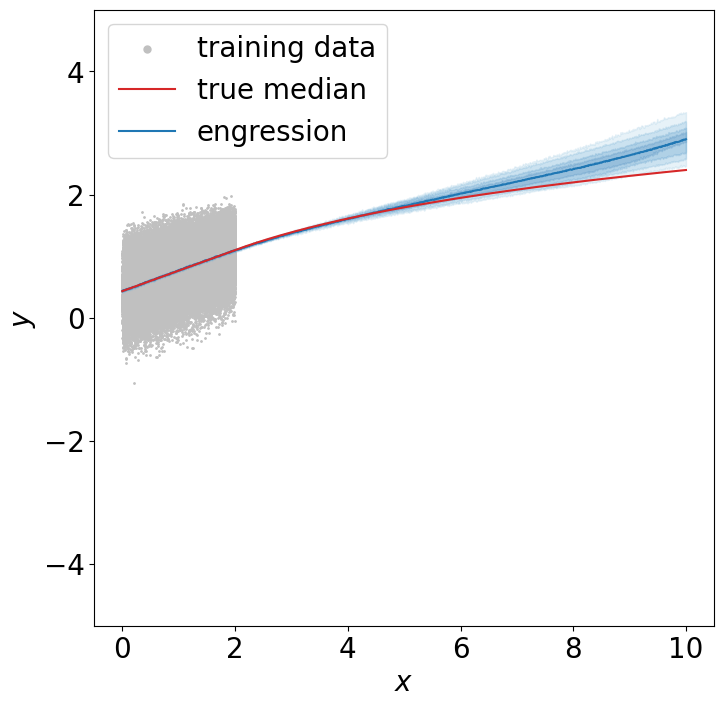} &
	\includegraphics[align=c,width=0.24\textwidth]{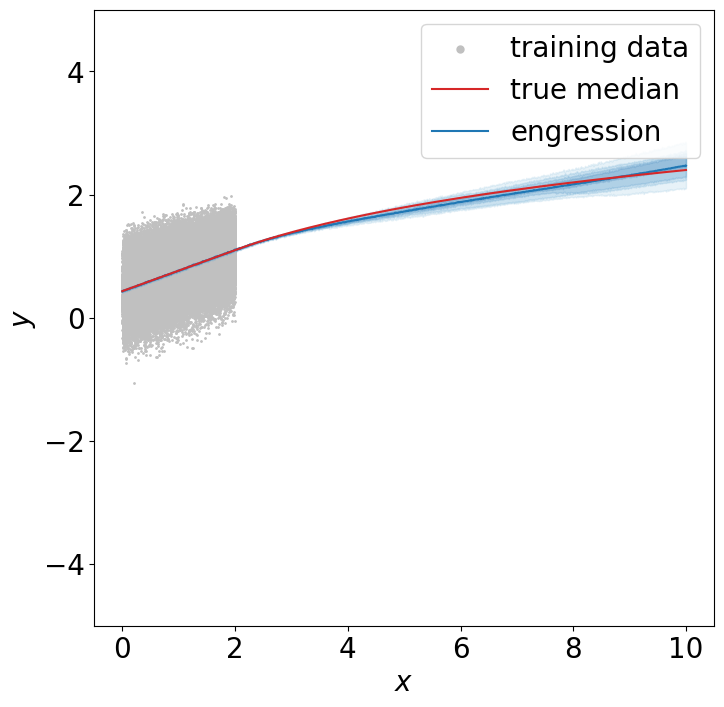} &
	\includegraphics[align=c,width=0.24\textwidth]{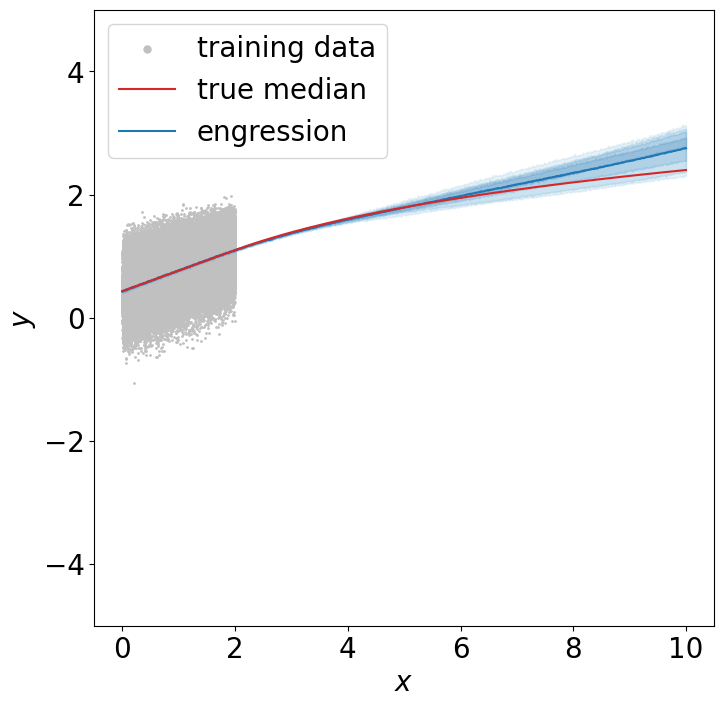} &
	\includegraphics[align=c,width=0.24\textwidth]{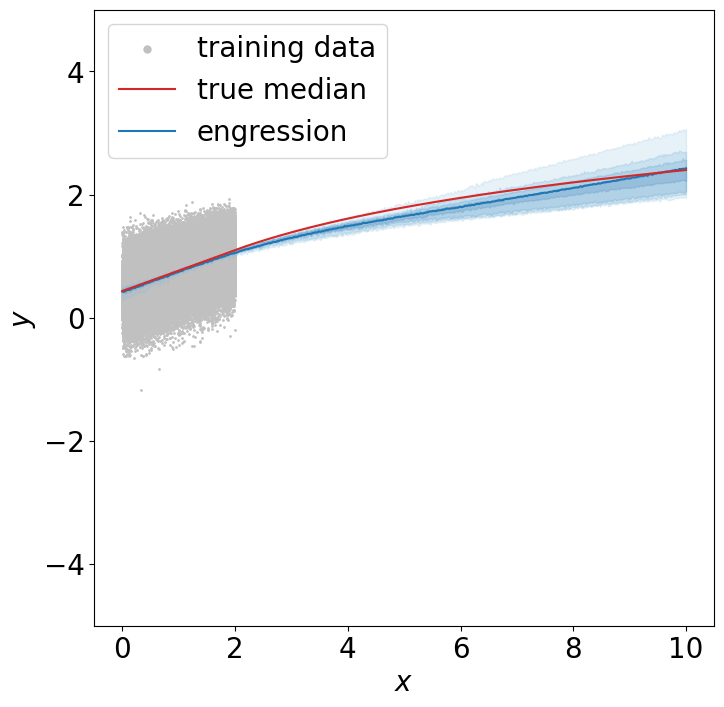}\\
	\rotatebox[origin=c]{90}{\small{Conditional mean}}&
	\includegraphics[align=c,width=0.24\textwidth]{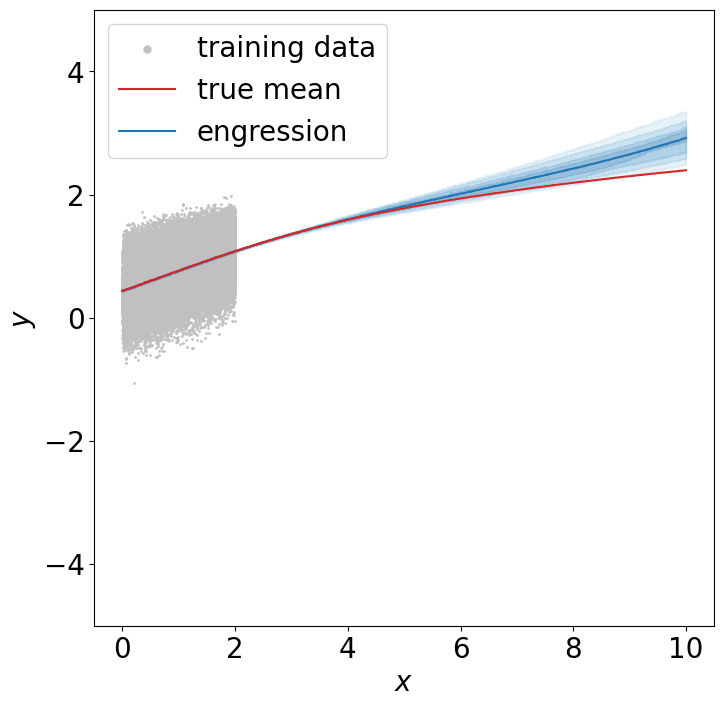} &
	\includegraphics[align=c,width=0.24\textwidth]{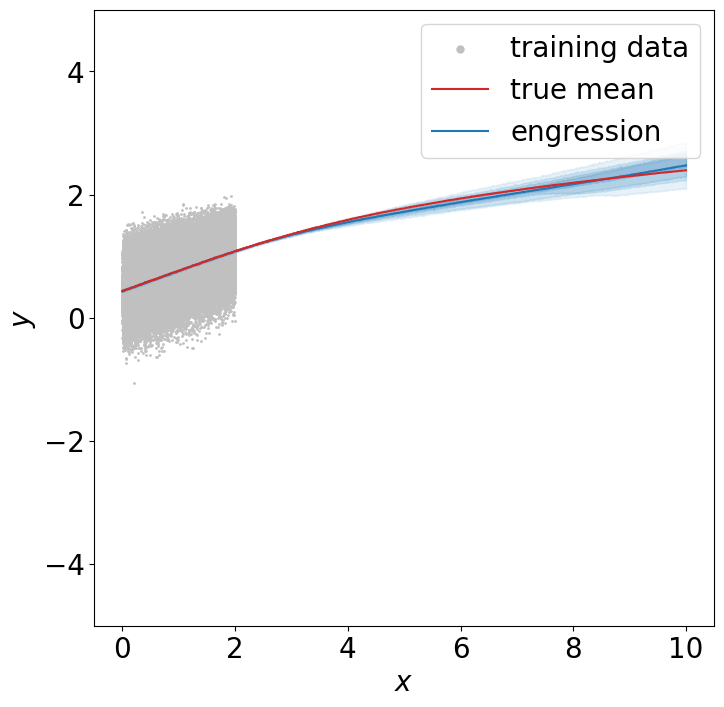} &
	\includegraphics[align=c,width=0.24\textwidth]{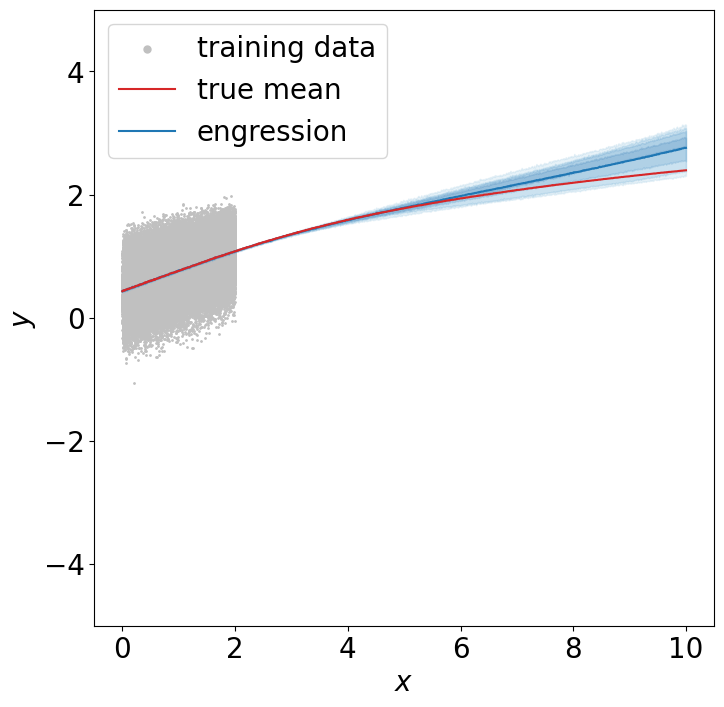} &
	\includegraphics[align=c,width=0.24\textwidth]{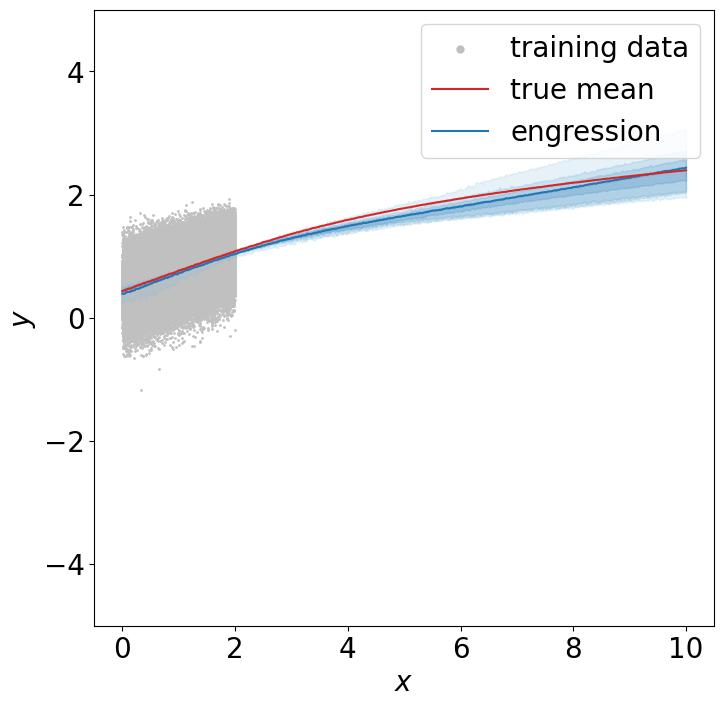}
\end{tabular}
\caption{Performance of engression with other loss functions on \texttt{log}. }
\label{fig:simu_visual_app_log}
\end{figure}

\section{Experimental details}\label{app:exp_detail}
We first describe the setup that we adopt generally across all experiments. For estimating the energy loss, at each gradient descent step, we use two generated samples for each observation of the covariate. We generate new samples of $\varepsilon$ for different steps, leading to a large generated sample size $m$ in total in the empirical loss in \eqref{eq:eng_emp}. Regarding the neural network architecture, we always adopt multilayer perceptrons (MLPs). In Figure~\ref{fig:nn_archi}, we illustrate the structure for one layer that we typically adopt for all input and intermediate layers (if the NN contains more than two layers); the output layer is always a single dense (fully-connected) layer. For $L_1$ and $L_2$ regression, we use deterministic networks, where each layer follows the same architecture in Figure~\ref{fig:nn_archi} without concatenating any noise. For optimisation, we adopt the Adam optimiser~\citep{kingma2014adam} with default values for the beta parameters on the full batch of data in the experiments reported in this paper, while mini-batch gradient descent is implemented as well in our software. Throughout all experiments, we always keep the same NN architecture and optimisation hyperparameter for engression and regression. For engression, we typically set the noise dimension to $d_n=100$ unless stated otherwise. 


\begin{figure}
\centering
\includegraphics[page=1, clip, trim=20cm 12.5cm 21cm 12.5cm, width=0.6\textwidth]{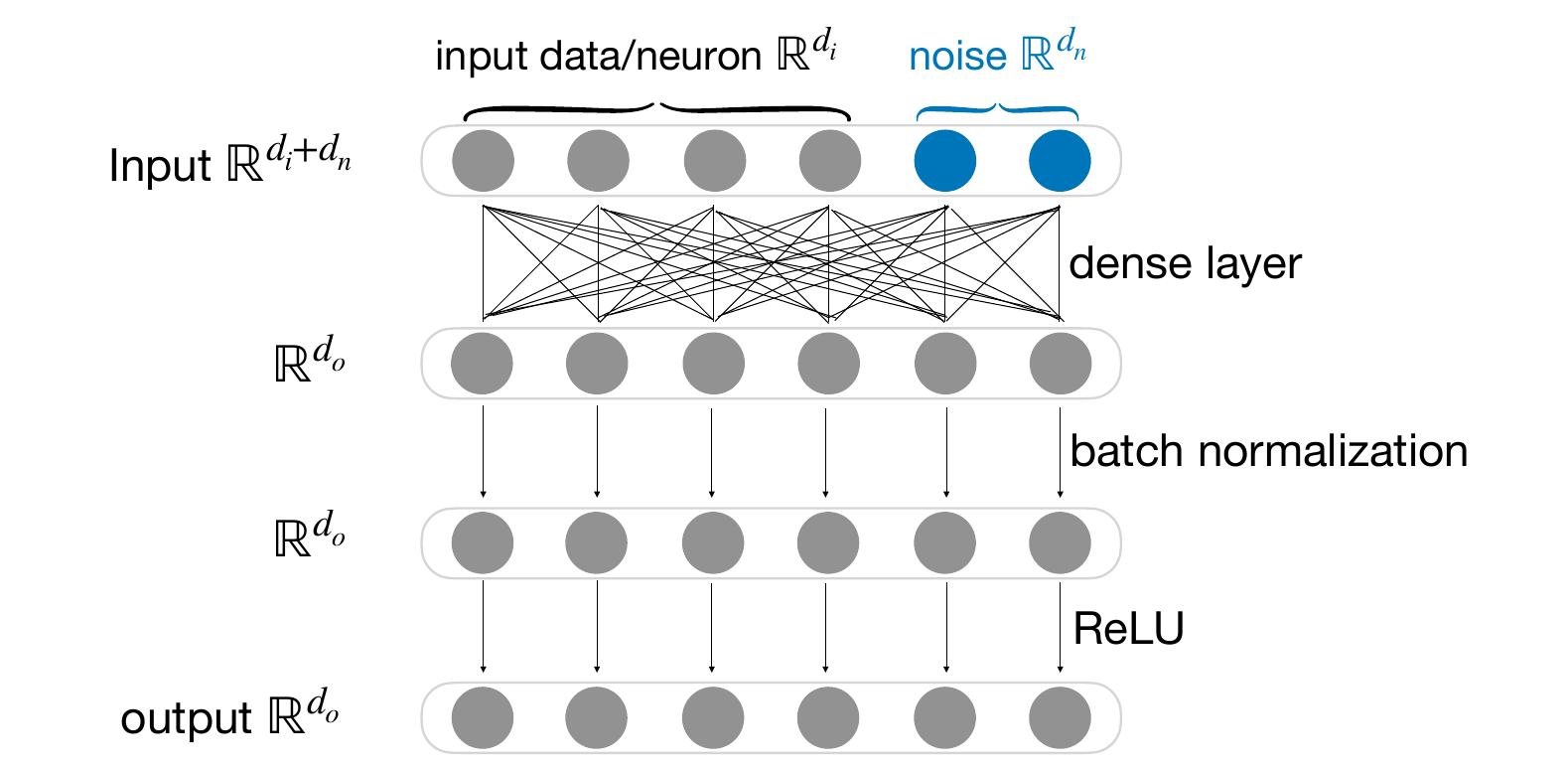}
\caption{Architecture for one layer with input dimension $d_i$, noise dimension $d_n$ and output dimension $d_o$.}\label{fig:nn_archi}
\end{figure}

\subsection{Simulations}
For \texttt{softplus} and \texttt{square}, we simulate 50k training data and use NNs with one layer and 100 hidden neurons (i.e.\ one input layer as shown in Figure~\ref{fig:nn_archi} followed by a dense output layer). For \texttt{cubic} and \texttt{log}, we simulate 100k training data and use NNs with 3 layers and 100 hidden neurons each.  
We adopt the Adam optimiser with a learning rate of $5\times10^{-2}$ and train all the models for 10k iterations.

For kernel losses, we use exactly the same training setups as with the energy loss. For GAN, for numerical stability, we use a learning rate $1\times10^{-3}$ for the model (generator) parameter $\theta$; at each gradient step of the pre-ANM, the discriminator is updated for 10 steps using the Adam optimiser with a learning rate of $5\times10^{-3}$. The discriminator is an MLP with exactly the same architecture as the generator except no noise. All models are trained on the full batch for 10k iterations.

\subsection{Real data}

\subsubsection{Data sets and preprocessing}

We adopt 6 real, public data sets from various domains:
\begin{itemize}
	\item \textbf{GCM}\quad The GCM data set consists of 13,590 observations for three variables: daily global mean temperature (in Kelvin) obtained from the CanESM2 global circulation climate (GCM) model \citep{kushner2018canadian} that covers the years 1950-2100 and regridded to a regular 2.5 degree-grid, the global annual mean radiation anomaly (in $W/m^2$) that summarises the Earth's energy imbalance (e.g.\ from changing CO$_2$ concentrations), and the time stamp (integer). We consider predicting the temperature with either the radiation or time and the reverse direction. 
	\item \textbf{Abrupt4x}\quad The Abrupt4x data set contains temperature data from the CESM2 global circulation climate model \citep{danabasoglu2020community} that spans 999 years. At the beginning, the CO$_2$ concentration in the atmosphere is multiplied by a factor of 4 compared to a pre-industrial level and then the climate slowly adjusts to a new and warmer equilibrium afterwards. We take the global mean temperature either monthly (using all years) or daily (using the first 209 years due to the huge sample size). We use the time stamp to predict either the monthly or daily temperature and predict the time using the temperatures.
	\item \textbf{NHANES}\quad National Health and Nutrition Examination Survey (NHANES) is a nationally representative study conducted by the US Centers for Disease Control and Prevention (CDC) to assess the health and nutritional status of the United States population. In total, there were 20,470 study participants who participated in the NHANES 2003-2006 study. The original data set is obtained from the R package rnhanesdata~\citep{leroux2019organizing} and processed following a similar pipeline as described in \citet{cui2021additive}. For this experiment, we exclude participants who did not wear a physical activity monitor or had high-quality accelerometry data for less than 3 days and with missing BMI and age information. The final data set contains 10,914 participants. We focus on 4 variables: the total activity count (a unit measuring physical activity intensity), sedentary time (in minutes), body mass index, and age (in months). We do pairwise prediction for all pairs of variables in both directions.
	\item \textbf{Single-cell}\quad The single-cell RNA sequencing data set was published by \citet{replogle2022mapping} who performed genome-scale Perturb-seq targeting all expressed genes with CRISPR perturbations across human cells. We utilise the data set on the RPE1 cells. After removing missing data and excluding genes with low expression, we select 6 pairs of genes with the highest correlation coefficient, with a sample size of around 11,400 for each pair. We do pairwise predictions in both directions.
	\item \textbf{Birth}\quad This data set is obtained from the CDC Vital Statistics Data Online Portal (\url{https: //www.cdc.gov/nchs/data_access/vitalstatsonline.htm}) and contains the information about around 3.8 million births in 2018, from which we remove all missing entries and randomly sample 30,000 individuals for our experiment. We focus on predicting the birth weight (in grams) from the pregnancy duration (in months, discrete). 
	\item \textbf{Air}\quad The air quality data set \citep{misc_air_quality_360} is obtained from the UCI machine learning repository. It contains hourly averaged concentrations for five pollutants: CO, non-methane hydrocarbons concentration (NMHC), Total Nitrogen Oxides (NOx), Nitrogen Dioxide (NO2), O3, and three meteorological covariates: temperature in degrees Celsius, relative humidity (in percentage) and absolute humidity (AH) for  8991 hours after removing missing values. We do pairwise predictions for all five pollutants, as well as NO2 -- temperature, NO2 -- AH, and AH -- temperature in both directions.
\end{itemize}

In the large-scale experiments for univariate prediction, as described in Section~\ref{sec:exp_uni}, for each prediction task (i.e.\ a specific pair of the response and predictor), we partition the training and test data at the 0.3, 0.4, 0.5, 0.6, and 0.7 quantiles of the predictor and designate the smaller or larger portion as the training set. In Figure~\ref{fig:data_partition}, we demonstrate one example of data setup on the air quality data. 
After training/test partitioning, we normalize the full data set based on the sample mean and standard deviation of the training set. 

\begin{figure}
\centering
\begin{tabular}{@{}c@{}c@{}c@{}c@{}c@{}c@{}}
	\includegraphics[width=0.165\textwidth]{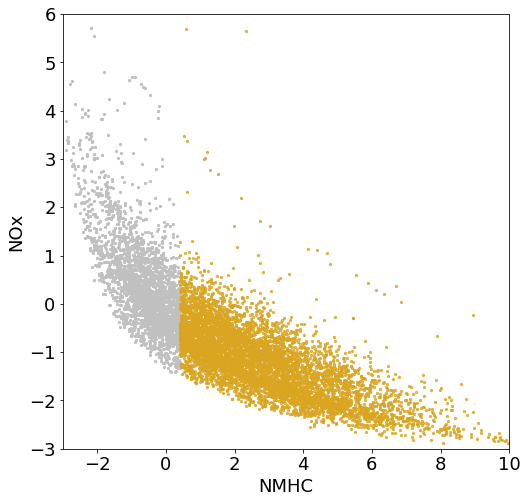} & 
	\includegraphics[width=0.165\textwidth]{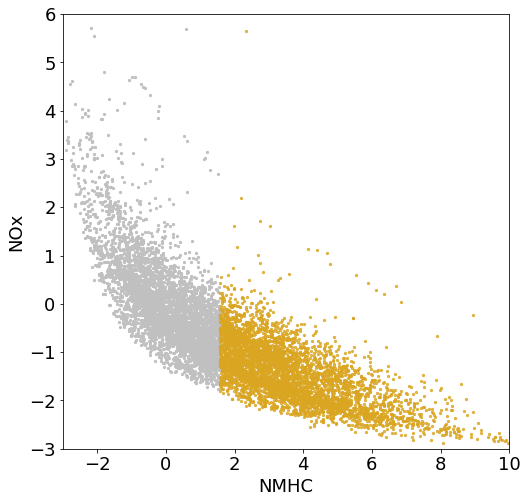} & 
	\includegraphics[width=0.165\textwidth]{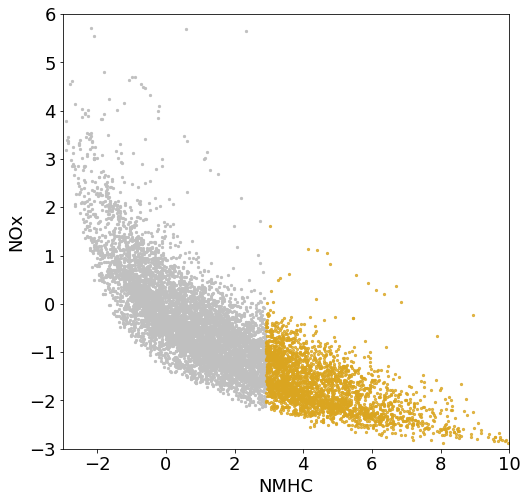} &
	\includegraphics[width=0.165\textwidth]{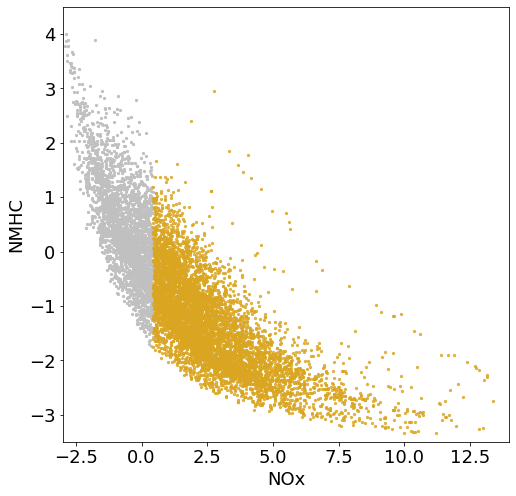} & 
	\includegraphics[width=0.165\textwidth]{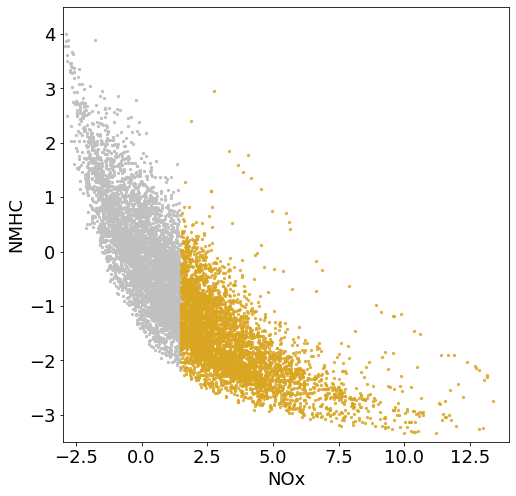} & 
	\includegraphics[width=0.165\textwidth]{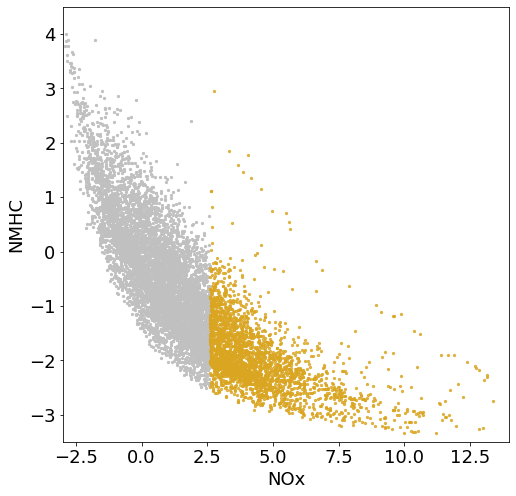} \\
	\small{(a) 0.3} & \small{(b) 0.5} & \small{(c) 0.7} & \small{(d) flipped 0.3} & \small{(e) flipped 0.5} & \small{(f) flipped 0.7}
\end{tabular}
\caption{An example of data setup on the air quality data. Partitioning training and test data at different quantiles of the predictor results in different levels of ``similarity" between training and test distribution, implying varying difficulty for extrapolation. Comparing (d)-(e) to (a)-(c), flipping the roles of the predictor and response leads to potentially different underlying models.}\label{fig:data_partition}
\end{figure}

\subsubsection{Hyperparameters}

For large-scale experiments on univariate prediction in Section~\ref{sec:exp_uni}, we consider various combinations of learning rates, numbers of training iterations, layers, and neurons per layer, as listed in Table~\ref{tab:hyperparam}.

For multivariate prediction in Section~\ref{sec:exp_multi}, we use exactly the same NN architectures and optimisation hyperparameters for both engression and regression and on both the air quality and NHANES data sets with any variables as the response. We use MLPs with 3 layers and 500 neurons per layer; for engression the noise dimension is set to 500 as well. We again use the Adam optimiser with a learning rate of 0.001 and train all models for 3000 iterations. For evaluation, the predictions for the conditional mean by engression are obtained by sampling 128 $\varepsilon$'s for each $x$, which is used for both visualisation and computing the squared prediction error. 

For prediction intervals in Section~\ref{sec:exp_pi}, we again keep all the hyperparameters the same for engression and regression. For GCM data, we predict the global mean temperature from the time whose relationship is relatively simple, so we use an MLP with 2 layers and 100 hidden neurons trained for 500 iterations. For the air quality and NHANES data sets where the dependence is more complex, we use MLPs with 3 layers and 100 neurons per layer trained for 1000 iterations. The noise dimension for engression is 100 as typical. The Adam optimiser is adopted with a learning rate of 0.01.

\begin{table}
    \centering
    \caption{Set of hyperparameters. We experiment with all possible (in total 18) combinations.}
    \label{tab:hyperparam}
    \vskip 0.1in
    \begin{tabular}{c|c}
    \toprule
    (num\_layer, hidden\_dim)     & (2, 100), (3, 10), (3, 100) \\\midrule
    learning rate     & $1\times10^{-3}$, $1\times10^{-2}$\\\midrule
    number of training steps & 500, 1000, 3000\\\bottomrule
    \end{tabular}
\end{table}

\subsection{Engression and pre-ANMs}\label{app:eng_preanm}
We explain how the practical implementation of the engression model aligns with the pre-ANM class. 
Based on definition \eqref{eq:preanm}, a pre-ANM class is given by 
\begin{equation}\label{eq:preanm_class}
	\{\gen(x,\varepsilon):=g(W x+h(\varepsilon))+\beta^\top x:W\in\bbR^{k\times d},\beta\in\bbR^d,g\in\cG,h\in\cH\},
\end{equation}
where $\cG$ and $\cH$ are two function classes, noise $\varepsilon$ follows a pre-defined distribution, and each model depends on a tuple $(W,\beta,g,h)$ which is unknown and to be learned. 
In fact, when parametrising $\gen(x,\varepsilon)$ by neural networks whose first operator is a linear transformation, we can write the model more explicitly as
\begin{equation*}
	\gen(x,\varepsilon)=g(Wx+V\varepsilon),
\end{equation*}
where $Wx+V\varepsilon$ is the first linear layer and $g$ is the subsequent layers. This form already aligns with pre-ANMs by adding scalable pre-specified noises to covariates. To further allow a fully learnable noise distribution, we adopt the skip connection technique \citep{he2016deep} and the model can thus be written as
\begin{equation*}
	\gen(x,\varepsilon)=g_2(g_1(W_1x+V\varepsilon) + W_2x) + \beta^\top x,
\end{equation*}
where two skip connections are adopted after $g_1$ and $g_2$, respectively. The model is then essentially the same as the pre-ANM class \eqref{eq:preanm_class}.

\end{document}